\newcommand\blfootnote[1]{
  \begin{NoHyper}
  \renewcommand\thefootnote{}\footnote{#1}%
  \addtocounter{footnote}{-1}%
  \end{NoHyper}
}
\newcommand{\ith}{{^\text{th}}}
\renewcommand{\cal}[1]{{\mathcal{#1}}}
\newcommand{\et}{\mathrm{ETsparse}}
\newcommand{\etcsp}{\mathrm{ETsupport}}
\newcommand{\gapet}{\alpha\text{-}\mathrm{gap}\text{-}\et}
\newcommand{\prometcsp}{(\sigma+1)\mathrm{\text{-}to\text{-}}\sigma\mathrm{\ ETsupport}}
\newcommand{\sparseshift}{\mathrm{SETsparse}}
\newcommand{\gapsparseshift}{\alpha\text{-}\mathrm{gap}\text{-}\mathrm{SETsparse}}
\newsavebox{\Exbox}\newlength{\len}%
{\end{itemize}}  
\setlist[description]{font=\normalfont\itshape\space}
\newlength{\jeroenlen}
\newenvironment{example}
{\settowidth{\jeroenlen}{\textit{Remarks.}}%
	\begin{description}[leftmargin=\jeroenlen,labelwidth=0pt,labelsep=0pt]
		\item[\textit{Remarks.\qquad}]%
		\begin{enumerate}[leftmargin=0em,labelsep=0.5em]}
		{\end{enumerate}\end{description}}
\definecolor{blueviolet}{RGB}{60,50,200}
\definecolor{oliveg}{RGB}{40,200,30}
\newtheorem{theorem}{Theorem}
\theoremstyle{definition}
\newtheorem{definition}{Definition}[section]
\newtheorem{problem}{Problem}[section]
\newtheorem{proposition}{Proposition}[section] \newtheorem{lemma}{Lemma}[section]
\newtheorem{corollary}{Corollary}[section]
\newtheorem{claim}{Claim}[section]
\newtheorem{observation}{Observation}[section]
\algrenewcommand\alglinenumber[1]{\footnotesize #1.}
\renewcommand{\bar}{\overline}
\newcommand{\veca}{{\mathbf{a}}}
\newcommand{\vecb}{{\mathbf{b}}}
\newcommand{\vecu}{{\mathbf{u}}}
\newcommand{\vecw}{{\mathbf{w}}}
\newcommand{\vecx}{{\mathbf{x}}}
\newcommand{\vecy}{{\mathbf{y}}}
\newcommand{\vecz}{{\mathbf{z}}}
\newcommand{\vecalpha}{{\boldsymbol{\alpha}}}
\newcommand{\C}{\mathbb{C}}
\newcommand{\F}{\mathbb{F}}
\newcommand{\N}{\mathbb{N}}
\newcommand{\Q}{\mathbb{Q}}
\newcommand{\R}{\mathbb{R}}
\newcommand{\Z}{\mathbb{Z}}
\newcommand{\cC}{\mathcal{C}}
\newcommand{\cF}{\mathcal{F}}
\newcommand{\var}{\mathrm{var}}
\newcommand{\NP}{\mathsf{NP}}
\newcommand{\AM}{\mathsf{AM}}
\newcommand{\coAM}{\mathsf{coAM}}
\newcommand{\SAT}{\mathrm{3\text{-}SAT}}
\newcommand{\CNF}{\mathrm{3\text{-}CNF}}
\newcommand{\supp}{\textnormal{Supp}}
\newcommand{\poly}{\textnormal{poly}}
\newcommand{\IMM}{\mathsf{IMM}}
\newcommand{\SP}{\mathsf{SP}}
\newcommand{\PSym}{\mathsf{PSym}}
\newcommand{\GL}{\mathrm{GL}}
\newcommand{\orb}{\textnormal{orb}}
\newcommand{\hdthree}{\textnormal{hom-}\Sigma\Pi\Sigma}
\renewcommand\thefootnote{\textcolor{red}{\arabic{footnote}}}
\newcommand\footnoteref[1]{\protected@xdef\@thefnmark{\ref{#1}}\@footnotemark}
\definecolor{DSgray}{cmyk}{0,0,0,0.7}
\definecolor{anti-flashwhite}{rgb}{0.95, 0.95, 0.96}
\definecolor{britishracinggreen}{rgb}{0.0, 0.26, 0.15}
\definecolor{cornellred}{rgb}{0.7, 0.11, 0.11}
\definecolor{cream}{rgb}{1.0, 0.99, 0.82}
\definecolor{lavender}{rgb}{0.9, 0.9, 0.98}
\definecolor{midnightblue}{rgb}{0.1, 0.1, 0.44}	
\definecolor{deepsaffron}{rgb}{1.0, 0.6, 0.2}
\definecolor{piggypink}{rgb}{0.99, 0.87, 0.9}
\definecolor{dollarbill}{rgb}{0.52, 0.73, 0.4}
\definecolor{crimsonred}{rgb}{0.6, 0.0, 0.0}
\definecolor{persiangreen}{rgb}{0, 0.5, 0.0}
\definecolor{celadon}{rgb}{0.87, 0.99, 0.8}
\definecolor{dukeblue}{rgb}{0.53, 0.0, 0.69}
\definecolor{sealbrown}{rgb}{0.44, 0.28, 0.1}
\definecolor{wheat}{rgb}{0.94, 0.92, 0.84}	
\definecolor{patriarch}{rgb}{0.03, 0.57, 0.82}
\definecolor{cerise}{rgb}{0.96, 0.0, 0.63}
\definecolor{scarlet}{rgb}{0.96, 0.0, 0.63}
\definecolor{almond}{rgb}{0.94, 0.87, 0.8}
\definecolor{aliceblue}{rgb}{0.94, 0.97, 1.0}
\title{NP-hardness of testing equivalence to sparse polynomials and to constant-support polynomials}
\author{{Omkar Baraskar}\\
		\normalsize{University of Waterloo}\\
		\normalsize{\tt{obaraska@uwaterloo.ca}}
		\and {Agrim Dewan}\\
		\normalsize{Indian Institute of Science}\\
	    \normalsize{\tt{agrimdewan@iisc.ac.in}}
		\and {Chandan Saha\footnote{Partially supported by a MATRICS grant of the Science and Engineering Research Board, DST, India.}}\\
		\normalsize{Indian Institute of Science}\\
		\normalsize{\tt{chandan@iisc.ac.in}}
            \and {Pulkit Sinha\footnote{Partially supported by a Mike and Ophelia Lazardis Fellowship, and partially supported by NSERC Canada.}}\\
            \normalsize{University of Waterloo} \\
            \normalsize{\tt{psinha@uwaterloo.ca}}
  }
\date{}
\begin{document}

\maketitle
\thispagestyle{empty}
\begin{abstract}
An $s$-sparse polynomial has at most $s$ monomials with nonzero coefficients. The Equivalence Testing problem for sparse polynomials (ETsparse) asks to decide if a given polynomial $f$ is equivalent to (i.e., in the orbit of) some $s$-sparse polynomial. In other words, given $f \in \F[\vecx]$ and $s \in \N$, ETsparse asks to check if there exist $A \in \GL(|\vecx|, \F)$ and $\vecb \in \F^{|\vecx|}$ such that $f(A\vecx + \vecb)$ is $s$-sparse. We show that ETsparse is NP-hard over any field $\F$, if $f$ is given in the sparse representation, i.e., as a list of nonzero coefficients and exponent vectors. This answers a question posed by Gupta, Saha and Thankey (SODA 2023) and also, more explicitly, by Baraskar, Dewan and Saha (STACS 2024). The result implies that the Minimum Circuit Size Problem (MCSP) is NP-hard for a \emph{dense} subclass of depth-$3$ arithmetic circuits if the input is given in sparse representation. We also show that approximating the smallest $s_0$ such that a given $s$-sparse polynomial $f$ is in the orbit of some $s_0$-sparse polynomial to within a factor of $s^{\frac{1}{3} - \epsilon}$ is NP-hard for any $\epsilon >0$; observe that $s$-factor approximation is trivial as the input is $s$-sparse. Finally, we show that for any constant $\sigma \geq 5$, checking if a polynomial (given in sparse representation) is in the orbit of some support-$\sigma$ polynomial is NP-hard. Support of a polynomial $f$ is the maximum number of variables present in any monomial of $f$. These results are obtained via direct reductions from the $\SAT$ problem.\blfootnote{A preliminary version of this paper appeared in the proceedings of the 51\textsuperscript{st} International Colloquium on Automata, Languages and Programming (ICALP), 2024.}   
\end{abstract}
\clearpage
\newpage

\setcounter{tocdepth}{2}
\tableofcontents
\thispagestyle{empty}
\newpage
\setcounter{page}{1}

\section{Introduction} \label{Section: Introduction}
The Polynomial Equivalence (PE) problem asks to decide if two polynomials, given as lists of coefficients, are equivalent. Polynomials $f, g \in \F[\vecx]$ are \emph{equivalent}, denoted as $f \sim g$, if there is an $A \in \GL(|\vecx|, \F)$ and a $\vecb \in \F^{|\vecx|}$ such that $f = g(A\vecx + \vecb)$. Equivalent polynomials represent the same function up to a change of the coordinate system.\footnote{Over $\R$, an invertible map $\vecx \mapsto A\vecx + \vecb$ is simply a combination of rotation, reflection, scaling, and translation.} The PE problem is thus regarded as the algebraic analog of the graph isomorphism (GI) problem. PE is at least as hard as GI \cite{AgrawalS05, Kayal11}, but we do not know if it is much harder than GI. There is, in fact, a cryptographic authentication scheme based on the presumed average-case hardness of PE \cite{Patarin96}. Is PE $\NP$-hard? Over finite fields, PE is not $\NP$-hard unless the polynomial hierarchy collapses \cite{Saxenaphd, Thierauf98}. In contrast, PE is not even known to be decidable over $\Q$. With the aim of gaining more insight into the complexity of testing polynomial equivalence, a natural variant of PE has been studied in the literature. This variant is known as \emph{equivalence testing}. 

In the following discussion, whenever we write "circuit(s)" and "formula(s)", we mean arithmetic circuit(s) and arithmetic formula(s), respectively, unless mentioned otherwise. \footnote{An \emph{arithmetic circuit} is like a Boolean circuit but with AND and OR replaced by $\times$ and $+$ gates, and with edges labelled by $\F$-elements. It computes a polynomial over $\F$. A \emph{formula} is a circuit whose underlying graph is a tree.}  \\

\noindent{\textbf{Equivalence testing.}} Equivalence testing (ET) comes in two flavors -- ET for polynomial families and ET for circuit classes. ET for a polynomial family $\cF$ is defined as follows: given a \emph{single} polynomial $f$, check if it is equivalent to some $g \in \cF$. This variant of PE was introduced in \cite{Kayal12, Kayal11}, wherein randomized polynomial-time ET algorithms were provided for the permanent, determinant, and elementary and power symmetric polynomial families. Subsequently, efficient ET algorithms were given for various other important polynomial families, such as the iterated matrix multiplication ($\IMM$) family \cite{KayalNST17} (see Section \ref{sec:related work}). These algorithms are efficient even if $f$ is provided as a circuit or a black-box.\footnote{Black-box access to $f$ means oracle access to $f$, we get $f(\veca)$ from a query point $\veca$ in one unit time. It is as if $f$ is given as a ``hidden'' circuit and the only operation we are allowed to do is evaluate the circuit at chosen points.} ET for a circuit class $\cC$ (a.k.a testing equivalence to $\cC$) is defined similarly: given a polynomial $f$, decide if it is equivalent to some polynomial $g$ that is computable by a circuit in $\cC$. Recently, efficient ET algorithms have been given for read-once formulas \cite{GuptaST23} and a special subclass of sparse polynomials, namely $t$-design polynomials for constant $t$ \cite{BDS24}. Sparse polynomials are depth-$2$ circuits.\footnote{We assume that a depth-$2$ circuit has a $+$ gate on top and a bottom layer of $\times$ gates. If the top gate is a $\times$ gate, then ET can be solved efficiently using polynomial factorization algorithms \cite{KaltofenT90}.} It is natural to ask whether or not ET can be solved efficiently for \emph{general} sparse polynomials. This question was posed in \cite{GuptaST23} and also, more explicitly, in \cite{BDS24}.

Before proceeding to discuss ET for sparse polynomials, we point out a subtle difference between ET for polynomial families and that for circuit classes. The polynomial families for which ET has been studied so far are such that if $f$ is equivalent to some $g$ in the family, then $g$ is unique and it can be readily identified from $f$. For example, if $f$ is equivalent to some determinant polynomial\footnote{The $n^2$-variate determinant polynomial is the determinant of the matrix $(x_{i,j})_{i,j \in [n]}$ of formal variables.}, then we know which one simply from the number of variables of $f$. Moreover, polynomials in most of these families admit well-known polynomial-size circuits. So, a circuit for $g$ can be derived once it is identified. Thus, if $f$ is also given as a circuit, then ET for such a family reduces to PE with the input polynomials given as circuits. Over finite fields, this version of PE is in $\AM \cap \coAM$ and hence unlikely to be $\NP$-hard. On the other hand, in the case of ET for a circuit class, if $f$ is equivalent to some circuit $C$ in the class, then $C$ need \emph{not} be unique, and further, $C$ may not be easily deducible from $f$. This leaves us with the prospect of proving that ET is hard for some natural circuit class. Do sparse polynomials form such a class? \\

\noindent{\textbf{ET for sparse polynomials.}} 
An $n$-variate, degree-$d$ polynomial is \emph{$s$-sparse} if it has at most $s$ monomials with nonzero coefficients. An $s$-sparse polynomial is computable by a depth-$2$ circuit having top fan-in $s$. Sparse polynomials have been extensively studied in algebraic complexity, particularly with regard to identity testing \cite{KlivansS01, LiptonV03}, interpolation \cite{Ben-OrT88, GrigorievKS90, KlivansS01, BlaserJ14}, and factorization \cite{GathenK85, BhargavaSV20} (see the tutorial \cite{Roche18} and the references therein for more algorithms involving sparse polynomials). ET provides yet another avenue to understand these ``basic" polynomials better. ET for sparse polynomials asks to check if a given polynomial is sparse in some coordinate system. More formally, given a polynomial $f$ as an arithmetic circuit and an $s \in \N$, decide if there is an $s$-sparse polynomial $g$ such that $f \sim g$. This problem was studied in \cite{GrigorievK93} over $\Q$, wherein an exponential in $n^4$ time algorithm was provided. There has not been any significant progress on this problem since that work. The lack of improvements in the complexity for over three decades makes one wonder:
\begin{center}
\emph{Is ET for sparse polynomials NP-hard?}
\end{center}
In this work, we answer this question in the affirmative over \emph{any} field (see the first part of Theorem \ref{Theorem: ETSparse NP-hard}) even if the input $f$ is provided as a depth-$2$ circuit. The result answers the question posed in \cite{GuptaST23, BDS24}. To our knowledge, the theorem gives the first example of a natural circuit class for which ET is provably hard.

Although ET for sparse polynomials (ETsparse) is a fairly natural problem, there is a deeper reason to study ETsparse that originates from the expressive power of affine projections of sparse polynomials and the \emph{Minimum Circuit Size Problem} (MCSP) for depth-$3$ circuits. We discuss this reason below to motivate ETsparse when the input is a \emph{homogeneous} polynomial.

\subsection{ETsparse and MCSP for depth-$3$ circuits}
First, we need a few definitions: A polynomial $g$ is an \emph{affine projection} of $f$ if $g = f(A\vecx + \vecb)$ for some $A \in \F^{|\vecx| \times |\vecx|}$ and $\vecb \in \F^{|\vecx|}$. If $\vecb = 0$, we say $g$ is a \emph{linear projection} of $f$; additionally, if $A \in \GL(|\vecx|)$, we say $g$ is in the \emph{orbit} of $f$, denoted as $\orb(f)$. Depth-$3$ circuits form a highly expressive class \cite{GKKS16, Tavenas15}. A depth-$3$ ($\Sigma \Pi \Sigma$) circuit is a circuit with a $+$ gate on top, a middle layer of $\times$ gates, and a bottom layer of $+$ gates. A depth-$3$ circuit with a top fan-in of $s$ is an affine projection of an $s$-sparse polynomial. Thus, the problem of deciding if a given $f$ is an affine projection of an $s$-sparse polynomial is closely related to MCSP for depth-$3$ circuits. We say ``closely related to" instead of ``the same as" because the size of a depth-$3$ circuit is determined by not only its top fan-in but also its formal degree. \\

\noindent{\textbf{MCSP}.} The complexity of MCSP for Boolean circuits has baffled researchers for over six decades. MCSP for a Boolean circuit class $\cC$ ($\cC$-MCSP) takes input the truth table of an $n$-variate Boolean function $f$ and a parameter $s \in \N$ and asks to check if $f$ is computable by a circuit in $\cC$ of size at most $s$. There are intriguing connections between MCSP and several other areas such as cryptography \cite{KabanetsC00, AllenderD17}, learning theory \cite{CarmosinoIKK16}, average-case complexity \cite{Hirahara18}, and proof complexity \cite{PichS19}. Whether or not MCSP for general Boolean circuits is NP-hard is a long-standing open question. It is known that MCSP is NP-hard for DNF \cite{Masek79, AllenderHMPS06} and DNF $\circ$ XOR formulas \cite{HiraharaOS18}. But no NP-hardness result is known (under deterministic polynomial-time reductions) for more general circuit models such as $\text{AC}^0$ circuits.\footnote{However, strong hardness results are known for several powerful circuit models under randomized or quasi-polynomial time or subexponential time reductions \cite{Ilango20, IlangoLO20, Ilango21, Hirahara22}.} This is not too surprising as \cite{KabanetsC00} showed that NP-hardness of $\cC$-MCSP under \emph{natural}\footnote{i.e., the size of the output of the reduction and the output parameter $s$ depend only on the size of the input instance. Almost all reductions that show NP-hardness of problems are natural.} deterministic polynomial-time reductions implies a $2^{\Omega(n)}$ lower bound for $\cC$, unless NP $\subseteq$ SUBEXP. Unfortunately, such strong lower bounds are not known even for depth-$3$ Boolean circuits. However, a $2^{\Omega(n)}$ lower bound is known for XOR $\circ$ AND $\circ$ XOR formulas \cite{Razborov87}, which are depth-$3$ \underline{arithmetic} circuits over $\F_2$ and are like DNF $\circ$ XOR formulas but with the top OR gate replaced by an XOR gate. In fact, a $2^{\Omega(n)}$ lower bound is known for depth-$3$ arithmetic circuits over any fixed finite field \cite{GrigorievR98}. This raises hope that we will be able to prove the hardness of MCSP for depth-$3$ arithmetic circuits over finite fields. But how is the input given in the case of MCSP for arithmetic circuits? And what about depth-$3$ circuits over fields of characteristic $0$? \\

\noindent{\textbf{MCSP for arithmetic circuits: Input representation and model of computation.}} In the Boolean setting of MCSP, one of the main reasons for assuming that the input is a truth table is that the assumption puts MCSP in NP. Analogously, in the algebraic setting, we could assume that the polynomial is given in the dense representation as a list of ${n+d \choose n}$ coefficients. But observe that even if the input is given as an arithmetic circuit, MCSP is in the complexity class MA over finite fields. This is because verifying if two circuits compute the same polynomial is the polynomial identity testing problem, which admits a randomized polynomial-time algorithm \cite{DemilloL78, Zippel79, Schwartz80}. Furthermore, class MA equals NP, assuming a widely believed circuit lower bound \cite{ImpagliazzoW97}. A succinct input representation also opens up the possibility of proving NP-hardness of MCSP for models, such as depth-$3$ circuits over fields of characteristic $0$, for which strong exponential lower bounds are unknown (the MCSP hardness to lower bound implication in \cite{KabanetsC00} needs the input in the dense format). The current best lower bound for depth-$3$ circuits over fields of characteristic $0$ is quasi-polynomial in $n$ \cite{Limaye0T21,AmireddyGKST23}.

It is, therefore, reasonable to assume that the input polynomial is given succinctly as a circuit which should only facilitate our efforts in proving NP-hardness of MCSP for arithmetic circuit classes. For example, there is an instance in the Boolean setting wherein succinct representation of the input helped prove NP-hardness of MCSP long before such a hardness result was shown with respect to the dense representation -- it is the case of the \emph{partial} MCSP problem \cite{HancockJLT96, Hirahara22}. In this work, we assume that the input is given as a depth-$2$ circuit, i.e., as a list of nonzero coefficients, and exponent vectors in unary -- this is the \emph{sparse representation}.\footnote{Sparse representations of polynomials are also used in computer algebra systems wherein the exponent vector is given in binary. As the degree is $n^{O(1)}$ in this work (except on one occasion; see the remark following Theorem \ref{Theorem: ETsupport NP-hard}), whether or not the exponent vector is given in unary or binary makes little difference.} 

A few remarks are in order concerning the model of computation. Over finite fields, we assume the Turing machine model. However, over arbitrary fields of characteristic $0$, it is natural to consider an arithmetic model of computation (similar to the Blum-Shub-Smale machine model \cite{BlumSS89}) that allows us to store a field element in unit space and perform an arithmetic operation in unit time. Over $\Q$, it is not clear if MCSP for arithmetic circuits is even decidable in the Turing machine model. But, if we confine our search to size-$s$ circuits whose field constants are $s^{O(1)}$ bit rational numbers, then we can work with the Turing machine model. \\

\noindent{\textbf{MCSP for homogeneous depth-$3$ circuits.}} The size of a $\Sigma\Pi\Sigma$ circuit is primarily determined by its formal degree and its top fan-in, whereas the size of a homogeneous depth-$3$ ($\hdthree$) circuit is mainly decided by its top fan-in (the formal degree of a $\Sigma\Pi\Sigma$ circuit is the maximum fan-in of the middle layer of $\times$ gates). MCSP for $\Sigma\Pi\Sigma$ circuits can be defined as follows: given $f$ and $D,s \in \N$, decide if there is a $\Sigma\Pi\Sigma$ circuit with formal degree bounded by $D$ and top fan-in bounded by $s$ that computes $f$. Similarly, MCSP for $\hdthree$ circuits is defined as: given a homogeneous $f$ and $s \in \N$, check if there is a $\hdthree$ circuit with top fan-in at most $s$ that computes $f$. In order to prove NP-hardness of $\Sigma\Pi\Sigma$-MCSP, it is \emph{necessary} to prove NP-hardness of $\hdthree$-MCSP. The reason is: a polynomial $f(x_1, x_2, \ldots, x_n)$ has a $\Sigma\Pi\Sigma$ circuit with formal degree bounded by $D$ and top fan-in bounded by $s$ if and only if the homogeneous polynomial $z^Df(x_1z^{-1}, x_2z^{-1}, \ldots, x_nz^{-1})$ has a $\hdthree$ circuit with top fan-in bounded by $s$. Also, if the reduction in a hypothetical proof of NP-hardness of $\hdthree$-MCSP has a certain simple feature, then it would imply NP-hardness of $\Sigma\Pi\Sigma$-MCSP (see the last remark following Proposition \ref{Proposition: bwd direction homogeneous}). Hence, it is natural to study the hardness of $\hdthree$-MCSP first.

NP-hardness of MCSP is known for two interesting subclasses of $\hdthree$ circuits, namely depth-$3$ powering circuits \cite{Shitov16} and set-multilinear $\Sigma\Pi\Sigma$ circuits \cite{Hastad90}; the top fan-in's of circuits in these two classes correspond to Waring rank and tensor rank, respectively. Perhaps an appealing evidence in favor of NP-hardness of $\hdthree$-MCSP is a proof of NP-hardness of MCSP for a ``dense" subclass of $\hdthree$ circuits. Intuitively, $\cC$ is a \emph{dense} subclass of $\hdthree$ circuits if every $\hdthree$ circuit can be approximated ``infinitesimally closely" by circuits in $\cC$.\footnote{Formally, a subclass $\cC$ of $\hdthree$ circuits is \emph{dense} if there are polynomial functions $p, q: \N \rightarrow \N$ such that the following holds: For $n, d, s \in \N$, the coefficient vector of every $n$-variate degree-$d$ polynomial computable by a size-$s$ $\hdthree$ circuit is in the \emph{Zariski closure} of the set of coefficient vectors of $p(nds)$-variate degree-$d$ polynomials computable by size-$q(nds)$ circuits in $\cC$. Here, ``size" means ``top fan-in".} Unfortunately, depth-$3$ powering circuits and set-multilinear $\Sigma\Pi\Sigma$ circuits are \emph{not} dense inside $\hdthree$ circuits.\footnote{Circuits of these two classes have small read-once algebraic branching programs (ROABPs), and the class ROABP is closed under Zariski closure \cite{Forbes16}. So, the closures of these two classes are also contained inside ROABPs. But, there are explicit $O(n)$ size $\hdthree$ circuits that require $2^{\Omega(n)}$ size ROABPs \cite{SahaT21, KayalNS20}.} On the other hand, \emph{orbits of homogeneous sparse polynomials} form a dense subclass of $\hdthree$ circuits.\footnote{Every $n$-variate degree-$d$ $\hdthree$ circuit of size-$s$ is a linear projection of an $s$-sparse degree-$d$ homogeneous polynomial in at most $sd$ variables. It is well known that linear projections of $f$ are contained in the Zariski closure of the orbit of $f$ over fields of characteristic $0$ (see \cite{SahaT21} for a proof of this fact).} It is natural to ask:

\begin{center}
\emph{Is MCSP for orbits of homogeneous sparse polynomials NP-hard?}
\end{center}
MCSP for orbits of homogeneous sparse polynomials is exactly the ETsparse problem on inputs that are homogeneous polynomials. The second part of Theorem \ref{Theorem: ETSparse NP-hard} answers the question positively over any field. \\ 

\noindent{\textbf{Approximating the sparse-orbit complexity.}} Call the smallest $s_0$ such that $f$ is in the orbit of an $s_0$-sparse polynomial, the \emph{sparse-orbit complexity} of $f$. Theorem \ref{Theorem: ETSparse NP-hard} shows that sparse-orbit complexity is hard to compute in the worst case. 
\begin{center}
\emph{Is sparse-orbit complexity easy to approximate?}
\end{center}
In Theorem \ref{Theorem: gap ETSparse NP-hard}, we show that approximating the sparse-orbit complexity of a given $s$-sparse polynomial (homogeneous or not) to within a $s^{1/3 - \epsilon}$ factor is NP-hard for any $\epsilon \in (0,1/3)$. As the input is $s$-sparse, approximating the sparse-orbit complexity to within a factor $s$ is trivial. 

\subsection{ET for constant-support polynomials}
ET is efficiently solvable for two special sparse polynomial families, namely the power symmetric polynomial $\PSym := x_1^d + \ldots + x_n^d$ \cite{Kayal11} and the sum-product polynomial $\SP : = \sum_{i \in [s]}{\prod_{j \in [d]}{x_{i,j}}}$ \cite{MediniS21, Kayal11}. What makes ET easy for these sparse polynomials? Explanations were provided in \cite{GuptaST23, BDS24}: $\SP$ is a read-once formula; it is also a $1$-design polynomial. $\PSym$ is a $1$-design polynomial, but it is also a support-$1$ polynomial.
\begin{center}
\emph{Is ET easy for constant-support polynomials?}
\end{center}
In Theorem \ref{Theorem: ETsupport NP-hard}, we show that checking if a given $f$ is in the orbit of a support-$5$ polynomial is NP-hard; this answers the question in the negative.
\subsection{Our results} \label{Section: Results}
We now state our results formally. The ETsparse problem is defined as follows.
\begin{problem}[$\et$] \label{Problem: ETSparse}
    Given a polynomial $f \in \F[\vecx]$ in its sparse representation and an integer $s$, check if there exist an $A \in \GL(|\vecx|,\F)$ and a $\vecb \in \F^{|\vecx|}$ such that $f(A\vecx + \vecb)$ is $s$-sparse.
\end{problem}
\noindent Our first result, Theorem \ref{Theorem: ETSparse NP-hard}, shows the $\NP$-hardness of $\et$ over any field.
\begin{theorem}[$\et$ is $\NP$-hard] \label{Theorem: ETSparse NP-hard}
\begin{enumerate}
    \item Let $\F$ be any field. There is a deterministic polynomial-time many-one reduction from $\SAT$ to $\et$ over $\F$.
    \item Let $\F$ be any field. There is a deterministic polynomial-time many-one reduction from $\SAT$ to $\et$ over $\F$ where the input polynomial to the $\et$ problem is homogeneous. 
\end{enumerate}
\end{theorem}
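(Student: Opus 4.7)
I would reduce $\SAT$ to $\et$ as follows. Given a $\CNF$ formula $\phi$ with clauses $C_1,\ldots,C_m$ over Boolean variables $y_1,\ldots,y_n$, construct a polynomial $f_\phi$ and a sparsity bound $s$ such that $\phi$ is satisfiable if and only if $f_\phi\in\orb(g)$ for some $s$-sparse $g$. For each literal of $C_i$, associate an affine linear form $L_{i,j}(\vec{x})$ that vanishes at $\vecalpha\in\{0,1\}^n$ exactly when that literal is true under $\vecalpha$ (so $L_{i,j}=1-x_{v_{i,j}}$ for a positive literal, and $L_{i,j}=x_{v_{i,j}}$ for a negative one). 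Define the clause polynomial $Q_i(\vec{x})=L_{i,1}L_{i,2}L_{i,3}$, introduce fresh ``tag'' variables $z_1,\ldots,z_m$, and set
\[
f_\phi(\vec{x},\vec{z}) \;=\; \sum_{i=1}^{m} z_i\cdot Q_i(\vec{x}) \;+\; R(\vec{x},\vec{z}),
\]
where $R$ is a ``rigidifying'' polynomial (for instance a sum of high-degree pure powers of the $z_i$'s together with certain link terms) designed so that any invertible affine transformation preserving overall sparsity must act almost trivially on the $\vec{z}$-block and as a pure translation on $\vec{x}$.

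\textbf{Completeness.} If $\vecalpha\in\{0,1\}^n$ satisfies $\phi$, apply the affine map $\vec{x}\mapsto\vec{x}+\vecalpha$, $\vec{z}\mapsto\vec{z}$. A direct computation shows that $L_{i,j}(\vec{x}+\vecalpha)$ is a single monomial when the literal is true under $\vecalpha$, and has two monomials otherwise. Since each satisfied clause has at least one true literal, $Q_i(\vec{x}+\vecalpha)$ has at most $2\cdot 2\cdot 1=4$ monomials. The distinct tags $z_i$ prevent cross-clause cancellation, so the transformed polynomial has at most $4m+\mathrm{sp}(R)$ monomials, which we take as the sparsity threshold $s$.

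\textbf{Soundness.} For the converse, suppose some invertible $(A,\vec{b})$ reduces the sparsity of $f_\phi$ to at most $s$. The core of the argument is a \emph{rigidity lemma}: using the gadget $R$ and the fact that each $z_i$ appears linearly, tagging a distinct cubic clause polynomial, show that $A$ must, up to a permutation and scaling of the tag variables, be a pure translation $\vec{x}\mapsto\vec{x}+\vecbeta$. Once this is established, each $Q_i(\vec{x}+\vecbeta)$ is $4$-sparse, which forces at least one of $L_{i,1}(\vecbeta),L_{i,2}(\vecbeta),L_{i,3}(\vecbeta)$ to vanish for every $i$; rounding $\vecbeta$ to a Boolean vector then yields a satisfying assignment for $\phi$.

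\textbf{Main obstacle and homogeneous case.} The delicate point is proving the rigidity lemma, i.e.\ ruling out exotic affine transformations that mix $\vec{x}$ with $\vec{z}$ or exploit large-scale cancellations between monomials of different clauses. The reduction must work over any field, so the gadget $R$ and its analysis must avoid characteristic-dependent tricks. For the homogeneous variant in Theorem~\ref{Theorem: ETSparse NP-hard}(2), homogenize via a fresh variable $z_0$, setting $\tilde{f}_\phi(\vec{x},\vec{z},z_0)=z_0^{D}\cdot f_\phi(\vec{x}/z_0,\vec{z}/z_0)$ where $D=\deg f_\phi$. The affine action on $f_\phi$ becomes a linear action on $\tilde{f}_\phi$ that stabilizes the line spanned by $z_0$ up to scalar, and adapting the rigidity lemma together with the completeness/soundness analysis to this linearized setting concludes the proof.
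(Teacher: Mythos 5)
There is a genuine gap --- in fact two. First, even granting your rigidity lemma, the sparsity threshold $s = 4m + \mathrm{sp}(R)$ does not separate satisfiable from unsatisfiable formulas. Under a translation by a Boolean point $\vecbeta$, a clause with all three literals true contributes a \emph{single} monomial, one with two true literals contributes $2$, one true literal $4$, and a falsified clause $8$. So in a NO instance, an assignment that falsifies one clause but over-satisfies many others yields a total clause contribution like $8 + (m-1)\cdot 1 \ll 4m$, i.e.\ sparsity at most $s$, and your reduction maps a NO instance to a YES instance of $\et$. The point is that with constant-degree clause gadgets, the penalty for a falsified clause is $O(1)$ and can be absorbed by slack elsewhere. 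The paper avoids exactly this by giving each clause gadget degree $d_4 \geq m$ (clause polynomial $x_0^{(3n+k)d_1}\prod_{j\in C_k}(y_j+(-1)^{a_{k,j}}x_j)^{d_4}$), so that a single falsified clause costs $(d_4+1)^3 \geq (m+1)(d_4+1)^2$, strictly more than the \emph{entire} clause budget $m(d_4+1)^2$ (Proposition \ref{Proposition: bwd direction}), and by degree-separating all gadgets through distinct huge powers of a special variable $x_0$ so that sparsity is additive across gadgets (Observations \ref{Obs: deg separated polys} and \ref{Lemma: equiv deg sep}). Your per-clause accounting cannot be salvaged without some such blow-up of the clause degrees and an additivity mechanism.

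Second, the ``rigidity lemma'' and the gadget $R$ are precisely where all the work lies, and you have not supplied either. Ruling out invertible maps that mix the $\vecx$- and $\vecz$-blocks or create cross-gadget cancellations is the technical core of the paper: it needs $d_1 \geq s$ so that $A(x_0)$ must be a single variable (Lemma \ref{Lemma: fix x0}), a hierarchy $d_2 \geq 2d_3 \geq \ldots$ so that a wrong image of any $x_i$ or $y_i$ inflates sparsity beyond the budget (Lemmas \ref{var_spars_lem} and \ref{Lemma: equality condition Qi}), sparsity lower bounds for powers of linear/affine forms ($\mathcal{S}(\ell^d)\geq d+1$, Observations \ref{lem_binom} and \ref{Lemma: affine form sparsity}), and Lucas-type arguments plus a modified construction to make all of this work over every characteristic (Section \ref{section-extension-finite}), which your ``avoid characteristic-dependent tricks'' remark does not address. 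Finally, for part 2 your homogenization $\tilde f = z_0^D f(\vecx/z_0,\vecz/z_0)$ only gives completeness: a sparsifying \emph{linear} map for $\tilde f$ need not stabilize the span of $z_0$, so you cannot de-homogenize and invoke the rigidity lemma. The paper instead builds a bespoke homogeneous polynomial divisible by $x_0^{d_1}$ and $y_0^{d_2}$ and uses the divisibility claim (Claim \ref{lem_div}, with a separate small-characteristic argument in Section \ref{proof-x0,y0-fixed-homogeneous-finitechar}) to force $x_0,y_0$ to map to single variables (Lemma \ref{Lemma: y0,x0 fixed homogeneous}); some substitute for this step is needed and is missing from your plan.
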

\begin{example}
        \item Part $2$ of the theorem subsumes part $1$. We state parts $1$ and $2$ separately because of two reasons: One, part $1$ has a simpler proof. Two, the degree parameters in the proof of part $1$ have a better upper bound in comparison to that in the proof of part $2$.

        \item The reduction is \emph{natural}\footnote{unless $\textnormal{char}(\F) = 2$. See the remark following Observation \ref{Obs: poly sparsity char 2 case} in Section \ref{section-inhomogeneous-finite}.} and has the feature that a satisfying assignment can be mapped to a sparsifying invertible $A \in \{-1,0,1\}^{|\vecx|\times |\vecx|}$ and vice versa. So, ETsparse is NP-hard even when $A$ is restricted to having only $\{-1,0,1\}$ entries.

        \item The authors of \cite{ChillaraGS23} showed the undecidability over $\Z$ of testing if a given $f$ is shift equivalent to some sparse polynomial ($f$ is shift equivalent to a polynomial $g$, if there exists a $\vecb \in \F^{|\vecx|}$ s.t $f = g(\vecx + \vecb)$). However, their result does not imply the intractability of ETsparse as testing shift equivalence to a sparse polynomial is a special case of $\et$ when $A$ is the identity map.
        
        \item The authors of \cite{BDS24} gave a randomized polynomial-time ET algorithm for \emph{random}\footnote{A random $s$-sparse degree-$d$ polynomial in their work was defined to be a polynomial where each monomial is formed independently of the others by selecting $d$ variables uniformly at random from the variable set; the coefficients are allowed to be arbitrary.} sparse polynomials, assuming black-box access to the input. Such average-case results for hard problems are not unusual in both algebraic and Boolean settings. In the algebraic setting, MCSP is $\NP$-hard for depth-$3$ powering circuits \cite{Shitov16} and for set-multilinear depth-$3$ circuits \cite{Hastad90}.\footnote{In a depth-$3$ powering circuit, each term is a power of a linear form. In a set-multilinear depth-$3$ circuit, the variable set is partitioned into $d$ sets such that each term is a product of $d$ linear forms, the $i\ith$ linear form being a linear form in the $i\ith$ set.} Yet, \cite{KayalS19} gave average-case learning algorithms for both these circuit models. In the Boolean setting, \cite{DyerF89} gave polynomial-time algorithms for average cases of $\NP$-hard problems like Graph $3$-colorability.

        \item Depth-$3$ power circuits, set-multilinear depth-$3$ circuits, and shifted sparse polynomials are all contained inside ROABPs. So, these models admit polynomial-time (improper) learning algorithms \cite{BeimelBBKV00, KlivansS06} and quasi-polynomial-time hitting sets \cite{AgrawalGKS15, ForbesS13}. Orbits of sparse polynomials require exponential size ROABPs \cite{SahaT21}; we cannot expect to improperly learn them via ROABPs. Theorem \ref{Theorem: ETSparse NP-hard} suggests that proper learning orbits of sparse polynomials is likely hard. Nonetheless, there is a quasi-polynomial time hitting set for orbits of sparse polynomials \cite{MediniS21, SahaT21}.        
\end{example}
We prove Theorem \ref{Theorem: ETSparse NP-hard} in Section \ref{Subsection: Reduction ETSparse}. Next, we define the gap version of $\et$.
\begin{problem}[$\gapet$] \label{Problem: gap ETSparse} 
Let $\alpha > 1$ be a parameter. Given a polynomial $f \in \F[\vecx]$ in its sparse representation and an integer $s_0$, output: 
\begin{itemize}
    \item YES, if there exist an $A \in \GL(|\vecx|,\F)$ and $\vecb \in \F$ such that $f(A\vecx + \vecb)$ is $s_0$-sparse.
    \item NO, if for all $A \in \GL(|\vecx|,\F)$ and $\vecb \in \F$, $f(A\vecx + \vecb)$ has sparsity at least $\alpha s_0$.
\end{itemize}
\end{problem}

Our second result, Theorem \ref{Theorem: gap ETSparse NP-hard}, shows that $\gapet$ is $\NP$-hard for $\alpha = s^{\frac{1}{3}-\epsilon}$, where $s$ is the sparsity of the input polynomial $f$ and $\epsilon \in (0,\frac{1}{3})$ is an arbitrary constant. Theorem \ref{Theorem: gap ETSparse NP-hard} is proven in Section \ref{Subsection: gap reduction}. From Theorem \ref{Theorem: gap ETSparse NP-hard}, we get Corollary \ref{Corollary: Sparse orbit complexity NP-hard} which states that $s^{\frac{1}{3}-\epsilon}$ factor approximation of the sparse-orbit complexity of an $s$-sparse polynomial is $\NP$-hard.

\begin{theorem}[$s^{\frac{1}{3}}\text{-gap-}\et$ is $\NP$-hard] \label{Theorem: gap ETSparse NP-hard}
Let $\epsilon \in (0,\frac{1}{3})$ be an arbitrary constant. 
\begin{enumerate}
    \item Let $\F$ be any field. There exists a deterministic polynomial-time many-one reduction from $\SAT$ to $s^{\frac{1}{3} - \epsilon}\text{-}\mathrm{gap}\text{-}\et$ over $\F$ where the input polynomial in $s^{\frac{1}{3} - \epsilon}\text{-}\mathrm{gap}\text{-}\et$ is $s$-sparse.

    \item Let $\F$ be any field. There exists a deterministic polynomial-time many-one reduction from $\SAT$ to $s^{\frac{1}{3} - \epsilon}\text{-}\mathrm{gap}\text{-}\et$ over $\F$ where the input polynomial in $s^{\frac{1}{3} - \epsilon}\text{-}\mathrm{gap}\text{-}\et$ is homogeneous and $s$-sparse.

\end{enumerate}
\end{theorem}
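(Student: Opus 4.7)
The plan is to strengthen the $\SAT$-to-$\et$ reduction of Theorem~\ref{Theorem: ETSparse NP-hard} into a gap-preserving reduction that achieves an $s^{1/3-\epsilon}$ multiplicative gap directly. Concretely, from a $\SAT$ instance $\phi$ I would produce a polynomial $f_\phi$ of sparsity $s$ and a threshold $s_0 = \Theta(s^{2/3+\epsilon/2})$ such that: (i) if $\phi$ is satisfiable, $f_\phi$ is equivalent to some $s_0$-sparse polynomial, and (ii) if $\phi$ is unsatisfiable, every polynomial equivalent to $f_\phi$ has sparsity at least $s^{1/3-\epsilon}\cdot s_0 = \Omega(s^{1-\epsilon/2})$. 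Since the trivial transformation $A=I$, $\vecb=0$ already achieves sparsity $s$, the NO case amounts to showing that \emph{no} affine transformation meaningfully sparsifies $f_\phi$, while the YES case must produce a dramatic collapse of sparsity from $s$ to $s^{2/3+\epsilon/2}$.

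A natural implementation is a gadget-based construction. I would decompose $f_\phi$ as $\sum_{i=1}^{m} G_i(\vecx^{(i)})$ for part~1 (and as a product $\prod_i G_i(\vecx^{(i)})$ of appropriately homogenized gadgets for part~2), where the $\vecx^{(i)}$ are pairwise disjoint blocks of variables and each $G_i$ is a $\Theta(s^{1/3})$-sparse Theorem~\ref{Theorem: ETSparse NP-hard}-style gadget encoding one clause (together with global consistency variables shared across blocks). The gadget design ensures that a satisfying assignment for $\phi$ induces a blockwise transformation that collapses each $G_i$ to $O(1)$ monomials, whereas unsatisfiability leaves a positive fraction of gadgets at sparsity $\Omega(s^{1/3})$ even after any affine transformation. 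With $m = s^{2/3}$ gadgets (achievable by padding $\phi$), the total input sparsity is $m \cdot s^{1/3} = s$, the YES-case sparsity is $O(m) = O(s^{2/3+\epsilon/2})$, and the NO-case sparsity is $\Omega(s^{1-\epsilon/2})$, yielding the required gap of $s^{1/3-\epsilon}$.

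The principal obstacle is two-fold: (a) designing the gadgets so that their sparsity under arbitrary affine transformations tracks the (un)satisfiability of the corresponding clause with the right quantitative trade-off, and (b) proving a rigidity lemma asserting that arbitrary affine transformations of the composite $f_\phi$ cannot exploit cross-block mixing to sparsify many unsatisfied gadgets simultaneously. For (b) my plan is to engineer each gadget to be irreducible in $\F[\vecx^{(i)}]$ with full essential support in its own block; then unique factorization in $\F[\vecx]$ combined with an essential-variable / Hessian-rank dimension argument forces any sparsifying affine transformation of $f_\phi$ to respect the block partition modulo a permutation of blocks and invertible affine maps within each block. The per-block rigidity from the Theorem~\ref{Theorem: ETSparse NP-hard} reduction then aggregates to the required global sparsity lower bound. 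Global consistency monomials shared across blocks enforce that the block-internal transformations must arise from a single global $\SAT$-assignment, converting the need for a Max-$\SAT$-style amplification into an ``all-or-nothing'' statement about plain $\SAT$. The homogeneous case (part~2) follows identically by switching from sum to product composition (which preserves homogeneity and transports sparsity multiplicatively across disjoint blocks) and using the homogeneous base reduction of Theorem~\ref{Theorem: ETSparse NP-hard}(2).
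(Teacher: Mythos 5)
There is a genuine quantitative gap at the heart of your plan, and it is exactly the point the paper's construction is engineered to get around. With one gadget per clause, each of sparsity $\Theta(s^{1/3})$, and $m=s^{2/3}$ gadgets, your YES-case sparsity is $\Theta(m)=\Theta(s^{2/3})$ (every gadget must retain at least one monomial). But for a plain unsatisfiable $\SAT$ instance, any single global assignment need violate only \emph{one} clause; even granting your rigidity and consistency lemmas in full, the NO-case lower bound is then only $\Omega(m)+\Omega(s^{1/3})=\Omega(s^{2/3})$, not $\Omega(s^{1-\epsilon/2})$. Your claim that unsatisfiability leaves a ``positive fraction'' of gadgets large is a Max-$\SAT$-style gap statement, and the ``all-or-nothing'' consistency mechanism does not supply it: forcing the blockwise maps to come from one global assignment says nothing about how many clauses that assignment violates. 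So as written the YES/NO ratio is $1+O(s^{-1/3})$, i.e., no gap, unless you invoke PCP/Max-$\SAT$ hardness, which you explicitly set out to avoid. The paper avoids this trap differently: it keeps the \emph{single} polynomial $f$ of Theorem \ref{Theorem: ETSparse NP-hard} (no block decomposition) and retunes the degree parameters, taking $d_4\geq\max(4mn,(mn)^{O(1/\epsilon)})$. A satisfied clause contributes only $(d_4+1)^2$ to the sparsity, so the YES case is at most $s_0\leq 4mn(d_4+1)^2$, while a \emph{single} violated clause already forces sparsity $(d_4+1)^3$ (Lemmas \ref{Lemma: unsat gap} and \ref{Lemma: unsat homogeneous gap}); since the penalty scales with a higher power of the tunable parameter $d_4$ than the entire YES-case budget, one bad clause suffices and the ratio $(d_4+1)/(4mn)\geq s^{1/3-\epsilon}$ follows from $s\approx m(d_4+1)^3$ and $d_4\geq (mn)^{O(1/\epsilon)}$, with no amplification and no PCP.

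Two secondary problems would also need repair. For part 2 you propose product composition over disjoint blocks, but sparsity is multiplicative across variable-disjoint factors: the input would have sparsity roughly $(s^{1/3})^{m}$ and the YES case $\Omega(2^{m})$, so the accounting ``$m\cdot s^{1/3}=s$'' no longer applies and the reduction is not even polynomial-size; the paper instead keeps a sum and homogenizes by padding each summand with powers of two fresh variables $x_0,y_0$ so that all summands share the total degree but have distinct $x_0$-degrees. Finally, your block-rigidity lemma for the \emph{sum} composition leans on irreducibility and unique factorization, which constrain factors of a product, not summands of a sum; invertible affine maps can freely mix blocks of a sum, and it is precisely to control this that the paper uses the elementary degree-separation device (distinct huge powers of $x_0$, with $d_1\geq s$, $d_2\geq 2d_3$, $d_3>m(d_4+1)^2$) rather than any Hessian or factorization argument.
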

\begin{example}

    \item Like Theorem \ref{Theorem: ETSparse NP-hard}, part $2$ of Theorem \ref{Theorem: gap ETSparse NP-hard} subsumes part $1$. We state parts $1$ and $2$ separately because of two reasons. One, part $1$ has a simpler proof. Two, the degree parameters in the proof of part $1$ have a better bound in comparison to that in the proof of part $2$.

        \item It may be possible to improve the constant $\frac{1}{3}$ in $s^{\frac{1}{3}-\epsilon}$ using a more careful analysis.

        \item Interestingly, the above results are obtained without invoking the celebrated PCP theorem \cite{AroraS98, AroraLMSS98, Dinur07}.

\end{example}
\begin{corollary} \label{Corollary: Sparse orbit complexity NP-hard}
Let $0 < \epsilon < \frac{1}{3}$ be an arbitrary constant. 
\begin{enumerate}
    \item Let $\F$ be any field. It is $\NP$-hard to compute $s^{\frac{1}{3} - \epsilon}$ factor approximation of the sparse-orbit complexity when the input is an $s$-sparse polynomial over $\F$.
    \item Let $\F$ be any field. It is $\NP$-hard to compute $s^{\frac{1}{3} - \epsilon}$ factor approximation of the sparse-orbit complexity when the input is an $s$-sparse homogeneous polynomial over $\F$.
\end{enumerate}
\end{corollary}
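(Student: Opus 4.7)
The plan is to derive this corollary as a direct consequence of Theorem~\ref{Theorem: gap ETSparse NP-hard} via a standard gap-to-approximation reduction. Write $s^*(f)$ for the sparse-orbit complexity of $f$; an algorithm that computes an $s^{\frac{1}{3}-\epsilon}$-factor approximation of sparse-orbit complexity returns some $\hat{s}$ with $s^*(f) \leq \hat{s} \leq s^{\frac{1}{3}-\epsilon}\cdot s^*(f)$. Given the constant $\epsilon \in (0, \tfrac{1}{3})$ from the statement, I would first fix an auxiliary constant $\epsilon' := \epsilon/2 \in (0,\epsilon) \subseteq (0,\tfrac{1}{3})$ and apply Theorem~\ref{Theorem: gap ETSparse NP-hard} with $\epsilon'$ in place of $\epsilon$; this supplies a polynomial-time many-one reduction from $\SAT$ to $s^{\frac{1}{3}-\epsilon'}\text{-}\mathrm{gap}\text{-}\et$.

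To finish, I would polynomial-time reduce $s^{\frac{1}{3}-\epsilon'}\text{-}\mathrm{gap}\text{-}\et$ to the approximation task: on an input $(f, s_0)$ where $f$ is $s$-sparse, invoke the approximation oracle on $f$, receive $\hat{s}$, and output YES iff $\hat{s} \leq s^{\frac{1}{3}-\epsilon}\cdot s_0$. If $(f, s_0)$ is a YES instance of the gap problem, then $s^*(f) \leq s_0$, so $\hat{s} \leq s^{\frac{1}{3}-\epsilon}\cdot s^*(f) \leq s^{\frac{1}{3}-\epsilon}\cdot s_0$, and we output YES. If $(f, s_0)$ is a NO instance, then minimizing the sparsity of $f(A\vecx+\vecb)$ over $(A,\vecb)$ gives $s^*(f) \geq s^{\frac{1}{3}-\epsilon'}\cdot s_0$, and since $\epsilon' < \epsilon$ and $s \geq 2$ (the cases $s \leq 1$ being trivial) we have $s^{\frac{1}{3}-\epsilon'} > s^{\frac{1}{3}-\epsilon}$, so $\hat{s} \geq s^*(f) > s^{\frac{1}{3}-\epsilon}\cdot s_0$, and we output NO. This proves part~$1$.

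Part~$2$ follows by the identical argument, now invoking part~$2$ of Theorem~\ref{Theorem: gap ETSparse NP-hard}: the reduction above does not touch $f$, so homogeneity of the input polynomial is preserved, and hence a factor-$s^{\frac{1}{3}-\epsilon}$ approximation oracle that works on homogeneous $s$-sparse polynomials suffices. The only real difficulty lies inside the gap-hardness theorem itself; once that is in hand, deducing the approximation-hardness corollary requires only the routine slack-constant trick $\epsilon' = \epsilon/2$ above, and I do not anticipate any nontrivial obstacle in the proof of the corollary proper.
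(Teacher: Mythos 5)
Your proposal is correct and matches the paper's (implicit) argument: the paper derives Corollary \ref{Corollary: Sparse orbit complexity NP-hard} directly from Theorem \ref{Theorem: gap ETSparse NP-hard} by exactly this standard observation that an $s^{\frac{1}{3}-\epsilon}$-factor approximator for sparse-orbit complexity decides the gap problem, with homogeneity preserved since the input polynomial is untouched. Your extra slack constant $\epsilon' = \epsilon/2$ is a harmless refinement to rule out the boundary case, not a different route.
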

\begin{example}
     Thus, approximating the sparse-orbit complexity within a certain super-constant factor is $\NP$-hard over any field. In contrast, \cite{SongWZ17, BlaserIJL18,Swernofsky18} showed that approximating the tensor rank (which corresponds to the smallest top fan-in of a set-multilinear depth-$3$ circuit) within a $1+\delta$ factor, where $\delta \approx 0.0005$, is $\NP$-hard over any field. We do not know of any hardness of approximation result for the Waring rank (which corresponds to the smallest top fan-in of a depth-$3$ powering circuit).
\end{example}

Now, we formally define the support of a polynomial.
\begin{definition}[Support of a polynomial] \label{Definition: support}
For a monomial $\vecx^{\vecalpha}$, where $\vecalpha$ is the exponent vector, the support of $\vecx^{\vecalpha}$, $\supp(\vecx^{\vecalpha})$, is the number of variables with non-zero exponent. The support of a polynomial $f$, $\supp(f)$, is the maximum support size over all the monomials of $f$.
\end{definition}
Thus, a polynomial has support $\sigma$ if there exists a monomial with support $\sigma$ and no other monomial has support $> \sigma$. The ET problem for constant-support polynomials and a stronger version of it are defined next (henceforth, $\sigma$ is assumed to be a constant). 
\begin{problem}[$\etcsp$] \label{Problem: ETcsp}
    Given a polynomial $f \in \F[\vecx]$ in its sparse representation and an integer $\sigma$, check if there exists an $A \in \GL(|\vecx|, \F)$ such that $\supp(f(A\vecx)) \leq \sigma$.
\end{problem}
\begin{problem}[$\prometcsp$] \label{Problem: Promise ETcsp}
    Given a polynomial $f \in \F[\vecx]$ with support $\sigma + 1$ in its sparse representation, check if there exists an $A \in \GL(|\vecx|, \F)$ such that $\supp(f(A\vecx)) \leq \sigma$.
\end{problem}
\begin{example}
    \item Unlike $\et$, checking if $f$ is in the \emph{orbit} of a constant-support polynomial is the same as checking if $f$ is equivalent to a constant-support polynomial. This follows from the observation that $\supp(f(\vecx)) = \supp(f(\vecx+\vecb))$ for any $\vecb \in \F^{|\vecx|}$.
\end{example}

Our third result, Theorem \ref{Theorem: ETsupport NP-hard}, shows that $\etcsp$ and $\prometcsp$ are $\NP$-hard. We prove Theorem \ref{Theorem: ETsupport NP-hard} in Section \ref{Section: ETsupport NP-hard}.

\begin{theorem}[$\etcsp$ is $\NP$-hard] \label{Theorem: ETsupport NP-hard}
Let $\sigma \geq 5$ be a constant and $\F$ be a field with $\textnormal{char}(\F) = 0$ or $> \sigma + 1$. There is a deterministic polynomial-time many-one reduction from $\SAT$ to $\etcsp$ over $\F$. In particular, $\SAT$ reduces to $\prometcsp$ in deterministic polynomial time.
\end{theorem}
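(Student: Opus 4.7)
The plan is to give a deterministic polynomial-time many-one reduction from $\SAT$ to $\prometcsp$; since $\etcsp$ with sparsity parameter $\sigma$ on inputs of support $\sigma+1$ is precisely $\prometcsp$, this establishes NP-hardness of both problems. Given a 3-CNF formula $\phi$ on variables $y_1,\ldots,y_n$ and clauses $C_1,\ldots,C_m$, we produce a polynomial $f_\phi \in \F[\vecx]$ of support exactly $\sigma+1$ such that $\phi$ is satisfiable if and only if there is an $A \in \GL(|\vecx|,\F)$ with $\supp(f_\phi(A\vecx)) \le \sigma$. The variables of $f_\phi$ are partitioned into global ``Boolean'' variables $Y_1,\ldots,Y_n$, a single pivot variable $Z$, and clause-local blocks $\mathbf{W}_j = (W_{j,1},\ldots,W_{j,\sigma-2})$ that are disjoint across clauses. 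This split is intended to let the clauses couple \emph{logically} through the shared $(Y,Z)$ variables while being decoupled \emph{structurally} (for the support analysis) through their disjoint $\mathbf{W}_j$-blocks.

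For each clause $C_j$ on literals of $y_{i_{j,1}},y_{i_{j,2}},y_{i_{j,3}}$ with signs $\alpha_{j,k}\in\{0,1\}$, I would design a clause gadget $g_j\in\F[Y,Z,\mathbf{W}_j]$ of support $\sigma+1$ whose top-support part has the form $P_j(Y_{i_{j,1}},Y_{i_{j,2}},Y_{i_{j,3}},Z)\cdot W_{j,1}\cdots W_{j,\sigma-2}$ for a suitably chosen cubic $P_j$ that depends on the signs $\alpha_{j,k}$. Setting $f_\phi := \sum_j g_j$, together with a small number of ``Boolean-consistency'' gadgets also of support $\sigma+1$, the forward direction is handled directly: given a satisfying assignment $\vecb\in\{0,1\}^n$, one applies the invertible transformation $A_\vecb$ consisting of the shear $Y_i\mapsto Y_i+b_iZ$ combined with clause-local pair-up moves on the $\mathbf{W}_j$'s, and verifies by computation that the top-support monomials of each $g_j(A_\vecb\vecx)$ cancel because $\vecb$ satisfies $C_j$, so the support of each gadget (and hence of $f_\phi(A_\vecb\vecx)$) drops to $\sigma$.

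The main obstacle is the backward direction: any $A\in\GL$ that reduces the support of $f_\phi$ to $\sigma$ must come from a satisfying assignment. The plan has four steps. First, the disjointness of the $\mathbf{W}_j$-blocks together with the precise form of the top-support part of $f_\phi$ force $A$ to respect a block structure on the $\mathbf{W}_j$'s up to modifications that do not affect the support analysis. Second, examining the interplay between $A$'s residual action on $(Y,Z)$ and the clause gadgets shows that any effective support-reducing $A$ must act on $(Y,Z)$ (modulo irrelevant symmetries) as a shear of the form $Y_i\mapsto Y_i+c_iZ$, since other $\GL$-actions would scatter the top-support monomials and leave at least one uncancelled across the clauses. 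Third, the Boolean-consistency gadgets force $c_i\in\{0,1\}$, yielding a genuine Boolean assignment $\vecb$. Fourth, the clause-gadget identities, applied in reverse, show that the shear $A_\vecb$ reduces $g_j$'s support if and only if $\vecb$ satisfies $C_j$, so every clause is satisfied. The hypotheses $\sigma\ge 5$ and $\mathrm{char}(\F)=0$ or $>\sigma+1$ enter naturally: $\sigma-2\ge 3$ clause-local variables are needed to accommodate a three-literal clause gadget at support $\sigma+1$ with enough room for cancellation, while the characteristic hypothesis ensures that the binomial coefficients arising in the cancellation identities (from expanding $(Y_i+c_iZ)^k$ in the pivot direction) are nonzero, so both the forward-direction cancellations and the backward-direction rigidity arguments go through cleanly.
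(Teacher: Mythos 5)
There is a genuine gap, and it sits in the clause gadget itself. With a single shared pivot $Z$ and shears $Y_i \mapsto Y_i + b_i Z$, a cubic $P_j(Y_{i_1},Y_{i_2},Y_{i_3},Z)$ cannot encode an OR of three literals via support reduction. First, the coefficient of any monomial supported on $\{Y_{i_1},Y_{i_2},Y_{i_3}\}$ alone (in particular $Y_{i_1}Y_{i_2}Y_{i_3}$) is invariant under such shears, since a shear can only move degree from the $Y$'s into $Z$; so if that coefficient is nonzero the support never drops, and if it is zero, the post-shear coefficients of the remaining support-$3$ monomials $Y_{i_a}Y_{i_b}Z$ are \emph{affine} functions of $(b_1,b_2,b_3)$ (each source monomial of a cubic can release at most one $Z$). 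Hence the set of Boolean shear vectors $b \in \{0,1\}^3$ for which the gadget's support drops is the intersection of the cube with an affine subspace, which has at most $4$ points (or all $8$ if the conditions are trivial) --- it can never be exactly the $7$ satisfying assignments of a $3$-clause, which is what your forward and backward directions jointly require. The same shared-pivot obstruction undermines your ``Boolean-consistency'' gadgets: with one pivot, the natural candidate $(Y_i+cZ)(Y_i+(c-1)Z)$ has $Y_iZ$-coefficient $2c-1$, so it cannot certify $c\in\{0,1\}$. What you are missing is the paper's central trick: give each Boolean variable its \emph{own} companion variable, i.e.\ pair $y_j$ with $x_j$, and build the clause polynomial as a product of variable-disjoint binomial factors $(y_j - a_{k,j}x_j)^{2+a_{k,j}}$ times $z_1\cdots z_{\sigma-5}$. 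Then support is additive across the factors, and the OR is realized by ``at least one factor collapses to a single variable'' under $y_j \mapsto y_j + c_j x_j$, which happens exactly when $c_j=a_{k,j}\in\{0,1\}$; as a bonus, no separate Booleanity gadget is needed at all, since the collapse condition itself pins the relevant $c_j$ to $\{0,1\}$.

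The second, independent gap is the backward rigidity step, which you assert in one sentence (``other $\GL$-actions would scatter the top-support monomials''). Proving that any support-reducing $A\in\GL(|\vecx|,\F)$ must act as a variable permutation/scaling on the non-$Y$ variables and as $y_i \mapsto Y_i + c_i X_i$ on the $Y$'s is the technical heart of the paper's proof: it requires adding two large auxiliary monomial families (all $\sigma$-subsets of the $x,z$ variables, and the paired products $(x_{i_1}y_{i_1}\cdots x_{i_{\sigma/2}}y_{i_{\sigma/2}})^{\star}$), making all summands degree separated so the support analysis localizes, and invoking a lemma on supports of products of powers of linearly independent linear forms together with an intersection argument over these families. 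Without such gadgets your $A$ could, for instance, act inside a $\mathbf{W}_j$-block so that the product of the transformed linear forms loses its full-support monomial (the relevant coefficient is a permanent, which can vanish for invertible matrices), reducing the support of a clause gadget with no satisfying assignment in sight. So as written the proposal neither exhibits a workable clause gadget nor rules out cheating transformations; both need the paired-variable design and the rigidity machinery that the paper supplies.
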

\begin{example} \label{Remarks: Theorem ETsupport}
        \item Over fields of finite characteristic, it is assumed that the exponent vectors corresponding to the monomials of the input polynomial are given in binary.
\end{example}
We now consider a variant of $\et$ and $\gapet$ which involves testing equivalence to sparse polynomials under only translations. Formally,

\begin{problem}[$\sparseshift$] \label{Problem: Shift equivalence}
Given a polynomial $f \in \F[\vecx]$ in its sparse representation and an integer $s$, decide if there exists a $\vecb \in \F^{|\vecx|}$, such that $f(\vecx + \vecb)$ is $s$-sparse.
\end{problem}

\begin{problem}[$\gapsparseshift$] \label{Problem: Gap Shift equivalence}
Let $\alpha$ be a parameter $> 1$. Given a polynomial $f \in \F[\vecx]$ in its sparse representation and an integer $s_0$, output:
\begin{itemize}
    \item YES, if there exists a $ \vecb \in \F^{|\vecx|}$ such that $f(\vecx + \vecb)$ has at most $s_0$ monomials.
    \item NO, if for all $\vecb \in \F^{|\vecx|}$, $f(\vecx + \vecb)$ has sparsity at least $\alpha s_0$.
\end{itemize}
\end{problem}
The problem of testing equivalence to polynomials under only translations has been studied and called the Shift Equivalence Testing problem (SET) in \cite{DvirOS14}, based on which we use the name $\sparseshift$. Refer Section \ref{sec:related work} for a more detailed discussion on SET and $\sparseshift$. Briefly, the authors of \cite{LakshmanS94} studied $\sparseshift$ for univariate polynomials and gave efficient algorithms for it along with a criterion for the existence of such shifts. Later, the authors of \cite{GrigorievL00} extended the criteria to multivariate polynomials and also gave deterministic and randomized algorithms for $\sparseshift$, where if the input is an $n$-variate degree-$d$ polynomial and an integer $s$, then the running time is super-polynomial in $d,s$ and $n$ for both algorithms when $n$ is not a constant. The lack of progress in developing efficient algorithms for $\sparseshift$ motivated the authors of \cite{ChillaraGS23} to study the hardness of $\sparseshift$. More precisely, they studied the problem of deciding, for a given polynomial $f(\vecx)$, the existence of a $\vecb \in \F^{|\vecx|}$ such that $f(\vecx+\vecb)$ has strictly lesser monomials than $f(\vecx)$. Note that this is a variant of $\sparseshift$, which we will refer to as $s\text{-to-}(s-1)\sparseshift$, where the inputs are a polynomial of sparsity $s$ and an integer $s-1$. They also studied $\gapsparseshift$.

In our fourth and last result, we adapt the proofs of Theorems \ref{Theorem: ETSparse NP-hard} and \ref{Theorem: gap ETSparse NP-hard} to show the NP-hardness of $\sparseshift$ and $\gapsparseshift$ (for super-constant $\alpha$) over any integral domain including fields. We highlight the main differences between our results and those of \cite{ChillaraGS23} in the remarks following the theorem. A more detailed comparison is made in Section \ref{Sec:SETsparse prev work compare}.
\begin{theorem}[$\sparseshift$ and $s^{\frac{1}{3} - \epsilon}\text{-}\mathrm{gap}\text{-}\sparseshift$ are $\NP$-hard.] \label{Theorem: ETSparse_shift NP-hard}
Let $R$ be an integral domain that can also be a field. Let $\epsilon \in (0,1/3)$ be an arbitrary constant.
\begin{enumerate}
    \item There is a deterministic polynomial-time many-one reduction from $\SAT$ to $\sparseshift$ over $R$.
    \item There is a deterministic polynomial-time many-one reduction from $\SAT$ to $s^{\frac{1}{3} - \epsilon}\text{-}\mathrm{gap}\text{-}\sparseshift$ over $R$ where the input polynomial in $s^{\frac{1}{3} - \epsilon}\text{-}\mathrm{gap}\text{-}\sparseshift$ is $s$-sparse.
\end{enumerate}
\end{theorem}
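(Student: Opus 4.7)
The plan is to adapt the polynomial constructions of Theorems~\ref{Theorem: ETSparse NP-hard} and~\ref{Theorem: gap ETSparse NP-hard} so that the role played by the linear change of variables $A$ is absorbed into the translation $\vecb$. Since $\sparseshift$ is precisely $\et$ with $A = I$, the goal is to construct, from a 3-SAT formula $\phi$, a polynomial $f_\phi$ such that (i) every affine map $(A,\vecb)$ sparsifying $f_\phi$ to the target sparsity can be taken with $A = I$, and (ii) satisfying assignments of $\phi$ correspond bijectively to translations $\vecb$ that achieve the target sparsity.

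The first step is to re-encode the SAT assignment entirely as a shift. In the reduction of Theorem~\ref{Theorem: ETSparse NP-hard} each satisfying assignment is mapped to a sparsifying pair $(A,\vecb)$ with $A \in \{-1,0,1\}^{n \times n}$ whose sole function is to select, for each SAT variable, a sign (true versus false). I would double the variable set, introducing a pair $(y_i,\bar y_i)$ per SAT variable $x_i$, rewrite each clause gadget as a polynomial in these doubled variables, and encode the Boolean value of $x_i$ as the shift $(y_i,\bar y_i) \mapsto (y_i - b_i,\ \bar y_i - (1-b_i))$ with $b_i \in \{0,1\}$. In this form, the sign flip previously executed by $A$ is simulated by the freedom to translate either $y_i$ or $\bar y_i$ to the vanishing point of the relevant clause monomials, so satisfying assignments become shifts that sparsify $f_\phi$ to the target $s$.

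The second step is to append a rigidity gadget in fresh variables $\vecz$ that forces $A$ to be the identity on any sparsifying $(A,\vecb)$. A natural choice is a polynomial of the form $h(\vecz) = \sum_{i} z_i^{p_i}$ for pairwise distinct large primes $p_i$, which admits no nontrivial sparsifying linear map; a handful of further monomials can then be added to kill the residual scalings and to lock the $\vecy$-block of $A$ to the identity via cross-terms with the $\vecz$-variables. Placing $h$ in variables disjoint from the $\vecy$-block and using only $O(1)$ additional monomials, the target sparsity of the original reduction needs only a cosmetic additive adjustment. Because the argument that $h$ is $A$-rigid proceeds by comparing monomial supports and uses no division, it goes through over any integral domain $R$, yielding part (1).

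Part (2) follows by applying the same modification to the reduction of Theorem~\ref{Theorem: gap ETSparse NP-hard}: since the rigidity gadget contributes only $O(1)$ monomials, the $s^{1/3 - \epsilon}$ gap is preserved up to a $1 + o(1)$ factor. The main obstacle will be engineering the rigidity gadget so that it simultaneously (a) forces $A = I$ over \emph{any} integral domain, including fields of small positive characteristic where naive rigidity arguments based on high-degree monomials can degenerate, and (b) adds only $O(1)$ monomials, which is essential to preserve the gap factor in part (2). A secondary technical point is verifying that, in the doubled-variable encoding, the clause gadgets remain exactly as non-sparse for unsatisfiable $\phi$ under pure shifts as the originals were under $\{-1,0,1\}$-valued linear maps; this reduces to a monomial-by-monomial analysis of the cancellations achievable by translation alone.
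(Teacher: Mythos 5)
Your plan is built around forcing the linear part $A$ of a sparsifying affine map to be the identity, but $\sparseshift$ has no linear part to control: the problem quantifies only over translations $\vecb$, so a reduction needs exactly two things, (a) a satisfying assignment yields a shift achieving sparsity $\le s$, and (b) any shift achieving sparsity $\le s$ yields a satisfying assignment. Your rigidity gadget addresses a constraint that is not part of the problem, and moreover it cannot work as described: sparsity is invariant under permutation and nonzero scaling of variables, so no sparsity-based gadget can force $A = I$ exactly (at best $A$ is pinned down up to permuted scaling, which is all that the proofs of Theorem \ref{Theorem: ETSparse NP-hard} ever establish); a gadget $h(\vecz)=\sum_i z_i^{p_i}$ living in variables disjoint from the $\vecy$-block says nothing about how $A$ mixes the $\vecy$-variables among themselves; and $O(1)$ extra cross-monomials cannot constrain an $n\times n$ block of $A$, especially over small positive characteristic where powers of linear forms collapse under Frobenius. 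So the step on which both parts of your argument lean is simultaneously unnecessary and unobtainable.

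The real content, which your proposal defers to ``a monomial-by-monomial analysis of the cancellations achievable by translation alone,'' is precisely what the paper's proof supplies, and it does so with a simpler encoding than your doubled variables $(y_i,\bar y_i)$. In Section \ref{section-sparseshiftNPH} each SAT variable $x_i$ gets the pair of gadgets $x_0^{(2i-1)(d_2+1)}(x_i+1)^{d_2}$ and $x_0^{2i(d_2+1)}(x_i-1)^{d_2}$, so the Boolean value is encoded by the shift $\vecb_{|x_i}=(-1)^{u_i}$, which collapses exactly one of the two affine powers to $x_i^{d_2}$; clauses become $x_0^{k}\prod_{j\in C_k}(x_j+(-1)^{a_{k,j}})^{d_3}$. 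The reverse direction is a direct analysis of an arbitrary shift: the term $x_0^{d_1}$ forces $\vecb_{|x_0}=0$ (otherwise the binomial expansion creates more than $s$ degree-separated monomials), degree separation with respect to $x_0$ then lets one add up sparsities gadget by gadget, the sparsity of each variable gadget is $d_2+2$ exactly when $\vecb_{|x_i}\in\{-1,1\}$ and at least $2d_2+2$ otherwise, and an unsatisfied clause contributes $(d_3+1)^3$, which overshoots $s$. The gap version (part 2) is the same polynomial with a different choice of the degree parameters and a case-analysis lower bound on $\cal S(f(\vecz+\vecb))$ for unsatisfiable $\psi$ (Lemma \ref{Lemma: unsat gap-shift}); no rigidity gadget, no PCP, and no gap amplification is needed, and because only binomial-sparsity and degree-separation facts are used, everything holds over an arbitrary integral domain. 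To repair your write-up you would have to drop the rigidity route entirely and carry out this shift-only sparsity analysis, including the argument that non-Boolean shift coordinates and unsatisfied clauses force sparsity above the threshold.
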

\begin{example}
    \item The authors of \cite{ChillaraGS23} showed that polynomial solvability reduces to $s\text{-to-}(s-1)\sparseshift$ over any integral domain which is \emph{not} a field. This implies that $\sparseshift$ is at least as hard as polynomial solvability over integral domains but not fields. In contrast, we show $\sparseshift$ is $\NP$-hard over any integral domain \emph{including} fields.
    \item The authors of \cite{ChillaraGS23} also showed that for any \emph{constant} $\alpha > 1$, $\gapsparseshift$ is $\NP$-hard over $\R,\Q,\F_p$ and $\Z_q$ (Abelian group with $q$ elements) when the input polynomial is in the sparse representation, and undecidable over $\Z$ for $\alpha = s^{o(1)}$, where $s$ is the sparsity of the input polynomial given in the sparse representation. Both their results use gap amplification, and the result over fields additionally invokes the PCP theorem. In contrast, we show $s^{\frac{1}{3} - \epsilon}\text{-}\mathrm{gap}\text{-}\sparseshift$, where $s^{\frac{1}{3} - \epsilon}$ is super-constant, is $\NP$-hard over any integral domain \emph{including} fields. We prove our result without using gap amplification or invoking the PCP theorem.  
    \item Like in Theorem \ref{Theorem: ETSparse NP-hard}, the reduction is natural, except when the characteristic is $2$. In contrast, the reduction in \cite{ChillaraGS23} is not natural.

\end{example}
\begin{corollary} \label{Corollary: Sparse orbit shift complexity NP-hard}
Let $0 < \epsilon < \frac{1}{3}$ be an arbitrary constant and $R$ be any integral domain. It is $\NP$-hard to compute $s^{\frac{1}{3} - \epsilon}$ factor approximation of the sparse-orbit complexity under only translations when the input is an $s$-sparse polynomial over $R$.
\end{corollary}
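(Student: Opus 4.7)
The plan is to derive the corollary as a direct consequence of part~2 of Theorem~\ref{Theorem: ETSparse_shift NP-hard}. For an $s$-sparse polynomial $f \in R[\vecx]$, let $s^*(f)$ denote its translation-only sparse-orbit complexity, i.e.\ the smallest integer $s_0$ such that $f(\vecx + \vecb)$ is $s_0$-sparse for some $\vecb \in R^{|\vecx|}$. I would show that a polynomial-time $s^{\frac{1}{3} - \epsilon}$-approximation algorithm for $s^*$ on $s$-sparse inputs yields a polynomial-time decider for $s^{\frac{1}{3} - \epsilon'}\text{-}\mathrm{gap}\text{-}\sparseshift$ for a slightly smaller $\epsilon'$, contradicting part~2 of that theorem.

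Concretely, suppose $\mathcal{A}$ is a deterministic polynomial-time algorithm that, on input an $s$-sparse $f$, outputs a value $\tilde{s}$ satisfying $s^*(f) \leq \tilde{s} \leq s^{\frac{1}{3} - \epsilon} \cdot s^*(f)$. Fix $\epsilon' := \epsilon/2 \in (0, \epsilon)$. Given an instance $(f, s_0)$ of $s^{\frac{1}{3} - \epsilon'}\text{-}\mathrm{gap}\text{-}\sparseshift$, I would run $\mathcal{A}(f)$ to obtain $\tilde{s}$ and output YES iff $\tilde{s} \leq s^{\frac{1}{3} - \epsilon} s_0$. In the YES case, $s^*(f) \leq s_0$, so $\tilde{s} \leq s^{\frac{1}{3} - \epsilon} s_0$; in the NO case, $s^*(f) \geq s^{\frac{1}{3} - \epsilon'} s_0$, so
\[
\tilde{s} \;\geq\; s^*(f) \;\geq\; s^{\frac{1}{3} - \epsilon'} s_0 \;=\; s^{\epsilon/2}\cdot s^{\frac{1}{3} - \epsilon} s_0 \;>\; s^{\frac{1}{3} - \epsilon} s_0,
\]
using $s \geq 2$, which may be assumed without loss of generality. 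This gives a polynomial-time decider for the gap problem, contradicting Theorem~\ref{Theorem: ETSparse_shift NP-hard}.

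There is essentially no technical obstacle: the only mild subtlety is that at the boundary of the approximation guarantee, the YES and NO cases of a gap problem with the \emph{same} gap factor as the approximation factor cannot be strictly separated. This is handled by choosing a slightly sharper gap factor $s^{\frac{1}{3} - \epsilon'}$ in the hardness reduction than the approximation factor $s^{\frac{1}{3} - \epsilon}$, which incurs no loss since Theorem~\ref{Theorem: ETSparse_shift NP-hard} gives NP-hardness for every $\epsilon' \in (0, 1/3)$. With this standard adjustment, the reduction works uniformly over any integral domain $R$, establishing the corollary.
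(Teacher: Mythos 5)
Your proposal is correct and is exactly the argument the paper intends: the corollary is the standard implication from the NP-hardness of $s^{\frac{1}{3}-\epsilon'}\text{-}\mathrm{gap}\text{-}\sparseshift$ (part 2 of Theorem \ref{Theorem: ETSparse_shift NP-hard}) to hardness of approximating the translation-only sparse-orbit complexity, which the paper states without a separate proof. Your handling of the boundary issue by invoking the gap theorem at a slightly smaller constant $\epsilon' = \epsilon/2$ is a valid (and careful) way to get a strict separation, and it costs nothing since the theorem holds for every constant in $(0,1/3)$.
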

We prove Theorems \ref{Theorem: ETSparse NP-hard}, \ref{Theorem: gap ETSparse NP-hard}, \ref{Theorem: ETsupport NP-hard}, and parts 1 and 2 of Theorem \ref{Theorem: ETSparse_shift NP-hard} in Sections \ref{Subsection: Reduction ETSparse}, \ref{Subsection: gap reduction}, \ref{Section: ETsupport NP-hard}, \ref{section-sparseshiftNPH} and \ref{section-gapsparseshiftNPH} respectively. We give proof sketches of the respective theorems at the beginning of the sections. This paper is an extended version of \cite{BDSS24}. In this extended version, we prove Theorem \ref{Theorem: ETSparse_shift NP-hard}, which does not appear in \cite{BDSS24}, and strengthen Theorem \ref{Theorem: ETsupport NP-hard} by proving it for $\sigma \geq 5$ instead of $\sigma \geq 6$.

\subsection{Related work} \label{sec:related work}
\paragraph{Results on ET.} As mentioned in Section \ref{Section: Introduction}, the study of ET was initiated in \cite{Kayal11} where efficient ET algorithms were given for the power symmetric and the elementary symmetric polynomials. Following this, efficient ET algorithms were given for several other important polynomial families and circuit classes such as the permanent \cite{Kayal12}, the determinant \cite{Kayal12, grochowPhD, GargGK019}, the iterated matrix multiplication (IMM) polynomial \cite{KayalNST17,MurthyNS20}, the continuant polynomial \cite{MediniS21}, read-once formulas (ROFs) \cite{GuptaST23}, and design polynomials \cite{BDS24,GuptaS19}. ET algorithms have also been used to give efficient reconstruction algorithms; for example, \cite{KayalNS19} gave an efficient average-case reconstruction algorithm for low-width ABPs based on ET for the determinant. 

The sum-product polynomial $\SP : = \sum_{i \in [s]}{\prod_{j \in [d]}{x_{i,j}}}$ is a rare example for which three different ET algorithms are known. The SP polynomial can be computed by an ROF. So, the ET algorithm for ROFs \cite{GuptaST23, Kayal11}, which is based on analyzing the Hessian determinant, gives ET for SP. Also, SP is a design polynomial, so the ET algorithm of \cite{BDS24}, which uses the vector space decomposition framework of \cite{KayalS19, GKS20}, holds for SP. The authors of \cite{MediniS21} also observed that ET for SP follows from the reconstruction algorithm in \cite{KayalS19}. A third ET algorithm for SP can be designed by analyzing its Lie algebra. Observe that the orbit of SP is a dense subclass of homogeneous depth-$3$ circuits. However, as ET for SP is easy, it does not provide any supporting evidence for the hardness of MCSP for homogeneous depth-$3$ circuits. 

\paragraph{Results on PE.} \label{PE variants} Quadratic form equivalence can be solved in polynomial time over $\R$, $\C$, finite fields and $\Q$ (assuming access to integer factoring oracle) \cite{Saxenaphd,Wallenborn13}. These algorithms are based on well-known classification of quadratic forms \cite{Lam04,Ara11}. In contrast, \cite{AgrawalS05} showed that cubic form equivalence (CFE) is at least as hard as graph isomorphism. The authors of \cite{GrochowQ23a} showed that CFE is polynomial time equivalent to several other problems like group isomorphism for $p$-groups, algebra isomorphism, trilinear form equivalence, etc.

A variant of PE is the Shift Equivalence Testing problem (SET), as it was called by \cite{DvirOS14}, where given two $n$-variate polynomials $f$ and $g$, one needs to check if there exists $\vecb \in \F^n$ such that $f(\vecx) = g(\vecx + \vecb)$. The author of \cite{Grigoriev97} gave a deterministic algorithm over characteristic $0$ fields, a randomized algorithm over prime residue fields and a quantum algorithm over characteristic $2$ fields for SET. All these algorithms have running time polynomial in the dense representation of the input, that is, for $n$-variate, degree-$d$ polynomials given in the verbose representation as input, the running time is $\poly(\binom{n+d}{d})$. The authors of \cite{DvirOS14} gave a randomized algorithm for SET assuming black-box access to $n$-variate polynomials $f$ and $g$ with degree bound $d$ and circuit size bound $s$. Their algorithm runs in $\poly(n,d,s)$ time. Another randomized polynomial-time algorithm for SET is given in \cite{Kayal12}.

A variant of SET, which we call $\sparseshift$, is where a single $n$-variate polynomial $f(\vecx)$ and a positive integer $t$ are given as inputs, and the objective is to decide if there exists $\vecb \in \F^n$ such that $f(\vecx+\vecb)$ is $t$-sparse. The authors of \cite{LakshmanS94} studied $\sparseshift$ for univariate polynomials over $\Q$ and gave sufficient conditions for the uniqueness and rationality of a $t$-sparsifying shift. The authors of \cite{GrigorievL00} extended these conditions to multivariate polynomials and gave two algorithms for computing $t$-sparsifying shifts for $n$-variate, degree-$d$ polynomials, one where the input polynomial has finitely many $t$-sparsifying shifts and the other for polynomials without any finiteness restriction on the number of $t$-sparsifying shifts. The running time of the first algorithm is $(dt)^{O(n)}$ without randomization and $t^{O(n)}$ with randomization, while that of the second one is $(nt)^{O(n^2)}$. As mentioned in the remarks following Theorem \ref{Theorem: ETSparse_shift NP-hard}, the authors of \cite{ChillaraGS23} showed that $s\text{-to-}(s-1)\sparseshift$ is undecidable over $\Z$ by showing a reduction from polynomial solvability over $\Z$ to $\sparseshift$. They also showed that $\gapsparseshift$ is $\NP$-hard over $\R,\Q, \F_p$ and $\Z_q$ for constant $\alpha$ and undecidable over $\Z$ for suitable $\alpha$. A detailed comparison with their results is presented in Section \ref{Sec:SETsparse prev work compare}.

The scaling equivalence problem is yet another variant of PE, which involves checking for given $n$-variate polynomials $f$ and $g$ whether there exists a diagonal matrix $S \in \GL(n,\F)$ such that $f(\vecx) = g(S\vecx)$. The authors of \cite{BlaserRS17} gave a randomized polynomial-time algorithm for the scaling equivalence problem over $\R$. 

\paragraph{Hardness results.} The author of \cite{Kayal12} showed that the problem of checking if a polynomial is an affine projection of another polynomial is $\NP$-hard via a reduction from Graph $3$-Colorability. Computing the tensor rank (which is MCSP for depth-$3$ set multilinear circuits) is $\NP$-hard \cite{Hastad90}, so is computing the Waring rank for a polynomial (which is MCSP for depth-$3$ powering circuits) \cite{Shitov16}. In the Boolean world, \cite{KhotS08} showed that there is no polynomial-time algorithm to $n^{1-\delta}$-approximate, where $\delta > 0$ is an arbitrarily small constant, a $\mathrm{DNF}$ with minimum number of terms for any $n$-variate Boolean function given as a truth table, unless $\NP$ is decidable in quasi-polynomial time. It is also known that $(1+\delta)$-approximate MCSP, where $\delta \approx 0.0005$ is a constant, is $\NP$-hard for set-multilinear depth three circuits \cite{SongWZ17,Swernofsky18,BlaserIJL18}. In \cite{KlivansS09}, it was shown that depth-$3$ arithmetic circuits cannot be PAC-learned in polynomial time unless the length of a shortest nonzero vector of an $n$-dimensional lattice can be approximated to within a factor of $\tilde{O}(n^{1.5})$ in polynomial time by a quantum algorithm. This means it is hard to PAC-learn the class of Boolean functions that match the output of depth-$3$ arithmetic circuits on the Boolean hypercube. 

\paragraph{Hitting sets and lower bounds for orbits of sparse polynomials.} The authors of \cite{MediniS21} gave a quasi-polynomial time hitting set\footnote{A hitting set for a circuit class $\mathcal{C}$ is a set $S \subseteq \F^{|\vecx|}$ such that for every non-zero polynomial $f(\vecx)$ computable by a circuit $C \in \mathcal{C}$, $f(\veca) \neq 0$ for some $\veca \in S$.} construction for the orbits of sparse polynomials. Orbits of sparse polynomials form a subclass of homogeneous depth-$3$ circuits. The authors of \cite{NisanW97} showed that any homogeneous depth-$3$ circuit computing the $n$-variate elementary symmetric polynomial of degree $2d$ has size  $\Omega((\frac{n}{4d})^d)$. The authors of \cite{KayalST16} showed the existence of an explicit polynomial family in $n$ variables and degree $d$, with $d \geq n$, for which any homogeneous depth-$3$ circuit computing it must be of size at least $2^{\Omega(n)}$.
\subsection{Roadmap of the paper}
In Section \ref{sec:prelim}, we state a few useful observations and claims, the proofs of which appear in Section \ref{Section: Proofs prelim} of the appendix. The proof of part one of Theorem \ref{Theorem: ETSparse NP-hard} for fields of characteristic zero is given in Sections \ref{subsubsection: sparse f construction}-\ref{subsubsection: sparse bwd direction}. Section \ref{section-ETSparse-homogenous} has the proof of part two of the same theorem for characteristic zero fields. In Section \ref{section-extension-finite}, we prove Theorem \ref{Theorem: ETSparse NP-hard} for fields of finite characteristics. Similarly, the proofs of parts one and two of Theorem \ref{Theorem: gap ETSparse NP-hard} for characteristic zero fields appear in Sections \ref{sec:gap non-homog} and \ref{sec:gap homog}, respectively. Section \ref{sec:gap finite char} contains the proof of Theorem \ref{Theorem: gap ETSparse NP-hard} over fields of finite characteristics. In Section \ref{Section: ETsupport NP-hard}, we prove Theorem \ref{Theorem: ETsupport NP-hard}. For simplicity, we ignore the effect of translation vectors in the above-mentioned sections. In Section \ref{Section: Translations} of the appendix, we show how to handle translation vectors. In Sections \ref{section-sparseshiftNPH} and \ref{section-gapsparseshiftNPH}, we prove parts one and two of Theorem \ref{Theorem: ETSparse_shift NP-hard} respectively. The missing proofs of the observations, claims, lemmas, and propositions in Sections \ref{Subsection: Reduction ETSparse}, \ref{Subsection: gap reduction}, \ref{Section: ETsupport NP-hard} and \ref{section-shifteqvhard} appear in Sections \ref{Section: ETSparse proofs}, \ref{Section: Gap ETSparse proofs}, \ref{Section: ETsupport proofs} and \ref{Section: sparse-shift proofs} of the appendix, respectively. 
\section{Preliminaries} \label{sec:prelim}
\subsection{Definitions and notations} \label{Section: Definitions}
For $n,a,b \in \N$, $[n]$ denotes the set $\{1,2 \dots, n\}$ and $[a,b]$ denotes the integers from $a$ to $b$, both inclusive. A polynomial is \emph{homogeneous} if all its monomials have the same total degree. The set of invertible linear transforms in $n$ variables over a field $\F$ is denoted by $\GL(n,\F)$. For a polynomial $f \in \F[\vecx]$, the action of a linear transform $A \in \F^{|\vecx| \times |\vecx|}$ on its variables is denoted by $f(A\vecx)$ as well as by $A(f)$. The \emph{sparsity} of a polynomial $f$, denoted as $\mathcal{S}(f)$, is the number of monomials in $f$ with non-zero coefficients. For a polynomial $f$, $\var(f)$ denotes the set of variables that occur in at least one monomial of $f$.  We have used the notation $f \sim g$ earlier to denote $f = g(A\vecx + \vecb)$. Henceforth, we will ignore the translation vector $\vecb$ in the main body of the discussion for simplicity but mention the necessary changes in the proofs or point to appropriate sections when translations are involved. Thus, for polynomials $f$ and $g$, $f \sim g$ will mean $f(\vecx) = g(A\vecx)$ where $A \in \GL(|\vecx|,\F)$. Similarly, the \emph{orbit} of a polynomial $f$ will denote the set $\{f(A\vecx), A \in \GL(|\vecx|,\F)\}$. The \emph{degree} of a monomial is its total degree, and the degree of a polynomial $f$ is the maximum degree amongst all monomials in $f$. The $x$-degree of a monomial is the degree of the variable $x$ in the monomial.

\begin{definition}[Degree separated polynomials]
Polynomials $f$ and $g$ are \textit{degree separated} if no monomial of $f$ has the same degree as a monomial of $g$. Similarly, $f$ and $g$ are degree separated \emph{with respect to a variable $x$} if no monomial of $f$ has the same $x$-degree as a monomial of $g$.
\end{definition}
The \emph{set of degrees} of a polynomial is the set of distinct degrees of all the monomials in the polynomial. For example, the set of degrees of $f(x_1,x_2) = x_1^2 + x_1x_2 + 4x_2$ is $\{2,1\}$. A \emph{linear form} is a homogeneous degree one polynomial. An \emph{affine form} is a degree one polynomial.

\subsection{Algebraic preliminaries} \label{Section: Preliminaries}
The proofs of the observations and claims stated in this section can be found in Appendix \ref{Section: Proofs prelim}.
\begin{observation} \label{Lemma: equiv degrees same}
    Let $f$ and $g$ be polynomials such that $f \sim g$. Then, $f$ and $g$ have the same set of degrees for the monomials. Thus, if $f$ and $g$ are degree separated, then $f \not \sim g$.
\end{observation}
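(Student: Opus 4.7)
The plan is to reduce the claim to the observation that invertible linear substitutions preserve the degree decomposition of a polynomial. Since the preliminaries explicitly state that in the main discussion we take $f \sim g$ to mean $f(\vecx) = g(A\vecx)$ for some $A \in \GL(|\vecx|, \F)$, we only need to handle the linear case (translations are deferred to the appendix, and in any event would only add lower-degree monomials, not new top degrees, so the ``set of degrees'' statement has to be interpreted appropriately there).

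First I would decompose $g$ into its homogeneous parts: write $g = \sum_{d \in D(g)} g_d$, where $D(g)$ denotes the set of degrees of $g$ and each $g_d$ is the nonzero degree-$d$ homogeneous component of $g$. Substituting $\vecy \mapsto A\vecx$, we get
\[
f(\vecx) \;=\; g(A\vecx) \;=\; \sum_{d \in D(g)} g_d(A\vecx).
\]
The key point is that each coordinate of $A\vecx$ is a linear form in $\vecx$, so substituting $A\vecx$ into a degree-$d$ monomial yields a product of $d$ linear forms, which is homogeneous of degree $d$. Consequently $g_d(A\vecx)$ is homogeneous of degree $d$ for every $d$, and the display above is already the decomposition of $f$ into homogeneous parts.

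Next I would use invertibility of $A$ to argue that no cancellation between degrees occurs and that no homogeneous part gets killed. Since the homogeneous parts live in distinct graded components of $\F[\vecx]$, they cannot cancel each other, so $D(f) \subseteq D(g)$ follows if we show $g_d(A\vecx) \not\equiv 0$ whenever $g_d \not\equiv 0$. But if $g_d(A\vecx)$ were the zero polynomial, then substituting $\vecx \mapsto A^{-1}\vecy$ (which is well-defined since $A \in \GL(|\vecx|, \F)$) would give $g_d(\vecy) \equiv 0$, contradicting $g_d \neq 0$. Hence $D(f) = D(g)$.

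I do not anticipate any real obstacle here; the only thing to be careful about is ensuring that substituting linear forms into a homogeneous polynomial preserves homogeneity (immediate from distributivity), and invoking invertibility of $A$ to rule out the degenerate possibility that a nonzero homogeneous component becomes zero after substitution. The second sentence of the observation (``if $f$ and $g$ are degree separated, then $f \not\sim g$'') is then just the contrapositive: $f \sim g$ forces $D(f) = D(g)$, but degree-separatedness means $D(f) \cap D(g) = \emptyset$, which together with $D(f), D(g) \neq \emptyset$ yields a contradiction.
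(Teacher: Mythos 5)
Your proposal is correct and follows essentially the same route as the paper: the paper's proof likewise observes that an invertible linear substitution sends every monomial to a combination of monomials of the same degree (so no new degrees appear) and then uses invertibility, via $A^{-1}$, to get the reverse containment, which is exactly what your homogeneous-decomposition plus non-vanishing argument formalizes. The only difference is presentational — you spell out the graded decomposition and the no-cancellation step explicitly, while the paper states the same facts more tersely.
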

\begin{observation} \label{lem_deg_sep_sum}
    If $f$ and $g$ are degree separated (or degree separated with respect to some variable), then $\cal S(f+g)= \cal S (f) +\cal S(g)$.
\end{observation}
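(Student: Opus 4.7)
The plan is to show that when $f$ and $g$ are degree separated, the sets of monomials appearing in $f$ and in $g$ are disjoint, so no cancellation or combination can occur when forming $f+g$, and hence $\mathcal{S}(f+g) = \mathcal{S}(f) + \mathcal{S}(g)$.

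More concretely, I would argue contrapositively: two monomials $c_1 \vecx^{\vecalpha}$ of $f$ and $c_2 \vecx^{\vecbeta}$ of $g$ can contribute jointly (i.e., merge or cancel) to a single monomial in $f+g$ only if $\vecalpha = \vecbeta$. But $\vecalpha = \vecbeta$ forces the two monomials to have the same total degree, contradicting the assumption that $f$ and $g$ are degree separated. Therefore, the monomial supports of $f$ and $g$ are disjoint, and the nonzero coefficients of $f+g$ are exactly the disjoint union of those of $f$ and those of $g$, giving $\mathcal{S}(f+g) = \mathcal{S}(f) + \mathcal{S}(g)$.

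For the variant where $f$ and $g$ are degree separated only with respect to a single variable $x$, the same argument works with one small change: if $\vecx^{\vecalpha}$ and $\vecx^{\vecbeta}$ are equal monomials, they must in particular have the same $x$-degree, again contradicting the hypothesis. So the conclusion is identical.

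I do not anticipate any real obstacle here; the observation is a direct consequence of the definition of monomial equality, and the only care needed is to explicitly note that cancellation (not just merging into a single nonzero monomial) is also ruled out, since cancellation likewise requires two monomials with identical exponent vectors. This lets us conclude that every nonzero term of $f$ and every nonzero term of $g$ survives into $f+g$ as a distinct monomial, so the sparsities add exactly.
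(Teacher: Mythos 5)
Your proposal is correct and follows essentially the same reasoning as the paper's (one-line) proof: degree separation forces the monomial supports of $f$ and $g$ to be disjoint, so no merging or cancellation occurs and the sparsities add. The only difference is that you spell out the contrapositive argument explicitly, which the paper leaves implicit.
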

\begin{observation} \label{Lemma: equiv deg sep}
    If $f$ and $g$ are degree separated, $f_1 \sim f$ and $g_1 \sim g$, then $\cal S(f_1+g_1) = \cal S(f_1)+\cal S(g_1)$.
\end{observation}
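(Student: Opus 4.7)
The plan is to chain together the two preceding observations. By Observation \ref{Lemma: equiv degrees same}, equivalence preserves the set of degrees occurring among the monomials of a polynomial: since $f_1 \sim f$, the set of degrees of $f_1$ equals that of $f$, and likewise the set of degrees of $g_1$ equals that of $g$.

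Next, I would use the hypothesis that $f$ and $g$ are degree separated, i.e., their sets of degrees are disjoint. Combined with the previous step, this immediately yields that the sets of degrees of $f_1$ and $g_1$ are also disjoint, so $f_1$ and $g_1$ are themselves degree separated.

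Finally, I would invoke Observation \ref{lem_deg_sep_sum} applied to the pair $(f_1, g_1)$ to conclude that $\cal S(f_1 + g_1) = \cal S(f_1) + \cal S(g_1)$, as desired. There is no real obstacle here; the observation is essentially a transitivity remark showing that the degree-separation property (and hence additivity of sparsity under summation) is invariant under the $\sim$ relation on each summand.
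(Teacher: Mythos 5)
Your argument is correct and is essentially identical to the paper's proof: the paper likewise uses Observation \ref{Lemma: equiv degrees same} to transfer degree separation from $(f,g)$ to $(f_1,g_1)$ and then applies Observation \ref{lem_deg_sep_sum}. Nothing further is needed.
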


Observation \ref{Lemma: linear form sparsity} analyzes the sparsity of powers of linear forms. Observation \ref{lem_binom} is a special case of Observation \ref{Lemma: linear form sparsity} and is stated separately because it is simpler and is invoked many times. Observation \ref{Lemma: affine form sparsity} analyzes the sparsity of powers of affine forms. Note that all of these observations, along with Observation \ref{lem_deg_sep_sum}, also hold more generally over integral domains.\footnote{Integral domains are commutative rings with a multiplicative identity and no zero divisors.} This property of the aforementioned observations is used in proving Theorem \ref{Theorem: ETSparse_shift NP-hard}.

\begin{observation} \label{Lemma: linear form sparsity}
Let $\ell$ be a linear form in $m$ variables and $d \in \N$. If $\textnormal{char}(\F) = 0$, $\cal S(\ell^d) = \binom{d+m-1}{m-1}$, and if $\textnormal{char}(\F) = p$, $\cal S(\ell^d) = \prod_{i=0}^{k}\binom{e_i+m-1}{m-1}$, where $d = \sum_{i=0}^{k} e_ip^i$, $e_i \in [0,p-1]$. 
\end{observation}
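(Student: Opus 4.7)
The plan is to handle the two characteristic cases separately, starting with the easier one. For $\mathrm{char}(\F)=0$, I would just apply the multinomial theorem to $\ell = \sum_{j=1}^m c_j x_j$ (with every $c_j\neq 0$, since $\ell$ is assumed to be a linear form in $m$ variables): writing $\ell^d = \sum_{|\alpha|=d}\binom{d}{\alpha} c^\alpha x^\alpha$, I would observe that each multinomial coefficient is a nonzero integer in characteristic zero and each $c^\alpha$ is nonzero, so every monomial of total degree $d$ in $m$ variables appears with a nonzero coefficient. This immediately gives the count $\binom{d+m-1}{m-1}$.

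For $\mathrm{char}(\F)=p$, the substantive tool is the Frobenius identity $\ell^{p^i}=\sum_j c_j^{p^i} x_j^{p^i}$, which lets me view $\ell^{p^i}$ as a linear form in the reindexed monomial ``variables'' $x_j^{p^i}$. The plan is to use the base-$p$ expansion $d=\sum_{i=0}^k e_i p^i$ with each $e_i\in[0,p-1]$ to factor
\[
\ell^d \;=\; \prod_{i=0}^k \bigl(\ell^{p^i}\bigr)^{e_i},
\]
and then expand each factor by the multinomial theorem. Since $e_i<p$, every multinomial coefficient $\binom{e_i}{\beta^{(i)}}$ appearing in the expansion of a single factor is a product of factorials of numbers strictly less than $p$, hence nonzero in $\F$. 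By the characteristic-zero count applied in the new variables, each factor $(\ell^{p^i})^{e_i}$ therefore has sparsity exactly $\binom{e_i+m-1}{m-1}$, with every term of the form (nonzero scalar)$\cdot\, x^{p^i\beta^{(i)}}$ where $|\beta^{(i)}|=e_i$.

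The crux, and what I expect to be the main obstacle, is arguing that no two product monomials collide (or cancel) when these $k+1$ factors are multiplied together. A monomial in the product has exponent vector $\sum_{i=0}^{k} p^i \beta^{(i)}$, where each $\beta^{(i)}\in\N^m$ satisfies $|\beta^{(i)}|=e_i$ and so $\beta^{(i)}_j\le e_i<p$ coordinate-wise. Reading the $j$-th coordinate, $\sum_i p^i \beta^{(i)}_j$ is a bona fide base-$p$ expansion (every digit is $<p$), so by uniqueness of base-$p$ representation, distinct tuples $(\beta^{(0)},\ldots,\beta^{(k)})$ produce distinct exponent vectors. Combined with the nonvanishing of each factor's coefficient (which uses $c_j\neq 0$ for all $j$ and $e_i<p$), this yields no cancellation, and the sparsity of the product is the product of the individual sparsities: $\cal S(\ell^d)=\prod_{i=0}^k \binom{e_i+m-1}{m-1}$. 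The whole argument invokes only multiplication and the absence of zero divisors, so it transports verbatim to any integral domain, as claimed in the footnote following the statement.
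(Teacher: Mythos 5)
Your proof is correct and follows essentially the same route as the paper: the multinomial theorem in characteristic $0$, and in characteristic $p$ the Frobenius factorization $\ell^d=\prod_i(\ell^{p^i})^{e_i}$ with the observation that digits $e_i<p$ keep all multinomial coefficients nonzero and prevent collisions. The only difference is organizational---you conclude directly from uniqueness of base-$p$ representation of each coordinate of the exponent vector, whereas the paper peels off the top digit and argues by induction on $k$ using its degree-separation observation; both are the same Lucas-type argument.
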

\begin{observation} \label{lem_binom}
    If $\textnormal{char}(\F) = 0$ and $\ell$ be a linear form in exactly two variables, then $\cal S(\ell^d)=d+1$. The result holds for characteristic $p$ fields if $p > d$ or if $d = p^k-1$ for some $k\in \N$. Further, if $\ell$ is a linear form in more than two variables and $d$ is as before, then $\cal S(\ell^d)\geq d+1$.
\end{observation}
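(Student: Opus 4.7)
The plan is to derive this observation as a direct specialization of Observation \ref{Lemma: linear form sparsity}, which already provides exact formulas for $\mathcal{S}(\ell^d)$ in terms of the number of variables $m$ and, in positive characteristic, the base-$p$ digits of $d$.

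For the equality assertion I would set $m = 2$ and walk through the three scenarios separately. In characteristic $0$, the formula immediately gives $\binom{d+1}{1} = d+1$. In characteristic $p > d$, the base-$p$ expansion of $d$ consists of the single digit $e_0 = d$, so the product collapses to the same binomial $\binom{d+1}{1} = d+1$. When $d = p^k - 1$, every base-$p$ digit equals $p-1$, i.e.\ $e_0 = e_1 = \cdots = e_{k-1} = p-1$, and each of the $k$ factors contributes $\binom{(p-1)+1}{1} = p$, producing the product $p^k = d+1$.

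For the lower bound with $m \geq 3$, I would exploit monotonicity of binomial coefficients in the top index (with the bottom held fixed). In characteristic $0$ or when $p > d$, Observation \ref{Lemma: linear form sparsity} gives $\mathcal{S}(\ell^d) = \binom{d+m-1}{m-1} = \binom{d+m-1}{d}$, which is nondecreasing in $m$ and hence at least $\binom{d+1}{d} = d+1$. When $d = p^k - 1$, each factor $\binom{(p-1)+(m-1)}{m-1} = \binom{p+m-2}{p-1}$ in the product is at least $\binom{p}{p-1} = p$ by the same monotonicity, so the full product is bounded below by $p^k = d+1$. I do not anticipate any real obstacle here: once Observation \ref{Lemma: linear form sparsity} is in hand, the argument reduces to routine manipulations of binomial coefficients and base-$p$ expansions.
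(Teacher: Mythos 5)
Your proposal is correct and follows essentially the same route as the paper: both derive the statement directly from Observation \ref{Lemma: linear form sparsity} by setting $m=2$ for the equality (treating $\textnormal{char}(\F)=0$, $p>d$, and $d=p^k-1$ via the base-$p$ digit formula) and then bounding each binomial factor below for $m\geq 3$; your use of monotonicity of $\binom{d+m-1}{d}$ in $m$ is just a repackaging of the paper's inequality $\binom{c+m-1}{m-1}\geq c+1$. No gaps.
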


\begin{observation}
\label{Lemma: affine form sparsity}
    Let $h = \ell + c_0$, where $\ell$ is a linear form in at least one variable and $c_0 \in \F \backslash \{0\}$, then $\cal S(h^d) \geq \cal S(\ell^d) + 1$. More precisely, $\cal S(h^d) \geq d+1$ holds if $\textnormal{char}(\F) = 0$ or if $\textnormal{char}(\F) = p$ and $p > d$ or $d = p^k-1$ for some $k\in \N$.
\end{observation}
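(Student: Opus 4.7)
The plan is to expand $h^d$ via the binomial theorem:
\[
h^d = (\ell + c_0)^d = \sum_{j=0}^{d} \binom{d}{j}\, c_0^{d-j}\, \ell^j.
\]
Since $\ell^j$ is homogeneous of degree $j$, the summands corresponding to distinct values of $j$ are pairwise degree separated. Consequently, iterating Observation~\ref{lem_deg_sep_sum} gives
\[
\cal S(h^d) \;=\; \sum_{j=0}^{d} \cal S\!\left(\binom{d}{j}\, c_0^{d-j}\, \ell^j\right),
\]
where a summand contributes $0$ exactly when its scalar coefficient vanishes in $\F$. Thus the task reduces to tracking which of the $d+1$ coefficients $\binom{d}{j}\, c_0^{d-j}$ are nonzero in $\F$ and bounding $\cal S(\ell^j)$ for the surviving indices.

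For the first inequality $\cal S(h^d) \geq \cal S(\ell^d) + 1$, I would simply isolate the two endpoint terms. The $j = d$ term has coefficient $\binom{d}{d} = 1$ and contributes $\cal S(\ell^d)$ monomials. The $j = 0$ term is the constant $c_0^d$, which is nonzero because $c_0 \in \F \setminus \{0\}$ and $\F$ has no zero divisors; it contributes one monomial of degree $0$. Since degrees $0$ and $d$ are disjoint, these two contributions add, yielding the bound.

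For the stronger conclusion $\cal S(h^d) \geq d+1$, the plan is to show that every binomial coefficient $\binom{d}{j}$ with $0 \leq j \leq d$ is nonzero in $\F$; once this is established, each of the $d+1$ degree-separated summands contributes at least one monomial (since $\ell \neq 0$ and $c_0 \neq 0$). This is immediate when $\textnormal{char}(\F) = 0$, and also when $\textnormal{char}(\F) = p > d$ because then $\binom{d}{j}$ is a positive integer strictly less than $p$. The delicate boundary case is $d = p^k - 1$: here the base-$p$ expansion of $d$ is $(p-1, p-1, \ldots, p-1)$ of length $k$, so for any $j \in [0,d]$ with base-$p$ digits $j_0, \ldots, j_{k-1} \in [0,p-1]$, Lucas's theorem yields $\binom{d}{j} \equiv \prod_{i} \binom{p-1}{j_i} \pmod{p}$, and each factor is a nonzero element of $\F_p$.

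The only genuine consideration is the characteristic-dependent vanishing of binomial coefficients; degree separation handles everything else. I do not anticipate any substantive obstacle beyond the clean invocation of Lucas's theorem for the $d = p^k-1$ case.
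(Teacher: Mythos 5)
Your proposal follows essentially the same route as the paper's proof: binomial expansion of $(\ell+c_0)^d$, degree separation of the summands, the two endpoint terms ($j=d$ and $j=0$) for the inequality $\cal S(h^d)\geq \cal S(\ell^d)+1$, and non-vanishing of all binomial coefficients to get $\cal S(h^d)\geq d+1$, with Lucas's theorem handling $d=p^k-1$. One small correction: in the case $\textnormal{char}(\F)=p>d$, your justification that $\binom{d}{j}$ is ``a positive integer strictly less than $p$'' is false in general (e.g.\ $\binom{4}{2}=6>5$); the correct statement is that $p\nmid\binom{d}{j}$ whenever $d<p$, which follows either because $\binom{d}{j}$ divides $d!$ and $p\nmid d!$, or from the same Lucas argument you invoke for $d=p^k-1$, since $d$ and $j$ are then single base-$p$ digits. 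With that one-line fix, your argument coincides with the paper's.
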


Claim \ref{lem_div} analyzes the sparsity of polynomials divisible by a power of some linear form in at least two variables and is used to prove part two of Theorems \ref{Theorem: ETSparse NP-hard} and \ref{Theorem: gap ETSparse NP-hard}. Claim \ref{FullSupport} analyzes the support of monomials under invertible linear transforms and is used to prove Theorem \ref{Theorem: ETsupport NP-hard}.  

\begin{claim}\label{lem_div}
 Let $\textnormal{char}(\F) = 0$. If $f \in \F[\vecx]$ is a non-zero polynomial divisible by $\ell^d$ for some linear form $\ell$ in at least two variables, then $\cal S(f)\geq d+1$. The claim also holds for characteristic $p$ fields, where the degree of $f$ is less than $p$.
\end{claim}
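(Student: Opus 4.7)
The plan is to reduce to a univariate problem and invoke a Hajós-style bound. Since $\ell$ depends on at least two variables, I may assume without loss of generality that $\ell = c_1 x_1 + \ldots + c_n x_n$ with $c_1, c_2 \neq 0$, and write $f = \sum_e p_e(x_2, \ldots, x_n)\, x_1^e$ with $p_e \in \F[x_2, \ldots, x_n]$. I would then choose $\alpha_2, \ldots, \alpha_n \in \F$ (passing to a sufficiently large extension of $\F$ if needed, which leaves $\cal S(f)$ unchanged) such that (i) $\beta := c_2 \alpha_2 + \ldots + c_n \alpha_n \neq 0$, and (ii) $p_e(\alpha_2, \ldots, \alpha_n) \neq 0$ for every $e$ with $p_e \not\equiv 0$. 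Each condition excludes a proper Zariski-closed subset, so generic $\alpha_j$'s exist. Setting $\tilde f(x_1) := f(x_1, \alpha_2, \ldots, \alpha_n)$, the nonzero terms of $\tilde f$ are in bijection with the distinct $x_1$-exponents appearing in $f$, hence $\cal S(\tilde f) \leq \cal S(f)$. Substituting into $\ell^d \mid f$ yields $(c_1 x_1 + \beta)^d \mid \tilde f$ with $c_1, \beta \neq 0$, and $\tilde f \neq 0$.

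It then suffices to establish the following univariate statement (essentially the classical Hajós lemma): a nonzero $q(x) \in \F[x]$ divisible by $(ax + b)^d$ with $a, b \neq 0$ satisfies $\cal S(q) \geq d + 1$, provided $\textnormal{char}(\F) = 0$ or $\textnormal{char}(\F) > \deg q$. I would prove this by induction on $d$. The base case $d = 0$ is trivial. For $d \geq 1$, factor out the largest power of $x$, writing $q = x^m \tilde q$ with $\tilde q(0) \neq 0$; since $\gcd(x, ax + b) = 1$, we have $(ax + b)^d \mid \tilde q$. Under the characteristic hypothesis, every positive exponent appearing in $\tilde q$ is nonzero as a field element, so the formal derivative $\tilde q'$ has exactly $\cal S(\tilde q) - 1$ nonzero terms (the constant term of $\tilde q$ is annihilated, and no other cancellations occur since exponents of distinct surviving monomials remain distinct and their coefficients stay nonzero). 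Moreover $(ax + b)^{d-1} \mid \tilde q'$ and $\tilde q' \neq 0$ (else $\tilde q$ would be a nonzero constant divisible by $(ax + b)^d$, impossible for $d \geq 1$). By induction $\cal S(\tilde q') \geq d$, hence $\cal S(q) = \cal S(\tilde q) \geq d + 1$.

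Applying this lemma to $q = \tilde f$ with $a = c_1$, $b = \beta$, and chaining with $\cal S(\tilde f) \leq \cal S(f)$ yields $\cal S(f) \geq d + 1$, as required. The main obstacle is the univariate Hajós-type bound in the second paragraph; the multivariate-to-univariate reduction itself is routine Zariski-density. The characteristic hypothesis enters exactly at the derivative step, where it is needed so that differentiation drops the sparsity by exactly one (not more). Since $\deg \tilde q \leq \deg \tilde f \leq \deg f$, the hypothesis $\textnormal{char}(\F) > \deg f$ in the claim comfortably feeds into the lemma, and of course $\textnormal{char}(\F) = 0$ imposes no restriction.
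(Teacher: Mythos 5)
Your proof is correct. It is worth noting, though, that the inductive engine is the same one the paper uses, while the packaging differs: the paper proves the claim directly in the multivariate setting by induction on $d$, first normalizing $f$ so that no variable divides it (the multivariate analogue of your factoring out $x^m$), then observing that $\partial f/\partial x_j$ for a variable $x_j$ appearing in $\ell$ kills at least one monomial (so sparsity drops by at least one) while remaining divisible by $\ell^{d-1}$. You instead first reduce to one variable by a generic evaluation $x_i \mapsto \alpha_i$ (Zariski-density, passing to a field extension if $\F$ is finite), which preserves divisibility, keeps $\beta \neq 0$, and does not increase sparsity, and then run the same derivative induction as a univariate Haj\'os-type lemma. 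What your route buys is a cleaner accounting at the derivative step — after normalizing $\tilde q(0) \neq 0$, differentiation drops the sparsity by exactly one and the nonvanishing of $\tilde q'$ is immediate from the degree bound, points the paper's multivariate version treats more tersely; what it costs is the extra genericity/field-extension apparatus, which the paper's direct partial-derivative argument avoids entirely. Both arguments use the characteristic hypothesis in exactly the same place, namely to ensure the relevant derivative does not vanish or lose extra terms.
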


\begin{claim} \label{FullSupport}
     Let $\sigma, d, n \in \N$, $d \geq \sigma$, $f =  (x_1\cdots x_n)^d$, and $\ell_1,\dots,\ell_n$ be linearly independent linear forms in $x_1, \ldots, x_n$. If $|\cup_{i=1}^{n}\var(\ell_i)| \geq \sigma$ and $g := f(\ell_1\cdots \ell_n)$, then $\supp(g) \geq \sigma$. The claim holds if $\textnormal{char}(\F) = 0$, or $\textnormal{char}(\F) = p$ with $p > d$, or $p > \sigma$ and $d = p^k -1$ for some $k \in \N$.
\end{claim}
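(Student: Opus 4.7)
The plan is to pass to coordinates $y_i := \ell_i(\vecx)$, which is an invertible change of variables because the $\ell_i$'s are linearly independent; in these coordinates $g = y_1^d y_2^d \cdots y_n^d$. For any $\sigma$-subset $T \subseteq \{1,\dots,n\}$, I would then analyze the iterated partial derivative $D_T := \prod_{j \in T} \partial/\partial x_j \cdot g$. If $D_T$ is a nonzero polynomial, then it exhibits a monomial of $g$ whose exponent vector is strictly positive on all of $T$, hence a monomial of support $\geq \sigma$; the proof thus reduces to showing $D_T \not\equiv 0$ for some (in fact, every) $\sigma$-subset $T$.

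Writing $\ell_i = \sum_j a_{ij}x_j$ and using the chain rule $\partial/\partial x_j = \sum_i a_{ij}\, \partial/\partial y_i$, we obtain $D_T = P_T(\partial_{\vecy}) \cdot \prod_i y_i^d$, where $P_T(\vecm) := \prod_{j \in T}\bigl(\sum_i a_{ij} m_i\bigr)$ is a polynomial of degree $\sigma$ in fresh variables $\vecm = (m_1,\dots,m_n)$ whose linear factors are exactly the columns of $A$ indexed by $T$, viewed as linear forms in $\vecm$. Expanding $P_T(\vecm) = \sum_{\vecalpha:\,|\vecalpha|=\sigma} p_{T,\vecalpha}\, \vecm^{\vecalpha}$ and applying $\prod_i \partial^{\alpha_i}/\partial y_i^{\alpha_i}$ to $\prod_i y_i^d$ (valid since $\alpha_i \leq \sigma \leq d$) gives
\[
D_T \;=\; \sum_{|\vecalpha| = \sigma} p_{T,\vecalpha}\,\Bigl(\prod_i \tfrac{d!}{(d-\alpha_i)!}\Bigr)\,\prod_i y_i^{d-\alpha_i}.
\]
The monomials $\prod_i y_i^{d-\alpha_i}$ are pairwise distinct (different $\vecalpha$'s yield different exponent vectors), hence linearly independent over $\F$, so $D_T \equiv 0$ iff every coefficient $p_{T,\vecalpha}\prod_i \tfrac{d!}{(d-\alpha_i)!}$ vanishes in $\F$.

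At this point I would verify that each falling factorial $\tfrac{d!}{(d-\alpha_i)!} = d(d-1)\cdots(d-\alpha_i+1)$ is a nonzero element of $\F$ under each of the three characteristic hypotheses: trivially in characteristic $0$; in characteristic $p > d$ because all factors lie in $\{1,\dots,d\}$; and when $p > \sigma$ with $d = p^k - 1$, because the $\alpha_i < p$ factors reduce mod $p$ to $-1,-2,\dots,-\alpha_i$. Granting this, $D_T \not\equiv 0$ is equivalent to $P_T(\vecm) \not\equiv 0$, which is equivalent to no column of $A$ indexed by $T$ being the zero vector. But linear independence of the rows $\ell_1,\dots,\ell_n$ makes $A$ invertible and hence forces every column of $A$ to be nonzero, so $P_T \not\equiv 0$. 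Therefore $D_T \neq 0$, $g$ contains a monomial whose support contains $T$, and $\supp(g) \geq \sigma$.

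The main point where care is needed is the characteristic bookkeeping for the falling factorials — in particular the $d = p^k - 1$ case, where one has to notice that the residues mod $p$ of $d, d-1, \dots, d-\alpha_i+1$ are precisely $-1, -2, \dots, -\alpha_i$, and so nonzero because $\alpha_i \leq \sigma < p$. Once that bookkeeping is in hand, the argument cleanly reduces the combinatorial-sounding claim about supports to the transparent algebraic fact that an invertible matrix has no zero column.
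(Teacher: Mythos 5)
Your proof is correct for the claim as stated, and it takes a genuinely different route from the paper's. The paper never changes coordinates: it hits $(\ell_1\cdots\ell_n)^d$ directly with the $\sigma$-fold derivative, expands by the Leibniz rule into terms $c_{j_1,\dots,j_n}\cdot\frac{\partial^{\sigma}(\ell_1^{j_1}\cdots\ell_n^{j_n})}{\partial x_{i_1}\cdots\partial x_{i_\sigma}}\cdot\ell_1^{d-j_1}\cdots\ell_n^{d-j_n}$, uses linear independence of the products $\ell_1^{d-j_1}\cdots\ell_n^{d-j_n}$, and then exhibits a single nonzero constant term by greedily partitioning $\sigma$ variables of $\cup_i\var(\ell_i)$ among the $\ell_i$'s and invoking its $n=1$ analysis; the coefficients $\binom{d}{j_1}\cdots\binom{d}{j_n}$ are handled via Lucas's theorem. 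You instead pass to coordinates $y_i=\ell_i$, where $g$ is one monomial, expand the operator $\prod_{j\in T}\partial_{x_j}$ as a degree-$\sigma$ polynomial $P_T$ in the $\partial_{y_i}$'s, and reduce $D_Tg\neq 0$ to (i) nonvanishing of the falling factorials $d(d-1)\cdots(d-\alpha_i+1)$ and (ii) $P_T\not\equiv 0$, i.e.\ no zero column among those indexed by $T$. Your residue bookkeeping in the $d=p^k-1$ case (the factors are $\equiv -1,-2,\dots,-\alpha_i \pmod{p}$, all nonzero since $\alpha_i\le\sigma<p$) is correct and bypasses Lucas entirely, and linear independence of the monomials $\prod_i y_i^{d-\alpha_i}$ is immediate in the new coordinates, whereas the paper has to argue it for products of powers of the $\ell_i$'s. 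In the square setting your argument is, if anything, cleaner.

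One caveat worth making explicit: the step ``linear independence of $\ell_1,\dots,\ell_n$ makes $A$ invertible, hence every column is nonzero'' uses that the number of forms equals the number of ambient variables. That matches the literal statement (where the hypothesis $|\cup_i\var(\ell_i)|\ge\sigma$ then just amounts to $n\ge\sigma$, which you use implicitly when choosing $T$), but the paper applies the claim in Lemmas \ref{P1Sep} and \ref{P2Sep} to $n$ linearly independent forms living in a much larger variable set $\vecw$, where the coefficient matrix is rectangular and can have zero columns; the paper's proof never assumes squareness. Your argument patches easily: extend $\ell_1,\dots,\ell_n$ to a basis of the linear forms, observe that $\partial_{y_i}$ for $i>n$ annihilates $g$, and choose $T$ to be $\sigma$ variables from $\cup_i\var(\ell_i)$, so the relevant truncated columns $(a_{1j},\dots,a_{nj})$, $j\in T$, are nonzero. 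Adding that sentence would make your proof cover the claim in the generality in which it is actually used.
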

\section{$\NP$-hardness of $\et$} \label{Subsection: Reduction ETSparse}
In this section, we prove Theorem \ref{Theorem: ETSparse NP-hard}. We first show the reduction over characteristic $0$ fields in the non-homogeneous case without considering translations for ease of understanding. Section \ref{section-ETSparse-homogenous} shows the reduction over characteristic $0$ fields in the homogeneous case. In Section \ref{section-extension-finite}, the reduction is shown to hold over finite characteristic fields for both the non-homogeneous and the homogeneous case. In Appendix \ref{Section: ETSparse proofs}, we prove the lemmas and the observations of this section. Appendix \ref{Section: Affine ETsparse} shows how the reduction holds while also considering translations.\footnote{Note that for two homogeneous polynomials $f$ and $g$, $f(\vecx) = g(A\vecx + \vecb)$ implies $f(\vecx) = g(A\vecx)$, where $A \in \GL(|\vecx|,\F)$ and $\vecb \in \F^{|\vecx|}$. Hence, it suffices to prove part 2 of Theorem \ref{Theorem: ETSparse NP-hard} without translations. \label{footnote: homogeneous no translate}}

\paragraph{Proof sketch.}The reduction maps each variable and clause of a $\CNF$\footnote{We assume, without loss of generality, that each clause of a $\CNF$ has $3$ distinct variables. This can be achieved by introducing extra variables for clauses with $< 3$ variables. } $\psi$ to distinct degree separated polynomials which, summed together, give the polynomial $f$. As the summands are degree separated, the sparsity of $f$ under invertible transforms can be analyzed by doing so for individual polynomials. The degrees are chosen such that $f$ is equivalent to an $s$-sparse polynomial (for a suitable sparsity parameter $s$) if and only if $\psi \in \SAT$. 

\subsection{Constructing $f$ and $s$} \label{subsubsection: sparse f construction}
Let $\psi$ be a $\CNF$ in variables $\vecx := \{x_1, x_2 \dots x_n\}$ and $m$ clauses:
\[ \psi = \land_{k=1}^{m} \lor_{j \in C_k} (x_j \oplus a_{k,j}), \] 
where $C_k$ denotes the set of indices of the variables in the $k\ith$ clause and $a_{k,j} \in \{0,1\}$. Let $\vecy := \{y_1,y_2 \dots y_n \}$, $x_0$ be a new variable and $\vecz := \{x_0\} \sqcup \vecx \sqcup \vecy$. For $d_1, d_2, d_3, d_4 \in \N$, consider the following polynomials:

\begin{itemize}
    \item Corresponding to variable $x_i$, where $i\in [n]$, define $Q_i(\vecz)$ as: 
\begin{equation*} \label{cl_poly_2}
\begin{split}
    Q_i(\vecz) &:= Q_{i,1}(\vecz) + Q_{i,2}(\vecz) + Q_{i,3}(\vecz), \text{ where}\\
    Q_{i,1}(\vecz) &:= x_0^{(3i-2)d_1}x_i^{d_2}, \ Q_{i,2}(\vecz) := x_0^{(3i-1)d_1}(y_i+x_i)^{d_3} \text{ and } Q_{i,3}(\vecz) := x_0^{3id_1}(y_i-x_i)^{d_3}. 
\end{split}
\end{equation*}
Intuitively, $Q_{i,2}$ and $Q_{i,3}$ correspond to assigning $0$ and $1$, respectively, to $x_i$ in $\psi$. $Q_{i,1}$ is used to establish a mapping between satisfying assignments and sparsifying transforms. 
\item For the $k\ith$ clause, $k \in [m]$,  define $R_k(\vecz) := x_0^{(3n+k)d_1}\prod_{j\in C_k}(y_j+ (-1)^{a_{k,j}}x_j)^{d_4}.$ 
\end{itemize}
Define $s := 1+n(3+d_3) + m(d_4+1)^2$ and the polynomial $f$ as:
\begin{equation} \label{Definition: 3SAT poly}
f(\vecz) := x_0^{d_1} + \sum_{i=1}^{n} Q_i(\vecz) + \sum_{k=1}^{m} R_k(\vecz).
\end{equation}
The following conditions are imposed on the $d_i$'s:
\begin{equation} \label{ineq}
            \begin{split}
            d_1&\geq \max(s,d_2+1), \  d_2 \geq 2d_3, \ d_3 \geq m(d_4+1)^2 + 1, \text{ and } d_4 \geq m.
            \end{split}
\end{equation}
For characteristic $0$ fields, the inequalities of \eqref{ineq} can be converted to equalities. Thus, we get
\begin{equation} \label{di-inhomogeneous-char0}
    \begin{split}
        d_4 &= m, \ d_3 = m(m+1)^2+1=O(m^3)  \implies s= O(nm^3)\\
        d_2 &= 2m(m+1)^2+2=O(m^3), \ d_1 = 1+n(4+m(m+1)^2)+m(m+1)^2=O(nm^3).
    \end{split}
\end{equation}
Note, for the above choices of $d_3$ and $d_2$, $s \geq d_2 + 1$. Hence, $d_1$ is set to $s$. Under the conditions of \eqref{ineq} the following observations hold.
\begin{observation} \label{Obs: deg separated polys}
For all $i \in [n]$, $k \in [m]$, the polynomials $x_0^{d_1}$, $Q_{i,1}(\vecz)$, $Q_{i,2}(\vecz)$, $Q_{i,3}(\vecz)$ and $R_k(\vecz)$ are degree separated from one another. Also, $Q_i(\vecz)$ is degree separated from other $Q_j(\vecz)$'s, for $i,j \in [n]$ and $i \neq j$. Similarly, $R_k(\vecz)$ is degree separated from $R_l(\vecz)$ for $k,l \in [m]$ and $k \neq l$.
\end{observation}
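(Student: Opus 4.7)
The plan is to reduce the claim to showing that a certain list of total degrees is pairwise distinct. Each polynomial in the statement is homogeneous: $x_0^{d_1}$ and $Q_{i,1}$ are monomials; $Q_{i,2}$ and $Q_{i,3}$ are $x_0$-monomial multiples of the homogeneous forms $(y_i \pm x_i)^{d_3}$; and $R_k$ is an $x_0$-monomial multiple of a product of $d_4$-th powers of linear forms. Hence two of these polynomials are degree separated if and only if their total degrees differ, and the whole observation reduces to checking that the integers
\[
d_1,\quad (3i-2)d_1 + d_2,\quad (3i-1)d_1 + d_3,\quad 3id_1 + d_3 \ \ (i\in[n]),\quad (3n+k)d_1 + 3d_4 \ \ (k\in[m])
\]
are pairwise distinct.

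My approach is to write each such degree in the form $qd_1 + r$ with $0 \le r < d_1$. Using \eqref{ineq}, I would first establish the remainder bounds $d_1 > d_2$ (from $d_1 \ge d_2+1$), $d_1 > d_2 \ge 2d_3 > d_3$, and $d_1 > 3d_4$; the last follows from $d_1 \ge s \ge m(d_4+1)^2 > 3d_4$ for $d_4 \ge m \ge 1$, and is the main (though routine) arithmetic step. These inequalities force every remainder appearing in the list, namely $0$, $d_2$, $d_3$, or $3d_4$, to lie in $[0, d_1)$, so the pair $(q,r)$ is uniquely determined by the degree. The set of quotients is $\{1\} \cup \{3i-2,\, 3i-1,\, 3i : i \in [n]\} \cup \{3n+k : k \in [m]\}$, and the only repeat is $q = 1$, appearing for $x_0^{d_1}$ (with $r=0$) and for $Q_{1,1}$ (with $r=d_2 \ge 1$); these degrees still differ. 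This yields the first assertion.

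The last two assertions follow quickly: every monomial of $Q_i = Q_{i,1} + Q_{i,2} + Q_{i,3}$ has degree in $[(3i-2)d_1, (3i+1)d_1)$ since all remainders are strictly less than $d_1$, and these intervals are disjoint for distinct $i$, giving degree separation of $Q_i$ from $Q_j$; and the single total degrees $(3n+k)d_1 + 3d_4$ of the $R_k$'s are obviously pairwise distinct in $k$. The only genuine obstacle I anticipate is the bound $d_1 > 3d_4$; everything else is immediate from \eqref{ineq}.
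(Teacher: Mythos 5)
Your proof is correct and takes essentially the same route as the paper: both reduce the observation to comparing the total degrees $d_1$, $(3i-2)d_1+d_2$, $(3i-1)d_1+d_3$, $3id_1+d_3$ and $(3n+k)d_1+3d_4$ of these homogeneous summands using the constraints in \eqref{ineq}. Your quotient--remainder packaging requires the extra bound $d_1 > 3d_4$, which the paper's direct pairwise comparisons avoid, but you derive it correctly from $d_1 \ge s > m(d_4+1)^2 > 3d_4$, so nothing is missing.
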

\begin{observation} \label{Obs: ETSparse poly degree}
The degree of $f$ is $(3n+m)d_1 + 3d_4 = (mn)^{O(1)}$.
\end{observation}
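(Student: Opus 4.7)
The plan is to read off the degree of each summand in the definition \eqref{Definition: 3SAT poly} of $f$ and show that the term $R_m$ dominates. Specifically, I would list the degrees of the constituent polynomials: $x_0^{d_1}$ has degree $d_1$; $Q_{i,1}$ has degree $(3i-2)d_1 + d_2$; $Q_{i,2}$ and $Q_{i,3}$ have degrees $(3i-1)d_1 + d_3$ and $3id_1 + d_3$ respectively (the linear forms raised to $d_3$ contribute a total degree of $d_3$); and $R_k$ has degree $(3n+k)d_1 + 3d_4$, since each clause has exactly three variables and so the product $\prod_{j \in C_k}(y_j + (-1)^{a_{k,j}} x_j)^{d_4}$ has degree $3d_4$.

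The next step is to compare these values and conclude that the maximum is attained by $R_m$, giving degree $(3n+m)d_1 + 3d_4$. The only nontrivial comparison is between $R_m$ and $Q_{n,3}$: one needs $(3n+m)d_1 + 3d_4 \geq 3nd_1 + d_3$, i.e., $md_1 + 3d_4 \geq d_3$. This follows immediately from \eqref{ineq}, since $d_1 \geq s \geq d_3 + 1$ (using $s = 1 + n(3+d_3) + m(d_4+1)^2$ and $n \geq 1$), and $m \geq 1$. All other summands have strictly smaller $x_0$-exponents than $R_m$, so a similar argument handles them.

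Finally, for the $(mn)^{O(1)}$ bound, I would substitute the explicit values from \eqref{di-inhomogeneous-char0}, namely $d_1 = O(nm^3)$ and $d_4 = m$, into the expression $(3n+m)d_1 + 3d_4$ to obtain a bound of $O(n^2 m^3 + nm^4) = (mn)^{O(1)}$. Since this observation is essentially a bookkeeping exercise, I expect no real obstacle; the only mild subtlety is keeping track of which of the $d_i$'s dominates which, and this is entirely controlled by the ordering imposed in \eqref{ineq}.
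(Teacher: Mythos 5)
Your proposal is correct and follows essentially the same route as the paper: read off the degree of each summand (degree of $Q_i$ being $3id_1+d_3$, of $R_k$ being $(3n+k)d_1+3d_4$) and observe that $R_m$ dominates, the only comparison needing \eqref{ineq} being $(3n+m)d_1+3d_4 > 3nd_1+d_3$. The paper justifies this via $d_1 > d_3$ (from $d_1 \geq d_2+1$ and $d_2 \geq 2d_3$) inside the proof of Observation \ref{Obs: deg separated polys}, while you use $d_1 \geq s \geq d_3+1$; both are immediate consequences of the same constraints, so this is only a cosmetic difference.
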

\begin{observation} \label{Obs: ETsparse poly sparsity }
$\cal S(f(\vecz)) = 1 + n(2d_3 + 3) + m(d_4+1)^3$ and $\supp(f) = 7$.
\end{observation}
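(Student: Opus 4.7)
The plan is to reduce the computation of $\cal S(f)$ to summing sparsities of its summands using the degree separation already in hand, and then to evaluate each summand's sparsity by expanding powers of linear forms in two variables via Observation 2.6. Concretely, Observation 3.1 guarantees that $x_0^{d_1}$, the $Q_{i,j}$ (for $i \in [n]$, $j \in \{1,2,3\}$), and the $R_k$ (for $k \in [m]$) are pairwise degree separated. Iterating Observation 2.3 then yields
\[ \cal S(f) \;=\; 1 \;+\; \sum_{i=1}^{n}\bigl(\cal S(Q_{i,1}) + \cal S(Q_{i,2}) + \cal S(Q_{i,3})\bigr) \;+\; \sum_{k=1}^{m}\cal S(R_k). \]

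Next, I would compute each summand's sparsity directly. $Q_{i,1}$ is a single monomial, so $\cal S(Q_{i,1}) = 1$. For $Q_{i,2}$ and $Q_{i,3}$, the inner factor $(y_i \pm x_i)^{d_3}$ is a power of a linear form in exactly two variables, so Observation 2.6 gives $\cal S(Q_{i,2}) = \cal S(Q_{i,3}) = d_3 + 1$; multiplication by the monomial $x_0^{(3i-1)d_1}$ (respectively $x_0^{3id_1}$) preserves the sparsity. For $R_k$, each of the three factors $(y_j + (-1)^{a_{k,j}} x_j)^{d_4}$ expands into $d_4+1$ monomials by Observation 2.6, and since the three pairs $\{x_j, y_j\}_{j \in C_k}$ are pairwise disjoint (each clause is assumed to have three distinct variables), the product has no cancellations. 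Again multiplication by the prefactor $x_0^{(3n+k)d_1}$ preserves sparsity, so $\cal S(R_k) = (d_4+1)^3$. Summing these contributions yields exactly $1 + n(2d_3 + 3) + m(d_4+1)^3$.

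For the support claim I would inspect each summand: the monomials of $x_0^{d_1}$, $Q_{i,1}$, $Q_{i,2}$, $Q_{i,3}$ have supports at most $1$, $2$, $3$, $3$ respectively, while any monomial of $R_k$ takes the form $x_0^{(3n+k)d_1} \prod_{j \in C_k} y_j^{a_j} x_j^{d_4 - a_j}$, which has support at most $1 + 2 \cdot 3 = 7$; this upper bound is attained whenever each $a_j \in (0, d_4)$. There is no substantive obstacle — the argument is essentially bookkeeping, with the only subtle point being the use of disjointness of the clause variables to rule out cancellations when expanding the product in $R_k$.
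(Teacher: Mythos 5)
Your proof is correct and follows essentially the same route as the paper's: use the degree separation of Observation \ref{Obs: deg separated polys} together with Observation \ref{lem_deg_sep_sum} to add the summands' sparsities, compute $\cal S(Q_{i,2}) = \cal S(Q_{i,3}) = d_3+1$ and $\cal S(R_k) = (d_4+1)^3$ from the sparsity of powers of two-variable linear forms (using variable-disjointness of the clause factors), and read off the support from the individual summands, with the maximum $7$ coming from the $R_k$'s. The only slip is a citation: the fact you need is Observation \ref{lem_binom} (powers of a linear form in exactly two variables over characteristic $0$), not Observation \ref{Lemma: affine form sparsity}, which concerns affine forms with a nonzero constant term; the additivity step is most directly Observation \ref{lem_deg_sep_sum} rather than Observation \ref{Lemma: equiv deg sep}.
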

\subsection{The forward direction} \label{subsubsection: sparse fwd direction}
Proposition \ref{Proposition: fwd direction} shows how a satisfiable $\psi$ implies the existence of an invertible $A$, such that $\cal S(f(A\vecz)) \leq s$ by constructing $A$ from a satisfying assignment $\vecu \in \{0,1\}^n$ of $\psi$.  
\begin{proposition} \label{Proposition: fwd direction}
Let $\vecu = (u_1, \dots ,u_n) \in \{0,1\}^n$ be such that $\psi(\vecu) = 1$. Then $\cal S(f(A\vecz)) \leq s$, where $A$ is as:
    \begin{equation} \label{action}
        \begin{split}
           A: x_0 \mapsto x_0, x_i\mapsto x_i, y_i\mapsto y_i +(-1)^{u_i}x_i, \ \ \  \forall i\in [n].
        \end{split}
    \end{equation}
\end{proposition}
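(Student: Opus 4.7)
The plan is to apply the transform $A$ to $f(\vecz)$ piece by piece and use the degree separation of Observation \ref{Obs: deg separated polys} (combined with Observation \ref{Lemma: equiv deg sep}) to turn the sparsity count into a sum of individual sparsities. Since $A$ fixes $x_0$ and all $x_i$, the exponents of $x_0$ that serve as ``degree tags'' in each summand are preserved, so degree separation continues to hold after applying $A$. Thus it suffices to bound $\cal S(x_0^{d_1})$, $\cal S(Q_{i,1}(A\vecz))$, $\cal S(Q_{i,2}(A\vecz))$, $\cal S(Q_{i,3}(A\vecz))$, and $\cal S(R_k(A\vecz))$ separately and add them.

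Next I would compute each piece explicitly. The monomials $x_0^{d_1}$ and $Q_{i,1}(\vecz) = x_0^{(3i-2)d_1} x_i^{d_2}$ are fixed by $A$, contributing $1$ and $n$ to the sparsity, respectively. For each $i \in [n]$, since $A(y_i) = y_i + (-1)^{u_i} x_i$, one checks that when $u_i = 1$, $Q_{i,2}(A\vecz) = x_0^{(3i-1)d_1} y_i^{d_3}$ and $Q_{i,3}(A\vecz) = x_0^{3id_1}(y_i - 2x_i)^{d_3}$, while when $u_i = 0$ the roles of $Q_{i,2}$ and $Q_{i,3}$ swap. In either case, by Observation \ref{lem_binom}, one of them has sparsity $1$ and the other has sparsity $d_3 + 1$, contributing $n(d_3 + 2)$ in total.

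For $R_k(A\vecz)$ I would use the satisfying assignment. Applying $A$ to the $j^{\text{th}}$ factor gives $(y_j + ((-1)^{u_j} + (-1)^{a_{k,j}})\, x_j)^{d_4}$, which collapses to $y_j^{d_4}$ (sparsity $1$) exactly when $u_j \neq a_{k,j}$, i.e., when the literal $x_j \oplus a_{k,j}$ is satisfied by $\vecu$. Since $\psi(\vecu) = 1$, each clause $C_k$ contains at least one such index $j$, so at least one of the three factors of $R_k(A\vecz)$ has sparsity $1$. The three factors involve the disjoint variable sets $\{y_j, x_j\}$ for $j \in C_k$ (using that the three variables in $C_k$ are distinct), so by Observation \ref{lem_deg_sep_sum} applied iteratively, $\cal S(R_k(A\vecz))$ equals the product of the three factor-sparsities, which is at most $1 \cdot (d_4+1) \cdot (d_4+1) = (d_4+1)^2$.

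Finally I would sum everything up:
\[
\cal S(f(A\vecz)) \leq 1 + n + n(d_3 + 2) + m(d_4+1)^2 = 1 + n(d_3 + 3) + m(d_4+1)^2 = s,
\]
which is exactly the parameter defined in Section \ref{subsubsection: sparse f construction}. The only nontrivial step is the observation that the satisfying assignment forces one factor in every clause product to collapse; the rest is bookkeeping using degree separation and Observation \ref{lem_binom}. I anticipate no real obstacle, since the whole construction of the degrees in \eqref{ineq} and the choice of $A$ in \eqref{action} were tailored precisely so that these collapses happen and the individual sparsities add up cleanly to $s$.
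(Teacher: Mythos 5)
Your proposal is correct and takes essentially the same route as the paper's proof: decompose $\cal S(f(A\vecz))$ via degree separation, observe that exactly one of $Q_{i,2}(A\vecz),Q_{i,3}(A\vecz)$ collapses to a single monomial depending on $u_i$, use the satisfied literal to collapse one factor of each $R_k(A\vecz)$, and sum to $s$. The only cosmetic quibble is your citation of Observation \ref{lem_deg_sep_sum} for the multiplicativity of sparsity over variable-disjoint factors — that observation concerns sums, and the paper simply asserts this elementary product fact directly — but this does not affect correctness.
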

\begin{proof}
It follows from the definition of $f$ in \eqref{Definition: 3SAT poly}, Observations \ref{Obs: deg separated polys} and \ref{Lemma: equiv deg sep} that 
\[\cal S(f(A\vecz)) = \cal S(A(x_0^{d_1})) + \sum_{i=1}^{n} \cal S(Q_i(A\vecz)) + \sum_{k=1}^{m} \cal S(R_k(A\vecz)).\]
Thus, it suffices to analyze the sparsity of $A(x_0^{d_1}), Q_i(A\vecz)$'s and $R_k(A\vecz)$'s. Now, $\cal S(A(x_0^{d_1})) = 1$ as $A(x_0^{d_1}) = x_0^{d_1}$. We now analyze $\cal S(Q_i(A\vecz))$ for $i \in [n]$. If $u_i = 0$, then
    \begin{align*}
     Q_{i,1}(A\vecz) = x_0^{(3i-2)d_1}x_i^{d_2}, \ Q_{i,2}(A\vecz) =  x_0^{(3i-1)d_1} (y_i+2x_i)^{d_3} \text{ and }  Q_{i,3}(A\vecz) = x_0^{3id_1} y_i^{d_3}.
    \end{align*} 
    If $u_i = 1$, then 
    \begin{align*}
     Q_{i,1}(A\vecz) = x_0^{(3i-2)d_1}x_i^{d_2}, \ Q_{i,2}(A\vecz) =  x_0^{(3i-1)d_1} y_i^{d_3} \text{ and } Q_{i,3}(A\vecz) = x_0^{3id_1} (y_i-2x_i)^{d_3}. 
    \end{align*}  
By Observation \ref{lem_binom} (for linear forms in two variables over characteristic $0$ fields), if $u_i = 0$ then $\cal S(Q_{i,2}(A\vecz)) = d_3 + 1$ and $\cal S(Q_{i,3}(A\vecz)) = 1$ and, if $u_i = 1$ then $\cal S(Q_{i,2}(A\vecz)) = 1$ and $\cal S(Q_{i,3}(A\vecz)) = d_3 + 1$. In either case, by Observations \ref{Obs: deg separated polys} and \ref{Lemma: equiv deg sep}, 
\[\cal S(Q_i(A\vecz)) = \cal S(Q_{i,1}(A\vecz)) + \cal S(Q_{i,2}(A\vecz)) + \cal S(Q_{i,3}(A\vecz)) = d_3+3.\] 
For the $k\ith$ clause, $k \in [m]$, the action of $A$ on the corresponding polynomial $R_k$ is:
\begin{equation*}
    \label{cl_op_poly}
   R_k(A\vecz) = x_0^{(3n + k)d_1}\prod_{j \in C_k}(y_j+ ((-1)^{a_{k,j}}+(-1)^{u_j})x_j)^{d_4}. 
\end{equation*}
As the multiplicands in $R_k(A\vecz)$ do not share any variables, $\cal S(R_{k}(A\vecz))$ is the product of the sparsity of the multiplicands. Since $\psi(\vecu) = 1$, therefore in the $k\ith$ clause there exists $j \in C_k$ such that $a_{k,j} \neq u_j$. For that $j$, $(y_j+((-1)^{a_{k,j}} + (-1)^{u_j})x_j)^{d_4} = y_j^{d_4}$. As at least one literal is true in every clause under $\vecu$, $\cal S(R_{k}(A\vecz)) \leq (d_4+1)^2$ using Observation \ref{lem_binom}. Thus,
\[\cal S(f(A\vecz)) = \cal S(A(x_0^{d_1})) + \sum_{i=1}^{n} \cal S(Q_i(A\vecz)) + \sum_{k=1}^{m} \cal S(R_k(A\vecz)) \leq 1 + n(d_3+3) + m(d_4+1)^2 = s.\qedhere\] 
\end{proof}

\subsection{The reverse direction} \label{subsubsection: sparse bwd direction}
Now, we show that $(f,s)\in \et$ implies $\psi\in \SAT$ by showing that the permuted and scaled versions of the transform of \eqref{action} form all the viable sparsifying invertible linear transforms. This is where the constraints on the $d_i$'s are used. So, let $A \in \GL(|\vecz|,\F)$ be such that $\cal S(f(A\vecz)) \leq s$. Lemma \ref{Lemma: fix x0} shows that $A(x_0)$ is just a variable by leveraging $d_1 \geq s$.

\begin{lemma} \label{Lemma: fix x0}
     Without loss of generality, $A(x_0) = x_0$.
\end{lemma}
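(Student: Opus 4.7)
The plan is to use the degree-separation of the pieces of $f$ to isolate the behaviour of $A$ on $x_0^{d_1}$, then use the structure of a power of a linear form to force $A(x_0)$ to be a scalar multiple of a single variable, and finally justify the \emph{WLOG} clause by a permutation-scaling argument.

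First, I would write $f$ as a sum of \emph{homogeneous} pieces of pairwise distinct total degrees: the monomial $x_0^{d_1}$, the polynomials $Q_{i,1}, Q_{i,2}, Q_{i,3}$ for $i \in [n]$, and the $R_k$ for $k \in [m]$ (cf. Observation \ref{Obs: deg separated polys}). Because $A$ is a linear change of variables, the total degree of each such homogeneous piece is preserved, so $A(x_0^{d_1})$, the $A(Q_{i,j})$'s and the $A(R_k)$'s are still homogeneous of their original pairwise distinct degrees. Iterating Observation \ref{Lemma: equiv deg sep} then gives
\[ \cal S(f(A\vecz)) \;=\; \cal S(A(x_0^{d_1})) \;+\; \sum_{i=1}^{n}\sum_{j=1}^{3} \cal S(A(Q_{i,j})) \;+\; \sum_{k=1}^{m} \cal S(A(R_k)), \]
so in particular $\cal S(A(x_0^{d_1})) \leq \cal S(f(A\vecz)) \leq s$.

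Next, set $L := A(x_0)$, a linear form in $\vecz$; then $A(x_0^{d_1}) = L^{d_1}$. If $L$ involved two or more variables of $\vecz$, then Observation \ref{lem_binom} would force $\cal S(L^{d_1}) \geq d_1 + 1$, contradicting $\cal S(L^{d_1}) \leq s \leq d_1$ (using the inequality $d_1 \geq s$ from \eqref{ineq}). Hence $L = c \cdot z$ for some variable $z \in \vecz$ and some $c \in \F \setminus \{0\}$.

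Finally, to justify the \emph{WLOG} clause, let $P \in \GL(|\vecz|,\F)$ be the permutation-and-scaling matrix on $\vecz$ that swaps $z$ with $x_0$ and scales suitably so that $(AP)(x_0) = x_0$. Put $A' := AP$. Then $f(A'\vecz)$ is obtained from $f(A\vecz)$ by permuting and rescaling the variables in $\vecz$, operations which preserve sparsity, so $\cal S(f(A'\vecz)) = \cal S(f(A\vecz)) \leq s$. Thus $A'$ is another sparsifying invertible linear map that additionally satisfies $A'(x_0) = x_0$, which justifies replacing $A$ by $A'$. The main leverage in this whole argument is the inequality $d_1 \geq s$ from \eqref{ineq}; without it, the sparsity bound on $L^{d_1}$ would be too weak to force $L$ onto a single variable, and the rest of the reduction would collapse.
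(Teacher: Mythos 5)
Your proof is correct and follows essentially the same route as the paper's: use the degree separation of the summands (Observation \ref{Obs: deg separated polys} together with Observation \ref{Lemma: equiv deg sep}) to isolate $\cal S(A(x_0^{d_1}))$, apply Observation \ref{lem_binom} with the constraint $d_1 \geq s$ to rule out $A(x_0)$ having two or more variables, and then absorb a permutation-and-scaling matrix into $A$ for the WLOG step. The only difference is that you spell out the equality decomposition of $\cal S(f(A\vecz))$ and the composition $A' = AP$ explicitly, which the paper leaves implicit.
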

The proof of Lemma \ref{var_spars_lem} uses $d_2 \geq 2d_3$ while that of Lemma \ref{Lemma: equality condition Qi} uses $d_3 \geq m(d_4+1)^2 + 1$. 
\begin{lemma} \label{var_spars_lem}
    For any invertible $A$ and $i \in [n]$:
        \begin{equation*} \label{var_spars_eqn}
            \cal S(Q_i(A\vecz)) = \cal S(Q_{i,1}(A\vecz))+\cal S(Q_{i,2}(A\vecz))+\cal S(Q_{i,3}(A\vecz))\geq d_3 + 3,
        \end{equation*}
    where $Q_i$, $Q_{i,1}$, $Q_{i,2}$ and $Q_{i,3}$ are as defined in Section \ref{subsubsection: sparse f construction}. Equality holds if and only if under $A$
        \[x_i \mapsto X_i  \text{ and } y_i \mapsto Y_i + (-1)^{u_i}X_i \] 
    for some scaled variables $X_i, Y_i \in \vecz$ and $u_i\in \{0,1\}$. Further, if $\cal S(Q_i(A\vecz)) \neq d_3 + 3$, then $\cal S(Q_i(A\vecz)) \geq 2d_3 + 3$.
\end{lemma}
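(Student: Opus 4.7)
By Lemma \ref{Lemma: fix x0} I may assume $A(x_0) = x_0$. Let $L_i := A(x_i)$ and $M_i := A(y_i)$, so
\begin{align*}
A(Q_{i,1}) &= x_0^{(3i-2)d_1}\, L_i^{d_2}, \\
A(Q_{i,2}) &= x_0^{(3i-1)d_1}\, (M_i + L_i)^{d_3}, \\
A(Q_{i,3}) &= x_0^{3id_1}\, (M_i - L_i)^{d_3}.
\end{align*}
The $x_0$-degrees of monomials in these three polynomials lie in the intervals $[(3i-2)d_1,\ (3i-2)d_1+d_2]$, $[(3i-1)d_1,\ (3i-1)d_1+d_3]$, and $[3id_1,\ 3id_1+d_3]$, which are pairwise disjoint because $d_1 \geq d_2+1$ and $d_1 > d_3$ by \eqref{ineq}. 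Hence the three summands are degree separated with respect to $x_0$, and Observation \ref{lem_deg_sep_sum} yields the additivity in the stated equation.

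Next I would bound each summand using Observations \ref{lem_binom} and \ref{Lemma: linear form sparsity}. Invertibility of $A$ forces $L_i$ and $M_i \pm L_i$ to be nonzero (otherwise $x_i$ or $x_i \pm y_i$ lies in $\ker A$), so each of $\cal S(L_i^{d_2})$, $\cal S((M_i+L_i)^{d_3})$, $\cal S((M_i-L_i)^{d_3})$ equals $1$ if the underlying linear form involves one variable and is at least $d_2+1$ or $d_3+1$ respectively otherwise. A case split on $|\var(L_i)|$ does most of the work. If $|\var(L_i)| \geq 2$, then $\cal S(A(Q_{i,1})) \geq d_2+1 \geq 2d_3+1$ using $d_2 \geq 2d_3$, giving total $\geq 2d_3+3$. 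Otherwise $L_i$ is a scaled variable and $\cal S(A(Q_{i,1})) = 1$; I would then split on $|\var(M_i \pm L_i)|$: if both are $\geq 2$, total $\geq 1 + 2(d_3+1) = 2d_3+3$; if some $M_i \pm L_i$ uses $\geq 3$ variables, Observation \ref{Lemma: linear form sparsity} gives sparsity $\geq \binom{d_3+2}{2} \geq 2d_3+1$, again forcing total $\geq 2d_3+3$; and the degenerate subcase in which both $M_i+L_i$ and $M_i-L_i$ are scaled variables is ruled out by invertibility of $A$, since then $L_i$ and $M_i$ would both lie in the span of a single variable, making $A(x_i), A(y_i)$ linearly dependent.

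The only remaining configuration gives $\cal S(Q_i(A\vecz)) = 1 + 1 + (d_3+1) = d_3+3$: $L_i$ is a scaled variable $X_i$, and exactly one of $M_i + L_i$, $M_i - L_i$ is a scaled variable $Y_i$ (necessarily in a different variable from $X_i$, else we are back in the degenerate subcase), while the other has exactly two variables. If $M_i - L_i = Y_i$ set $u_i = 0$; otherwise $M_i + L_i = Y_i$ and set $u_i = 1$. In both cases $A(y_i) = Y_i + (-1)^{u_i} X_i$, which is the stated equality condition. The main technical step in this plan is ruling out the ``both scaled variables'' subcase via the invertibility of $A$; everything else should reduce cleanly to the sparsity facts of Observations \ref{lem_binom} and \ref{Lemma: linear form sparsity} combined with the degree inequalities on $d_1, d_2, d_3$.
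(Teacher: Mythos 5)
Your proposal is correct and matches the paper's proof in essence: sparsity additivity via degree separation, followed by a case analysis on the number of variables in $A(x_i)$ and $A(y_i \pm x_i)$ using Observations \ref{lem_binom} and \ref{Lemma: linear form sparsity}, the constraint $d_2 \geq 2d_3$, and invertibility of $A$ to exclude the case where both $A(y_i + x_i)$ and $A(y_i - x_i)$ are scaled variables. The only differences are cosmetic: the paper obtains the additivity for an arbitrary invertible $A$ from total-degree separation of $Q_{i,1}, Q_{i,2}, Q_{i,3}$ (Observations \ref{Obs: deg separated polys} and \ref{Lemma: equiv deg sep}) rather than importing $A(x_0) = x_0$ from Lemma \ref{Lemma: fix x0} as you do, and your ``at least three variables'' subcase is redundant, since once $A(x_i)$ and one of $A(y_i \pm x_i)$ are scaled variables the other is automatically a two-variable linear form.
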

            
\begin{lemma} \label{Lemma: equality condition Qi}
Under the given $A$, $\cal S(Q_i(A\vecz)) = d_3 + 3$ holds for all $i \in [n]$.
\end{lemma}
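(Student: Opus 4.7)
The plan is to deduce the equality via a counting argument that combines Lemma \ref{Lemma: fix x0}, Lemma \ref{var_spars_lem}, and the degree-separation structure already set up in Observations \ref{Obs: deg separated polys} and \ref{Lemma: equiv deg sep}. My first step is to observe that each of the summands $x_0^{d_1}, Q_{i,1}, Q_{i,2}, Q_{i,3}, R_k$ defining $f$ is homogeneous, and by Observation \ref{Obs: deg separated polys} these pieces have pairwise distinct total degrees. Since any linear transform preserves the total degree of a homogeneous polynomial, the images $A(x_0^{d_1}), Q_{i,j}(A\vecz), R_k(A\vecz)$ are again homogeneous of the same distinct total degrees, so they remain pairwise degree separated. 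Observation \ref{Lemma: equiv deg sep} then gives
\[
\cal S(f(A\vecz)) \;=\; \cal S(A(x_0^{d_1})) \;+\; \sum_{i=1}^n \cal S(Q_i(A\vecz)) \;+\; \sum_{k=1}^m \cal S(R_k(A\vecz)).
\]

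Next I would substitute the known lower bounds for each term. Lemma \ref{Lemma: fix x0} lets us take $A(x_0) = x_0$, so $\cal S(A(x_0^{d_1})) = 1$. Lemma \ref{var_spars_lem} yields $\cal S(Q_i(A\vecz)) \geq d_3+3$ for every $i \in [n]$, together with the crucial dichotomy that if equality fails for some index $i_0$, then $\cal S(Q_{i_0}(A\vecz)) \geq 2d_3+3$. Finally, since $A$ is invertible and each $R_k$ is a nonzero polynomial, its image $R_k(A\vecz)$ is nonzero, so $\cal S(R_k(A\vecz)) \geq 1$.

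The closing step is a contradiction argument. Suppose $\cal S(Q_{i_0}(A\vecz)) \neq d_3 + 3$ for some $i_0 \in [n]$. Combining the bounds above would then yield
\[
\cal S(f(A\vecz)) \;\geq\; 1 \;+\; \bigl[(n-1)(d_3+3) + (2d_3+3)\bigr] \;+\; m \;=\; 1 + n(d_3+3) + d_3 + m.
\]
The hypothesis $\cal S(f(A\vecz)) \leq s = 1 + n(d_3+3) + m(d_4+1)^2$ would force $d_3 + m \leq m(d_4+1)^2$, contradicting the constraint $d_3 \geq m(d_4+1)^2 + 1$ from \eqref{ineq}. Hence $\cal S(Q_i(A\vecz)) = d_3 + 3$ for every $i \in [n]$.

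I do not foresee any serious obstacle: the gap of $d_3$ between the two possible sparsity values in Lemma \ref{var_spars_lem} is exactly tailored so that a single ``bad'' index $i_0$ already exceeds the sparsity budget $s$, and the only non-routine point to verify carefully is that degree separation truly persists under $A$, which follows immediately from the homogeneity of the individual pieces together with the fact that linear transforms preserve total degree.
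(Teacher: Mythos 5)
Your proposal is correct and follows essentially the same route as the paper's proof: decompose $\cal S(f(A\vecz))$ into the sparsities of the degree-separated summands (Observations \ref{Obs: deg separated polys} and \ref{Lemma: equiv deg sep}), invoke the dichotomy of Lemma \ref{var_spars_lem} for a hypothetical bad index, and derive a contradiction with $\cal S(f(A\vecz)) \leq s$ using $d_3 \geq m(d_4+1)^2 + 1$. The only cosmetic difference is that you rejustify the persistence of degree separation under $A$ via homogeneity of the summands, whereas the paper cites Observation \ref{Lemma: equiv deg sep} directly, and you keep the $+m$ contribution of the $R_k(A\vecz)$ terms, which is harmless but not needed.
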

Lemmas \ref{Lemma: fix x0}, \ref{var_spars_lem} and \ref{Lemma: equality condition Qi} together show that $A$ is a permuted scaled version of the transform of \eqref{action}. We can assume $A$ to be as described in \eqref{action} without loss of generality as permutation and non-zero scaling of variables do not affect the sparsity of a polynomial. Proposition \ref{Proposition: bwd direction} shows how a satisfying assignment can be derived from $A$ using $d_4 \geq m$.
\begin{proposition} \label{Proposition: bwd direction}
   With $A$ as described in \eqref{action}, $\vecu = (u_1, \dots , u_n)$ is a satisfying assignment for $\psi$.  
\end{proposition}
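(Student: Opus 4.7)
The plan is to exploit the sparsity bound $\cal S(f(A\vecz)) \leq s$ and isolate the contribution of the clause polynomials $R_k$. By Observation~\ref{Obs: deg separated polys} and Observation~\ref{Lemma: equiv deg sep}, $\cal S(f(A\vecz))$ decomposes as
\[ \cal S(f(A\vecz)) = \cal S(A(x_0^{d_1})) + \sum_{i=1}^n \cal S(Q_i(A\vecz)) + \sum_{k=1}^m \cal S(R_k(A\vecz)). \]
With $A$ as in \eqref{action} we have $A(x_0^{d_1}) = x_0^{d_1}$, contributing $1$, and Lemma~\ref{Lemma: equality condition Qi} gives $\cal S(Q_i(A\vecz)) = d_3 + 3$ for every $i$. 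Combining these with $s = 1 + n(d_3+3) + m(d_4+1)^2$ forces
\[ \sum_{k=1}^m \cal S(R_k(A\vecz)) \leq m(d_4+1)^2. \]

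The next step is an explicit computation of $R_k(A\vecz)$. Substituting $y_j \mapsto y_j + (-1)^{u_j} x_j$ into the definition of $R_k$ yields
\[ R_k(A\vecz) = x_0^{(3n+k)d_1} \prod_{j \in C_k} \bigl( y_j + ((-1)^{u_j} + (-1)^{a_{k,j}}) x_j \bigr)^{d_4}. \]
The coefficient of $x_j$ inside the parentheses vanishes exactly when $u_j \ne a_{k,j}$, i.e., exactly when the literal $(x_j \oplus a_{k,j})$ is satisfied by $\vecu$. Letting $t_k$ denote the number of satisfied literals in clause $k$ under $\vecu$, each non-vanishing factor is a linear form in two variables of sparsity $d_4 + 1$ (by Observation~\ref{lem_binom}), while each vanishing factor is $y_j^{d_4}$ of sparsity $1$. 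Since the factors involve pairwise disjoint variable sets, their sparsities multiply, giving $\cal S(R_k(A\vecz)) = (d_4+1)^{3-t_k}$.

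Finally, I would conclude by contradiction. Suppose some clause $k$ is falsified by $\vecu$, so $t_k = 0$ and $\cal S(R_k(A\vecz)) = (d_4+1)^3$ just from that clause. Using $d_4 \geq m$ from \eqref{ineq},
\[ \sum_{k=1}^m \cal S(R_k(A\vecz)) \geq (d_4+1)^3 \geq (m+1)(d_4+1)^2 > m(d_4+1)^2, \]
which contradicts the bound established in the first step. Hence every clause has $t_k \geq 1$, and $\vecu$ satisfies $\psi$. The crux is the degree gap captured by $d_4 \geq m$: this single inequality is precisely what ensures that a lone unsatisfied clause would blow the sparsity past the entire budget allotted to all $m$ satisfied clauses combined. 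The rest of the argument is bookkeeping exploiting the degree-separation structure already baked into $f$.
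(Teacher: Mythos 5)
Your proposal is correct and follows essentially the same route as the paper: decompose $\cal S(f(A\vecz))$ via degree separation, compute $R_k(A\vecz)$ explicitly under the transform of \eqref{action}, and use $d_4 \geq m$ so that a single falsified clause's sparsity $(d_4+1)^3 \geq (m+1)(d_4+1)^2$ overshoots the budget, contradicting $\cal S(f(A\vecz)) \leq s$. The only cosmetic difference is that you first isolate the clause budget $\sum_k \cal S(R_k(A\vecz)) \leq m(d_4+1)^2$ and contradict it, while the paper directly lower-bounds $\cal S(f(A\vecz))$ by $s + (d_4+1)^2 > s$; these are the same argument rearranged.
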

\begin{proof}
Suppose not; then there exists $k \in [m]$ such that the $k\ith$ clause, $\lor_{j \in C_k}(x_j\oplus a_{k,j})$, in $\psi$ is unsatisfied. Since this clause is unsatisfied, $u_j = a_{k,j}$ for all $j \in C_k$. Thus, $R_k(A\vecz) = x_0^{(3n+k)d_1} \prod_{j \in C_k}(y_j\pm 2x_j)^{d_4}$, where $R_k$ is as defined in Section \ref{subsubsection: sparse f construction}, and $\cal S(R_k(A\vecz)) = (d_4+1)^3\geq (m+1)(d_4+1)^2$ by Observation \ref{lem_binom}, the fact that $R_k(A\vecz)$ is a product of linear forms not sharing variables, and the condition $d_4 \geq m$. By the definition of $f$ and $s$ in Section \ref{subsubsection: sparse f construction}, Observations \ref{Obs: deg separated polys} and \ref{Lemma: equiv deg sep}, it holds that
\begin{equation*}
    \begin{split}
            \cal S(f(A\vecz)) & \geq  \cal S(A(x_0)^{d_1}) + \sum_{i = 1}^{n} \cal S(Q_i(A\vecz)) +\cal S(R_k(A\vecz)) \\
            &\geq  1 + n(3+d_3) + m(d_4+1)^2 + (d_4+1)^2 = s+ (d_4+1)^2 > s,
    \end{split}
\end{equation*}
a contradiction. Thus, $\vecu$ is a satisfying assignment for $\psi$.
\end{proof}
\subsection{The homogeneous case} \label{section-ETSparse-homogenous}
We show a modification of the construction in Section \ref{subsubsection: sparse f construction} which, along with arguments similar to those in Sections \ref{subsubsection: sparse fwd direction} and \ref{subsubsection: sparse bwd direction}, can be used to prove Theorem \ref{Theorem: ETSparse NP-hard} for homogeneous polynomials over characteristic $0$ fields. Because the polynomials are homogeneous, we cannot use degree separation like in the non-homogeneous case. Instead, we introduce a new variable $y_0$, a new degree parameter $d_5 \in \N$, and redefine $Q_i(\vecz)$ and $R_k(\vecz)$ of Section \ref{subsubsection: sparse f construction} along with modified constraints on the $d_i$'s so that:
\begin{enumerate}
    \item Each polynomial is homogeneous with the same degree and is divisible by $x_0^{d_1}$ and $y_0^{d_2}$.
    \item Each polynomial has a distinct $x_0$ degree. 
\end{enumerate}
The divisibility condition ensures that both $x_0$ and $y_0$ map to scaled variables under a sparsifying invertible linear transform (see Lemma \ref{Lemma: y0,x0 fixed homogeneous} and its proof). Due to the second condition, the polynomials are degree separated with respect to $x_0$ (see Observation \ref{Obs: deg separated polys homogeneous}). This fact is used to show that, under a sparsifying invertible linear transform, the polynomials are degree separated with respect to $x_0$ (see Lemma \ref{Lemma: y0 separated} and its proof). Formally, let $x_0$, $\vecx$ and $\vecy$ be as defined in Section \ref{subsubsection: sparse f construction} and $y_0$ be a new variable. Define $\vecz := \vecx \sqcup \vecy \sqcup \{x_0\} \sqcup \{y_0\}$. Let $d_1, d_2, d_3, d_4, d_5 \in \N$. Consider the following polynomials:
\begin{enumerate}
    \item For each variable $x_i$, $i \in [n]$, define $Q_i(\vecz) := Q_{i,1}(\vecz) + Q_{i,2}(\vecz) + Q_{i,3}(\vecz),$ where
\begin{equation*}
 \begin{split}
Q_{i,1}(\vecz) &:= x_0^{d_1 + (3i-2)(d_3+1)}y_0^{d_2 + (3n+m-3i+3)(d_3+1)-d_3} x_i^{d_3}, \\ Q_{i,2}(\vecz) &:=x_0^{d_1 + (3i-1)(d_3+1)}y_0^{d_2 + (3n+m-3i+2)(d_3+1)-d_4}(y_i+x_i)^{d_4}, \\ Q_{i,3}(\vecz) &:=x_0^{d_1 + 3i(d_3+1)}y_0^{d_2 + (3n+m-3i+1)(d_3+1)-d_4}(y_i-x_i)^{d_4}. 
\end{split}
\end{equation*}
   \item For the $k\ith$ clause, $k \in [m]$, define 
    \[R_k(\vecz) := x_0^{d_1 + (3n+k)(d_3+1)}y_0^{d_2 +(m-k+1)(d_3+1)-3d_5}\prod_{j\in C_k}(y_j+ (-1)^{a_{k,j}}x_j)^{d_5}.\]
\end{enumerate}
Define $s := 1 + n(d_4 + 3) + m(d_5+1)^2$ and impose the following conditions on the $d_i$'s: 
\begin{equation} \label{ineq_homogeneous}
\begin{split}
           d_1 &\geq d_2 + (3n+m+1)(d_3+1) + 1, \ d_2 \geq \max((3n+m+1)(d_3+1),s) + 1, \\ d_3 &\geq 2d_4, \ d_4 \geq m(d_5+1)^2+1, \ d_5 \geq m.    
\end{split}
\end{equation}
For characteristic $0$ fields, the inequalities of \eqref{ineq_homogeneous} can be converted to equalities to get
\begin{equation} \label{di-homogeneous-char0}
    \begin{split}
    d_5 &= m, \ d_4 = m(d_5+1)^2 + 1, \ d_3 = 2d_4,\\
    d_2 &= \max((3n+m+1)(d_3+1),s) + 1, \\
    d_1 &= d_2 + (3n+m+1)(d_3+1) + 1.
    \end{split}
\end{equation}
For these choices, $s = 1 + n(d_4+3) +m(d_5+1)^2 = O(nm^3)$, while $(3n+m+1)(d_3+1) = \Theta((n+m)m^3)$. Hence, 
\begin{equation*}
    \begin{split}
    d_5 &= O(m), \ d_4 = O(m^3), \ d_3 = O(m^3),\\
    d_2 &= O((n+m)m^3), \ d_1 = O((n+m)m^3).
    \end{split}
\end{equation*}
Using the conditions in \eqref{ineq_homogeneous}, it is easy to verify that the individual degree of $x_0$ and $y_0$ in every polynomial defined above is at least $d_1$ and $d_2$, respectively. Define $f$ as:
\begin{equation} \label{Homogeneous poly}
  f(\vecz) :=  x_0^{d_1}y_0^{d_2 + (3n+m+1)(d_3+1)} + \sum_{i=1}^{n}Q_i(\vecz) + \sum_{k=1}^{m}R_k(\vecz).  
\end{equation}
Clearly, $f$ is a homogeneous polynomial of degree $d_1 + d_2 + (3n+m+1)(d_3+1)$ and is divisible by $x_0^{d_1}$ and $y_0^{d_2}$. Further, the following observations hold under the constraints of \eqref{ineq_homogeneous}.

\begin{observation} \label{Obs: deg separated polys homogeneous}
For all $i \in [n]$, $k \in [m]$, the polynomials $x_0^{d_1}y_0^{d_2 + (3n+m + 1)(d_3+1)}$, $Q_{i,1}(\vecz)$, $Q_{i,2}(\vecz)$, $Q_{i,3}(\vecz)$ and $R_k(\vecz)$ are degree separated with respect to $x_0$ from one another. Also, $Q_i(\vecz)$ is degree separated with respect to $x_0$ from other $Q_j(\vecz)$'s, for $i,j \in [n]$ and $i \neq j$. Similarly, $R_k(\vecz)$ is degree separated with respect to $x_0$ from $R_l(\vecz)$ for $k,l \in [m]$ and $k \neq l$.
\end{observation}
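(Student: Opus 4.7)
The proof will essentially be a bookkeeping exercise: I will read off the $x_0$-degree of every monomial in each of the listed polynomials, observe that within each polynomial the $x_0$-degree is constant, and then verify that these constants are pairwise distinct.

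First I would note the structural feature of the construction in \eqref{Homogeneous poly}: each of $x_0^{d_1} y_0^{d_2 + (3n+m+1)(d_3+1)}$, $Q_{i,1}$, $Q_{i,2}$, $Q_{i,3}$, and $R_k$ is the product of an $x_0$-power, a $y_0$-power, and a factor that involves \emph{only} variables in $\vecx \cup \vecy \setminus \{x_0,y_0\}$ (namely $x_i^{d_3}$, $(y_i+x_i)^{d_4}$, $(y_i-x_i)^{d_4}$, and $\prod_{j\in C_k}(y_j+(-1)^{a_{k,j}}x_j)^{d_5}$ respectively). Consequently, every monomial appearing in any one of these polynomials has the \emph{same} $x_0$-degree, equal to the $x_0$-exponent of the explicit $x_0^{(\cdot)}$ prefix.

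Next I would list these $x_0$-degrees explicitly: $d_1$ for the leading term, $d_1 + (3i-2)(d_3+1)$ for $Q_{i,1}$, $d_1 + (3i-1)(d_3+1)$ for $Q_{i,2}$, $d_1 + 3i(d_3+1)$ for $Q_{i,3}$, and $d_1 + (3n+k)(d_3+1)$ for $R_k$. Each of these is of the form $d_1 + c(d_3+1)$ where $c$ takes a distinct value from $\{0,1,2,\ldots,3n+m\}$, and since $d_3 + 1 \geq 1$ the map $c \mapsto d_1 + c(d_3+1)$ is injective. Thus the $x_0$-degrees are pairwise distinct, which immediately gives the first part of the observation: the listed building blocks are pairwise degree separated with respect to $x_0$.

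For the claims about the $Q_i$'s and the $R_k$'s, I would observe that every monomial of $Q_i = Q_{i,1} + Q_{i,2} + Q_{i,3}$ has $x_0$-degree in the set $\{d_1 + (3i-2)(d_3+1),\, d_1 + (3i-1)(d_3+1),\, d_1 + 3i(d_3+1)\}$, and these three-element sets are disjoint for distinct $i \in [n]$ (they correspond to disjoint windows of the integer $c$). Hence $Q_i$ and $Q_j$ are degree separated with respect to $x_0$ whenever $i\neq j$. The same reasoning applies verbatim to the $R_k$'s, whose $x_0$-degrees are the singletons $\{d_1 + (3n+k)(d_3+1)\}$.

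There is no real obstacle here — the constraints in \eqref{ineq_homogeneous} play no role in this particular observation beyond guaranteeing that all $y_0$-exponents in the definitions of the $Q_{i,j}$'s and $R_k$'s are non-negative (so that the polynomials are bona fide polynomials); the degree-separation-with-respect-to-$x_0$ statement follows purely from the injectivity of $c \mapsto d_1 + c(d_3+1)$ on $\{0,1,\ldots,3n+m\}$. The only point requiring a moment of care is that the non-$x_0$ factors do not secretly introduce $x_0$-powers, which is immediate from the definitions.
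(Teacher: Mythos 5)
Your proof is correct and follows essentially the same route as the paper's: the paper's one-line argument observes that the $x_0$-degrees of the summands form an arithmetic progression with common difference $d_3+1$ (i.e., $d_1 + c(d_3+1)$ for distinct $c \in \{0,1,\ldots,3n+m\}$), which is exactly your injectivity argument, together with the immediate fact that the remaining factors involve no $x_0$.
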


\begin{observation} \label{Obs: sparsity poly homogeneous}
$\cal S(f(\vecz)) = 1 + n(2d_4 + 3) + m(d_5+1)^3$ and $\supp(f) = 8$.
\end{observation}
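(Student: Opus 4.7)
My plan is to derive both claims by a direct decomposition of $f$ into the pieces that appear in \eqref{Homogeneous poly}, invoking the $x_0$-degree separation established in Observation \ref{Obs: deg separated polys homogeneous}. Since $x_0^{d_1}y_0^{d_2+(3n+m+1)(d_3+1)}$, all the $Q_{i,j}(\vecz)$ (for $i \in [n]$, $j \in \{1,2,3\}$), and all the $R_k(\vecz)$ (for $k \in [m]$) are pairwise degree separated with respect to $x_0$, applying Observation \ref{lem_deg_sep_sum} iteratively (in its variable-specific form) yields
\[
\cal S(f) \;=\; 1 \;+\; \sum_{i=1}^{n}\bigl(\cal S(Q_{i,1}) + \cal S(Q_{i,2}) + \cal S(Q_{i,3})\bigr) \;+\; \sum_{k=1}^{m}\cal S(R_k).
\]

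Next I would compute each summand. The term $Q_{i,1}$ is a single monomial, so $\cal S(Q_{i,1}) = 1$. Each of $Q_{i,2}$ and $Q_{i,3}$ is a monomial in $x_0,y_0$ times $(y_i \pm x_i)^{d_4}$, where $y_i \pm x_i$ is a linear form in exactly two variables; since $\textnormal{char}(\F) = 0$, Observation \ref{lem_binom} gives $\cal S(Q_{i,2}) = \cal S(Q_{i,3}) = d_4 + 1$. Summing yields $\cal S(Q_i) = 1 + 2(d_4+1) = 2d_4+3$. For $R_k$, the polynomial factors as $x_0^{\alpha}y_0^{\beta}\prod_{j\in C_k}(y_j + (-1)^{a_{k,j}}x_j)^{d_5}$, and the three factors $(y_j + (-1)^{a_{k,j}}x_j)^{d_5}$ are in pairwise disjoint variable sets (as the three indices in $C_k$ are distinct). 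Each has sparsity $d_5+1$ by Observation \ref{lem_binom}, and sparsity multiplies over variable-disjoint factors, so $\cal S(R_k) = (d_5+1)^3$. Assembling these gives $\cal S(f) = 1 + n(2d_4+3) + m(d_5+1)^3$.

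For the support claim, I would inspect the set of variables appearing with a nonzero exponent in each monomial of the summands. Monomials of $x_0^{d_1}y_0^{d_2+(3n+m+1)(d_3+1)}$ have support $2$. Monomials of $Q_{i,1}$ have support $3$, while monomials of $Q_{i,2}$ and $Q_{i,3}$ involve at most the four variables $x_0,y_0,x_i,y_i$, so their support is at most $4$. Monomials of $R_k$ can involve only the eight variables $x_0,y_0$ together with $\{x_j, y_j : j \in C_k\}$, giving an upper bound of $8$. This bound is attained: assuming $d_5 \geq 2$ (which holds since the constraints permit, and may be enforced in the reduction), the expansion of each $(y_j + (-1)^{a_{k,j}} x_j)^{d_5}$ contains a cross-term involving both $y_j$ and $x_j$ with nonzero binomial coefficient in characteristic $0$; the product of three such cross-terms, multiplied by $x_0^\alpha y_0^\beta$, is a monomial of $R_k$ with support exactly $8$. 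Hence $\supp(f) = 8$.

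The computations are routine once the degree separation with respect to $x_0$ is invoked; the only minor subtlety is tracking that the linear forms inside $R_k$ are variable-disjoint, so sparsity multiplies across the three factors rather than being merely upper-bounded by the product.
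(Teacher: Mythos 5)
Your proof is correct and takes essentially the same route as the paper's: decompose $f$ using the $x_0$-degree separation of Observation \ref{Obs: deg separated polys homogeneous}, compute $\cal S(Q_i)=2d_4+3$ and $\cal S(R_k)=(d_5+1)^3$ via Observation \ref{lem_binom} and variable-disjointness of the clause factors, and obtain the support from the $R_k$'s. Your added remark that attaining support $8$ requires $d_5\geq 2$ (so each factor $(y_j\pm x_j)^{d_5}$ contributes a cross-term) is a point the paper leaves implicit; it holds under the stated parameter settings whenever $m\geq 2$.
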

\vspace{-4mm}
\paragraph{The forward direction.} Let $\vecu \in \{0,1\}^n$ be such that $\psi(\vecu) = 1$ and $f$, as described in \eqref{Homogeneous poly}, be the polynomial corresponding to $\psi$. Proposition \ref{Proposition: fwd direction homogeneous} shows how $\vecu$ can be used to construct a sparsifying transform. The proof of Proposition \ref{Proposition: fwd direction homogeneous} is very similar to that of Proposition \ref{Proposition: fwd direction}.

\begin{proposition} \label{Proposition: fwd direction homogeneous}
$\cal S(f(A\vecz)) \leq s$ where $A \in \GL(|\vecz|,\F)$ is as follows:
\begin{equation} \label{action homogeneous case}
  A: y_0 \mapsto y_0, \ x_0 \mapsto x_0, \ x_i \mapsto x_i, \ y_i \mapsto y_i + (-1)^{u_i}x_i \ \  i \in [n].  
\end{equation}

\end{proposition}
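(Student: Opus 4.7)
The plan is to mirror the proof of Proposition \ref{Proposition: fwd direction}, adapting it to the homogeneous setting by using $x_0$-degree separation in place of total-degree separation. The key observation enabling this is that the prescribed transformation $A$ fixes both $x_0$ and $y_0$, so the $x_0$-degree of every monomial is preserved under $A$. Combined with Observation \ref{Obs: deg separated polys homogeneous}, this means the summands $A(x_0^{d_1}y_0^{d_2+(3n+m+1)(d_3+1)})$, $Q_{i,1}(A\vecz)$, $Q_{i,2}(A\vecz)$, $Q_{i,3}(A\vecz)$, and $R_k(A\vecz)$ remain pairwise degree separated with respect to $x_0$. Hence by Observation \ref{lem_deg_sep_sum} (its ``with respect to a variable'' clause), the sparsity decomposes as
\[
\cal S(f(A\vecz)) = 1 + \sum_{i=1}^n \cal S(Q_i(A\vecz)) + \sum_{k=1}^m \cal S(R_k(A\vecz)).
\]

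Next I would analyze each $Q_i(A\vecz)$. Since $A$ fixes $x_0, y_0, x_i$ and sends $y_i \mapsto y_i + (-1)^{u_i}x_i$, a case split on $u_i \in \{0,1\}$ shows that exactly one of $Q_{i,2}(A\vecz), Q_{i,3}(A\vecz)$ becomes a pure monomial $x_0^{\ast}y_0^{\ast}y_i^{d_4}$ of sparsity $1$, while the other becomes $x_0^{\ast}y_0^{\ast}(y_i \pm 2x_i)^{d_4}$, of sparsity $d_4 + 1$ by Observation \ref{lem_binom}. Also $Q_{i,1}(A\vecz) = Q_{i,1}(\vecz)$ is a monomial. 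As the three $Q_{i,j}(A\vecz)$'s are pairwise $x_0$-degree separated, one more application of Observation \ref{lem_deg_sep_sum} gives $\cal S(Q_i(A\vecz)) = 1 + (d_4+1) + 1 = d_4 + 3$.

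For each clause polynomial, a direct computation yields $R_k(A\vecz) = x_0^{\ast}y_0^{\ast}\prod_{j\in C_k}\bigl(y_j + ((-1)^{a_{k,j}} + (-1)^{u_j})x_j\bigr)^{d_5}$. Because $\psi(\vecu) = 1$, there exists $j^{\star} \in C_k$ with $u_{j^\star} \neq a_{k,j^\star}$; the $j^{\star}$-multiplicand then collapses to $y_{j^\star}^{d_5}$ of sparsity $1$, and each of the remaining two multiplicands has sparsity at most $d_5 + 1$ by Observation \ref{lem_binom}. Since the three multiplicands involve pairwise disjoint variable sets, their product has sparsity equal to the product of their sparsities, giving $\cal S(R_k(A\vecz)) \leq (d_5+1)^2$. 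Summing yields
\[
\cal S(f(A\vecz)) \leq 1 + n(d_4+3) + m(d_5+1)^2 = s.
\]

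The only subtlety relative to Proposition \ref{Proposition: fwd direction} is verifying that Observation \ref{lem_deg_sep_sum} applies in its variable-wise form after applying $A$, but since $A$ fixes $x_0$ the $x_0$-degrees of all summands are invariant and remain disjoint by design; no new ideas beyond the non-homogeneous case are required, so I do not anticipate a serious obstacle.
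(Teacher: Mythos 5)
Your proposal is correct and follows essentially the same route as the paper, which proves this proposition by mirroring the non-homogeneous Proposition \ref{Proposition: fwd direction}: since $A$ fixes $x_0$ (and $y_0$), the summands stay degree separated with respect to $x_0$ as in Observation \ref{Obs: deg separated polys homogeneous}, the sparsity decomposes additively, each $Q_i(A\vecz)$ contributes $d_4+3$, and each $R_k(A\vecz)$ contributes at most $(d_5+1)^2$ because the satisfied literal collapses one multiplicand to $y_{j^\star}^{d_5}$. No gap to report.
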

\vspace{-4mm}
\paragraph{The reverse direction.} Let $\cal S(f(A\vecz)) \leq s$ for some $A \in \GL(|\vecz|, \F)$. Lemma \ref{Lemma: y0,x0 fixed homogeneous}, the proof of which requires Claim \ref{lem_div}, shows that $A(x_0)$ and $A(y_0)$ have only one variable each. With this established, Lemma \ref{Lemma: y0 separated} shows that the summands of $f(A\vecz)$ must be degree separated with respect to $x_0$.

\begin{lemma} \label{Lemma: y0,x0 fixed homogeneous}
Without loss of generality, $A(x_0) = x_0$ and $A(y_0) = y_0$.
\end{lemma}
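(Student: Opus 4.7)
The plan is to exploit the fact that $f$ is divisible by a large power of $x_0$ and a large power of $y_0$, and then apply Claim \ref{lem_div}. First I would verify, by inspecting \eqref{Homogeneous poly}, that every summand of $f$ has $x_0$-degree at least $d_1$ and $y_0$-degree at least $d_2$: the leading term $x_0^{d_1}y_0^{d_2+(3n+m+1)(d_3+1)}$ is trivially fine, and for each $Q_{i,j}$ and $R_k$ the $x_0$-exponent is $d_1$ plus a nonnegative term, while the $y_0$-exponent is $d_2$ plus a term of the form $a(d_3+1) - b$ with $a \geq 1$ and $b \in \{d_3, d_4, 3d_5\}$. These differences are $\geq 0$ because the constraints \eqref{ineq_homogeneous} enforce $d_3 \geq 2d_4 \geq 3d_5$. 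Thus $f = x_0^{d_1} y_0^{d_2} \cdot h$ for some nonzero $h \in \F[\vecz]$, and invertibility of $A$ yields $f(A\vecz) = (A(x_0))^{d_1} (A(y_0))^{d_2} \cdot h(A\vecz)$ as a nonzero polynomial divisible by both $(A(x_0))^{d_1}$ and $(A(y_0))^{d_2}$.

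Next I would invoke Claim \ref{lem_div} twice. If $A(x_0)$ were a linear form involving at least two variables, Claim \ref{lem_div} (applicable in characteristic $0$) would force $\cal S(f(A\vecz)) \geq d_1 + 1$; but \eqref{ineq_homogeneous} gives $d_1 > d_2 > s$, contradicting the hypothesis $\cal S(f(A\vecz)) \leq s$. Hence $A(x_0) = \lambda_1 z_{j_1}$ for some $z_{j_1} \in \vecz$ and $\lambda_1 \in \F^*$. The identical argument applied to $A(y_0)$, using $d_2 > s$ in place of $d_1 > s$, forces $A(y_0) = \lambda_2 z_{j_2}$ for some $z_{j_2} \in \vecz$ and $\lambda_2 \in \F^*$.

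Finally, because $A \in \GL(|\vecz|, \F)$, the images $A(x_0)$ and $A(y_0)$ are linearly independent, so $z_{j_1} \neq z_{j_2}$. Since $\cal S(f(A\vecz))$ is unchanged under any permutation or non-zero rescaling of the output variables of $\vecz$ (such operations merely relabel monomials and rescale their coefficients without altering their count), we may swap $z_{j_1}, z_{j_2}$ with $x_0, y_0$ and absorb the scalars $\lambda_1^{-1}, \lambda_2^{-1}$ to obtain an equally good sparsifying transform $A'$ satisfying $A'(x_0) = x_0$ and $A'(y_0) = y_0$, which is precisely the \emph{without loss of generality} claimed by the lemma. The only delicate step is the bookkeeping verification that every summand of $f$ carries the asserted powers of $x_0$ and $y_0$; once this and the separations $d_1 > s$ and $d_2 > s$ are in place, Claim \ref{lem_div} and the standard relabeling argument complete the proof.
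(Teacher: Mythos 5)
Your proposal is correct and follows essentially the same route as the paper: observe that $f(A\vecz)$ is divisible by $(A(x_0))^{d_1}$ and $(A(y_0))^{d_2}$, apply Claim \ref{lem_div} together with the constraints $d_1 > d_2 > s$ from \eqref{ineq_homogeneous} to rule out $A(x_0)$ or $A(y_0)$ being a linear form in two or more variables, and then normalize by a permutation and scaling, which leaves sparsity unchanged. The only addition is your explicit verification of the divisibility of each summand, which the paper dispatches with a one-line remark after \eqref{ineq_homogeneous}.
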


\begin{lemma} \label{Lemma: y0 separated}
    For all $i \in [n]$, $k \in [m]$, the polynomials $x_0^{d_1}y_0^{d_2 +(3n+m+1)(d_3+1)}$, $Q_{i,1}(A\vecz)$, $Q_{i,2}(A\vecz)$, $Q_{i,3}(A\vecz)$ and $R_k(A\vecz)$ are degree separated from one another with respect to $x_0$. Also, $Q_i(A\vecz)$ is degree separated with respect to $x_0$ from other $Q_j(A\vecz)$'s, for $i,j \in [n]$ and $i \neq j$. Similarly, $R_k(A\vecz)$ is degree separated with respect to $x_0$ from $R_l(A\vecz)$ for $k,l \in [m]$ and $k \neq l$.
\end{lemma}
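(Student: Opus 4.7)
The plan is to invoke Lemma \ref{Lemma: y0,x0 fixed homogeneous}, which lets us normalize so that $A(x_0) = x_0$ and $A(y_0) = y_0$, and then to bound, for each summand of $f(A\vecz)$, the interval of $x_0$-degrees in which its monomials lie. With $A$ fixing $x_0$ and $y_0$, the pure $x_0^{a}y_0^{b}$ prefix of every $Q_{i,j}$ and $R_k$ passes through $A$ unchanged; the only part of the summand that can alter the $x_0$-degree of a monomial is the power $A(x_i)^{d_3}$, $A(y_i \pm x_i)^{d_4}$, or $A(y_j + (-1)^{a_{k,j}} x_j)^{d_5}$. Since each of these is a linear form in $\vecz$ (which could a priori involve $x_0$), any monomial appearing in its $d$-th power has $x_0$-degree in $[0,d]$.

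From this one reads off intervals of $x_0$-degrees for the monomial supports: $Q_{i,1}(A\vecz)$ lies in $[\,d_1 + (3i-2)(d_3+1),\ d_1 + (3i-2)(d_3+1) + d_3\,]$; $Q_{i,2}(A\vecz)$ and $Q_{i,3}(A\vecz)$ lie in intervals of length $d_4$ starting at $d_1 + (3i-1)(d_3+1)$ and $d_1 + 3i(d_3+1)$, respectively; $R_k(A\vecz)$ lies in an interval of length $3d_5$ starting at $d_1 + (3n+k)(d_3+1)$; and $x_0^{d_1}y_0^{d_2+(3n+m+1)(d_3+1)}$ contributes the single $x_0$-degree $d_1$. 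In the natural enumeration of these intervals, consecutive base $x_0$-exponents differ by exactly $d_3+1$.

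To finish, I would use \eqref{ineq_homogeneous} to verify pairwise disjointness. The longest interval has length $\max(d_3,d_4,3d_5) = d_3$, since $d_3 \geq 2d_4$ and $d_4 \geq m(d_5+1)^2 + 1 \geq 3d_5 + 1$ (because $(d_5+1)^2 \geq 3d_5+1$ for $d_5 \geq 1$, and $d_5 \geq m \geq 1$). Hence the end of each interval is at most $(\text{base}) + d_3 < (\text{base}) + (d_3+1) = (\text{next base})$, so all intervals in the list are pairwise disjoint, and separation from the singleton $\{d_1\}$ is immediate since $d_1 < d_1 + (d_3+1)$. The claims about $Q_i$ versus $Q_j$ for $i \neq j$, and $R_k$ versus $R_l$ for $k \neq l$, then follow at once because the $x_0$-degree support of $Q_i(A\vecz)$ is the union of three intervals all indexed by the same $i$, and analogously for $R_k(A\vecz)$.

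The main subtlety is that one has no a priori control over which variables of $\vecz$ appear in the images $A(x_i)$ and $A(y_i \pm x_i)$ -- in particular, $x_0$ itself could appear -- so we cannot argue monomial-by-monomial about the structure of the powers. This is precisely why the argument relies on the crude interval bound $[0,d]$ for the $x_0$-degree of a monomial in a $d$-th power of a linear form, and it is also why the spacing $d_3+1$ of the base $x_0$-exponents, together with the hierarchy $d_3 \geq 2 d_4 \geq \Omega(d_5)$ imposed by \eqref{ineq_homogeneous}, has to be so generous.
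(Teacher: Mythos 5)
Your proposal is correct and follows essentially the same route as the paper's proof: after normalizing $A(x_0)=x_0$, $A(y_0)=y_0$ via Lemma \ref{Lemma: y0,x0 fixed homogeneous}, each summand's $x_0$-degrees lie in an interval of length at most $\max(d_3,d_4,3d_5)=d_3$ starting at $d_1+t(d_3+1)$ for a unique $t\in[0,3n+m]$, and the spacing $d_3+1$ of these bases, guaranteed by \eqref{ineq_homogeneous}, makes the intervals pairwise disjoint. The paper phrases this by writing each summand as $x_0^{d_1+t(d_3+1)}y_0^{d_2+v}h(\vecz)$ with $\deg h\le d_3$, which is the same interval bound you derive.
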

\vspace{-6.5mm}
\begin{equation*}
\therefore \ \  \cal S(f(A\vecz)) = \cal S(x_0^{d_1}y_0^{d_2 + (3n+m+1)(d_3+1)}) + \sum_{i=1}^{n} \cal S(Q_i(A\vecz)) + \sum_{k=1}^{m} \cal S(R_k(A\vecz)), \text{ by Lemma \ref{Lemma: y0 separated}.}
\end{equation*}
Lemmas \ref{Lemma: Qi(Az) analysis} and \ref{Lemma: equality condition Qi homogeneous} are modified versions of Lemmas \ref{var_spars_lem} and \ref{Lemma: equality condition Qi} respectively and have similar proofs as the original lemmas. Together, Lemmas \ref{Lemma: y0,x0 fixed homogeneous}, \ref{Lemma: Qi(Az) analysis} and \ref{Lemma: equality condition Qi homogeneous} show that $A$ is a permuted scaled version of the transform of \eqref{action homogeneous case}. Proposition \ref{Proposition: bwd direction homogeneous} then shows how to obtain a satisfying assignment from $A$ and can be proved similarly as Proposition \ref{Proposition: bwd direction}.

\begin{lemma} \label{Lemma: Qi(Az) analysis}
    For any invertible $A$ and $i \in [n]$:
        \begin{equation*} \label{var_spars_eqn_homogeneous}
            \cal S(Q_i(A\vecz)) = \cal S(Q_{i,1}(A\vecz))+\cal S(Q_{i,2}(A\vecz))+\cal S(Q_{i,3}(A\vecz))\geq d_4 + 3,
        \end{equation*}
    where $Q_i$, $Q_{i,1}$, $Q_{i,2}$ and $Q_{i,3}$ are as defined earlier. Equality holds if and only if under $A$
        \[x_i \mapsto X_i  \text{ and } y_i \mapsto Y_i + (-1)^{u_i}X_i \] 
    for some scaled variables $X_i, Y_i \in \vecz$ and $u_i\in \{0,1\}$. Further, if $\cal S(Q_i(A\vecz)) \neq d_4 + 3$, then $\cal S(Q_i(A\vecz)) \geq 2d_4 + 3$.
\end{lemma}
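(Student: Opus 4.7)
The plan is to mirror the proof template of Lemma \ref{var_spars_lem}, adapted to the homogeneous setting by using the $x_0$-degree separation results already in place. By Lemma \ref{Lemma: y0 separated}, the three summands $Q_{i,1}(A\vecz)$, $Q_{i,2}(A\vecz)$, $Q_{i,3}(A\vecz)$ are degree separated with respect to $x_0$, so the $x_0$-variant of Observation \ref{lem_deg_sep_sum} immediately yields the additivity $\cal S(Q_i(A\vecz)) = \sum_{j=1}^{3}\cal S(Q_{i,j}(A\vecz))$. It then remains to produce the lower bound $d_4+3$ and to classify the equality case.

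Set $\ell_{i,x} := A(x_i)$ and $\ell_{i,y} := A(y_i)$. By Lemma \ref{Lemma: y0,x0 fixed homogeneous}, $A$ fixes $x_0, y_0$ up to nonzero scalars, so each summand has the shape $x_0^{\alpha}y_0^{\beta}L^{e}$ with $L \in \{\ell_{i,x},\,\ell_{i,y}+\ell_{i,x},\,\ell_{i,y}-\ell_{i,x}\}$ (all nonzero, since invertibility of $A$ forbids $x_i, y_i+x_i, y_i-x_i$ from lying in $\Ker A$), and $e \in \{d_3, d_4\}$. I would decompose $L = cx_0 + c'y_0 + L^*$, where $L^*$ collects the terms of $L$ in $\vecz \setminus \{x_0, y_0\}$, expand $L^e$ multinomially, and observe that the shift $(\alpha{+}a,\beta{+}b)$ is injective in the triple $(a,b,e{-}a{-}b)$, so that
\[
\cal S(x_0^{\alpha}y_0^{\beta}L^{e}) \;=\; \cal S(L^{e}) \;=\; \binom{e+k-1}{k-1}
\]
where $k := |\var(L)|$, by Observation \ref{Lemma: linear form sparsity}. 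In particular, this sparsity is $1$ iff $L$ is a scaled single variable, $e+1$ iff $L$ has exactly two variables, and at least $\binom{e+2}{2}$ otherwise.

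The lower bound and equality analysis now reduce to a small case analysis on $k_x := |\var(\ell_{i,x})|$, $k_+ := |\var(\ell_{i,y}+\ell_{i,x})|$ and $k_- := |\var(\ell_{i,y}-\ell_{i,x})|$. Invertibility of $A$ excludes $\ell_{i,x} \in \spa\{x_0,y_0\}$, so $k_x = 1$ forces $X_i := \ell_{i,x}$ to be a scaled variable in $\vecz \setminus \{x_0, y_0\}$; it also rules out the triple $(k_x,k_+,k_-) = (1,1,1)$, since solving the three scaled-variable identities $\ell_{i,x} = cX$, $\ell_{i,y}+\ell_{i,x} = c'V_1$, $\ell_{i,y}-\ell_{i,x} = c''V_2$ simultaneously forces $\ell_{i,y}$ to be a scalar multiple of $\ell_{i,x}$, breaking invertibility on $\{x_i,y_i\}$. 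Hence the value $d_4+3$ can only be realized when $(k_x, k_+, k_-) \in \{(1,1,2), (1,2,1)\}$, and a direct computation in these cases pins down $\ell_{i,y} = Y_i + (-1)^{u_i}X_i$ for some scaled variable $Y_i$ and a unique $u_i \in \{0,1\}$, recovering the equality characterization.

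The main obstacle is the gap: verifying that every non-equality configuration in fact overshoots $d_4+3$ by at least $d_4$. I expect two cases to cover this. First, if $k_x \ge 2$, then $\cal S(Q_{i,1}(A\vecz)) \ge d_3+1 \ge 2d_4+1$ using the constraint $d_3 \ge 2d_4$, so the total is at least $2d_4+3$. Second, if $k_x = 1$ but both $k_+, k_- \ge 2$, then the two $(d_4+1)$ contributions from $Q_{i,2}(A\vecz)$ and $Q_{i,3}(A\vecz)$ give total at least $1+(d_4+1)+(d_4+1) = 2d_4+3$. These are the only non-equality scenarios, because whenever $k_x = 1$ and one of $k_+, k_-$ equals $1$, the other is automatically at most $2$ (since $\ell_{i,y}$ is then a linear combination of at most two variables), and the impossible case $(1,1,1)$ aside, the only remaining triples are the equality configurations themselves. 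Combining the two cases yields $\cal S(Q_i(A\vecz)) \ne d_4+3 \Rightarrow \cal S(Q_i(A\vecz)) \ge 2d_4+3$, completing the lemma.
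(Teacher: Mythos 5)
Your proof is correct and follows essentially the same route as the paper: additivity of the three sparsities via the $x_0$-degree separation supplied by Lemmas \ref{Lemma: y0,x0 fixed homogeneous} and \ref{Lemma: y0 separated}, followed by a case analysis on the number of variables in $A(x_i)$, $A(y_i+x_i)$ and $A(y_i-x_i)$, with the constraint $d_3 \geq 2d_4$ absorbing the case $|\var(A(x_i))| \geq 2$ --- exactly the homogeneous analogue of the paper's proof of Lemma \ref{var_spars_lem}. Your additional details (the explicit binomial sparsity counts after stripping off the $x_0^{\alpha}y_0^{\beta}$ monomial factor, and the explicit exclusion of the all-single-variable configuration via invertibility) are simply fleshed-out versions of steps the paper treats implicitly.
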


\begin{lemma} \label{Lemma: equality condition Qi homogeneous}
    Under the given $A$, $\cal S(Q_i(A\vecz)) = d_4 + 3$ for all $i \in [n]$.
\end{lemma}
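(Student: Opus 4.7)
The plan is to mirror the proof of Lemma~\ref{Lemma: equality condition Qi} from the non-homogeneous setting, now leveraging the homogeneous lemmas just established. By Lemma~\ref{Lemma: Qi(Az) analysis}, every $\cal{S}(Q_i(A\vecz))$ is at least $d_4 + 3$, and crucially, whenever equality fails the sparsity jumps all the way to $2d_4 + 3$. I will argue by contradiction: suppose some index $i_0 \in [n]$ has $\cal{S}(Q_{i_0}(A\vecz)) > d_4 + 3$, so that $\cal{S}(Q_{i_0}(A\vecz)) \geq 2d_4 + 3$.

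By Lemma~\ref{Lemma: y0 separated}, the summands of $f(A\vecz)$ appearing in the decomposition~\eqref{Homogeneous poly} are pairwise degree separated with respect to $x_0$, so by Observation~\ref{lem_deg_sep_sum} the sparsity splits as
\[
\cal{S}(f(A\vecz)) = 1 + \sum_{i=1}^{n} \cal{S}(Q_i(A\vecz)) + \sum_{k=1}^{m} \cal{S}(R_k(A\vecz)).
\]
Since each $R_k(A\vecz)$ is a nonzero polynomial, I only need the trivial lower bound $\cal{S}(R_k(A\vecz)) \geq 1$. Combining the presumed jump at $i_0$ with the generic bounds at the remaining indices yields
\[
\cal{S}(f(A\vecz)) \geq 1 + (n-1)(d_4 + 3) + (2d_4 + 3) + m = 1 + n(d_4 + 3) + d_4 + m.
\]

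Finally, I would feed in the condition $d_4 \geq m(d_5 + 1)^2 + 1$ from~\eqref{ineq_homogeneous}, which gives $d_4 + m > m(d_5+1)^2$ and therefore
\[
\cal{S}(f(A\vecz)) > 1 + n(d_4+3) + m(d_5+1)^2 = s,
\]
contradicting the hypothesis $\cal{S}(f(A\vecz)) \leq s$. No genuine obstacle arises in the argument: the only delicate point is checking that the trivial bound $\cal{S}(R_k(A\vecz)) \geq 1$ already suffices, and this is precisely why the degree parameter $d_4$ was chosen in~\eqref{ineq_homogeneous} to dominate the total clause-sparsity budget $m(d_5+1)^2$, so that a single non-equality case among the $Q_i$'s already overshoots $s$ by itself.
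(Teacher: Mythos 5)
Your proof is correct and takes essentially the same route as the paper's: the paper proves this lemma just like Lemma \ref{Lemma: equality condition Qi}, namely assume $\cal S(Q_{i_0}(A\vecz)) \geq 2d_4+3$ for some $i_0$, split $\cal S(f(A\vecz))$ using the degree separation with respect to $x_0$ from Lemma \ref{Lemma: y0 separated}, and use $d_4 \geq m(d_5+1)^2+1$ to push the total above $s$. The only cosmetic difference is that you retain the clause terms as an explicit $+m$ via $\cal S(R_k(A\vecz)) \geq 1$, whereas the paper simply drops them to get a strict inequality.
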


\begin{proposition} \label{Proposition: bwd direction homogeneous}
   With $A$ as described in \eqref{action homogeneous case}, $\vecu = (u_1, \dots , u_n)$ is a satisfying assignment for $\psi$.
\end{proposition}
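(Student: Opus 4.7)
The plan is to mirror the argument of Proposition \ref{Proposition: bwd direction}, with the $x_0$-degree separation of Lemma \ref{Lemma: y0 separated} playing the role that ordinary total-degree separation (Observation \ref{Obs: deg separated polys}) played in the non-homogeneous case. Suppose for contradiction that $\vecu$ fails to satisfy $\psi$; then some clause $k \in [m]$ is falsified, i.e., $u_j = a_{k,j}$ for every $j \in C_k$. Under the transform \eqref{action homogeneous case}, $y_j$ is sent to $y_j + (-1)^{u_j}x_j$, so each factor of $R_k$ becomes
\[
\bigl(y_j + ((-1)^{a_{k,j}} + (-1)^{u_j})x_j\bigr)^{d_5} = (y_j \pm 2x_j)^{d_5},
\]
which is a genuine bivariate linear form (here we use $\mathrm{char}(\F) \neq 2$; the finite-characteristic case follows the same pattern as in Section \ref{section-extension-finite}).

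Next I would apply Observation \ref{lem_binom} to each of the three factors. Because the three pairs $\{x_j,y_j\}_{j\in C_k}$ use disjoint variables, the three powered linear forms are variable-disjoint, so their product has sparsity equal to the product of their individual sparsities, giving $\cal S(R_k(A\vecz)) = (d_5+1)^3$. Combining this with Lemma \ref{Lemma: y0 separated} (to sum sparsities across the summands of $f(A\vecz)$) and Lemma \ref{Lemma: equality condition Qi homogeneous} (which pins down $\cal S(Q_i(A\vecz)) = d_4 + 3$ for every $i \in [n]$), I get
\[
\cal S(f(A\vecz)) \;\geq\; 1 + n(d_4+3) + (d_5+1)^3.
\]
Invoking the condition $d_5 \geq m$ from \eqref{ineq_homogeneous}, $(d_5+1)^3 \geq (m+1)(d_5+1)^2 > m(d_5+1)^2$, whence $\cal S(f(A\vecz)) > 1 + n(d_4+3) + m(d_5+1)^2 = s$, contradicting the hypothesis $\cal S(f(A\vecz)) \leq s$.

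I do not expect any serious obstacles: the parameter inequalities of \eqref{ineq_homogeneous} were designed precisely so this final counting inequality is strict, and the only substantive replacement from the non-homogeneous proof is to use $x_0$-degree separation (Lemma \ref{Lemma: y0 separated}) in place of total-degree separation. The one step that warrants care is verifying that the coefficient $\pm 2$ of $x_j$ in each factor of $R_k(A\vecz)$ is nonzero, which is the only place where the characteristic assumption enters; this is handled uniformly by deferring the $\mathrm{char}(\F) = 2$ case to the dedicated finite-characteristic section.
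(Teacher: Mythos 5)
Your proposal is correct and follows essentially the same route as the paper, which proves this proposition "similarly as Proposition \ref{Proposition: bwd direction}": assume a falsified clause $k$, observe each factor of $R_k(A\vecz)$ becomes $(y_j \pm 2x_j)^{d_5}$, use variable-disjointness and Observation \ref{lem_binom} to get $\cal S(R_k(A\vecz)) = (d_5+1)^3 \geq (m+1)(d_5+1)^2$, and combine with Lemma \ref{Lemma: y0 separated} and the established sparsity of the $Q_i(A\vecz)$'s plus $d_5 \geq m$ to exceed $s$, a contradiction. Your handling of the characteristic (deferring $\mathrm{char}(\F)=2$ to the separate construction) also matches the paper's organization.
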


\begin{example} \label{Remark: homogeneous reduction feature}
    \item In the definition of $f$ in Section \ref{subsubsection: sparse f construction} and this section, an extra summand is present besides $Q_i$'s and $R_k$'s. We can drop the summand by suitably modifying $f$, the current parameters and arguments to make the reduction work. In particular, for an $f$ divisible by a suitable power of $x_0$ (and $y_0$ for the homogeneous case), Lemmas \ref{Lemma: fix x0} and \ref{Lemma: y0 separated} can be proved using Claim \ref{lem_div} (over characteristic $0$ fields) or by an argument as in Section \ref{proof-x0,y0-fixed-homogeneous-finitechar} (over finite characteristic fields). We preserve the extra summand here for two reasons: One, it leads to a simpler argument and better bounds on the $d_i$'s for the non-homogeneous case over finite characteristic fields. Two, it proves useful in showing the reduction when also considering translations (see Appendix \ref{Section: Affine ETsparse}). 

    \item A simpler construction of $f$ for the homogeneous case is possible with four degree parameters $d_1$, $d_2$, $d_3$ and $d_4$ under the constraints of \eqref{ineq}. In this construction, the extra summand, $Q_i$'s and $R_k$'s are defined very similarly as in Section \ref{subsubsection: sparse f construction}, with each polynomial multiplied by an appropriate power of $y_0$ such that $f$ is homogeneous and is divisible by $x_0^{d_1}$ and $y_0^{d_1}$. The arguments presented in this section go through with some changes for the simpler construction. The reason we present the current construction is to have a single construction with which the reduction goes through for finite characteristic fields (as shown in Section \ref{section-homogeneous-finite}) and characteristic $0$ fields.
    
    \item \label{hd3 desired feature} A feature of our reduction is that we can easily alter the output polynomial to $w^Df(\vecz)$, where $w \notin \vecz$. This can be achieved by multiplying the output polynomial $f$ of the current reduction by $w^D$, where $D$ is greater than the sparsity parameter $s$ in the reduction. If a proof of $\NP$-hardness of $\hdthree$-MCSP has this feature, then it would imply NP-hardness of $\Sigma\Pi\Sigma$-MCSP (via a homogenization trick). 
\end{example}
\subsection{Extension to finite characteristic fields} \label{section-extension-finite}
In this section, we will show how the construction of Sections \ref{subsubsection: sparse f construction} and \ref{section-ETSparse-homogenous}, with some changes for appropriate cases, can be used to show the $\NP$-hardness of $\et$ over finite characteristic fields for the non-homogeneous case (in the following section) and the homogeneous case (in Section \ref{section-homogeneous-finite}), respectively. 

\subsubsection{The non-homogeneous case} \label{section-inhomogeneous-finite}
We first show how the construction of Section \ref{subsubsection: sparse f construction} also proves the reduction over fields $\F$ where the characteristic is greater than $2$ and then give a modified construction to prove the reduction over characteristic $2$ fields. Note that the degrees are chosen in Section \ref{section-inhomogeneous-finiteparam} to satisfy \eqref{ineq} in any finite characteristic field.

So, let the characteristic be $p$, where $p > 2$. In this case, the polynomial $f$ and the parameter $s$ of Section \ref{subsubsection: sparse f construction} remain the same with the $d_i$'s chosen as specified in Section \ref{section-inhomogeneous-finiteparam}. The overall argument in both directions of the reduction is highly similar to the characteristic $0$ case, with the main differences being the choice of the $d_i$'s and that in the proofs of the observations, lemmas and propositions in Sections \ref{subsubsection: sparse f construction}, \ref{subsubsection: sparse fwd direction} and \ref{subsubsection: sparse bwd direction} wherever Observation \ref{lem_binom} is used, then it is invoked for the finite characteristic case. Thus, Observations \ref{Obs: deg separated polys} and \ref{Obs: ETSparse poly degree} hold without any change while Observation \ref{Obs: ETsparse poly sparsity } holds by using Observation \ref{lem_binom} for the finite characteristic case. 

\paragraph{The forward direction.} If $\vecu \in \{0,1\}^n$ is such that $\psi(\vecu) = 1$ and $f$, as described in \eqref{Definition: 3SAT poly}, is the polynomial corresponding to $\psi$, then Proposition \ref{Proposition: fwd direction} shows that for the transform $A$ of \eqref{action}, defined using $\vecu$, $\cal S(f(A\vecz)) \leq s$ holds. 

\paragraph{The reverse direction.} If $A \in \GL(|\vecz|, \F)$ is such that $\cal S(f(A\vecz)) \leq s$, then the analysis of Section \ref{subsubsection: sparse bwd direction} continues to hold in this case with little changes. Thus, Lemma \ref{Lemma: fix x0} shows that $A(x_0) = x_0$ without loss of generality. Then, Lemmas \ref{var_spars_lem} and \ref{Lemma: equality condition Qi} analyse the sparsity of $Q_i(A\vecz)$, where $i \in [n]$ and $Q_i$ is as defined in Section \ref{subsubsection: sparse f construction}. Together, Lemmas \ref{Lemma: fix x0}, \ref{var_spars_lem} and \ref{Lemma: equality condition Qi} show that $A$ is a permuted scaled version of the transform of \eqref{action}. Finally, Proposition \ref{Proposition: bwd direction} shows that a satisfying assignment for $\psi$ can be extracted from $A$.

\subsubsection*{\underline{Construction for characteristic $2$ fields}} \label{section-inhomogeneous-char2}
Over characteristic $2$ fields, the polynomial $y_i + x_i$ is the same as $y_i - x_i$. Due to this, the definition of $Q_i$ and that of $R_k$ in Section \ref{subsubsection: sparse f construction} need to be changed. Moreover, the sparsifying transform will also be slightly different. Formally, let $\psi$, $\vecx, x_0, \vecy$ and $\vecz$ be as denoted in Section \ref{subsubsection: sparse f construction}.  Let $d_1,d_2,d_3,d_4 \in \N$. The construction of Section \ref{subsubsection: sparse f construction} is modified as follows:
\begin{itemize}
    \item For all $i \in [n]$, define $Q_i(\vecz)$ as:
\begin{equation*} 
\begin{split}
    Q_i(\vecz) &:= Q_{i,1}(\vecz) + Q_{i,2}(\vecz) + Q_{i,3}(\vecz), \text{ where}\\
    Q_{i,1}(\vecz) &:= x_0^{(3i-2)d_1}x_i^{d_2}\,,\, Q_{i,2}(\vecz) := x_0^{(3i-1)d_1}(y_i+x_i)^{d_3} \text{ and } Q_{i,3}(\vecz) := x_0^{3id_1}y_i^{d_3}.
\end{split}
\end{equation*}
    \item For the $k\ith$ clause, $k \in [m]$, define $R_{k} := x_0^{(3n+k)d_1}\prod_{j \in C_k}(y_j+ a_{k,j}x_j)^{d_4}$.
\end{itemize}
Define $s:= 1 + n(d_3+3) + m(d_4+1)^2$ as before. Set the $d_i$'s as specified in Section \ref{section-inhomogeneous-finiteparam} with the conditions of \eqref{ineq} imposed. Define $f$ as:
\begin{equation} \label{char 2 poly}
f(\vecz) := x_0^{d_1} + \sum_{i=1}^{n}Q_i(\vecz) + \sum_{k=1}^{m}R_k(\vecz).
\end{equation}
Observations \ref{Obs: deg separated polys} and \ref{Obs: ETSparse poly degree} hold with little change. Observation \ref{Obs: poly sparsity char 2 case} analyses the sparsity and support of $f$.

\begin{observation} \label{Obs: poly sparsity char 2 case}
    $\cal S(f) \leq 1 + n(d_3 + 3) + m(d_4+1)^3$ and $4 \leq \supp(f) \leq 7$. 
\end{observation}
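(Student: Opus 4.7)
The plan is to decompose the sparsity of $f$ using the degree-separation structure inherited from \eqref{Definition: 3SAT poly}, bound each piece using Observation \ref{Lemma: linear form sparsity} specialised to characteristic $2$, and then inspect monomial shapes to pin down the support range.

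First I would verify that Observation \ref{Obs: deg separated polys} applies verbatim to the modified summands in \eqref{char 2 poly}. The $x_0$-exponents and total degrees of $x_0^{d_1}$, $Q_{i,1}, Q_{i,2}, Q_{i,3}$, and $R_k$ are unchanged from Section \ref{subsubsection: sparse f construction}; only signs and coefficients of the inner linear forms were altered. So the same degree-separation argument goes through, and combined with Observation \ref{Lemma: equiv deg sep} it gives $\cal S(f) = 1 + \sum_i \cal S(Q_i) + \sum_k \cal S(R_k)$, with each $\cal S(Q_i)$ further splitting as $\cal S(Q_{i,1}) + \cal S(Q_{i,2}) + \cal S(Q_{i,3})$.

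Next I would bound each summand. The only non-trivial contribution inside $Q_i$ is $\cal S((y_i+x_i)^{d_3})$; by Observation \ref{Lemma: linear form sparsity} in characteristic $2$ with two variables this equals $\prod_j (e_j+1)$, where $d_3 = \sum_j e_j\, 2^j$ with $e_j \in \{0,1\}$, which is at most $d_3+1$. Hence $\cal S(Q_i) \leq d_3 + 3$. For $R_k$, the three factors $(y_j + a_{k,j}x_j)^{d_4}$ with $j \in C_k$ have pairwise disjoint variable sets and do not involve $x_0$, so $\cal S(R_k)$ equals the product of the three factor sparsities, each of which is either $1$ (when $a_{k,j}=0$) or at most $d_4+1$ (when $a_{k,j}=1$, again by Observation \ref{Lemma: linear form sparsity}). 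This yields $\cal S(R_k) \leq (d_4+1)^3$, and summing gives the claimed bound on $\cal S(f)$.

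For the support, I would simply inspect monomial shapes. The monomials of $x_0^{d_1}, Q_{i,1}, Q_{i,3}$ have support at most $2$, and those of $Q_{i,2}$ support at most $3$. A monomial of $R_k$ consists of $x_0$ together with one contribution from each of the three factors: a factor with $a_{k,j}=0$ contributes one variable ($y_j$), while a factor with $a_{k,j}=1$ contributes $1$ or $2$ variables from $\{x_j, y_j\}$. Hence every monomial of $R_k$ has support between $4$ and $7$, and since $\psi$ has at least one clause there is at least one $R_k$ present in $f$. Combining the two inequalities gives $4 \leq \supp(f) \leq 7$. There is no real obstacle; the only subtle point is to remember to use Observation \ref{Lemma: linear form sparsity} in its characteristic-$2$ form, which yields an inequality rather than the tight bound $d+1$ available in the characteristic-zero case.
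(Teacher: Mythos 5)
Your proposal is correct and follows essentially the same route as the paper's proof: decompose $\cal S(f)$ via the degree separation of Observation \ref{Obs: deg separated polys} together with Observation \ref{lem_deg_sep_sum}, bound $\cal S(Q_i)$ by $d_3+3$ and $\cal S(R_k)$ by $(d_4+1)^3$ using sparsity of two-variable linear-form powers, and read off $4 \leq \supp(f) \leq 7$ from the monomial shapes of the $R_k$'s. The only (immaterial) difference is that you invoke the general characteristic-$p$ formula of Observation \ref{Lemma: linear form sparsity} and bound $\prod_j(e_j+1) \leq d_3+1$, whereas the paper appeals to Observation \ref{lem_binom}, which gives equality because the $d_i$'s are chosen of the form $2^k-1$; since the statement is only an upper bound, your weaker estimate suffices.
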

\begin{example} \label{Remark: char two not natural}
 Over characteristic $2$ fields, the sparsity of the polynomial output by the reduction depends on the number of variables which are complemented within a clause. Hence, for the same number of variables $n$ and the same number of clauses $m$, the output polynomial corresponding to two different $\psi$'s may have different sparsity. Thus, the reduction is not natural over characteristic $2$ fields.    
\end{example}

\paragraph{The forward direction.} Let $\vecu \in \{0,1\}^n$ be such that $\psi(\vecu) = 1$ and $f$, as described in \eqref{char 2 poly}, be the polynomial corresponding to $\psi$. Proposition \ref{Proposition: fwd direction char 2} shows how $\vecu$ can be used to construct a sparsifying transform. The proof of Proposition \ref{Proposition: fwd direction char 2} is very similar to that of Proposition \ref{Proposition: fwd direction}.

\begin{proposition} \label{Proposition: fwd direction char 2}
$\cal S(f(A\vecz)) \leq s$ where $A \in \GL(|\vecz|,\F)$ is as follows:
    \begin{equation} \label{action char 2} 
            A: x_0 \mapsto x_0,~ x_i \mapsto x_i,~ y_i \mapsto y_i + (1-u_i)x_i \ \ \forall i \in [n].
    \end{equation}
\end{proposition}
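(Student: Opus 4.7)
The plan is to mirror the proof of Proposition~\ref{Proposition: fwd direction}, adapting the case analysis to the characteristic $2$ arithmetic. First, I would invoke Observation~\ref{Obs: deg separated polys} (which carries over unchanged to the characteristic $2$ construction) together with Observation~\ref{Lemma: equiv deg sep} to write
\[
\cal S(f(A\vecz)) \;=\; \cal S(A(x_0^{d_1})) \;+\; \sum_{i=1}^{n}\cal S(Q_i(A\vecz)) \;+\; \sum_{k=1}^{m}\cal S(R_k(A\vecz)),
\]
and then analyze each of the three types of summands separately. Since $A$ fixes $x_0$, the first term contributes~$1$.

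Next I would compute $\cal S(Q_i(A\vecz))$ by cases on $u_i$. Under $A$ the summand $Q_{i,1}$ is unaltered and contributes~$1$. For $Q_{i,2}$ and $Q_{i,3}$, the key observation is that in characteristic $2$ the coefficient $1+(1-u_i)$ equals $0$ when $u_i=0$ and equals $1$ when $u_i=1$. Thus exactly one of $Q_{i,2}(A\vecz), Q_{i,3}(A\vecz)$ becomes a pure power of $y_i$ (sparsity $1$), while the other becomes $x_0^{\ast}(y_i+x_i)^{d_3}$ (sparsity $d_3+1$ by Observation~\ref{lem_binom}, using the degree separation from the monomial of $Q_{i,1}$ via Observations~\ref{Obs: deg separated polys} and~\ref{Lemma: equiv deg sep}). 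In either case $\cal S(Q_i(A\vecz)) = 1+1+(d_3+1) = d_3+3$.

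Then I would bound $\cal S(R_k(A\vecz))$. Under $A$ one has
\[
R_k(A\vecz) \;=\; x_0^{(3n+k)d_1}\prod_{j\in C_k}\bigl(y_j + (1-u_j+a_{k,j})x_j\bigr)^{d_4}.
\]
The crucial point, again specific to characteristic $2$, is that $1-u_j+a_{k,j}\equiv 0\pmod 2$ precisely when $u_j\neq a_{k,j}$, which is exactly the condition that the literal $x_j\oplus a_{k,j}$ evaluates to $1$ at $\vecu$. Since $\psi(\vecu)=1$, every clause has at least one such $j\in C_k$, and for this $j$ the corresponding factor collapses to $y_j^{d_4}$ (sparsity $1$). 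Each of the remaining at most two factors is a power of a linear form in two variables, contributing sparsity at most $d_4+1$ by Observation~\ref{lem_binom} (applied in the finite characteristic case, using the constraints on $d_4$ from Section~\ref{section-inhomogeneous-finiteparam} that make the observation applicable). As the multiplicands share no variables, $\cal S(R_k(A\vecz)) \leq (d_4+1)^2$.

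Combining the three contributions gives
\[
\cal S(f(A\vecz)) \;\leq\; 1 + n(d_3+3) + m(d_4+1)^2 \;=\; s,
\]
establishing the proposition. The only step requiring any genuine care is the characteristic $2$ cancellation analysis in the $R_k$ computation, and in particular confirming that the cancellation pattern matches the clause-satisfaction condition $u_j\neq a_{k,j}$; everything else is a direct adaptation of the argument for Proposition~\ref{Proposition: fwd direction}.
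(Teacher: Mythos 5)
Your proposal is correct and follows essentially the same route as the paper, which states that the proof of Proposition~\ref{Proposition: fwd direction char 2} is the characteristic-$2$ analogue of Proposition~\ref{Proposition: fwd direction}: decompose $\cal S(f(A\vecz))$ via Observations~\ref{Obs: deg separated polys} and~\ref{Lemma: equiv deg sep}, check that exactly one of $Q_{i,2}(A\vecz),Q_{i,3}(A\vecz)$ collapses to a power of $y_i$, and use the mod-$2$ cancellation $1-u_j+a_{k,j}\equiv 0$ iff $u_j\neq a_{k,j}$ together with Observation~\ref{lem_binom} (with $d_3,d_4$ as set in Section~\ref{section-inhomogeneous-finiteparam}) to bound $\cal S(R_k(A\vecz))\leq (d_4+1)^2$, giving $\cal S(f(A\vecz))\leq s$.
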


\paragraph{The reverse direction.} Let $A \in \GL(|\vecz|,\F)$ be such that $\cal S(f(A\vecz)) \leq s$. The analysis of Section \ref{subsubsection: sparse bwd direction} holds with some changes. Formally, Lemma \ref{Lemma: fix x0} holds without any change in its proof. Thus, $A(x_0) = x_0$ without loss of generality. Lemma \ref{Qi sparsity analysis} analyses $\cal S(Q_i(A\vecz))$, $i \in [n]$, and its proof is similar to that of Lemma \ref{var_spars_lem}. 
\begin{lemma} \label{Qi sparsity analysis}
    For any invertible $A$ and $i \in [n]$:
        \begin{equation*}
            \cal S(Q_i(A\vecz)) = \cal S(Q_{i,1}(A\vecz))+\cal S(Q_{i,2}(A\vecz))+\cal S(Q_{i,3}(A\vecz))\geq d_3 + 3,
        \end{equation*}
    where $Q_i$, $Q_{i,1}$, $Q_{i,2}$ and $Q_{i,3}$ are as defined in this subsection. Equality holds if and only if under $A$
        \[x_i \mapsto X_i  \text{ and } y_i \mapsto Y_i +(1 - u_i)X_i \] 
    for some scaled $X_i, Y_i \in \vecz$ and $u_i\in \{0,1\}$. Further, if $\cal S(Q_i(A\vecz)) \neq d_3 + 3$, then $\cal S(Q_i(A\vecz)) \geq 2d_3 + 3$.
\end{lemma}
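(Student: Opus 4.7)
The plan is to adapt the argument of Lemma \ref{var_spars_lem} to the modified definition of $Q_{i,3}$ and to characteristic-$2$ arithmetic. First, by Lemma \ref{Lemma: fix x0} one may assume $A(x_0)=x_0$, and set $\ell := A(x_i)$ and $m := A(y_i)$; these are linearly independent linear forms in $\vecz$ by invertibility of $A$. Since multiplication by a nonzero monomial preserves sparsity,
\[
\cal S(Q_{i,1}(A\vecz)) = \cal S(\ell^{d_2}), \quad \cal S(Q_{i,2}(A\vecz)) = \cal S((\ell+m)^{d_3}), \quad \cal S(Q_{i,3}(A\vecz)) = \cal S(m^{d_3}).
\]
The three summands of $Q_i(A\vecz)$ are degree separated with respect to $x_0$, because the largest $x_0$-degree in the first summand is at most $(3i-2)d_1 + d_2 < (3i-1)d_1$ and in the second is at most $(3i-1)d_1 + d_3 < 3id_1$; both strict inequalities follow from $d_1 \geq d_2+1 \geq 2d_3+1$ in \eqref{ineq}. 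Hence by Observation \ref{lem_deg_sep_sum} the three sparsities add.

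Next, a short case analysis lower-bounds the sum, using Observation \ref{lem_binom} (which applies because the $d_i$'s chosen in Section \ref{section-inhomogeneous-finiteparam} satisfy $p > d_i$ or $d_i = p^k-1$; in the $p=2$ case this forces $d_2$ and $d_3$ to be of the form $2^k-1$). Note that invertibility of $A$ combined with $A(x_0)=x_0$ forbids either $\ell$ or $m$ from being a scalar multiple of $x_0$. If $\ell$ involves $\geq 2$ variables, then $\cal S(\ell^{d_2}) \geq d_2+1 \geq 2d_3+1$, already giving a total $\geq 2d_3+3$. Otherwise $\ell = cX$ for some variable $X \in \vecx \sqcup \vecy$. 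If $m = c'Y$ is also a scaled variable, invertibility forces $Y \ne X$, so $\ell + m$ has two variables and the total is $1 + (d_3+1) + 1 = d_3+3$, matching the equality case with $u_i = 1$, $X_i = cX$, $Y_i = c'Y$. If instead $m$ has $\geq 2$ variables and $\ell+m$ still has $\geq 2$ variables, both $\cal S(m^{d_3})$ and $\cal S((\ell+m)^{d_3})$ are at least $d_3+1$, so the total is again $\geq 2d_3+3$.

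The last and most subtle subcase is $\ell = cX$ together with $m$ having $\geq 2$ variables but $\ell + m$ collapsing to a single variable. The collapse forces the $X$-coefficient of $m$ to be $c$ (in characteristic $2$) and all other coefficients of $m$ except one to vanish, so $m = cX + bY$ for some $Y \ne X$ and $b \ne 0$. Then $\cal S(m^{d_3}) = d_3+1$ and $\cal S((\ell+m)^{d_3}) = \cal S((bY)^{d_3}) = 1$, yielding a total of $d_3+3$; this is the equality case with $u_i = 0$, $X_i = cX$, $Y_i = bY$. In every remaining scenario the total is $\geq 2d_3+3$. The main obstacle is exactly this characteristic-$2$ collapse: it is why $Q_{i,3}$ was redefined as $x_0^{3id_1}y_i^{d_3}$ rather than $x_0^{3id_1}(y_i-x_i)^{d_3}$, since the latter coincides with $(y_i+x_i)^{d_3}$ in characteristic $2$, and without the redefinition the two assignments $u_i \in \{0,1\}$ could not be distinguished from the sparsity pattern alone.
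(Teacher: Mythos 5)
Your proof is correct and is essentially the paper's own argument: the paper proves this lemma by adapting the case analysis of Lemma \ref{var_spars_lem}, which is exactly what you do (additivity of the three sparsities, then a case split on whether $A(x_i)$, $A(y_i)$ and $A(y_i+x_i)$ are single scaled variables, with the characteristic-$2$ collapse $A(y_i)=cX+bY$, $A(y_i+x_i)=bY$ giving the $u_i=0$ equality case). One small remark: invoking Lemma \ref{Lemma: fix x0} and $x_0$-degree separation is not needed for the additivity step — the paper obtains $\cal S(Q_i(A\vecz))=\cal S(Q_{i,1}(A\vecz))+\cal S(Q_{i,2}(A\vecz))+\cal S(Q_{i,3}(A\vecz))$ for an arbitrary invertible $A$ from the total-degree separation of $Q_{i,1},Q_{i,2},Q_{i,3}$ via Observations \ref{Obs: deg separated polys} and \ref{Lemma: equiv deg sep} — though assuming $A(x_0)=x_0$ is harmless here since that is the only setting in which the lemma is applied.
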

Lemma \ref{Lemma: equality condition Qi} also holds with the same proof as before. Lemmas \ref{Lemma: fix x0}, \ref{Qi sparsity analysis} and \ref{Lemma: equality condition Qi} together show that $A$ is a permuted scaled version of the transform described in \eqref{action char 2}. Proposition \ref{Proposition: bwd direction char 2} then holds and can be proven similarly to Proposition \ref{Proposition: bwd direction}. 

\begin{proposition} \label{Proposition: bwd direction char 2}
    With $A$ as described in \eqref{action char 2}, $\vecu = (u_1, \dots ,u_n)$ is a satisfying assignment for $\psi$. 
\end{proposition}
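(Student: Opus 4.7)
The plan is to argue by contrapositive exactly as in Proposition \ref{Proposition: bwd direction}, with only small modifications for characteristic $2$ arithmetic and the slightly different action \eqref{action char 2}. Assume $\vecu = (u_1,\ldots,u_n)$ is \emph{not} a satisfying assignment for $\psi$; then some clause $C_k$ is unsatisfied, so $u_j = a_{k,j}$ for every $j \in C_k$. From this I will derive $\cal S(f(A\vecz)) > s$, contradicting the standing hypothesis on $A$.

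First I compute the action of $A$ on each multiplicand of $R_k$. For $j \in C_k$, we have $y_j + a_{k,j}x_j \mapsto y_j + (1 - u_j + a_{k,j})x_j$; since $u_j = a_{k,j}$ and $\textnormal{char}(\F) = 2$, the coefficient $1 - u_j + a_{k,j} \equiv 1 \pmod{2}$, so the image is simply $y_j + x_j$. Hence
\begin{equation*}
R_k(A\vecz) \;=\; x_0^{(3n+k)d_1}\prod_{j\in C_k}(y_j + x_j)^{d_4}.
\end{equation*}
The three linear forms $y_j + x_j$ for $j \in C_k$ live in pairwise disjoint variable sets, so the sparsity of $R_k(A\vecz)$ is the product of the sparsities of the factors. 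Provided $d_4$ is chosen in the form required by Section \ref{section-inhomogeneous-finiteparam} (namely $d_4 = 2^k - 1$ for a suitable $k$, or any $d_4$ satisfying the hypothesis of Observation \ref{lem_binom} in characteristic $2$), each $\cal S((y_j+x_j)^{d_4}) = d_4+1$. Therefore $\cal S(R_k(A\vecz)) = (d_4+1)^3 \geq (m+1)(d_4+1)^2$, using $d_4 \geq m$ from \eqref{ineq}.

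Finally, I combine this lower bound with the degree-separation structure of $f$. Observation \ref{Obs: deg separated polys} remains valid for the polynomial \eqref{char 2 poly} without change, and together with Observation \ref{Lemma: equiv deg sep} and Lemma \ref{Lemma: equality condition Qi} (which fixes $\cal S(Q_i(A\vecz)) = d_3 + 3$), it yields
\begin{equation*}
\cal S(f(A\vecz)) \;\geq\; 1 + n(d_3+3) + \cal S(R_k(A\vecz)) \;\geq\; 1 + n(d_3+3) + (m+1)(d_4+1)^2 \;>\; s,
\end{equation*}
contradicting $\cal S(f(A\vecz)) \leq s$. The only real subtlety is ensuring that Observation \ref{lem_binom} gives $\cal S((y_j+x_j)^{d_4}) = d_4+1$ over characteristic $2$, which is exactly why the $d_i$'s in Section \ref{section-inhomogeneous-finiteparam} are chosen of the form $p^k - 1$; apart from that calibration of parameters, the argument mirrors the characteristic $0$ case verbatim.
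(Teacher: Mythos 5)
Your proof is correct and follows essentially the same route as the paper, which proves this proposition by the same contrapositive argument as Proposition \ref{Proposition: bwd direction}: an unsatisfied clause forces each factor of $R_k(A\vecz)$ to be $y_j+x_j$, giving $\cal S(R_k(A\vecz)) = (d_4+1)^3 \geq (m+1)(d_4+1)^2$ via Observation \ref{lem_binom} (valid since $d_4$ is of the form $2^k-1$), and degree separation then pushes $\cal S(f(A\vecz))$ above $s$. Your calibration remarks about the characteristic-$2$ parameter choices match Section \ref{section-inhomogeneous-finiteparam} exactly.
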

\subsubsection{Setting of parameters in the non-homogeneous case} \label{section-inhomogeneous-finiteparam}
Let the characteristic be $p > 0$. If $p > d_1$, where the value of $d_1$ is as set in \eqref{di-inhomogeneous-char0} for characteristic $0$ fields, then the $d_i$'s are chosen to be the same as in \eqref{di-inhomogeneous-char0}. Otherwise, $p$ must be $O(nm^3)$. When $p = O(nm^3)$, we choose $d_1, d_2, d_3$ and $d_4$, to be of form $p^j-1$, $j \in \N$, while satisfying the inequalities of \eqref{ineq} along with $d_1 > d_2 > d_3 > d_4$. This is done so that Observation \ref{lem_binom} can be used for characteristic $p$ fields with $p = O(nm^3)$. It is possible to choose $d_i$'s in this way because, for any $k \in \N$, there is exactly one number of form $p^j-1$, $j \in \N$, in $[k,pk]$. The bounds on $d_1,d_2,d_3$ and $d_4$ are as follows:
\begin{equation*}
    \begin{split}
        d_4 &\leq pm, \ d_3 \leq pm(d_4+1)^2+p=O(p^3m^3) \implies s = O(nm^3p^3),\\    
        d_2 &= pd_3 + (p-1)= O(p^4m^3), \ d_1 = \max(r,pd_2 + p-1)\\  
    \end{split}
\end{equation*}
where $r \in [s,ps]$ is of form $p^j-1$, $j \in \N$. Thus, $r \leq ps = O(nm^3p^4)$, while $pd_2 + p-1 = O(p^5m^3)$. As $p=O(nm^3)$, 
        \begin{align*}
            d_1 = O(n^5m^{18}), \ d_2=O(n^4m^{15}), \ d_3 = O(n^3m^{12}), \ d_4=O(nm^4) \text{ and } s= O(n^4m^{12}).
        \end{align*}
\subsubsection{The homogeneous case} \label{section-homogeneous-finite}
Like in the non-homogeneous case, we first show how the construction of Section \ref{section-ETSparse-homogenous} can be used to prove the reduction over fields where the characteristic is greater than $2$ and then give a modification of this construction to prove the reduction over characteristic $2$ fields. Note that the degrees are chosen in Section \ref{section-homogeneous-finiteparam} to satisfy \eqref{ineq_homogeneous} in any finite characteristic field.

So, let the characteristic be $p$, where $p > 2$. We consider the polynomial $f$ and parameter $s$ as defined in Section \ref{section-ETSparse-homogenous}. Note that the degree of $f$ is $d_1 + d_2 + (3n+m+1)(d_3+1)$. We choose $d_i$'s in Section \ref{section-homogeneous-finiteparam} such that $d_3,d_4$ and $d_5$ are of form $p^k-1$ for some $k \in \N$, while $d_1$ and $d_2$ are of form $p^l(p^t-1)$, for some $t,l \in \N$. For this choice of the $d_i$'s, Observations \ref{Obs: deg separated polys homogeneous} and \ref{Obs: sparsity poly homogeneous} continue to hold for $f$. While the forward direction is proved similarly to the characteristic $0$ case, the reverse direction requires some change. More precisely, Lemma \ref{Lemma: y0,x0 fixed homogeneous}, which was proven earlier using Claim \ref{lem_div}, requires a different proof. This is because Claim \ref{lem_div} holds for fields with characteristic $0$ or $p$, with $p$ being ``large enough''. If $p > d_1 + d_2 + (3n+m+1)(d_3+1)$ for $d_i$'s as chosen in \eqref{di-homogeneous-char0}, then Claim \ref{lem_div} holds and so does Lemma \ref{Lemma: y0,x0 fixed homogeneous} along with the rest of the argument in the reverse direction of Section \ref{section-ETSparse-homogenous}. Thus, we consider the case when $p \leq d_1 + d_2 + (3n+m+1)(d_3+1) = O((n+m)m^3)$ and prove Lemma \ref{Lemma: y0,x0 fixed homogeneous} by a different argument. Then, the rest of the argument in the reverse direction of Section \ref{section-ETSparse-homogenous} continues to hold in the same way as before.

\paragraph{The forward direction.} Let $\vecu \in \{0,1\}^n$ be such that $\psi(\vecu) = 1$ and $f$, as described in Section \ref{section-ETSparse-homogenous}, be the polynomial corresponding to $\psi$. Proposition \ref{Proposition: fwd direction homogeneous}, with the same proof as before, shows how $\vecu$ can be used to construct a sparsifying transform.

\paragraph{The reverse direction.} 
Let $A \in \GL(|\vecz|,\F)$ such that $\cal S(f(A\vecz)) \leq s$. We prove Lemma \ref{Lemma: y0,x0 fixed homogeneous} (refer Section \ref{proof-x0,y0-fixed-homogeneous-finitechar} for its proof), which shows that $A(x_0)$ and $A(y_0)$ have only one variable each, by showing that for an appropriate choice of $d_1$ and $d_2$ (see Section \ref{section-homogeneous-finiteparam}), and the characteristic being finite, the following holds 
\[\cal S(f(A\vecz)) = \cal S(A(x_0^{d_1}))\cal S(A(y_0^{d_2}))\cal S(g(A\vecz)).\]
Here $g(\vecz)$ is a polynomial of degree $(3n+m+1)(d_3+1)$. Using Observation \ref{Lemma: linear form sparsity} and the choice of $d_i$'s, $A(x_0)$ and $A(y_0)$ are shown to be single variables. With Lemma \ref{Lemma: y0,x0 fixed homogeneous} proven, Lemma \ref{Lemma: y0 separated}, with the same proof as before, shows that all the summands in $f(A\vecz)$ are degree separated from one another with respect to $x_0$. From Lemmas \ref{Lemma: y0,x0 fixed homogeneous} and \ref{Lemma: y0 separated}, it follows that
\[\cal S(f(A\vecz)) = \cal S(x_0^{d_1}y_0^{d_2 + (3n+m+1)(d_3+1)}) + \sum_{i=1}^{n}\cal S(Q_i(A\vecz)) + \sum_{k=1}^{m}\cal S(R_k(A\vecz)).\]
Then, Lemmas \ref{Lemma: Qi(Az) analysis} and \ref{Lemma: equality condition Qi homogeneous} hold just as in the characteristic $0$ case because $d_3$, $d_4$ and $d_5$ are of form $p^k-1$. Together, Lemmas \ref{Lemma: y0,x0 fixed homogeneous}, \ref{Lemma: Qi(Az) analysis} and \ref{Lemma: equality condition Qi homogeneous} show that $A$ is a permuted scaled version of the transform of \eqref{action homogeneous case}. Proposition \ref{Proposition: bwd direction homogeneous} shows how to derive a satisfying assignment for $\psi$ from $A$.

\subsubsection*{\underline{Construction for characteristic $2$ fields}} \label{section-homogeneous-char2}
Similar to the non-homogeneous case, we modify the construction of Section \ref{section-ETSparse-homogenous} to make the reduction work over characteristic $2$ fields. Consider the following polynomials:
\begin{itemize}
    \item For all $i \in [n]$, define $Q_i(\vecz) := Q_{i,1}(\vecz) + Q_{i,2}(\vecz) + Q_{i,3}(\vecz) $ as:
\begin{equation*}
 \begin{split}
 Q_{i,1}(\vecz) &:= x_0^{d_1 + (3i-2)(d_3+1)}y_0^{d_2 + (3n+m-3i+3)(d_3+1)-d_3} x_i^{d_3}, \\ Q_{i,2}(\vecz) &:=x_0^{d_1 + (3i-1)(d_3+1)}y_0^{d_2 + (3n+m-3i+2)(d_3+1)-d_4}(y_i+x_i)^{d_4}, \\ Q_{i,3}(\vecz) &:=x_0^{d_1 + 3i(d_3+1)}y_0^{d_2 + (3n+m-3i+1)(d_3+1)-d_4}y_i^{d_4}.     
 \end{split}
\end{equation*}
    \item For the $k\ith$ clause, $k \in [m]$, define 
    \[R_k(\vecz) := x_0^{d_1 + (3n+k)(d_3+1)}y_0^{d_2 +(m-k+1)(d_3+1)-3d_5}\prod_{j\in C_k}(y_j+ a_{k,j}x_j)^{d_5}.\]
\end{itemize}

Define $s:= 1 + n(d_4+3) + m(d_5+1)^2$ as before. Set the $d_i$'s as specified in Section \ref{section-homogeneous-finiteparam} to satisfy the conditions in \eqref{ineq_homogeneous}. Define $f$ as:
\begin{equation} \label{Homogeneous char 2 poly}
  f(\vecz) :=  x_0^{d_1}y_0^{d_2 + (3n+m+1)(d_3+1)} + \sum_{i=1}^{n}Q_i(\vecz) + \sum_{k=1}^{m}R_k(\vecz).  
\end{equation}
Observation \ref{Obs: deg separated polys homogeneous} holds with little change. Observation \ref{Obs: poly sparsity char 2 case homogeneous} analyses $\cal S(f)$ and $\supp(f)$.

\begin{observation} \label{Obs: poly sparsity char 2 case homogeneous}
    $\cal S(f) \leq 1 + n(d_4 + 3) + m(d_5+1)^3$ and $5 \leq \supp(f) \leq 8$. 
\end{observation}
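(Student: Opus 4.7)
The strategy is to exploit the fact that, exactly as in Observation \ref{Obs: deg separated polys homogeneous}, the summands $x_0^{d_1}y_0^{d_2 + (3n+m+1)(d_3+1)}$, $Q_{i,1}(\vecz)$, $Q_{i,2}(\vecz)$, $Q_{i,3}(\vecz)$ (for $i \in [n]$), and $R_k(\vecz)$ (for $k \in [m]$) have pairwise distinct $x_0$-degrees under the constraints of \eqref{ineq_homogeneous}. Their $x_0$-degrees can be read off directly from the definitions, namely $d_1$, $d_1 + (3i-2)(d_3+1)$, $d_1 + (3i-1)(d_3+1)$, $d_1 + 3i(d_3+1)$, and $d_1 + (3n+k)(d_3+1)$, respectively, and these are pairwise distinct for $i \in [n]$, $k \in [m]$. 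This $x_0$-degree separation reduces the analysis of $\cal S(f)$ to summing the sparsities of the individual summands via Observation \ref{lem_deg_sep_sum}.

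Having established degree separation, each $\cal S(Q_i)$ splits as $\cal S(Q_{i,1}) + \cal S(Q_{i,2}) + \cal S(Q_{i,3})$. The first and third equal $1$, since in the characteristic-$2$ construction $Q_{i,1}$ and $Q_{i,3}$ are single monomials (in particular $Q_{i,3} = x_0^{\cdot}y_0^{\cdot}y_i^{d_4}$, not $(y_i-x_i)^{d_4}$). For the middle term, $(y_i + x_i)^{d_4}$ is a bivariate polynomial of degree $d_4$, so trivially $\cal S(Q_{i,2}) \leq d_4 + 1$, giving $\cal S(Q_i) \leq d_4 + 3$. Next, each $R_k$ is, up to a prefactor monomial in $\{x_0, y_0\}$, a product of three factors $(y_j + a_{k,j}x_j)^{d_5}$ over $j \in C_k$, involving pairwise disjoint variable sets. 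Hence $\cal S(R_k)$ equals the product of the factor sparsities, each bounded by $d_5 + 1$, yielding $\cal S(R_k) \leq (d_5 + 1)^3$. Summing, $\cal S(f) \leq 1 + n(d_4 + 3) + m(d_5+1)^3$.

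For the support, the upper bound is obtained summand by summand. The leading monomial has support $2$; each monomial of $Q_{i,1}$ or $Q_{i,3}$ has support $3$; each monomial of $Q_{i,2}$ uses at most $\{x_0, y_0, x_i, y_i\}$ and hence has support at most $4$; and each monomial of $R_k$ uses $\{x_0, y_0\}$ together with at most two variables from each of the three factors $(y_j + a_{k,j}x_j)^{d_5}$, for a total of at most $2 + 2\cdot 3 = 8$. Hence $\supp(f) \leq 8$. For the lower bound, $\prod_{j \in C_k}(y_j + a_{k,j}x_j)^{d_5}$ always contains the monomial $\prod_{j \in C_k}y_j^{d_5}$ with coefficient $1$ (independent of the values $a_{k,j}$), so $R_k$, and therefore $f$, contains a monomial using exactly $\{x_0, y_0\} \cup \{y_j : j \in C_k\}$, which has support $5$. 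This gives $\supp(f) \geq 5$.

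The only substantive step is the verification of $x_0$-degree separation; this mirrors the argument behind Observation \ref{Obs: deg separated polys homogeneous} and uses the gap $d_3 + 1$ between successive $x_0$-shifts, which by \eqref{ineq_homogeneous} dominates the $y_0$- and other-variable degree contributions ($d_3, d_4, d_5$) within each summand. The rest of the proof is essentially bookkeeping over the summands.
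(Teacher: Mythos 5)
Your proof is correct and takes essentially the same route as the paper's: decompose $f$ into its $n+m+1$ summands via degree separation with respect to $x_0$ (Observations \ref{Obs: deg separated polys homogeneous} and \ref{lem_deg_sep_sum}), bound $\cal S(Q_i) \leq d_4+3$ and $\cal S(R_k) \leq (d_5+1)^3$ summand by summand, and obtain $5 \leq \supp(f) \leq 8$ from the supports of the individual summands. The only cosmetic differences are that the paper uses Observation \ref{lem_binom} to pin down $\cal S(Q_{i,2}) = d_4+1$ exactly where you use the trivial bivariate bound, and that you exhibit an explicit support-$5$ monomial of $R_k$ where the paper simply asserts $5 \leq \supp(R_k) \leq 8$; also note that the $x_0$-degree of each summand is a single fixed value, so distinctness of these values suffices and no "gap dominates other degrees" argument is actually needed.
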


\begin{example} \label{Remark: char two not natural homogeneous}
 Like the non-homogeneous case, over characteristic $2$ fields, the sparsity of the homogeneous polynomial output by the reduction depends on the number of variables which are complemented within a clause. Hence, for the same number of variables $n$ and same number of clauses $m$, the output polynomial corresponding to two different $\psi$'s may have different sparsity. Hence, the reduction is not natural over characteristic $2$ fields.    
\end{example}

\paragraph{The forward direction.} Let $\vecu \in \{0,1\}^n$ be such that $\psi(\vecu) = 1$ and $f$, as described in \eqref{Homogeneous char 2 poly}, be the polynomial corresponding to $\psi$. Proposition \ref{Proposition: fwd direction char 2 homogeneous} shows how $\vecu$ can be used to construct a sparsifying transform. The proof of Proposition \ref{Proposition: fwd direction char 2 homogeneous} is similar to that of Proposition \ref{Proposition: fwd direction homogeneous}.

\begin{proposition} \label{Proposition: fwd direction char 2 homogeneous}
$\cal S(f(A\vecz)) \leq s$ where $A \in \GL(|\vecz|,\F)$ is as follows:
    \begin{equation} \label{action char 2 homogeneous} 
            A: x_0 \mapsto x_0,~ y_0 \mapsto y_0,~ x_i \mapsto x_i,~ y_i \mapsto y_i + (1-u_i)x_i \ \ \forall i \in [n].
    \end{equation}
\end{proposition}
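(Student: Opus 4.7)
The plan is to mirror the proof of Proposition \ref{Proposition: fwd direction homogeneous} (the characteristic-$0$ homogeneous case), with careful attention to the arithmetic peculiarities of characteristic $2$. First, I would invoke Observation \ref{Obs: deg separated polys homogeneous} (which continues to hold with the modified $Q_i$'s and $R_k$'s of the characteristic-$2$ construction) together with Observation \ref{Lemma: equiv deg sep} to decompose
\begin{equation*}
\cal S(f(A\vecz)) = \cal S\bigl(A(x_0^{d_1}y_0^{d_2+(3n+m+1)(d_3+1)})\bigr) + \sum_{i=1}^{n} \cal S(Q_i(A\vecz)) + \sum_{k=1}^{m} \cal S(R_k(A\vecz)),
\end{equation*}
since $A$ fixes $x_0$ and $y_0$, so the summands remain degree separated with respect to $x_0$. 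The leading term contributes sparsity $1$.

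Next, I would analyze each $Q_i(A\vecz)$. The polynomial $Q_{i,1}$ is unaffected up to scaling and contributes sparsity $1$. The key characteristic-$2$ identity is that $(y_i + x_i) + (1 - u_i)x_i$ equals $y_i$ when $u_i = 1$ and equals $y_i + x_i$ when $u_i = 0$; symmetrically, $y_i + (1 - u_i)x_i$ equals $y_i + x_i$ when $u_i = 0$ and equals $y_i$ when $u_i = 1$. Hence exactly one of $Q_{i,2}(A\vecz)$ and $Q_{i,3}(A\vecz)$ collapses to a monomial of sparsity $1$, while the other remains a $d_4$-th power of the two-variable linear form $y_i + x_i$. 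Since the chosen $d_4$ is of the form $p^k - 1$ (as set in Section \ref{section-homogeneous-finiteparam}), the finite-characteristic part of Observation \ref{lem_binom} yields $\cal S((y_i + x_i)^{d_4}) = d_4 + 1$, so $\cal S(Q_i(A\vecz)) = 1 + 1 + (d_4 + 1) = d_4 + 3$.

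Then I would handle each $R_k(A\vecz)$. Under $A$, the factor $(y_j + a_{k,j}x_j)^{d_5}$ becomes $(y_j + (1 - u_j + a_{k,j})x_j)^{d_5}$, and a direct modulo-$2$ check shows that $1 - u_j + a_{k,j} \equiv 0 \pmod{2}$ if and only if $u_j \neq a_{k,j}$, i.e., exactly when the literal $x_j \oplus a_{k,j}$ is satisfied by $\vecu$. Since $\psi(\vecu) = 1$, in each clause $k$ there is some $j \in C_k$ for which this factor collapses to $y_j^{d_5}$, contributing sparsity $1$. The remaining two factors have sparsity at most $d_5 + 1$ each (Observation \ref{lem_binom}, using $d_5 = p^k - 1$), and because the three factors share no variables, $\cal S(R_k(A\vecz)) \leq (d_5 + 1)^2$. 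Summing the three contributions yields
\begin{equation*}
\cal S(f(A\vecz)) \leq 1 + n(d_4 + 3) + m(d_5 + 1)^2 = s,
\end{equation*}
as desired. The only mildly delicate point is the characteristic-$2$ bookkeeping that links the vanishing of $1 - u_j + a_{k,j}$ to literal truth; the rest is a routine transcription of the characteristic-$0$ argument, so I expect no substantial obstacle.
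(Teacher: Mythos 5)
Your overall route is exactly the paper's: the paper proves this proposition by remarking that it is similar to Proposition \ref{Proposition: fwd direction homogeneous}, i.e., the decomposition via degree separation with respect to $x_0$ (preserved because $A$ fixes $x_0$ and $y_0$ and introduces no $x_0$ or $y_0$ into the other images, so Observation \ref{lem_deg_sep_sum} applies), followed by the per-summand sparsity count using Observation \ref{lem_binom} with $d_4, d_5$ of the form $2^k-1$; your treatment of the $R_k$'s and the final count $1 + n(d_4+3) + m(d_5+1)^2 = s$ are correct.

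There is, however, a concrete slip in your ``key characteristic-$2$ identity'' for $Q_{i,2}$: $A(y_i+x_i) = y_i + (1-u_i)x_i + x_i = y_i + (2-u_i)x_i$, which in characteristic $2$ is $y_i + u_i x_i$. So this form collapses to $y_i$ when $u_i = 0$ and remains $y_i + x_i$ when $u_i = 1$ --- the opposite of what you wrote (your identity for $Q_{i,3}$ is right). As stated, your two identities contradict your own conclusion that exactly one of $Q_{i,2}(A\vecz)$, $Q_{i,3}(A\vecz)$ collapses: they would make both collapse when $u_i = 1$ and neither collapse when $u_i = 0$, and in the latter case $\cal S(Q_i(A\vecz))$ would be $2d_4+3$ rather than $d_4+3$, which would break the bound by $s$. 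With the corrected identity the complementary collapse pattern holds in both cases, and the rest of your argument goes through as written.
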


\paragraph{The reverse direction.} Let $A \in \GL(|\vecz|,\F)$ be such that $\cal S(f(A\vecz)) \leq s$. Lemmas \ref{Lemma: y0,x0 fixed homogeneous} and \ref{Lemma: y0 separated} hold with little change in the arguments presented in the finite characteristic case. Thus, $A(x_0) = x_0$ and $A(y_0) = y_0$ without loss of generality. Lemma \ref{Lemma: Qi(Az) sparsity analysis char 2} analyses $\cal S(Q_i(A\vecz))$, $i \in [n]$, and its proof is similar to that of Lemma \ref{Lemma: Qi(Az) analysis}. 
\begin{lemma} \label{Lemma: Qi(Az) sparsity analysis char 2}
    For any invertible $A$ and $i \in [n]$:
        \begin{equation*}
            \cal S(Q_i(A\vecz)) = \cal S(Q_{i,1}(A\vecz))+\cal S(Q_{i,2}(A\vecz))+\cal S(Q_{i,3}(A\vecz))\geq d_4 + 3,
        \end{equation*}
    where $Q_i$, $Q_{i,1}$, $Q_{i,2}$ and $Q_{i,3}$ are as defined in this subsection. Equality holds if and only if under $A$
        \[x_i \mapsto X_i  \text{ and } y_i \mapsto Y_i +(1 - u_i)X_i \] 
    for some scaled $X_i, Y_i \in \vecz$ and $u_i\in \{0,1\}$. Further, if $\cal S(Q_i(A\vecz)) \neq d_4 + 3$, then $\cal S(Q_i(A\vecz)) \geq 2d_4 + 3$.
\end{lemma}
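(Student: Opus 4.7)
The plan is to closely follow the proof of Lemma \ref{Lemma: Qi(Az) analysis}, substituting the characteristic $2$ identity $y_i - x_i = y_i + x_i$ and redefining the three relevant linear forms accordingly.

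First, by Lemma \ref{Lemma: y0,x0 fixed homogeneous} we may assume $A(x_0) = x_0$ and $A(y_0) = y_0$, so the distinct $x_0$-degrees of $Q_{i,1}, Q_{i,2}, Q_{i,3}$ are preserved by $A$, and Observation \ref{lem_deg_sep_sum} yields the decomposition $\cal S(Q_i(A\vecz)) = \cal S(Q_{i,1}(A\vecz)) + \cal S(Q_{i,2}(A\vecz)) + \cal S(Q_{i,3}(A\vecz))$. Since multiplication by the $x_0, y_0$-monomial in each $Q_{i,j}(A\vecz)$ does not affect sparsity, it suffices to analyze $\cal S(L_1^{d_3})$, $\cal S(L_2^{d_4})$, $\cal S(L_3^{d_4})$, where $L_1 := A(x_i)$, $L_3 := A(y_i)$, and $L_2 := A(x_i) + A(y_i) = A(y_i + x_i)$ (using char $2$).

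The core step is a case analysis on the number of variables appearing in $L_1, L_2, L_3$, invoking Observation \ref{lem_binom} for the finite-characteristic case---which applies here because $d_3$ and $d_4$ are chosen to be of the form $p^k - 1$ in Section \ref{section-homogeneous-finiteparam}. If $L_1$ involves at least two variables, then $\cal S(L_1^{d_3}) \geq d_3 + 1$, and combined with $d_3 \geq 2d_4$ this gives $\cal S(Q_i(A\vecz)) \geq 2d_4 + 3$. Otherwise $L_1 = c_1 X_i$ for a single variable $X_i$, and invertibility of $A$, together with $A(x_0) = x_0$ and $A(y_0) = y_0$, forces $X_i \notin \{x_0, y_0\}$ and forces $A(y_i)$ to have a non-zero component in $\vecz \setminus \{x_0, y_0, X_i\}$. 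If both $L_2$ and $L_3$ involve at least two variables, each contributes at least $d_4 + 1$, so the total is at least $2d_4 + 3$. Otherwise, exactly one of $L_2, L_3$ is a scaled variable: if $L_3 = c_3 Y_i$, invertibility forces $Y_i \neq X_i$, so $L_2 = c_1 X_i + c_3 Y_i$ is a linear form in exactly two variables and contributes $d_4 + 1$, giving the $u_i = 1$ equality configuration; symmetrically, $L_2 = c_3 Y_i$ yields $A(y_i) = c_1 X_i + c_3 Y_i$, which is the $u_i = 0$ configuration.

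The main obstacle will be to use invertibility of $A$ carefully to exclude the degenerate subcases: $L_3 = c_3 X_i$ (which would make $A(x_i), A(y_i)$ collinear modulo $\{x_0, y_0\}$) and $L_2 = 0$ (which would make $A(x_i + y_i) = 0$); both contradict invertibility, so whenever $L_3$ or $L_2$ is a scaled variable it must correspond to a fresh variable $Y_i \neq X_i$. Once these degeneracies are ruled out, tallying the cases shows that $\cal S(Q_i(A\vecz)) = d_4 + 3$ holds exactly in the two described configurations, and otherwise $\cal S(Q_i(A\vecz)) \geq 2d_4 + 3$, as claimed.
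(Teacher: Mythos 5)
Your proposal is correct and follows essentially the same route as the paper: the paper proves this lemma by pointing to the proof of Lemma \ref{Lemma: Qi(Az) analysis}, i.e.\ the same three-case analysis as Lemma \ref{var_spars_lem} (decomposition via $x_0$-degree separation after fixing $A(x_0)=x_0$, $A(y_0)=y_0$; then cases on whether $A(x_i)$, $A(y_i+x_i)$, $A(y_i)$ are single scaled variables, using Observation \ref{lem_binom} with $d_3,d_4$ of the form $2^k-1$ and $d_3\geq 2d_4$), which is exactly what you do. The only cosmetic remark is that $A(y_i+x_i)=A(y_i)+A(x_i)$ is just linearity; characteristic $2$ enters only in the definitions of $Q_{i,2},Q_{i,3}$ and in identifying the $u_i=0$ equality configuration.
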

Lemma \ref{Lemma: equality condition Qi homogeneous} also holds with the same proof as before. Lemmas \ref{Lemma: y0,x0 fixed homogeneous}, \ref{Lemma: Qi(Az) sparsity analysis char 2} and \ref{Lemma: equality condition Qi homogeneous} together show that $A$ is a permuted scaled version of the transform described in \eqref{action char 2 homogeneous}. Proposition \ref{Proposition: bwd direction char 2 homogeneous} then holds and can be proven similarly to Proposition \ref{Proposition: bwd direction homogeneous}.
\begin{proposition} \label{Proposition: bwd direction char 2 homogeneous}
    With $A$ as described in \eqref{action char 2 homogeneous}, $\vecu = (u_1, \dots ,u_n)$ is a satisfying assignment for $\psi$. 
\end{proposition}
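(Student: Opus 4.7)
The plan is to prove the proposition by contradiction, paralleling the proof of Proposition \ref{Proposition: bwd direction homogeneous} but adjusting the clause analysis for characteristic $2$ arithmetic. Assume $\vecu = (u_1, \dots, u_n)$ does not satisfy $\psi$. Then some clause $k \in [m]$ is unsatisfied, which over $\F_2$ means $u_j = a_{k,j}$ for every $j \in C_k$, equivalently $(1 - u_j) + a_{k,j} = 1$ in $\F_2$.

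First I would compute $R_k(A\vecz)$ under the transform \eqref{action char 2 homogeneous}. Each factor $(y_j + a_{k,j} x_j)$ becomes $y_j + ((1 - u_j) + a_{k,j}) x_j = y_j + x_j$, so
$$R_k(A\vecz) = x_0^{d_1 + (3n+k)(d_3+1)}\, y_0^{d_2 + (m-k+1)(d_3+1) - 3d_5} \prod_{j \in C_k} (y_j + x_j)^{d_5}.$$
The parameter choice in Section \ref{section-homogeneous-finiteparam} ensures that $d_5$ has the form $p^t - 1$, so Observation \ref{lem_binom} yields $\cal S((y_j + x_j)^{d_5}) = d_5 + 1$, and multiplicativity of sparsity over polynomials in pairwise disjoint variables gives $\cal S(R_k(A\vecz)) = (d_5 + 1)^3$.

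Next I would combine this with the degree separation already established. Since $A$ acts as in \eqref{action char 2 homogeneous}, Lemmas \ref{Lemma: y0,x0 fixed homogeneous} and \ref{Lemma: y0 separated} apply and the summands of $f(A\vecz)$ are degree separated with respect to $x_0$, while Lemma \ref{Lemma: Qi(Az) sparsity analysis char 2} forces $\cal S(Q_i(A\vecz)) = d_4 + 3$ for every $i \in [n]$. Dropping all $R_l(A\vecz)$ for $l \neq k$ and keeping only the monomial $x_0^{d_1} y_0^{d_2 + (3n+m+1)(d_3+1)}$, we get
$$\cal S(f(A\vecz)) \geq 1 + \sum_{i=1}^{n} \cal S(Q_i(A\vecz)) + \cal S(R_k(A\vecz)) \geq 1 + n(d_4 + 3) + (d_5 + 1)^3.$$
The condition $d_5 \geq m$ from \eqref{ineq_homogeneous} gives $(d_5 + 1)^3 \geq (m+1)(d_5 + 1)^2$, so $\cal S(f(A\vecz)) \geq s + (d_5 + 1)^2 > s$, contradicting $\cal S(f(A\vecz)) \leq s$. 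Hence $\vecu$ satisfies $\psi$.

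The main subtlety, rather than a genuine obstacle, is the translation from the Boolean notion of an unsatisfied clause to the $\F_2$ identity $(1-u_j) + a_{k,j} = 1$, which is precisely what causes every transformed factor of $R_k$ to collapse to $y_j + x_j$ instead of to the single variable $y_j$ encountered in the satisfied case. Once this identification is made, the argument is essentially a copy of Proposition \ref{Proposition: bwd direction homogeneous}, relying only on previously established lemmas together with Observation \ref{lem_binom} applied in the $d = p^t - 1$ form guaranteed by the parameter choice in Section \ref{section-homogeneous-finiteparam}.
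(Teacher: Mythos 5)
Your proof is correct and follows essentially the same route as the paper: the paper proves this proposition exactly as it proves Proposition \ref{Proposition: bwd direction homogeneous} (and hence Proposition \ref{Proposition: bwd direction}), i.e., by contradiction, noting that an unsatisfied clause $k$ makes every factor of $R_k(A\vecz)$ equal to $(y_j + x_j)^{d_5}$, so $\cal S(R_k(A\vecz)) = (d_5+1)^3 \geq (m+1)(d_5+1)^2$, which together with the degree separation (Lemmas \ref{Lemma: y0,x0 fixed homogeneous}, \ref{Lemma: y0 separated}) and $\cal S(Q_i(A\vecz)) = d_4+3$ forces $\cal S(f(A\vecz)) \geq s + (d_5+1)^2 > s$. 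The only cosmetic remark is that the identity $(1-u_j) + a_{k,j} = 1$ for $u_j = a_{k,j} \in \{0,1\}$ holds over any field, not just in characteristic $2$; nothing in your argument depends on phrasing it as an $\F_2$ computation.
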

\subsubsection{Setting of parameters in the homogeneous case} \label{section-homogeneous-finiteparam}
Let the characteristic be $p > 0$. If $p > d_1 + d_2 + (3n+m+1)(d_3+1)$, where $d_1,d_2$ and $d_3$ are as chosen in \eqref{di-homogeneous-char0} for the characteristic $0$ case, then the same setting of $d_i$'s holds. Otherwise, $p = O((n+m)m^3)$. Similar to the non-homogeneous case, we choose $d_3$, $d_4$ and $d_5$ so that $d_3 > d_4 > d_5$ and they are of form $p^k-1$, $k \in \N$. We have the following bounds: 
\begin{equation} \label{d3,d4,d5 def}
           \begin{split}
           d_5 &\leq pm \implies d_5 = O(pm) = O((n+m)m^4), \\
           d_4 &\leq pm(d_5+1)^2 + p \implies d_4 = O(p^3m^3) = O((n+m)^3m^{12}), \\ 
           \therefore s &= O(p^3nm^3) = O((n+m)^3nm^{12}), \\ 
           d_3 &= pd_4+p-1 \implies d_3 = O(p^4m^3) = O((n+m)^4m^{15}). 
           \end{split}
\end{equation}
This choice of $d_3$, $d_4$ and $d_5$ ensures they are $(mn)^{O(1)}$, satisfy \eqref{ineq_homogeneous} and that Observation \ref{lem_binom} can be used for characteristic $p$ fields with $p = O((n+m)m^3)$. Now, let $k_1 := \lfloor \log_p(s) \rfloor + 1, k_2 := \lfloor \log_p((3n+m+1)(d_3+1)) \rfloor + 1$, then set 
\begin{equation} \label{d2 def}
d_2 = \sum_{i=k_2}^{k_1+k_2-1} (p-1)p^i = p^{k_1+k_2} - p^{k_2}.
\end{equation}
For this choice, $d_2 > s$ and $d_2 > (3n+m+1)(d_3+1)$. Lastly, let $k_3 := \lfloor \log_p(d_2+(3n+m+1)(d_3+1)) \rfloor + 1$, then set
\begin{equation} \label{d1 def}
d_1 = \sum_{i=k_3}^{k_3+k_1-1} (p-1)p^i = p^{k_1+k_3} - p^{k_3} > d_2+(3n+m+1)(d_3+1).
\end{equation}
For this choice of $d_1$ and $d_2$ it holds that, 
\[d_2 = O(p^2s(3n+m+1)(d_3+1)) = O(p^9nm^6(n+m)) = O((n+m)^{10}nm^{33})\] 
\[d_1 = O(p^2s(d_2+(3n+m+1)(d_3+1))) = O(p^{14}n^2m^9(n+m)) = O((n+m)^{15}n^2m^{51}).\]

\section{$\NP$-hardness of $\gapet$} \label{Subsection: gap reduction}
In this section, we prove Theorem \ref{Theorem: gap ETSparse NP-hard}. We first prove parts $1$ and $2$ of Theorem \ref{Theorem: gap ETSparse NP-hard} over characteristic $0$ fields without considering translations in Sections \ref{sec:gap non-homog} and \ref{sec:gap homog}, respectively. Section \ref{sec:gap finite char} extends both parts to finite characteristic fields. Appendix \ref{Section: Gap ETSparse proofs} contains the proofs of the lemmas in this section. Appendix \ref{Section: Affine gap result} proves part $1$ of Theorem \ref{Theorem: gap ETSparse NP-hard} while considering translations.\footnote{For part $2$ of Theorem \ref{Theorem: gap ETSparse NP-hard}, translations are not considered; see footnote \ref{footnote: homogeneous no translate} for an explanation.}

\paragraph{Proof sketch.} For a $\CNF$ $\psi$, we carefully analyze the sparsity of the corresponding polynomial $f$ as defined in Section \ref{Subsection: Reduction ETSparse}, for the non-homogeneous and the homogeneous case. For unsatisfiable $\psi$'s, we do a slightly deeper analysis on $\cal S(f(A\vecz))$, for all $A \in \GL(|\vecz|,\F)$, to show a lower bound. For satisfiable $\psi$'s, $\cal S(f(A\vecz))$ has already been upper bounded for an appropriate $A \in \GL(|\vecz|,\F)$. The degree parameters are also chosen differently so that the gap between the lower bound and the upper bound is significant. Comparing the sparsities for satisfiable and unsatisfiable $\psi$'s proves Theorem \ref{Theorem: gap ETSparse NP-hard}. Note that throughout this section, we assume $\epsilon \in (0,1/3)$ to be an arbitrary constant.

\subsection{Analyzing the gap: the non-homogeneous case} \label{sec:gap non-homog}

For a $\CNF$ $\psi$, consider the polynomial $f$ as defined in \eqref{Definition: 3SAT poly}. Choose the $d_i$'s to satisfy the following while also being $(mn)^{O(1)}$:
\begin{equation}\label{Eqn: gap ineq}
    d_4 \geq \max(4mn,(mn)^{O(1/\epsilon)}), \ 
    d_3 = m(d_4+1)^2 + 1, \ 
    d_2 = d_3^2 + 1, \ 
    d_1 = d_2 + 1. 
\end{equation}
Note under these constraints, the conditions in \eqref{ineq} are also satisfied. Let $s := \cal S(f)$. Then, $s = 1+n(2d_3+3) + m(d_4+1)^3$ by Observation \ref{Obs: ETsparse poly sparsity }. For $\psi \in \bar{\SAT}$, Lemma \ref{Lemma: unsat gap} shows lower bounds on $\cal S(f(A\vecz))$ for all $A \in \GL(|\vecz|,\F)$. In the lemma, Item \ref{case 1 unsat gap} is essentially Lemma \ref{Lemma: fix x0}, Items \ref{case 2 unsat gap} and \ref{case 3 unsat gap} are a slightly deeper analysis of that in Lemmas \ref{var_spars_lem} and \ref{Lemma: equality condition Qi}, and Item \ref{case 4 unsat gap} is the analysis in Proposition \ref{Proposition: bwd direction}. Thus, Lemma \ref{Lemma: unsat gap} encapsulates the analysis of Section \ref{subsubsection: sparse bwd direction}.  For $\psi \in \SAT$, by Proposition \ref{Proposition: fwd direction}, there exists $A \in \GL(|\vecz|,\F)$ such that $\cal S(f(A\vecz)) \leq s_0$, where $s_0 = 1+n(d_3+3) + m(d_4+1)^2$. Proposition \ref{Proposition: Sparsity gap} shows $\gapet$ is $\NP$-hard by comparing the sparsity for satisfiable and unsatisfiable $\psi$'s, and uses Lemma \ref{Lemma: unsat gap} and the conditions in \eqref{Eqn: gap ineq}. 

\begin{lemma} \label{Lemma: unsat gap}
   Let $\psi \in \bar{\SAT}$, $f$ be as defined in \eqref{Definition: 3SAT poly} corresponding to $\psi$ and $A \in \GL(|\vecz|,\F)$.
   \begin{enumerate}
       \item \label{case 1 unsat gap} If $A(x_0)$ is a linear form in at least $2$ variables, $\cal S(f(A\vecz)) \geq d_1+1$.

       \item \label{case 2 unsat gap} If $A$ is not as in item \ref{case 1 unsat gap} and $A(x_j)$ is a linear form in at least $2$ variables for some $j \in [n]$, then $\cal S(f(A\vecz)) \geq d_2+1$.

       \item \label{case 3 unsat gap} If $A$ is not as in items \ref{case 1 unsat gap} and \ref{case 2 unsat gap} and for some $j \in [n]$, $A(y_j + x_j)$ or $A(y_j - x_j)$ is a linear form in at least $3$ variables, then $\cal S(f(A\vecz)) \geq \frac{d_3^2 + 3d_3 + 2}{2}$. 

       \item \label{case 4 unsat gap} If $A$ is not of the form described in the previous three items, then 
       $\cal S(f(A\vecz)) \geq (d_4 + 1)^3.$
   \end{enumerate}
\end{lemma}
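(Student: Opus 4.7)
The plan is to exploit the degree-separation structure of $f$ from Observation~\ref{Obs: deg separated polys}. Since any linear map preserves degrees, Observation~\ref{Lemma: equiv deg sep} gives
\[
\cal S(f(A\vecz)) \;=\; \cal S(A(x_0^{d_1})) + \sum_{i=1}^{n} \cal S(Q_i(A\vecz)) + \sum_{k=1}^{m} \cal S(R_k(A\vecz)),
\]
so any lower bound on a single summand immediately lower-bounds $\cal S(f(A\vecz))$.

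For Items~1--3 I would isolate the summand to which the hypothesis applies and invoke the sparsity bounds of Observations~\ref{Lemma: linear form sparsity} and \ref{lem_binom}. In Item~1, $A(x_0^{d_1}) = (A(x_0))^{d_1}$ is a $d_1$-th power of a linear form in at least two variables, of sparsity $\geq d_1+1$. In Item~2, $A(x_0)$ is a scaled variable, so $Q_{j,1}(A\vecz)$ reduces to a monomial times $(A(x_j))^{d_2}$, and since $A(x_j)$ is in at least two variables the sparsity is $\geq d_2+1$. In Item~3, one of $Q_{j,2}(A\vecz)$ or $Q_{j,3}(A\vecz)$ similarly reduces to a monomial times a $d_3$-th power of a linear form in at least three variables, of sparsity $\geq \binom{d_3+2}{2} = (d_3^2+3d_3+2)/2$.

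Item~4 is the main case, where every single-summand bound is weak and the $(d_4+1)^3$ lower bound must come from the clause polynomial $R_k$ for a single unsatisfied clause. I would first combine invertibility of $A$ with the two-variable hypothesis on each $A(y_j \pm x_j)$ to derive a structural normal form: the $A(x_j)$'s are distinct scaled variables (so $A(x_j) = c_j x_{\pi(j)}$ for some permutation $\pi$), and $A(y_j) \in \spa(y_{\sigma(j)}, x_{\pi(j)})$ for some bijection $\sigma$, with a nonzero coefficient on $y_{\sigma(j)}$. Next, since $A(y_j + x_j) - A(y_j - x_j) = 2A(x_j)$ is a nonzero linear form in one variable, at most one of $A(y_j \pm x_j)$ can be in a single variable; define $\vecu^* \in \{0,1\}^n$ by letting $u^*_j$ be the bit (chosen arbitrarily when both of $A(y_j \pm x_j)$ are in two variables) for which $A(y_j + (-1)^{u^*_j} x_j)$ is in exactly two variables.

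Since $\psi$ is unsatisfiable, some clause $k$ is falsified by $\vecu^*$, i.e., $u^*_j = a_{k,j}$ for every $j \in C_k$. The $j$-th factor of $R_k(A\vecz)$ is then the $d_4$-th power of $A(y_j + (-1)^{a_{k,j}} x_j)$, a linear form in exactly two variables $\{y_{\sigma(j)}, x_{\pi(j)}\}$, and hence has sparsity $d_4+1$ by Observation~\ref{lem_binom}. Because the pairs $\{y_{\sigma(j)}, x_{\pi(j)}\}$ are disjoint across $j \in C_k$ (as $\pi$ and $\sigma$ are bijections), the product yields $\cal S(R_k(A\vecz)) = (d_4+1)^3$, establishing the bound. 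The main obstacle is the structural normal form in Item~4---in particular, proving both that the ``$y$-part'' of each $A(y_j)$ is a single $y$-variable and that the induced $\sigma$-map is a bijection---both of which rely on a careful use of the invertibility of $A$ alongside the two-variable hypothesis, and are what enable the disjointness needed for the final product to deliver the full $(d_4+1)^3$ contribution.
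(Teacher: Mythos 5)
Your proposal is correct and takes essentially the same route as the paper: the same degree-separation decomposition, the same single-summand bounds via Observations \ref{Lemma: linear form sparsity} and \ref{lem_binom} for items 1--3, and for item 4 the same normal form $A(x_j)=X_j$, $A(y_j)=Y_j+c_jX_j$ with distinct scaled variables (forced by invertibility), from which an assignment is read off and an unsatisfied clause yields three variable-disjoint two-variable factors and the $(d_4+1)^3$ bound. The only nitpick is that the image variables need not be $x$-variables and $y$-variables under permutations $\pi,\sigma$ as you wrote---they are merely distinct variables of $\vecz$, which is all the disjointness argument requires---and your extraction of $u^*_j$ from which of $A(y_j\pm x_j)$ has exactly two variables is an equivalent, slightly more explicit rendering of the paper's terse final step.
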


\begin{proposition} \label{Proposition: Sparsity gap}
 Let $\textnormal{char}(\F) = 0$. If the input in $\gapet$ is an $s$-sparse polynomial, then $\gapet$ is $\NP$-hard for $\alpha = s^{1/3-\epsilon}$.
\end{proposition}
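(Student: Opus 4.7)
The plan is to instantiate the reduction of Section \ref{Subsection: Reduction ETSparse} with the sharper parameter setting in \eqref{Eqn: gap ineq}, and then compare the sparsity bounds available for satisfiable and unsatisfiable $\psi$. Because \eqref{Eqn: gap ineq} strengthens \eqref{ineq}, Observations \ref{Obs: deg separated polys}--\ref{Obs: ETsparse poly sparsity } and Proposition \ref{Proposition: fwd direction} carry over unchanged. Hence, if $\psi \in \SAT$, there exists $A \in \GL(|\vecz|,\F)$ with $\cal S(f(A\vecz)) \leq s_0 := 1+n(d_3+3)+m(d_4+1)^2$; and if $\psi \in \bar{\SAT}$, Lemma \ref{Lemma: unsat gap} exhaustively covers all $A \in \GL(|\vecz|,\F)$ in four cases, yielding
\[
\cal S(f(A\vecz)) \;\geq\; \min\!\left\{d_1+1,\; d_2+1,\; \tfrac{d_3^2+3d_3+2}{2},\; (d_4+1)^3\right\}.
\]

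For the choices in \eqref{Eqn: gap ineq} we have $d_3 = m(d_4+1)^2+1$, $d_2 = d_3^2+1$, and $d_1 = d_2+1$, so the four candidates are ordered $(d_4+1)^3 < \tfrac{d_3^2+3d_3+2}{2} < d_2+1 < d_1+1$, and the smallest of them, $(d_4+1)^3$, is the effective unsatisfiable lower bound. Using $d_4 \geq 4mn$ one easily checks $n(2d_3+3) \leq m(d_4+1)^3$, so $s := \cal S(f) \leq 2m(d_4+1)^3 + 1$, while $s_0 \leq 2m(n+1)(d_4+1)^2$. Consequently, the gap factor satisfies
\[
\frac{(d_4+1)^3}{s_0} \;\geq\; \frac{d_4+1}{2m(n+1)} \;=\; \Omega\!\left(\frac{d_4}{mn}\right).
\]

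To finish, I would check that the right-hand side above is at least $s^{1/3-\epsilon}$. Using $s \leq 2m(d_4+1)^3 + 1$ and taking logs, this reduces to an inequality of the form $(d_4+1)^{3\epsilon} \geq c\,(mn)^{O(1)}$ for an absolute constant $c$, which is precisely what the hypothesis $d_4 \geq (mn)^{O(1/\epsilon)}$ in \eqref{Eqn: gap ineq} guarantees. The only real piece of bookkeeping is to fix the implicit constant in the $O(1/\epsilon)$ exponent large enough to absorb the fixed constants coming from the upper bounds on $s_0$ and $s$. Since $\epsilon$ is a fixed positive constant, each $d_i$ remains $(mn)^{O(1)}$, so the polynomial $f$ -- given in sparse representation with $(mn)^{O(1)}$ monomials -- is produced in deterministic polynomial time, completing the reduction from $\SAT$ to $s^{1/3-\epsilon}\text{-}\mathrm{gap}\text{-}\et$.
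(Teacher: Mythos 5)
Your proposal is correct and follows essentially the same route as the paper's proof: the forward bound $s_0$ from Proposition \ref{Proposition: fwd direction}, the four-case lower bound of Lemma \ref{Lemma: unsat gap} with $(d_4+1)^3$ as the minimum under \eqref{Eqn: gap ineq}, and the final comparison driven by $d_4 \geq (mn)^{O(1/\epsilon)}$. The only (immaterial) difference is bookkeeping: you bound $s^{1/3-\epsilon}$ directly from the upper bound $s \leq 2m(d_4+1)^3+1$, whereas the paper splits the estimate as $s^{1/3}/s^{\epsilon}$ and additionally invokes $s \geq d_4^3$; both yield the same conclusion.
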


\begin{proof}
If $\psi \in \SAT$, then $\cal S(f(A\vecz)) \leq s_0$ where $A$ is as described in \eqref{action}. If $\psi \in \bar{\SAT}$, then it follows from Lemma \ref{Lemma: unsat gap} that for any $A \in \GL(|\vecz|,\F)$: 
\[\cal S(f(A\vecz)) \geq \min\bigg(d_1 + 1,d_2 + 1,\frac{d_3^2+3d_3 + 2}{2},(d_4+1)^3\bigg).\]
The constraints imposed in \eqref{Eqn: gap ineq} ensure that $(d_4+1)^3$ is the minimum. As $d_3 = m(d_4+1)^2 + 1$, therefore $s_0 = 1 + n(d_3+3) + m(d_4+1)^2 \leq 3nd_3 = 3mn(d_4+1)^2 + 3n \leq 4mn(d_4+1)^2$. Thus, the gap in the sparsities of the YES instances and the NO instances is 
\[  \frac{(d_4+1)^3}{s_0} \geq \frac{(d_4+1)^3}{4mn(d_4+1)^2} = \frac{d_4+1}{4mn}. \]
Also, as $d_4 \geq 4mn$, ${\cal S(f)} = s \leq 2m(d_4+1)^3 \implies d_4+1 \geq (\frac{s}{2m})^{1/3}$. Then, the gap is
\[  \frac{(d_4+1)^3}{s_0} \geq \frac{d_4+1}{4mn} \geq \frac{s^{1/3}}{2^{1/3}4m^{4/3}n}. \]
Finally, note that $s \geq d_4^3$. Thus, for $d_4^{3\epsilon} \geq (mn)^{O(1)}$ large enough, 
\[s^{\epsilon} \geq d_4^{3\epsilon} \geq {2^{1/3}4m^{4/3}n} \implies \frac{s^{1/3}}{2^{1/3}4m^{4/3}n} \geq s^{1/3 - \epsilon}.\]
Hence, the gap is at least $s^{1/3 - \epsilon}$. Therefore, $\SAT$ reduces to $\gapet$ for $\alpha = s^{1/3 - \epsilon}$.
\end{proof} 

\subsection{Analyzing the gap: the homogeneous case} \label{sec:gap homog}
Consider the polynomial $f$ as defined in \eqref{Homogeneous poly} for $\psi$. Choose the $d_i$'s to satisfy the following constraints while also being $(mn)^{O(1)}$.
\begin{equation}\label{Eqn: gap homogeneous ineq}
\begin{split}
    d_5 &\geq \max((4mn,(mn)^{O(1/\epsilon)}), \ d_4 = m(d_5+1)^2 + 1, \ d_3 = d_4^2 + 1, \\
    d_2 &= \max((3n+m+1)(d_3+1),s) + 1, \\ 
    d_1 &= d_2 + (3n+m+1)(d_3+1) + 1. 
\end{split}
\end{equation}
Under these constraints, the conditions in \eqref{ineq_homogeneous} are also satisfied. Let $s := \cal S(f)$. Then, $s = 1+n(2d_4+3) + m(d_5+1)^3$ by Observation \ref{Obs: sparsity poly homogeneous}. For $\psi \in \bar{\SAT}$, Lemma \ref{Lemma: unsat homogeneous gap}, proved using Claim \ref{lem_div}, shows lower bounds on $\cal S(f(A\vecz))$, for all $A \in \GL(|\vecz|,\F)$. In the lemma, Items \ref{case 1 unsat homogeneous gap} and \ref{case 2 unsat homogeneous gap} are essentially Lemma \ref{Lemma: y0,x0 fixed homogeneous}, Items \ref{case 3 unsat homogeneous gap} and \ref{case 4 unsat homogeneous gap} are a deeper analysis of that in Lemmas \ref{Lemma: Qi(Az) analysis} and \ref{Lemma: equality condition Qi homogeneous}, and Item \ref{case 5 unsat homogeneous gap} is the analysis in Proposition \ref{Proposition: bwd direction homogeneous}. For $\psi \in \SAT$, by Proposition \ref{Proposition: fwd direction homogeneous}, there exists $A \in \GL(|\vecz|, \F)$ such that $\cal S(f(A\vecz)) \leq s_0$, where $s_0 = 1 + n(d_4 +3) + m(d_5+1)^2$. Proposition \ref{Proposition: Sparsity gap homogeneous} proves $\gapet$ is $\NP$-hard using Lemma \ref{Lemma: unsat homogeneous gap} and the conditions in \eqref{Eqn: gap homogeneous ineq}. 
\begin{lemma} \label{Lemma: unsat homogeneous gap}
   Let $\psi \in \bar{\SAT}$, $f(\vecz)$ be the polynomial as defined in \eqref{Homogeneous poly} corresponding to $\psi$ and $A \in \GL(|\vecz|,\F)$.
   \begin{enumerate}
       \item \label{case 1 unsat homogeneous gap} If $A(x_0)$ is a linear form in at least $2$ variables, $\cal S(f(A\vecz)) \geq d_1+1$. 
       
       \item \label{case 2 unsat homogeneous gap} If $A$ is not as in item \ref{case 1 unsat homogeneous gap} and $A(y_0)$ is a linear form in at least $2$ variables, $\cal S(f(A\vecz)) \geq d_2+1$.

       \item \label{case 3 unsat homogeneous gap} If $A$ is not as in items \ref{case 1 unsat homogeneous gap} and \ref{case 2 unsat homogeneous gap}, and $A(x_j)$ is a linear form in at least $2$ variables for some $j \in [n]$, then $\cal S(f(A\vecz)) \geq d_3+1$.

       \item \label{case 4 unsat homogeneous gap} If $A$ is not as in items \ref{case 1 unsat homogeneous gap}, \ref{case 2 unsat homogeneous gap} and \ref{case 3 unsat homogeneous gap}, and for some $j \in [n]$, $A(y_j + x_j)$ or $A(y_j - x_j)$ is a linear form in at least $3$ variables, then $\cal S(f(A\vecz)) \geq \frac{d_4^2 + 3d_4 + 2}{2}$. 

       \item \label{case 5 unsat homogeneous gap} If $A$ is not of the form described in the previous four items, then 
       $\cal S(f(A\vecz)) \geq (d_5 + 1)^3.$
   \end{enumerate}
\end{lemma}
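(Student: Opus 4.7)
The plan is to case-split on the structure of $A$ in exactly the order listed, converting each step of the reverse-direction argument of Section~\ref{section-ETSparse-homogenous} into a quantitative lower bound. The central tools are Claim~\ref{lem_div} (divisibility $\Rightarrow$ sparsity), the degree-separation of Lemma~\ref{Lemma: y0 separated}, and the sparsity estimates for powers of linear forms from Observations~\ref{Lemma: linear form sparsity} and \ref{lem_binom}.

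First I would handle items~\ref{case 1 unsat homogeneous gap} and \ref{case 2 unsat homogeneous gap} uniformly. Every summand of $f$ in \eqref{Homogeneous poly} is divisible by $x_0^{d_1}y_0^{d_2}$, so $f(A\vecz)$ is divisible by $A(x_0)^{d_1}A(y_0)^{d_2}$. If $A(x_0)$ is a linear form in at least two variables, Claim~\ref{lem_div} applied with this linear form and exponent $d_1$ immediately gives $\cal S(f(A\vecz))\geq d_1+1$. In item~\ref{case 2 unsat homogeneous gap} we are in the complementary situation $A(x_0)\in\F\cdot z$ for some $z\in\vecz$, so $f(A\vecz)$ remains divisible by $A(y_0)^{d_2}$, and applying Claim~\ref{lem_div} to $A(y_0)$ yields the bound $d_2+1$. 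Next, for items~\ref{case 3 unsat homogeneous gap} and \ref{case 4 unsat homogeneous gap}, both $A(x_0)$ and $A(y_0)$ are now (after a sparsity-preserving scaling/relabeling) $x_0$ and $y_0$, so Lemma~\ref{Lemma: y0 separated} applies and all summands of $f(A\vecz)$ are pairwise degree-separated with respect to $x_0$. Hence $\cal S(f(A\vecz))$ is at least the sparsity of each individual summand. In item~\ref{case 3 unsat homogeneous gap}, inspecting $Q_{j,1}(A\vecz)=x_0^{\bullet}y_0^{\bullet}A(x_j)^{d_3}$ for the relevant $j$ and invoking Observation~\ref{lem_binom} produces the $d_3+1$ bound. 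In item~\ref{case 4 unsat homogeneous gap}, inspecting the appropriate $Q_{j,2}(A\vecz)$ or $Q_{j,3}(A\vecz)$ and invoking Observation~\ref{Lemma: linear form sparsity} with a linear form in at least three variables yields $\binom{d_4+2}{2}=\frac{d_4^2+3d_4+2}{2}$.

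Finally, for item~\ref{case 5 unsat homogeneous gap}, being outside all four previous cases forces $A$ to act (up to scaling and permutation that do not affect sparsity) as $x_0\mapsto x_0$, $y_0\mapsto y_0$, $x_i\mapsto X_i$, $y_i\mapsto Y_i+c_iX_i$ with each $X_i,Y_i\in\vecz$ a single variable. Lemma~\ref{Lemma: Qi(Az) analysis} then forces $c_i=(-1)^{u_i}$ for some $u_i\in\{0,1\}$, because otherwise its dichotomy gives $\cal S(Q_i(A\vecz))\geq 2d_4+3$, which already exceeds $(d_5+1)^3$ by the ordering in \eqref{Eqn: gap homogeneous ineq} and hence closes the case. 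Since $\psi\in\bar{\SAT}$, some clause $C_k$ is unsatisfied by $\vecu=(u_1,\ldots,u_n)$, so for every $j\in C_k$ we have $u_j=a_{k,j}$; thus the three factors of $R_k(A\vecz)$ become $(Y_j\pm 2X_j)^{d_5}$, each of sparsity $d_5+1$ by Observation~\ref{lem_binom}, and since they lie in disjoint variable sets we get $\cal S(R_k(A\vecz))=(d_5+1)^3$. Degree separation with respect to $x_0$ then promotes this to $\cal S(f(A\vecz))\geq (d_5+1)^3$.

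The main obstacle is item~\ref{case 5 unsat homogeneous gap}: one must justify that the negation of items \ref{case 1 unsat homogeneous gap}--\ref{case 4 unsat homogeneous gap} really pins $A$ down to the canonical sparsifying form, and that the degree-parameter choices in \eqref{Eqn: gap homogeneous ineq} prevent any parasitic low-sparsity contribution from another summand from undercutting the clause-based bound. This is what the ordering $d_5\ll d_4\ll d_3\ll d_2\ll d_1$ in \eqref{Eqn: gap homogeneous ineq} is designed for, and combined with Lemma~\ref{Lemma: Qi(Az) analysis}'s equality-or-$2d_4+3$ dichotomy it is straightforward to verify.
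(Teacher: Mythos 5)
Your items \ref{case 1 unsat homogeneous gap}--\ref{case 4 unsat homogeneous gap} follow the paper's route exactly (Claim~\ref{lem_div} for the $x_0$ and $y_0$ cases, then Lemma~\ref{Lemma: y0 separated} plus Observations~\ref{lem_binom} and~\ref{Lemma: linear form sparsity} applied to $Q_{j,1}$, $Q_{j,2}$, $Q_{j,3}$), and those are fine. The problem is your item~\ref{case 5 unsat homogeneous gap}. You claim that if some $c_i\notin\{-1,1\}$, then the dichotomy of Lemma~\ref{Lemma: Qi(Az) analysis} gives $\cal S(Q_i(A\vecz))\geq 2d_4+3$, ``which already exceeds $(d_5+1)^3$ by the ordering in \eqref{Eqn: gap homogeneous ineq}''. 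That inequality is backwards: under \eqref{Eqn: gap homogeneous ineq} we have $d_4=m(d_5+1)^2+1$, so $2d_4+3\approx 2m(d_5+1)^2$, while $(d_5+1)^3\geq (4mn+1)(d_5+1)^2$ since $d_5\geq 4mn$. The whole point of the parameter ordering (see the proof of Proposition~\ref{Proposition: Sparsity gap homogeneous}) is that $(d_5+1)^3$ is the \emph{smallest} of the five bounds, so a contribution of $2d_4+3$ from one block $Q_i$ is strictly below the target and does not close the case. Hence your argument does not pin $c_i$ to $(-1)^{u_i}$, and the assignment $\vecu$ you feed into the unsatisfiability of $\psi$ is not defined for such $i$.

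The repair is to argue as the paper does for the last case of Lemma~\ref{Lemma: unsat gap}: once cases \ref{case 1 unsat homogeneous gap}--\ref{case 4 unsat homogeneous gap} are excluded you know $A(x_i)=X_i$ and $A(y_i)=Y_i+c_iX_i$ with $X_i,Y_i$ distinct variables, but you should allow \emph{arbitrary} $c_i\in\F$ and work directly with the clause polynomials. Each factor of $R_k(A\vecz)$ is $(Y_j+(c_j+(-1)^{a_{k,j}})X_j)^{d_5}$, which has sparsity $1$ only when $c_j=-(-1)^{a_{k,j}}$ (forcing $c_j\in\{-1,1\}$, i.e.\ the literal is satisfied under $u_j$ with $c_j=(-1)^{u_j}$) and sparsity $d_5+1$ otherwise, by Observation~\ref{lem_binom}. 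Interpreting $c_j=(-1)^{u_j}$ when $c_j\in\{-1,1\}$ and choosing $u_j$ arbitrarily otherwise, unsatisfiability of $\psi$ gives a clause $k$ in which no factor collapses; the three factors are variable-disjoint, so $\cal S(R_k(A\vecz))\geq(d_5+1)^3$, and Lemma~\ref{Lemma: y0 separated} with Observation~\ref{lem_deg_sep_sum} lifts this to $\cal S(f(A\vecz))\geq(d_5+1)^3$. With this replacement for item~\ref{case 5 unsat homogeneous gap} the proof matches the paper's.
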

\begin{proposition} \label{Proposition: Sparsity gap homogeneous}
Let $\textnormal{char}(\F) = 0$. If the input in $\gapet$ is an $s$-sparse homogeneous polynomial, then $\gapet$ is $\NP$-hard for $\alpha = s^{1/3-\epsilon}$. 
\end{proposition}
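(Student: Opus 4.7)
The plan is to mirror the proof of Proposition \ref{Proposition: Sparsity gap} almost verbatim, substituting the homogeneous construction of Section \ref{section-ETSparse-homogenous} and its five-case lower bound (Lemma \ref{Lemma: unsat homogeneous gap}) in place of the four-case lower bound (Lemma \ref{Lemma: unsat gap}) used in the non-homogeneous setting. The main task is to verify that the parameter regime chosen in \eqref{Eqn: gap homogeneous ineq} makes $(d_5+1)^3$ the binding minimum on the NO side, and to calibrate $d_5$ so that the resulting gap is $s^{1/3-\epsilon}$ where $s = \cal S(f)$.

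First, for $\psi \in \SAT$, Proposition \ref{Proposition: fwd direction homogeneous} supplies an $A \in \GL(|\vecz|,\F)$ realising $\cal S(f(A\vecz)) \leq s_0$ with $s_0 = 1 + n(d_4+3) + m(d_5+1)^2$. For $\psi \in \bar{\SAT}$, Lemma \ref{Lemma: unsat homogeneous gap} exhausts all invertible $A$, yielding
\[\cal S(f(A\vecz)) \;\geq\; \min\!\left(d_1+1,\; d_2+1,\; d_3+1,\; \tfrac{d_4^2+3d_4+2}{2},\; (d_5+1)^3\right).\]
The inequalities in \eqref{Eqn: gap homogeneous ineq}, in particular $d_3 = d_4^2+1$, $d_2 \geq s > (d_5+1)^3$, and $d_1 > d_2$, force the minimum to be attained by $(d_5+1)^3$.

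Second, I will bound the ratio $(d_5+1)^3 / s_0$ from below. From $d_4 = m(d_5+1)^2 + 1$ we get $s_0 \leq 3nd_4 \leq 4mn(d_5+1)^2$, so
\[\frac{(d_5+1)^3}{s_0} \;\geq\; \frac{d_5+1}{4mn}.\]
Using Observation \ref{Obs: sparsity poly homogeneous}, $s = 1 + n(2d_4+3) + m(d_5+1)^3 \leq 2m(d_5+1)^3$ for large $d_5$, hence $d_5+1 \geq (s/(2m))^{1/3}$, and therefore
\[\frac{(d_5+1)^3}{s_0} \;\geq\; \frac{s^{1/3}}{2^{1/3}\cdot 4m^{4/3}n}.\]
Third, absorbing the $m^{4/3}n$ factor: since $s \geq (d_5+1)^3 \geq d_5^3$ and we have chosen $d_5 \geq (mn)^{O(1/\epsilon)}$, it follows that $s^{\epsilon} \geq d_5^{3\epsilon} \geq 2^{1/3}\cdot 4m^{4/3}n$ once the hidden constant in the exponent is fixed. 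This gives a gap of at least $s^{1/3-\epsilon}$, completing the reduction.

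The only place where I expect to have to be a bit careful is verifying that the $(d_5+1)^3$ term is genuinely the minimum in Lemma \ref{Lemma: unsat homogeneous gap} under the chain $d_1 > d_2 > (3n+m+1)(d_3+1) > d_3 \gg d_4^2/2 \gg (d_5+1)^3$ dictated by \eqref{Eqn: gap homogeneous ineq}; this is a routine inequality check but is the one substantive difference from the non-homogeneous argument, because the homogeneous construction introduces the extra case for $A(y_0)$ and scales $d_1, d_2$ up to be larger than $(3n+m+1)(d_3+1)$. Everything else transfers line-for-line from Proposition \ref{Proposition: Sparsity gap}.
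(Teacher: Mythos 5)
Your proposal is correct and follows essentially the same route as the paper's own proof: forward direction via Proposition \ref{Proposition: fwd direction homogeneous}, the five-case lower bound of Lemma \ref{Lemma: unsat homogeneous gap} with $(d_5+1)^3$ as the binding minimum under \eqref{Eqn: gap homogeneous ineq}, the bound $s_0 \leq 4mn(d_5+1)^2$, and the final absorption $s^{\epsilon} \geq d_5^{3\epsilon} \geq 2^{1/3}\cdot 4 m^{4/3} n$. Your routine check that $(d_5+1)^3$ is indeed the minimum (and your writing $s_0 \leq 3nd_4$, where the paper has a typo $3nd_5$) is exactly the calculation the paper performs.
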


\begin{proof}
If $\psi \in \SAT$, then $\cal S(f(A\vecz)) \leq s_0$ where $A$ is as described in \eqref{action homogeneous case}. If $\psi \in \bar{\SAT}$, then it follows from Lemma \ref{Lemma: unsat homogeneous gap} that for any $A \in \GL(|\vecz|,\F)$: 
\[\cal S(f(A\vecz)) \geq \min\bigg(d_1 + 1,d_2 + 1, d_3 + 1, \frac{d_4^2+3d_4 + 2}{2},(d_5+1)^3\bigg).\]
The constraints imposed in \eqref{Eqn: gap homogeneous ineq} ensure that $(d_5+1)^3$ is the minimum. As $d_4 = m(d_5+1)^2 + 1$, therefore $s_0 = 1 + n(d_4+3) + m(d_5+1)^2 \leq 3nd_5 = 3mn(d_5+1)^2 + 3n \leq 4mn(d_5+1)^2$. Thus, the gap in the sparsities of the YES instances and the NO instances is 
\[  \frac{(d_5+1)^3}{s_0} \geq \frac{(d_5+1)^3}{4mn(d_5+1)^2} = \frac{d_5+1}{4mn}. \]
Also, as $d_5 \geq 4mn$, ${\cal S(f)} = s \leq 2m(d_5+1)^3 \implies d_5+1 \geq (\frac{s}{2m})^{1/3}$. Then, the gap is
\[  \frac{(d_5+1)^3}{s_0} \geq \frac{d_5+1}{4mn} \geq \frac{s^{1/3}}{2^{1/3}4m^{4/3}n}. \]
Finally, note that $s \geq d_5^3$. Thus, for $d_5^{3\epsilon} \geq (mn)^{O(1)}$ large enough, 
\[s^{\epsilon} \geq d_5^{3\epsilon} \geq {2^{1/3}4m^{4/3}n} \implies \frac{s^{1/3}}{2^{1/3}4m^{4/3}n} \geq s^{1/3 - \epsilon}.\]
Hence, the gap is at least $s^{1/3 - \epsilon}$. Therefore, $\SAT$ reduces to $\gapet$ for $\alpha = s^{1/3 - \epsilon}$.
\end{proof}

\begin{example}
    As mentioned in the second remark near the end of Section \ref{section-ETSparse-homogenous}, a simpler construction of a homogeneous $f$ with four degree parameters is possible. This simpler construction can be used to show the $\NP$-hardness of $\gapet$. The reason we use the construction currently defined in Section \ref{section-ETSparse-homogenous} is that it allows us to prove part $2$ of Theorem \ref{Theorem: gap ETSparse NP-hard} for homogeneous polynomials over finite characteristic fields and characteristic $0$ fields without using separate constructions.
\end{example}
\subsection{Extension to finite characteristic fields} \label{sec:gap finite char}
In this section, we will show how the construction of Sections \ref{subsubsection: sparse f construction} and \ref{section-ETSparse-homogenous}, with some changes for appropriate cases, can be used to show the $\NP$-hardness of $\gapet$ over finite characteristic fields for the non-homogeneous case (in the following section) and homogeneous case (in Section \ref{Subsection: Gap result homogeneous char p}), respectively.

\subsubsection{The non-homogeneous case} \label{Subsection: Gap result char p}
Let the characteristic be $p$, where $p > 2$. If $p > d_1$, where $d_1$ is as chosen in Section \ref{sec:gap non-homog}, then the argument of that section holds. Hence, it is assumed that $p \leq d_1 =  (mn)^{O(1)}$. Consider the polynomial $f$ defined in \eqref{Definition: 3SAT poly}. Let $\epsilon \in (0,1/3)$ be an arbitrary constant. Choose the $d_i$'s to be of form $p^k-1$ for some $k \in \N$ to satisfy the following inequalities, along with those of \eqref{ineq}:
\begin{equation}\label{Eqn: gap ineq prime}
    d_4 \geq \max(3pmn,(mn)^{O(1/\epsilon)}), \  d_3 > m(d_4+1)^2, \ d_2 > (d_3+1)^2, \ d_1 > d_2. 
\end{equation}
Note that we can get $d_3 = O(pm(d_4+1)^2)$, $d_2 = O(p(d_3+1)^2)$ and $d_1 = O(pd_2)$. Let $s :=\cal S(f)$. Then, $s = 1 + n(2d_3+3) + m(d_4+1)^3$ by Observation \ref{Obs: ETsparse poly sparsity }. For $\psi \in \bar{\SAT}$, Lemma \ref{Lemma: unsat gap char p} shows lower bounds on  $\cal S(f(A\vecz))$ for all $A \in \GL(|\vecz|,\F)$. Like Lemma \ref{Lemma: unsat gap}, Lemma \ref{Lemma: unsat gap char p} is a slightly deeper analysis of that in the reverse direction of Section \ref{section-inhomogeneous-finite}. For $\psi \in \SAT$, by Proposition \ref{Proposition: fwd direction} there exists $A \in \GL(|\vecz|,\F)$ such that $\cal S(f(A\vecz)) \leq s_0$, where $s_0 = 1+n(d_3+3) + m(d_4+1)^2$. Proposition \ref{Proposition: Sparsity gap prime} shows the $\NP$-hardness of $\gapet$ using Lemma \ref{Lemma: unsat gap char p} and the inequalities in \eqref{Eqn: gap ineq prime}.
\begin{lemma} \label{Lemma: unsat gap char p}
   Let $\psi \in \bar{\SAT}$, $f$, as defined in \eqref{Definition: 3SAT poly}, be the polynomial corresponding to $\psi$ and $A \in \GL(|\vecz|,\F)$.
   \begin{enumerate}
       \item \label{case 1 unsat gap char p} If $A(x_0)$ is a linear form in at least $2$ variables, $\cal S(f(A\vecz)) \geq d_1 + 1$.

       \item \label{case 2 unsat gap char p} If $A$ is not as in item \ref{case 1 unsat gap char p} and for some $j \in [n]$, $A(x_j)$ is a linear form in at least $2$ variables, then $\cal S(f(A\vecz)) \geq d_2 + 1$.

       \item \label{case 3 unsat gap char p} If $A$ is not as in item \ref{case 1 unsat gap char p} and \ref{case 2 unsat gap char p} and for some $j \in [n]$, $A(y_j + x_j)$ or $A(y_j - x_j)$ is a linear form in at least $3$ variables, $\cal S(f(A\vecz)) \geq (d_3+1)^{1.63}$. 

       \item \label{case 4 unsat gap char p} If $A$ is not of the form described in the previous three cases, then 
       $\cal S(f(A\vecz)) \geq (d_4 + 1)^3.$
   \end{enumerate}
\end{lemma}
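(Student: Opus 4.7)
The plan is to mirror the structure of Lemma~\ref{Lemma: unsat gap} while systematically replacing every appeal to the characteristic-$0$ identity $\cal S(\ell^d) = d + 1$ by its characteristic-$p$ counterparts from Observations~\ref{lem_binom} and~\ref{Lemma: linear form sparsity}. This is the entire reason we forced each $d_i$ in \eqref{Eqn: gap ineq prime} to have the form $p^k - 1$: with that restriction both observations become usable on every summand we will isolate. In all four items, the shape of the argument is the same --- use degree separation (total degree in item~\ref{case 1 unsat gap char p}, $x_0$-degree afterwards) to argue that a single summand of $f(A\vecz)$ contributes its sparsity in full to $\cal S(f(A\vecz))$, and then lower bound that summand's sparsity.

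For item~\ref{case 1 unsat gap char p}: since $A$ is linear (translations are ignored here), $A(x_0^{d_1})$ is the unique homogeneous component of $f(A\vecz)$ of total degree $d_1$, because every other summand of $f$ has strictly larger total degree under \eqref{Eqn: gap ineq prime}. Observation~\ref{lem_binom} in characteristic~$p$ then gives $\cal S(A(x_0)^{d_1}) \geq d_1 + 1$, since $A(x_0)$ is a linear form in $\geq 2$ variables and $d_1$ is of the form $p^k - 1$. For item~\ref{case 2 unsat gap char p}, after assuming $A(x_0) = x_0$ WLOG (valid up to non-zero scaling), $Q_{j,1}(A\vecz) = x_0^{(3j-2)d_1} A(x_j)^{d_2}$ is the only summand of $f(A\vecz)$ with $x_0$-degree $(3j-2)d_1$ (using $d_1 \gg d_2, d_3, d_4$ in \eqref{Eqn: gap ineq prime}), so its sparsity is absorbed in full; Observation~\ref{lem_binom} again delivers $\cal S(A(x_j)^{d_2}) \geq d_2 + 1$.

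For item~\ref{case 3 unsat gap char p} --- the step I expect to be the main obstacle --- the offending summand is $Q_{j,2}(A\vecz)$ or $Q_{j,3}(A\vecz)$, which equals a monomial in $x_0$ times $L^{d_3}$ for some linear form $L = A(y_j \pm x_j)$ in $\geq 3$ variables. It is still $x_0$-degree separated from every other summand, so it suffices to lower bound $\cal S(L^{d_3})$. Observation~\ref{lem_binom} alone is too weak here; I will invoke Observation~\ref{Lemma: linear form sparsity} with $d_3 + 1 = p^k$, so every base-$p$ digit of $d_3$ equals $p - 1$. For a form in three variables this yields
\[ \cal S(L^{d_3}) \;\geq\; \binom{p+1}{2}^{k} \;=\; \bigl(p(p+1)/2\bigr)^{k} \;=\; (d_3 + 1)\cdot \bigl((p+1)/2\bigr)^{k}. \]
The tightest inequality to verify is then $\bigl((p+1)/2\bigr)^{k} \geq (d_3+1)^{0.63} = p^{0.63 k}$, i.e.\ $(p+1)/2 \geq p^{0.63}$. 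This is worst at $p = 3$, where it becomes $2 \geq 3^{0.63}$; since $\log_3 2 = 0.6309\ldots > 0.63$, the inequality holds with a tiny margin, and for $p \geq 5$ there is ample slack. This is what pins down the exponent $1.63$ stated in the lemma.

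For item~\ref{case 4 unsat gap char p}, the three reductions above together with the ``$\leq 2$ variables'' hypotheses on $A(y_j \pm x_j)$ force $A$, up to permutation and non-zero scaling of variables, to have exactly the shape of \eqref{action}. Since $\psi \in \bar{\SAT}$, the contrapositive of Proposition~\ref{Proposition: bwd direction} (which also holds in characteristic $p > 2$ via the analysis in Section~\ref{section-inhomogeneous-finite}) produces an unsatisfied clause $k$ for which $R_k(A\vecz) = x_0^{(3n+k)d_1}\prod_{j \in C_k}(y_j \pm 2 x_j)^{d_4}$ is a product of three $d_4$-th powers of linear forms in two variables with pairwise disjoint supports. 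Each factor has sparsity $d_4 + 1$ by Observation~\ref{lem_binom} (using $d_4 = p^k - 1$); multiplicativity across disjoint supports together with $x_0$-degree separation then yields $\cal S(f(A\vecz)) \geq \cal S(R_k(A\vecz)) = (d_4 + 1)^3$.
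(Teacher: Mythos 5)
Your items 1--3 are essentially the paper's own proof. The paper decomposes $\cal S(f(A\vecz))$ across the degree-separated summands once and for all (Observations \ref{Obs: deg separated polys} and \ref{Lemma: equiv deg sep}), whereas you re-derive the needed inequalities via total-degree and $x_0$-degree separation, but this is the same mechanism; and your item~\ref{case 3 unsat gap char p} is exactly the paper's computation: Observation \ref{Lemma: linear form sparsity} with every base-$p$ digit of $d_3 = p^t-1$ equal to $p-1$ gives $\cal S(L^{d_3}) \geq \bigl(p(p+1)/2\bigr)^{t} = (d_3+1)^{1+\log_p((p+1)/2)}$, and the exponent is worst at $p=3$ where $\log_3 2 \approx 0.6309 > 0.63$.

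The flaw is in your item~\ref{case 4 unsat gap char p}. The hypotheses of that case do \emph{not} force $A$ to have ``exactly the shape of \eqref{action}'': they only force $A(x_i) = X_i$ and $A(y_i) = Y_i + c_iX_i$ for scaled variables $X_i, Y_i \in \vecz$ and \emph{arbitrary} $c_i \in \F$ (for instance $c_i = 0$ or $c_i = 3$ is perfectly consistent with $A(y_i + x_i)$ and $A(y_i - x_i)$ having at most two variables). Consequently $R_k(A\vecz)$ need not look like $x_0^{(3n+k)d_1}\prod_{j\in C_k}(y_j \pm 2x_j)^{d_4}$, and ``the contrapositive of Proposition \ref{Proposition: bwd direction}'' does not by itself produce the clause you need. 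The correct argument, which is the paper's case 4 of Lemma \ref{Lemma: unsat gap} carried over to characteristic $p>2$, is: write $R_k(A\vecz) = X_0^{(3n+k)d_1}\prod_{j \in C_k}\bigl(Y_j + (c_j + (-1)^{a_{k,j}})X_j\bigr)^{d_4}$ and note a factor collapses to a single variable only if $c_j = -(-1)^{a_{k,j}} \in \{\pm 1\}$. Define $u_j$ by $(-1)^{u_j} = c_j$ whenever $c_j \in \{\pm 1\}$ and arbitrarily otherwise; since $\psi \in \bar{\SAT}$, some clause $k$ has $u_j = a_{k,j}$ for all $j \in C_k$, and for that clause every coefficient $c_j + (-1)^{a_{k,j}}$ is nonzero (it equals $\pm 2 \neq 0$ because $p>2$ when $c_j \in \{\pm 1\}$, and is automatically nonzero when $c_j \notin \{\pm 1\}$). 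Each factor then has sparsity $d_4+1$ by Observation \ref{lem_binom} (as $d_4$ is of the form $p^t-1$), and variable-disjointness together with the degree-separated decomposition gives $\cal S(f(A\vecz)) \geq \cal S(R_k(A\vecz)) = (d_4+1)^3$. With this repair your proof coincides with the paper's.
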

\begin{proposition} \label{Proposition: Sparsity gap prime}
Let $\textnormal{char}(\F) = p > 2$. If the input in $\gapet$ is an $s$-sparse polynomial, then $\gapet$ is $\NP$-hard for $\alpha = s^{1/3-\epsilon}$.
\end{proposition}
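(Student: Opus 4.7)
The plan is to mirror the proof of Proposition \ref{Proposition: Sparsity gap} almost verbatim, substituting Lemma \ref{Lemma: unsat gap char p} for Lemma \ref{Lemma: unsat gap}. Since every $d_i$ in \eqref{Eqn: gap ineq prime} is chosen to be of the form $p^k - 1$, Observation \ref{lem_binom} computes $\cal S(\ell^{d_i})$ in characteristic $p$ exactly as it does in characteristic $0$, so Proposition \ref{Proposition: fwd direction} goes through unchanged in the YES case: any satisfying assignment $\vecu$ for $\psi$ yields an invertible $A$ with $\cal S(f(A\vecz)) \leq s_0 := 1 + n(d_3 + 3) + m(d_4+1)^2$.

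For the NO case, I would invoke Lemma \ref{Lemma: unsat gap char p} to bound $\cal S(f(A\vecz))$ from below by
\[\min\paren{d_1 + 1,\ d_2 + 1,\ (d_3+1)^{1.63},\ (d_4+1)^3}\]
over all $A \in \GL(|\vecz|, \F)$, and then verify that the minimum equals $(d_4+1)^3$. From the ordering $d_1 > d_2 > (d_3+1)^2 \geq (d_3+1)^{1.63}$ enforced by \eqref{Eqn: gap ineq prime} the first two terms are dominated, while $d_3 > m(d_4+1)^2$ gives $(d_3+1)^{1.63} > m^{1.63}(d_4+1)^{3.26} \geq (d_4+1)^3$. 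Next, using $d_3 = O(pm(d_4+1)^2)$ I get $s_0 = O(npm(d_4+1)^2)$, and the lower bound $d_4 \geq 3pmn$ makes $m(d_4+1)^3$ dominate $s = \cal S(f)$, yielding $d_4 + 1 = \Omega((s/m)^{1/3})$. Combining these gives a gap of $\Omega(s^{1/3}/(nm^{4/3}p))$; finally, the condition $d_4 \geq (mn)^{O(1/\epsilon)}$ with $p = (mn)^{O(1)}$ forces $s^\epsilon \geq d_4^{3\epsilon} \geq (mn)^{O(1)}$, promoting the gap to at least $s^{1/3 - \epsilon}$.

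I expect two mild obstacles. First, $d_i$'s of the restricted form $p^k - 1$ must be chosen to simultaneously satisfy both \eqref{ineq} and \eqref{Eqn: gap ineq prime} while remaining $(mn)^{O(1)}$; this is feasible because every interval $[k, pk]$ contains some integer $p^j - 1$, so each rounding step costs only a factor of $p$. Second, because Lemma \ref{Lemma: unsat gap char p} delivers only the weakened bound $(d_3+1)^{1.63}$ in its third case (versus $\Theta(d_3^2)$ over characteristic zero), the strengthened constraint $d_3 > m(d_4+1)^2$ in \eqref{Eqn: gap ineq prime} — rather than merely $d_3 > (d_4+1)^2$ — is essential to keep $(d_3+1)^{1.63}$ above $(d_4+1)^3$; losing sight of this is the one place the characteristic-zero template would fail to transplant cleanly.
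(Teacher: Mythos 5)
Your proposal is correct and follows essentially the same route as the paper's proof: upper-bound the YES case via Proposition \ref{Proposition: fwd direction} (valid since the $d_i$'s have the form $p^k-1$), lower-bound the NO case via Lemma \ref{Lemma: unsat gap char p}, check that $(d_4+1)^3$ is the minimum using $d_1 > d_2 > (d_3+1)^2$ and $d_3 > m(d_4+1)^2$, and then convert the gap $(d_4+1)/(3pmn)$ into $s^{1/3-\epsilon}$ using $s \leq 2m(d_4+1)^3$, $s \geq d_4^3$, and $d_4 \geq (mn)^{O(1/\epsilon)}$. The two caveats you flag (rounding each $d_i$ up to the nearest $p^j-1$ at a cost of a factor $p$, and needing $d_3 > m(d_4+1)^2$ to compensate for the weaker exponent $1.63$) are exactly the points the paper handles in Section \ref{Subsection: Gap result char p}.
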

\begin{proof}
The proof is similar to that of Proposition \ref{Proposition: Sparsity gap}. If $\psi$ is satisfiable, then $\cal S(f(A\vecz)) \leq s_{0}$ where $A$ is as described in \eqref{action} and $s_{0} = 1 + n(d_3+3) + m(d_4+1)^2$. For unsatisfiable $\psi$, it follows from Lemma \ref{Lemma: unsat gap char p} that for any $A \in \GL(|\vecz|,\F)$: 
\[\cal S(f(A\vecz)) \geq \min(d_1 + 1,d_2 + 1,(d_3+1)^{1.63},(d_4+1)^3).\]
As $d_3 > m(d_4+1)^2$, therefore $s_0 = 1 + n(d_3+3) + m(d_4+1)^2 \leq 3nd_3 \leq 3pmn(d_4+1)^2$.  The conditions imposed in \eqref{Eqn: gap ineq prime} ensure that $d_1 + 1 > d_2 + 1 > (d_3+1)^{1+\log_p((p+1)/2)} > (d_4+1)^3 > s_{0}$. Thus, the gap in the sparsities of the YES instances and NO instances is 
\[  \frac{(d_4+1)^3}{s_{0}} \geq \frac{(d_4+1)^3}{3pmn(d_4+1)^2} = \frac{d_4+1}{3pmn}. \]
Also, note as $d_4 \geq 3pmn$, therefore $s \leq 2m(d_4+1)^3 \implies d_4+1 \geq (\frac{s}{2m})^{1/3}$. Then, the gap is
\[  \frac{(d_4+1)^3}{s_{0}} \geq \frac{d_4+1}{3pmn} \geq \frac{s^{1/3}}{p2^{1/3}3m^{4/3}n}. \]
Finally, note that $s \geq d_4^3$. Thus, for $d_4^{3\epsilon} \geq (mn)^{O(1)}$ large enough, 
\[s^{\epsilon} \geq d_4^{3\epsilon} \geq {p2^{1/3}3m^{4/3}n} \implies \frac{s^{1/3}}{p2^{1/3}3m^{4/3}n} \geq s^{1/3 - \epsilon}.\]
Hence, the gap is at least $s^{1/3 - \epsilon}$. Therefore, $\SAT$ reduces to $\gapet$ for $\alpha = s^{1/3 - \epsilon}$.
\end{proof}

\subsubsection*{\underline{Analysis over characteristic $2$ fields}} \label{Subsection: Gap result char 2}
 In this case, we consider the construction of Section \ref{section-inhomogeneous-finite}. For a $\CNF$ $\psi$, let $f$ be the corresponding polynomial as defined in \eqref{char 2 poly}. Let $s := \cal S(f)$. Over characteristic $2$ fields, the value of $s$ depends on the number of variables complemented in a clause. To show the hardness of $\gapet$, we require $s \geq d_4^3$ (see the proof of Proposition \ref{Proposition: Sparsity gap char 2}). This can be achieved if there exists a clause with all variables complemented. Hence, we assume, without loss of generality, that there is such a clause in $\psi$.\footnote{To have some clause, say the first one, contain only complemented variables, every uncomplemented variable $x$ in the clause can be replaced by $\neg x$ followed by complementing each occurrence of $x$ in the remaining clauses. \label{footnote: complemented clause}} The $d_i$'s are chosen in the same way as in Section \ref{Subsection: Gap result char p} to satisfy \eqref{Eqn: gap ineq prime} with $p$ set to $2$. In particular, they satisfy the following inequalities.
\begin{equation}\label{Eqn: gap ineq 2}
    d_4 \geq \max(6mn,(mn)^{O(1/\epsilon)}), \  d_3 > m(d_4+1)^2, \ d_2 > (d_3+1)^2, \ d_1 > d_2. 
\end{equation}
By Observation \ref{Obs: poly sparsity char 2 case} and the assumption on $\psi$, it holds that 
\[1 + n(d_3 + 3) + (d_4 + 1)^3 \leq s \leq 1 + n(d_3 + 3) + m(d_4 + 1)^3.\]
For $\psi \in \bar{\SAT}$, Lemma \ref{Lemma: unsat gap char 2} shows lower bounds on $\cal S(f(A\vecz))$ where $A \in \GL(|\vecz|,\F)$. For $\psi \in \SAT$, by Proposition \ref{Proposition: fwd direction char 2} there exists $A \in \GL(|\vecz|,\F)$ such that $\cal S(f(A\vecz)) \leq s_0$, where $s_0 = 1 + n(d_3 + 3) + m(d_4+1)^2$. Proposition \ref{Proposition: Sparsity gap char 2} shows the $\NP$-hardness of $\gapet$ using Lemma \ref{Lemma: unsat gap char 2} and the inequalities in \eqref{Eqn: gap ineq 2}.

\begin{lemma} \label{Lemma: unsat gap char 2}
   Let $\psi \in \bar{\SAT}$, $f$, as defined in \eqref{char 2 poly}, be the polynomial corresponding to $\psi$ and $A \in \GL(|\vecz|,\F)$.
   \begin{enumerate}
       \item \label{case 1 unsat gap char 2} If $A(x_0)$ is a linear form in at least $2$ variables, $\cal S(f(A\vecz)) \geq d_1 + 1$.

       \item \label{case 2 unsat gap char 2} If $A$ is not as in item \ref{case 1 unsat gap char p} and for some $j \in [n]$, $A(x_j)$ is a linear form in at least $2$ variables, then $\cal S(f(A\vecz)) \geq d_2 + 1$.

       \item If $A$ is not as in item \ref{case 1 unsat gap char 2} and \ref{case 2 unsat gap char 2} and for some $j \in [n]$, $A(y_j + x_j)$ or $A(y_j)$ is a linear form in at least $3$ variables, $\cal S(f(A\vecz)) \geq (d_3+1)^{1.58}$. 

       \item If $A$ is not of the form described in the previous three cases, then 
       $\cal S(f(A\vecz)) \geq (d_4 + 1)^3.$
   \end{enumerate}
\end{lemma}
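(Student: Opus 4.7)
The plan is to mirror the argument for Lemma~\ref{Lemma: unsat gap char p}, making three modifications specific to characteristic~$2$: $f$ is the polynomial from~\eqref{char 2 poly} (which uses $y_j^{d_3}$ in place of $(y_j - x_j)^{d_3}$ and $(y_j + a_{k,j}x_j)^{d_4}$ in place of $(y_j + (-1)^{a_{k,j}}x_j)^{d_4}$); each of $d_1, d_2, d_3, d_4$ is chosen of the form $2^t - 1$ as in Section~\ref{section-inhomogeneous-finiteparam}; and sparsities of powers of linear forms are read off from Observation~\ref{Lemma: linear form sparsity} via Lucas's theorem, which gives $\binom{2^t-1}{i} \equiv 1 \pmod{2}$ for every $0 \leq i \leq 2^t - 1$, so a linear form $\ell$ in exactly $m$ variables satisfies $\cal S(\ell^{2^t-1}) = m^t$. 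The analog of Observation~\ref{Obs: deg separated polys} for~\eqref{char 2 poly} shows that $x_0^{d_1}$, the $Q_{i,j}(\vecz)$'s, and the $R_k(\vecz)$'s are pairwise degree-separated by total degree. Since $A$ preserves total degree, Observation~\ref{Lemma: equiv deg sep} gives
\[
\cal S(f(A\vecz)) = \cal S(A(x_0^{d_1})) + \sum_{i=1}^{n} \sum_{j=1}^{3} \cal S(A(Q_{i,j}(\vecz))) + \sum_{k=1}^{m} \cal S(A(R_k(\vecz))),
\]
so a lower bound on any single summand lower bounds $\cal S(f(A\vecz))$.

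For items~1--3 we pick the right summand. Item~1 is immediate: Observation~\ref{Lemma: linear form sparsity} applied to $A(x_0^{d_1})$ (with $m \geq 2$) yields $\cal S(A(x_0^{d_1})) \geq 2^t = d_1 + 1$. For item~2, since $A(x_0)$ is a single scaled variable, the factor $A(x_0)^{(3j-2)d_1}$ in $A(Q_{j,1}(\vecz)) = A(x_0)^{(3j-2)d_1}\, A(x_j)^{d_2}$ is just a monomial multiplier, so $\cal S(A(Q_{j,1}(\vecz))) = \cal S(A(x_j)^{d_2}) \geq d_2 + 1$. For item~3, the same reasoning applied to $A(y_j+x_j)^{d_3}$ inside $A(Q_{j,2}(\vecz))$, or to $A(y_j)^{d_3}$ inside $A(Q_{j,3}(\vecz))$, with a linear form in at least three variables, gives $\cal S \geq 3^t = (d_3+1)^{\log_2 3} \geq (d_3+1)^{1.58}$.

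Item~4 is the delicate case and the only place where the unsatisfiability of $\psi$ enters. Under its hypotheses (ruling out items~1--3), every $A(y_j)$ must take the shape $\gamma_j X_j + \delta_j Y_j$ for scalars $\gamma_j, \delta_j \in \F$ with $\delta_j \neq 0$, where $X_j := A(x_j)$ and $Y_j$ is a scaled variable distinct from every $X_i$; any other $2$-variable shape for $A(y_j)$ would force $A(y_j + x_j)$ to involve three variables, contradicting the exclusion of item~3, and $\delta_j = 0$ would violate invertibility of $A$. Substituting,
\[
A(R_k(\vecz)) = A(x_0)^{(3n+k)d_1} \prod_{j \in C_k} \bigl((\gamma_j + a_{k,j})\, X_j + \delta_j\, Y_j\bigr)^{d_4},
\]
and because the variables $X_j, Y_j$ across $j \in C_k$ are pairwise disjoint, the $j$-th factor has sparsity $1$ if $\gamma_j + a_{k,j} = 0$ and $d_4 + 1$ otherwise (Observation~\ref{lem_binom} at $d_4 = 2^t - 1$). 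We then define $\vecu \in \{0,1\}^n$ by $u_j = 1 + \gamma_j$ whenever $\gamma_j \in \{0,1\} \subseteq \F$ and $u_j = 0$ otherwise; since $\psi$ is unsatisfiable some clause $k_0$ is falsified by $\vecu$, i.e.\ $u_j = a_{k_0,j}$ for every $j \in C_{k_0}$, and a brief case check shows $\gamma_j + a_{k_0,j} \neq 0$ for each such $j$, yielding $\cal S(A(R_{k_0}(\vecz))) = (d_4+1)^3$. The hardest point we expect to navigate is the possibility that some $\gamma_j$ lies in $\F \setminus \F_2$ (which can happen over extensions of $\F_2$) and does not correspond to any Boolean assignment; the rescue is that $a_{k,j} \in \{0,1\}$ together with $\gamma_j \notin \{0,1\}$ automatically forces $\gamma_j + a_{k,j} \neq 0$, so such coordinates may be assigned arbitrarily in $\vecu$ without spoiling the count of three ``large'' factors in $R_{k_0}(A\vecz)$.
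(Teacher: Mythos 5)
Your proposal is correct and follows essentially the same route as the paper: decompose $\cal S(f(A\vecz))$ via degree separation, bound cases 1--3 through Observation \ref{Lemma: linear form sparsity} with the $d_i$'s of the form $2^t-1$ (giving $2^t$ and $3^t=(d_3+1)^{\log_2 3}$ exactly as in the paper's case-3 computation), and in case 4 use invertibility to force $A(y_j)=\gamma_j X_j+\delta_j Y_j$ and unsatisfiability to find a clause with three non-collapsing factors of sparsity $d_4+1$ each. Your explicit treatment of the correspondence $u_j=1+\gamma_j$ and of coordinates with $\gamma_j\notin\{0,1\}$ is a correct fleshing-out of a step the paper leaves implicit by deferring to the earlier lemmas.
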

\begin{proposition} \label{Proposition: Sparsity gap char 2}
Let $\textnormal{char}(\F) = 2$. If the input in $\gapet$ is an $s$-sparse polynomial, then $\gapet$ is $\NP$-hard for $\alpha = s^{1/3-\epsilon}$.
\end{proposition}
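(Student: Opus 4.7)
The plan is to mirror the proof of Proposition \ref{Proposition: Sparsity gap prime}, combining Proposition \ref{Proposition: fwd direction char 2} for the completeness side with Lemma \ref{Lemma: unsat gap char 2} for the soundness side. The one feature specific to characteristic $2$ is that the sparsity of $f$ itself depends on the complementation pattern within each clause of $\psi$ (see Observation \ref{Obs: poly sparsity char 2 case}), so we exploit the standing assumption (justified by footnote \ref{footnote: complemented clause}) that $\psi$ contains at least one clause whose variables are all complemented, in order to guarantee the lower bound $s \geq (d_4+1)^3$. Without this extra structural assumption on $\psi$, $s$ could be as small as $1+n(d_3+3)+m(d_4+1)$, which would fail to absorb the $m^{4/3}n$ slack that appears in the final gap estimate.

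First, for $\psi \in \SAT$, Proposition \ref{Proposition: fwd direction char 2} produces an $A \in \GL(|\vecz|,\F)$ of the form in \eqref{action char 2} with $\cal S(f(A\vecz)) \leq s_0 := 1 + n(d_3 + 3) + m(d_4 + 1)^2$. For $\psi \in \bar{\SAT}$, Lemma \ref{Lemma: unsat gap char 2} gives, for every $A \in \GL(|\vecz|,\F)$,
\[\cal S(f(A\vecz)) \,\geq\, \min\bigl(d_1+1,\; d_2+1,\; (d_3+1)^{1.58},\; (d_4+1)^3\bigr).\]
The inequalities $d_4 \geq 6mn$, $d_3 > m(d_4+1)^2$, $d_2 > (d_3+1)^2$, $d_1 > d_2$ from \eqref{Eqn: gap ineq 2} make $(d_4+1)^3$ the minimum term on the right-hand side: $(d_3+1)^{1.58}$ dominates $(d_4+1)^3$ because $d_3 > m(d_4+1)^2$ with $d_4$ large, while $d_2+1 > (d_3+1)^2$ dominates $(d_3+1)^{1.58}$, and $d_1+1 > d_2+1$.

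Having isolated $(d_4+1)^3$ as the soundness lower bound, the gap calculation proceeds exactly as in Proposition \ref{Proposition: Sparsity gap prime}. The constraint $d_3 > m(d_4+1)^2$ combined with $d_4 \geq 6mn$ yields $s_0 = O(mn(d_4+1)^2)$, so the ratio $(d_4+1)^3/s_0$ is $\Omega((d_4+1)/(mn))$. Observation \ref{Obs: poly sparsity char 2 case} together with the clause-complementation assumption gives $(d_4+1)^3 \leq s \leq 2m(d_4+1)^3$, whence $d_4 + 1 \geq (s/(2m))^{1/3}$. Combining, the gap is $\Omega\bigl(s^{1/3}/(m^{4/3} n)\bigr)$, and the choice $d_4 \geq (mn)^{O(1/\epsilon)}$ forces $s \geq d_4^3$ to be large enough that $s^\epsilon$ absorbs the $m^{4/3} n$ factor, yielding a gap of at least $s^{1/3-\epsilon}$. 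The main obstacle, already encapsulated in Lemma \ref{Lemma: unsat gap char 2} rather than in this proof, is verifying the slightly weaker exponent $1.58$ in the third case (in place of the $1.63$ of the $p>2$ case); this arises from the $p=2$ specialization of Observation \ref{Lemma: linear form sparsity} and is what determines the precise form of the intermediate term in the soundness minimum, but does not re-enter the final $s^{1/3-\epsilon}$ bound because $(d_3+1)^{1.58}$ is not the minimum of the four.
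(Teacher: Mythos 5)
Your proposal is correct and follows essentially the same route as the paper's proof: completeness via Proposition \ref{Proposition: fwd direction char 2}, soundness via Lemma \ref{Lemma: unsat gap char 2} with the constraints of \eqref{Eqn: gap ineq 2} isolating $(d_4+1)^3$ as the minimum, the bound $s_0 \leq 6mn(d_4+1)^2$, and the all-complemented-clause assumption giving $s \geq (d_4+1)^3$ so that $s^{\epsilon}$ absorbs the $m^{4/3}n$ slack. The only quibble is the parenthetical claim about how small $s$ could be without that assumption (the $R_k$ terms could contribute as little as $1$ each, and the $n(d_3+3)$ term keeps $s$ fairly large anyway), but this does not affect the argument.
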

\begin{proof}  
The proof is similar to that of Proposition \ref{Proposition: Sparsity gap prime}. If $\psi$ is satisfiable, then $\cal S(f(A\vecz)) \leq s_{0}$ where $A$ is as described in \eqref{action} and $s_{0} = 1 + n(d_3+3) + m(d_4+1)^2$. For unsatisfiable $\psi$, it follows from Lemma \ref{Lemma: unsat gap char p} that for any $A \in \GL(|\vecz|,\F)$: 
\[\cal S(f(A\vecz)) \geq \min(d_1 + 1,d_2 + 1,(d_3+1)^{1.58},(d_4+1)^3).\]
As $d_3 > m(d_4+1)^2$, therefore $s_0 = 1 + n(d_3+3) + m(d_4+1)^2 \leq 3nd_3 \leq 6mn(d_4+1)^2$.  The conditions imposed in \eqref{Eqn: gap ineq 2} ensure that $d_1 + 1 > d_2 + 1 > (d_3+1)^{1.58} > (d_4+1)^3 > s_{0}$. Consequently, the gap in the sparsities of the YES instances and NO instances is 
\[  \frac{(d_4+1)^3}{s_{0}} \geq \frac{(d_4+1)^3}{6mn(d_4+1)^2} = \frac{d_4+1}{6mn}. \]
Also, note as $d_4 \geq 6mn$, therefore $s \leq 2m(d_4+1)^3 \implies d_4+1 \geq (\frac{s}{2m})^{1/3}$. Then, the gap is
\[  \frac{(d_4+1)^3}{s_{0}} \geq \frac{d_4+1}{6mn} \geq \frac{s^{1/3}}{2^{4/3}3m^{4/3}n}. \]
Finally, note that $s \geq d_4^3$. Thus, for $d_4^{3\epsilon} \geq (mn)^{O(1)}$ large enough, 
\[s^{\epsilon} \geq d_4^{3\epsilon} \geq {2^{4/3}3m^{4/3}n} \implies \frac{s^{1/3}}{2^{4/3}3m^{4/3}n} \geq s^{1/3 - \epsilon}.\]
Hence, the gap is at least $s^{1/3 - \epsilon}$. Therefore, $\SAT$ reduces to $\gapet$ for $\alpha = s^{1/3 - \epsilon}$.
\end{proof}

\subsubsection{The homogeneous case} \label{Subsection: Gap result homogeneous char p}
Let the characteristic be $p$, where $p > 2$. If $p > d_1$, where $d_1$ is as chosen in Section \ref{sec:gap homog}, then the argument of that section holds. Hence, assume $p \leq d_1 =  O((mn)^{O(1)})$. Consider the polynomial $f$ defined in \eqref{Homogeneous poly}. Choose $d_3$, $d_4$ and $d_5$ to be of form $p^k-1$ for some $k \in \N$ while also satisfying: 
\begin{equation}\label{Eqn: gap ineq char p}
\begin{split}
    d_5 &\geq \max(3pmn,(mn)^{O(1/\epsilon)}), \ d_4 > m(d_5+1)^2, \ d_3 > (d_4+1)^2. 
\end{split}
\end{equation}
Thus, $d_4 = O(pm(d_5+1)^2)$ and $d_3 = O(p(d_4+1)^2)$. Now, let $k_1 := \lfloor \log_p(d_3+2) \rfloor + 1, k_2 := \lfloor \log_p((3n+m+1)(d_3+1)) \rfloor + 1$, then set 
\begin{equation}
d_2 := \sum_{i=k_2}^{k_1+k_2-1} (p-1)p^i = p^{k_1+k_2} - p^{k_2}.
\end{equation}
Lastly, let $k_3 := \lfloor \log_p(d_2+(3n+m+1)(d_3+1)) \rfloor + 1$, then set
\begin{equation}
d_1 := \sum_{i=k_3}^{k_3+k_1-1} (p-1)p^i = p^{k_1+k_3} - p^{k_3}.
\end{equation}
For this choice of the $d_i$'s, the conditions in \eqref{ineq_homogeneous} are satisfied. Let $s := \cal S(f)$. By Observation \ref{Obs: sparsity poly homogeneous}, $s = 1 + n(2d_4+3) + m(d_5+1)^3$. For $\psi \in \bar{\SAT}$, Lemma \ref{Lemma: unsat gap homogeneous char p} shows lower bounds on $\cal S(f(A\vecz))$ for all $A \in \GL(|\vecz|,\F)$. Like Lemma \ref{Lemma: unsat homogeneous gap}, Lemma \ref{Lemma: unsat gap homogeneous char p} is a slightly deeper analysis of that in the reverse direction of Section \ref{section-homogeneous-finite} For $\psi \in \SAT$, by Proposition \ref{Proposition: fwd direction homogeneous} there exists $A \in \GL(|\vecz|,\F)$ such that $\cal S(f(A\vecz)) \leq s_0$, where $s_0 = 1+n(d_4+3) + m(d_5+1)^2$. Proposition \ref{Proposition: Sparsity gap homogeneous prime} shows the $\NP$-hardness of $\gapet$ using Lemma \ref{Lemma: unsat gap homogeneous char p} and the setting of the $d_i$'s in this section. 

\begin{lemma} \label{Lemma: unsat gap homogeneous char p}
   Let $\psi \in \bar{\SAT}$, $f$, as defined in \eqref{Homogeneous poly}, be the polynomial corresponding to $\psi$ and $A \in \GL(|\vecz|,\F)$.
   \begin{enumerate}
       \item \label{case 1 unsat gap homogeneous char p} If $A(x_0)$ or $A(y_0)$ is a linear form in at least $2$ variables, $\cal S(f(A\vecz)) \geq d_3+2$.

       \item \label{case 2 unsat gap homogeneous char p} If $A$ is not as in item \ref{case 1 unsat gap homogeneous char p} and $A(x_j)$ is a linear form in at least $2$ variables for some $j \in [n]$, then $\cal S(f(A\vecz)) \geq d_3+1$.

       \item If $A$ is not as in items \ref{case 1 unsat gap homogeneous char p} and \ref{case 2 unsat gap homogeneous char p} and for some $j \in [n]$, $A(y_j + x_j)$ or $A(y_j - x_j)$ is a linear form in at least $3$ variables, $\cal S(f(A\vecz)) \geq (d_4+1)^{1.63}$. 

       \item If $A$ is not of the form described in the previous three cases, then 
       $\cal S(f(A\vecz)) \geq (d_5 + 1)^3.$
   \end{enumerate}
\end{lemma}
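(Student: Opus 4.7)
My plan is to mirror the four-case analysis in the proof of Lemma \ref{Lemma: unsat homogeneous gap} from the characteristic-zero setting, but substitute Claim \ref{lem_div} (unavailable here since $\textnormal{char}(\F) = p$ can be small relative to $\deg f$) with Observation \ref{Lemma: linear form sparsity}, and exploit the $p$-adic structure of the chosen degrees: $d_3, d_4, d_5$ are each of the form $p^k - 1$, while $d_1 = p^{k_3}(p^{k_1}-1)$ and $d_2 = p^{k_2}(p^{k_1}-1)$. Throughout, I will use the factorization $f = x_0^{d_1} y_0^{d_2}\, g(\vecz)$ and, once $A(x_0)$ and $A(y_0)$ are single variables, the degree-separation in $x_0$ provided by Lemma \ref{Lemma: y0 separated}.

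For case 1, I would invoke the sparsity identity $\cal S(f(A\vecz)) = \cal S(A(x_0)^{d_1})\, \cal S(A(y_0)^{d_2})\, \cal S(g(A\vecz))$ established in the proof of Lemma \ref{Lemma: y0,x0 fixed homogeneous} for the finite-characteristic case (Section \ref{proof-x0,y0-fixed-homogeneous-finitechar}). The base-$p$ expansion of $d_1$ consists of exactly $k_1$ digits of value $p-1$ (at positions $k_3, \ldots, k_1+k_3-1$), so Observation \ref{Lemma: linear form sparsity} gives $\cal S(A(x_0)^{d_1}) \geq p^{k_1} \geq d_3 + 2$ whenever $A(x_0)$ is a linear form in at least two variables; since $k_1 = \lfloor \log_p(d_3+2)\rfloor + 1$, this is exactly the bound we need, and the case of $A(y_0)$ in $\geq 2$ variables is symmetric.

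For case 2, with $A(x_0), A(y_0)$ single variables, Lemma \ref{Lemma: y0 separated} makes the summands of $f(A\vecz)$ degree-separated in $x_0$, so $\cal S(f(A\vecz)) \geq \cal S(Q_{j,1}(A\vecz)) = \cal S(A(x_j)^{d_3})$ for the offending $j$. With $d_3 = p^k - 1$ and $A(x_j)$ in $\geq 2$ variables, Observation \ref{Lemma: linear form sparsity} yields $\cal S(A(x_j)^{d_3}) \geq d_3 + 1$. For case 3, the analogous computation for $Q_{j,2}$ or $Q_{j,3}$ gives $\cal S(A(y_j \pm x_j)^{d_4}) \geq \binom{p+1}{2}^k = \bigl(p(p+1)/2\bigr)^k$ via Observation \ref{Lemma: linear form sparsity} (with $d_4 = p^k - 1$ and at least three variables), and the target bound $\bigl(p(p+1)/2\bigr)^k \geq p^{1.63 k} = (d_4+1)^{1.63}$ reduces to $(p+1)/2 \geq p^{0.63}$, which holds for every odd $p \geq 3$ (tight at $p = 3$, where $\log_3 2 \approx 0.631$).

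For case 4, the argument of Proposition \ref{Proposition: bwd direction homogeneous} shows that $A$ must, up to permutation and scaling, coincide with the transform in \eqref{action homogeneous case} for some putative assignment $\vecu \in \{0,1\}^n$. Since $\psi \in \bar{\SAT}$, some clause $k$ yields $R_k(A\vecz)$ as a monomial in $x_0, y_0$ times $\prod_{j \in C_k}(Y_j \pm 2 X_j)^{d_5}$ in disjoint variable pairs, and Observation \ref{lem_binom} (valid since $d_5 = p^k - 1$) gives each of the three factors sparsity $d_5 + 1$, so the product (and hence $\cal S(f(A\vecz))$ via Lemma \ref{Lemma: y0 separated}) is at least $(d_5+1)^3$. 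I expect the main obstacle to be the verification of the sparsity identity underpinning case 1: it rests on a careful Frobenius-style separation of exponents forced by the specific $p$-ary structure of $d_1, d_2$, and I will invoke it as established in the proof of Lemma \ref{Lemma: y0,x0 fixed homogeneous}. The exponent $1.63$ in case 3 is essentially forced by the $p = 3$ boundary and leaves little slack.
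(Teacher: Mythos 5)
Your cases 1--3 are correct and follow essentially the paper's own route: case 1 via the factorization $\cal S(f(A\vecz)) = \cal S(A(x_0)^{d_1})\,\cal S(A(y_0)^{d_2})\,\cal S(g(A\vecz))$ from Section \ref{proof-x0,y0-fixed-homogeneous-finitechar} together with Observation \ref{Lemma: linear form sparsity} and the base-$p$ digit structure of $d_1,d_2$ (with $k_1=\lfloor\log_p(d_3+2)\rfloor+1$ in the gap parameter setting), and cases 2--3 via the degree separation of Lemma \ref{Lemma: y0 separated} followed by the same computation as in the corresponding cases of Lemma \ref{Lemma: unsat gap char p}.

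The gap is in case 4. The hypothesis there only says that $A$ avoids the structures of cases 1--3; after permutation and scaling this forces $A(x_0)=x_0$, $A(y_0)=y_0$, $A(x_j)=X_j$ and $A(y_j)=Y_j+c_jX_j$ with $c_j$ an \emph{arbitrary} element of $\F$. It does not force $c_j\in\{1,-1\}$, so $A$ need not coincide with the transform of \eqref{action homogeneous case} for any $\vecu\in\{0,1\}^n$; for instance $y_j\mapsto y_j+5x_j$ is a legitimate case-4 transform. Proposition \ref{Proposition: bwd direction homogeneous} cannot be invoked to rule this out: it presupposes that $A$ already has the form \eqref{action homogeneous case} and, in its proof, uses the upper bound $\cal S(f(A\vecz))\le s$, neither of which is available here. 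Consequently your claim that an unsatisfied clause contributes factors $(Y_j\pm 2X_j)^{d_5}$ is unjustified as stated. The repair is the paper's own argument (case 4 of Lemma \ref{Lemma: unsat gap}, to which Lemma \ref{Lemma: unsat gap char p} and hence this lemma defer): write $R_k(A\vecz)$ as a monomial in $x_0,y_0$ times $\prod_{j\in C_k}\big(Y_j+(c_j+(-1)^{a_{k,j}})X_j\big)^{d_5}$; define $u_j\in\{0,1\}$ by $c_j=(-1)^{u_j}$ whenever $c_j\in\{1,-1\}$ and arbitrarily otherwise; since $\psi\in\bar{\SAT}$ some clause $k$ is unsatisfied by this $\vecu$, and for every $j\in C_k$ the coefficient $c_j+(-1)^{a_{k,j}}$ is nonzero (it equals $\pm 2\neq 0$ when $c_j=\pm 1$ because $p>2$, and is automatically nonzero when $c_j\notin\{1,-1\}$). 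Each of the three variable-disjoint binomial factors then has sparsity $d_5+1$ by Observation \ref{lem_binom} (using that $d_5$ is of the form $p^t-1$), so $\cal S(R_k(A\vecz))\ge (d_5+1)^3$, and Lemma \ref{Lemma: y0 separated} gives $\cal S(f(A\vecz))\ge \cal S(R_k(A\vecz))\ge (d_5+1)^3$. With this correction your proof matches the paper's.
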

\begin{proposition} \label{Proposition: Sparsity gap homogeneous prime}
    Let $\textnormal{char}(\F) = p > 2$. If the input in $\gapet$ is an $s$-sparse homogeneous polynomial, then $\gapet$ is $\NP$-hard for $\alpha = s^{1/3-\epsilon}$.
\end{proposition}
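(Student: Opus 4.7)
The plan is to combine Proposition \ref{Proposition: fwd direction homogeneous} with Lemma \ref{Lemma: unsat gap homogeneous char p}, following the same template used for Propositions \ref{Proposition: Sparsity gap homogeneous} (characteristic-zero homogeneous case) and \ref{Proposition: Sparsity gap prime} (finite characteristic non-homogeneous case). For $\psi \in \SAT$, Proposition \ref{Proposition: fwd direction homogeneous} applied to the transform \eqref{action homogeneous case} built from a satisfying assignment yields an invertible $A$ with $\cal S(f(A\vecz)) \leq s_0 := 1 + n(d_4+3) + m(d_5+1)^2$. For $\psi \in \bar{\SAT}$, Lemma \ref{Lemma: unsat gap homogeneous char p} guarantees that every invertible $A$ satisfies $\cal S(f(A\vecz)) \geq \min\bigl(d_3+2,\, d_3+1,\, (d_4+1)^{1.63},\, (d_5+1)^3\bigr)$.

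Next I would verify that $(d_5+1)^3$ is the minimum on the right under the constraints \eqref{Eqn: gap ineq char p}. The condition $d_3 > (d_4+1)^2$ gives $d_3+1 > (d_4+1)^{1.63}$, and $d_4 > m(d_5+1)^2 \geq (d_5+1)^2$ yields $(d_4+1)^{1.63} > (d_5+1)^{3.26} > (d_5+1)^3$. From the explicit estimate $d_4 = O(pm(d_5+1)^2)$ (which follows from the choices in this section), I obtain $s_0 \leq 3n d_4 = O(pmn(d_5+1)^2)$, so the YES/NO sparsity gap is at least $(d_5+1)^3/s_0 \geq (d_5+1)/O(pmn)$. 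Using $s \leq 2m(d_5+1)^3$, so $d_5+1 \geq (s/2m)^{1/3}$, the gap is $\geq s^{1/3}/O(pm^{4/3}n)$. Finally, since $s \geq d_5^3$ and $p \leq (mn)^{O(1)}$ (otherwise we are already in the regime handled by Proposition \ref{Proposition: Sparsity gap homogeneous}), the constraint $d_5 \geq (mn)^{O(1/\epsilon)}$ ensures that $s^\epsilon \geq d_5^{3\epsilon}$ dominates the $O(pm^{4/3}n)$ factor, giving gap $\geq s^{1/3-\epsilon}$ as required.

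The main thing to watch is the bookkeeping around the characteristic-$p$ factor: the strengthened condition $d_5 \geq \max(3pmn, (mn)^{O(1/\epsilon)})$ in \eqref{Eqn: gap ineq char p} (rather than $d_5 \geq \max(4mn, (mn)^{O(1/\epsilon)})$ used in the characteristic-zero case) must absorb the extra $p$ appearing in the upper bound on $d_4$, and hence in $s_0$. No conceptually new idea is required beyond those in the two preceding propositions; in particular, the weaker lower bound $(d_4+1)^{1.63}$ in Lemma \ref{Lemma: unsat gap homogeneous char p} (replacing the quadratic bound available in characteristic zero) still comfortably exceeds $(d_5+1)^3$ under the chosen parameters, so it does not interfere with the gap analysis.
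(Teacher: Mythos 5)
Your proposal is correct and follows essentially the same route as the paper's proof: upper bound $s_0$ via Proposition \ref{Proposition: fwd direction homogeneous}, lower bound via Lemma \ref{Lemma: unsat gap homogeneous char p}, verify $(d_5+1)^3$ is the minimum, bound $s_0 \leq 3nd_4 \leq 3pmn(d_5+1)^2$, and use $s \leq 2m(d_5+1)^3$, $s \geq d_5^3$ and $d_5 \geq \max(3pmn,(mn)^{O(1/\epsilon)})$ to absorb the $p$, $m$, $n$ factors and conclude the gap is at least $s^{1/3-\epsilon}$. The only difference is that you spell out why $(d_4+1)^{1.63} > (d_5+1)^3$ and $d_3+1 > (d_4+1)^{1.63}$, which the paper leaves implicit.
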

\begin{proof}
The proof is similar to that of Proposition \ref{Proposition: Sparsity gap homogeneous}. If $\psi \in \SAT$, then $\cal S(f(A\vecz)) \leq s_0$ where $A$ is as described in \eqref{action homogeneous case}. If $\psi \in \bar{\SAT}$, then it follows from Lemma \ref{Lemma: unsat gap char p} that for any $A \in \GL(|\vecz|,\F)$: 
\[\cal S(f(A\vecz)) \geq \min\bigg(d_3 + 2, d_3 + 1, (d_4+1)^{1.63}, (d_5+1)^3\bigg).\]
The constraints imposed in \eqref{Eqn: gap ineq} ensure that $(d_5+1)^3$ is the minimum. As $d_4 > m(d_5+1)^2$, therefore $s_0 = 1 + n(d_4+3) + m(d_5+1)^2 < 3nd_4 \leq 3pmn(d_5+1)^2$. Thus, the gap in the sparsities of the YES instances and the NO instances is 
\[  \frac{(d_5+1)^3}{s_0} \geq \frac{(d_5+1)^3}{3pmn(d_5+1)^2} = \frac{d_5+1}{3pmn}. \]
Also, as $d_5 \geq 3pmn$, ${\cal S(f)} = s \leq 2m(d_5+1)^3 \implies d_5+1 \geq (\frac{s}{2m})^{1/3}$. Then, the gap is
\[  \frac{(d_5+1)^3}{s_0} \geq \frac{d_5+1}{3pmn} \geq \frac{s^{1/3}}{p2^{1/3}3m^{4/3}n}. \]
Finally, note that $s \geq d_5^3$. Thus, for $d_5^{3\epsilon} \geq (mn)^{O(1)}$ large enough, 
\[s^{\epsilon} \geq d_5^{3\epsilon} \geq {p2^{1/3}3m^{4/3}n} \implies \frac{s^{1/3}}{p2^{1/3}3m^{4/3}n} \geq s^{1/3 - \epsilon}.\]
Hence, the gap is at least $s^{1/3 - \epsilon}$. Therefore, $\SAT$ reduces to $\gapet$ for $\alpha = s^{1/3 - \epsilon}$.
\end{proof}

\subsubsection*{\underline{Analysis over characteristic $2$ fields}} \label{Subsection: Gap result homogeneous char 2}
For characteristic $2$ fields, consider the polynomial $f$ as defined in \eqref{Homogeneous char 2 poly}. Let $s := \cal S(f)$. Like in the characteristic $2$ construction for the non-homogeneous case, the value of $s$ depends on the number of variables complemented within a clause. To show the $\NP$-hardness of $\gapet$, $s \geq d_5^3$ is required (see the proof of Proposition \ref{Proposition: Sparsity gap homogeneous char 2}), which can be achieved if there is at least one clause where all variables are complemented. Therefore, we assume that such a clause exists (see footnote \ref{footnote: complemented clause}). Set the $d_i$'s as specified in the beginning of Section \ref{Subsection: Gap result homogeneous char p} with $p = 2$. For this setting of the $d_i$'s, the constraints in \eqref{ineq_homogeneous} are satisfied. By Observation \ref{Obs: poly sparsity char 2 case homogeneous} and the assumption on $\psi$, it holds that
\[1 + n(d_4 + 3) + (d_5 + 1)^3 \leq s \leq 1 + n(d_4 + 3) + m(d_5 + 1)^3.\]For $\psi \in \bar{\SAT}$, Lemma \ref{Lemma: unsat gap homogeneous char 2} shows lower bounds on $\cal S(f(A\vecz))$ for all $A \in \GL(|\vecz|,\F)$. For $\psi \in \SAT$, by Proposition \ref{Proposition: fwd direction char 2 homogeneous} there exists $A \in \GL(|\vecz|,\F)$ such that $\cal S(f(A\vecz)) \leq s_0$, where $s_0 = 1+n(d_4+3) + m(d_5+1)^2$. Proposition \ref{Proposition: Sparsity gap homogeneous char 2} shows the $\NP$-hardness of $\gapet$ using Lemma \ref{Lemma: unsat gap homogeneous char 2} and the setting of the $d_i$'s in this section. 

\begin{lemma} \label{Lemma: unsat gap homogeneous char 2}
   Let $\psi \in \bar{\SAT}$, $f$, as defined in \eqref{Homogeneous char 2 poly}, be the polynomial corresponding to $\psi$ and $A \in \GL(|\vecz|,\F)$.
   \begin{enumerate}
       \item \label{case 1 unsat gap homogeneous char 2} If $A(x_0)$ or $A(y_0)$ is a linear form in at least $2$ variables, $\cal S(f(A\vecz)) \geq d_3+2$.

       \item \label{case 2 unsat gap homogeneous char 2} If $A$ is not as in item \ref{case 1 unsat gap homogeneous char 2} and $A(x_j)$ is a linear form in at least $2$ variables for some $j \in [n]$, then $\cal S(f(A\vecz)) \geq d_3+1$.

       \item If $A$ is not as in items \ref{case 1 unsat gap homogeneous char 2} and \ref{case 2 unsat gap homogeneous char 2} and for some $j \in [n]$, $A(y_j + x_j)$ or $A(y_j - x_j)$ is a linear form in at least $3$ variables, $\cal S(f(A\vecz)) \geq (d_4+1)^{1.58}$. 

       \item If $A$ is not of the form described in the previous three cases, then 
       $\cal S(f(A\vecz)) \geq (d_5 + 1)^3.$
   \end{enumerate}
\end{lemma}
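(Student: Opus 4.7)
The plan is to follow the structure of the proof of Lemma \ref{Lemma: unsat gap homogeneous char p} closely, substituting $y_i$ for $y_i - x_i$ and $a_{k,j}$ for $(-1)^{a_{k,j}}$ throughout, and taking parameters $d_3, d_4, d_5$ of the form $2^k - 1$ together with $d_1, d_2$ of the digit-rich form prescribed in Section \ref{Subsection: Gap result homogeneous char p}. These choices make Observations \ref{lem_binom} and \ref{Lemma: linear form sparsity} tight for powers of linear forms in characteristic $2$.

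For item \ref{case 1 unsat gap homogeneous char 2}, Claim \ref{lem_div} is unavailable because the degrees $d_1, d_2$ vastly exceed the characteristic. The plan is instead to reuse the sparsity-factorization identity established in Section \ref{section-homogeneous-finite}, namely
\[
\cal S(f(A\vecz)) \;=\; \cal S(A(x_0)^{d_1}) \cdot \cal S(A(y_0)^{d_2}) \cdot \cal S(g(A\vecz)),
\]
which holds because the base-$2$ digit supports of $d_1$ and $d_2$ are chosen disjoint from each other and from the digit supports of the lower-degree quotient $g$. If $A(x_0)$ (respectively $A(y_0)$) is a linear form in at least two variables, Observation \ref{Lemma: linear form sparsity} applied to the corresponding exponent forces at least $2^{k_1} \geq d_3 + 2$ monomials in that factor, giving the desired bound.

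For item \ref{case 2 unsat gap homogeneous char 2}, we pick $j$ with $A(x_j)$ in $\geq 2$ variables and isolate $Q_{j,1}(A\vecz) = A(x_0)^{a} A(y_0)^{b} A(x_j)^{d_3}$. Since the failure of item \ref{case 1 unsat gap homogeneous char 2} forces $A(x_0)$ and $A(y_0)$ to be scaled single variables, the sparsity of $Q_{j,1}(A\vecz)$ equals $\cal S(A(x_j)^{d_3}) = d_3 + 1$ by Observation \ref{lem_binom} (using $d_3 = 2^k - 1$); the $A$-image of the degree-separation-with-respect-to-$x_0$ property from Observation \ref{Obs: deg separated polys homogeneous} then propagates this to $\cal S(f(A\vecz)) \geq d_3 + 1$. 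For the third item, applying Observation \ref{Lemma: linear form sparsity} with $m \geq 3$ and $d_4 = 2^k - 1$ to $Q_{j,2}(A\vecz)$ (or to the analogous summand $Q_{j,3}(A\vecz)$ involving $A(y_j)^{d_4}$) yields
\[
\cal S\bigl(A(y_j + x_j)^{d_4}\bigr) \;\geq\; \binom{3}{2}^{k} \;=\; 3^{k} \;=\; (d_4+1)^{\log_2 3} \;\geq\; (d_4+1)^{1.58},
\]
which again lifts to $f(A\vecz)$ via degree separation with respect to $x_0$.

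For the fourth item, the failure of the first three hypotheses together with Lemmas \ref{Lemma: y0,x0 fixed homogeneous}, \ref{Lemma: Qi(Az) sparsity analysis char 2} and \ref{Lemma: equality condition Qi homogeneous} forces $A$ to be a permutation-and-scaling of the map in \eqref{action char 2 homogeneous}. Since $\psi$ is unsatisfiable, some clause $k$ is violated by the induced assignment $\vecu$, so $u_j = a_{k,j}$ for all $j \in C_k$, and in characteristic $2$ the factor $y_j + a_{k,j} x_j$ maps under $A$ to $y_j + x_j$. Hence $R_k(A\vecz) = x_0^{\bullet} y_0^{\bullet} \prod_{j \in C_k} (y_j + x_j)^{d_5}$, and since the three factors are in disjoint variable pairs, Observation \ref{lem_binom} yields $\cal S(R_k(A\vecz)) = (d_5+1)^3$; degree separation with respect to $x_0$ then gives $\cal S(f(A\vecz)) \geq (d_5+1)^3$. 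The principal obstacle is item \ref{case 1 unsat gap homogeneous char 2}, which is where the sparsity-factorization identity is needed to bypass the $\deg < p$ restriction in Claim \ref{lem_div}; the remaining items are routine characteristic-$2$ analogues of the corresponding steps in the proof of Lemma \ref{Lemma: unsat gap homogeneous char p}.
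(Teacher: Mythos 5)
Your handling of items 1--3 matches the paper's argument: item 1 via the sparsity-factorization identity of Section \ref{proof-x0,y0-fixed-homogeneous-finitechar} (correctly replacing Claim \ref{lem_div}, which is indeed unusable here) together with Observation \ref{Lemma: linear form sparsity}, and items 2--3 via Lemma \ref{Lemma: y0 separated} (which needs only $A(x_0)=x_0$, $A(y_0)=y_0$, not any sparsity bound) plus Observations \ref{lem_binom} and \ref{Lemma: linear form sparsity} with $d_3,d_4$ of the form $2^k-1$, giving $3^k=(d_4+1)^{\log_2 3}\geq (d_4+1)^{1.58}$. Your remark that the relevant second summand is $Q_{j,3}$ involving $A(y_j)^{d_4}$ is also the right reading of the char-$2$ construction.

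Item 4, however, has a genuine gap. You claim that the failure of the first three hypotheses ``together with Lemmas \ref{Lemma: y0,x0 fixed homogeneous}, \ref{Lemma: Qi(Az) sparsity analysis char 2} and \ref{Lemma: equality condition Qi homogeneous} forces $A$ to be a permutation-and-scaling of the map in \eqref{action char 2 homogeneous}.'' Lemma \ref{Lemma: equality condition Qi homogeneous} (and the equality case of Lemma \ref{Lemma: Qi(Az) sparsity analysis char 2} that it feeds on) is proved under the hypothesis $\cal S(f(A\vecz))\leq s$, which is not available in the gap lemma: here $A$ is arbitrary and we are proving a lower bound. Nor can you recover it by a dichotomy on whether $\cal S(f(A\vecz))\leq s$, since in the gap parameter regime $s=1+n(d_4+3)+m(d_5+1)^2$ is far smaller than the target $(d_5+1)^3$, so the branch $\cal S(f(A\vecz))>s$ gives nothing. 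Moreover the conclusion you draw is simply false in general: failure of cases 1--3 only forces $A(x_i)=X_i$ and $A(y_i)=Y_i+c_iX_i$ with \emph{arbitrary} $c_i\in\F$, and over a characteristic-$2$ field larger than $\F_2$ the $c_i$ need not lie in $\{0,1\}$. The paper's case 4 argues directly with these general $c_i$: a factor of $R_k(A\vecz)$ collapses iff $c_j=a_{k,j}$, so if every clause had a collapsing factor one could read off a satisfying assignment of $\psi$ (those $c_j$ are automatically Boolean), contradicting $\psi\in\bar{\SAT}$; hence some $R_k(A\vecz)$ is a product of three variable-disjoint binomial powers and has sparsity $(d_5+1)^3$, which lifts to $f(A\vecz)$ by Lemma \ref{Lemma: y0 separated}. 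Your final computation is fine once restricted to $c_i\in\{0,1\}$, but you must replace the appeal to Lemmas \ref{Lemma: Qi(Az) sparsity analysis char 2} and \ref{Lemma: equality condition Qi homogeneous} by this direct argument covering all $c_i\in\F$ (a case that, note, only makes the bound easier, since $c_j\notin\{0,1\}$ never collapses a factor).
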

\begin{proposition} \label{Proposition: Sparsity gap homogeneous char 2}
    Let $\textnormal{char}(\F) = 2$. If the input in $\gapet$ is an $s$-sparse homogeneous polynomial, then $\gapet$ is $\NP$-hard for $\alpha = s^{1/3-\epsilon}$.
\end{proposition}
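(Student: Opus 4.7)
The plan is to mimic closely the proof of Proposition \ref{Proposition: Sparsity gap homogeneous prime}, adapting the constants to the characteristic-$2$ setting where the bound from Lemma \ref{Lemma: unsat gap homogeneous char 2} gives exponent $1.58$ instead of $1.63$ in the third case. The overall structure is the standard YES/NO comparison used throughout Section \ref{sec:gap finite char}.

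First, for $\psi \in \SAT$, Proposition \ref{Proposition: fwd direction char 2 homogeneous} supplies an invertible $A$ (as in \eqref{action char 2 homogeneous}) such that $\cal S(f(A\vecz)) \leq s_0 = 1 + n(d_4+3) + m(d_5+1)^2$. For $\psi \in \bar{\SAT}$, Lemma \ref{Lemma: unsat gap homogeneous char 2} guarantees that for every $A \in \GL(|\vecz|,\F)$,
\[
\cal S(f(A\vecz)) \;\geq\; \min\!\bigl(d_3+2,\; d_3+1,\; (d_4+1)^{1.58},\; (d_5+1)^3\bigr).
\]
Next I would verify, using the parameter hierarchy fixed at the start of Section \ref{Subsection: Gap result homogeneous char p} (applied with $p=2$), that the minimum is realized by $(d_5+1)^3$. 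Concretely, since $d_4 > m(d_5+1)^2$ and $d_3 > (d_4+1)^2$, we have $d_3+1 > (d_4+1)^2 > (d_4+1)^{1.58} > (d_5+1)^3$ once $d_5$ is sufficiently large, so the unsatisfiable lower bound is $(d_5+1)^3$.

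Then I would bound the gap. Using $d_4 > m(d_5+1)^2$ we get $s_0 < 3n d_4 \leq 6mn(d_5+1)^2$, and hence
\[
\frac{(d_5+1)^3}{s_0} \;\geq\; \frac{d_5+1}{6mn}.
\]
The assumption on $\psi$ (existence of a fully complemented clause) combined with Observation \ref{Obs: poly sparsity char 2 case homogeneous} gives $s \leq 1 + n(d_4+3) + m(d_5+1)^3 \leq 2m(d_5+1)^3$, so $d_5+1 \geq (s/2m)^{1/3}$, yielding
\[
\frac{(d_5+1)^3}{s_0} \;\geq\; \frac{s^{1/3}}{2^{4/3}\,3\,m^{4/3}\,n}.
\]

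The last step is to absorb the $(mn)^{O(1)}$ factor into a factor of $s^{\epsilon}$. Since $s \geq d_5^3$ by the same assumption on $\psi$, and we chose $d_5 \geq (mn)^{O(1/\epsilon)}$ (as in \eqref{Eqn: gap ineq char p} specialized to $p=2$), we get $s^{\epsilon} \geq d_5^{3\epsilon} \geq 2^{4/3}\,3\,m^{4/3}\,n$ for the hidden constants taken large enough, hence the gap is at least $s^{1/3-\epsilon}$. This reduces $\SAT$ to $s^{1/3-\epsilon}\text{-}\mathrm{gap}\text{-}\et$ over any characteristic-$2$ field in the homogeneous setting. The only nonroutine ingredient is Lemma \ref{Lemma: unsat gap homogeneous char 2}, so I expect no additional obstacle in writing out the proof.
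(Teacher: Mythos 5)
Your proposal is correct and follows essentially the same route as the paper's proof: the YES bound from Proposition \ref{Proposition: fwd direction char 2 homogeneous}, the NO bound $\min(d_3+2,\,d_3+1,\,(d_4+1)^{1.58},\,(d_5+1)^3)$ from Lemma \ref{Lemma: unsat gap homogeneous char 2}, the identification of $(d_5+1)^3$ as the minimum, the bounds $s_0 \leq 6mn(d_5+1)^2$ and $d_5+1 \geq (s/2m)^{1/3}$, and the absorption of the $(mn)^{O(1)}$ factor via $s^{\epsilon} \geq d_5^{3\epsilon}$ using $s \geq d_5^3$ (guaranteed by the fully complemented clause). The only minor difference is that you spell out why $(d_5+1)^3$ is the minimum and why $s \geq d_5^3$, which the paper leaves implicit.
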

\begin{proof}
The proof is similar to that of Proposition \ref{Proposition: Sparsity gap homogeneous prime}. If $\psi \in \SAT$, then $\cal S(f(A\vecz)) \leq s_0$ where $A$ is as described in \eqref{action char 2 homogeneous}. If $\psi \in \bar{\SAT}$, then it follows from Lemma \ref{Lemma: unsat gap char p} that for any $A \in \GL(|\vecz|,\F)$: 
\[\cal S(f(A\vecz)) \geq \min\bigg(d_3 + 2, d_3 + 1, (d_4+1)^{1.58}, (d_5+1)^3\bigg).\]
The constraints imposed in \eqref{Eqn: gap ineq} ensure that $(d_5+1)^3$ is the minimum. As $d_4 > m(d_5+1)^2$, therefore $s_0 = 1 + n(d_4+3) + m(d_5+1)^2 < 3nd_4 \leq 6mn(d_5+1)^2$. Thus, the gap in the sparsities of the YES instances and the NO instances is 
\[  \frac{(d_5+1)^3}{s_0} \geq \frac{(d_5+1)^3}{6mn(d_5+1)^2} = \frac{d_5+1}{6mn}. \]
Also, as $d_5 \geq 6mn$, $s \leq 2m(d_5+1)^3 \implies d_5+1 \geq (\frac{s}{2m})^{1/3}$. Then, the gap is
\[  \frac{(d_5+1)^3}{s_0} \geq \frac{d_5+1}{6mn} \geq \frac{s^{1/3}}{2^{4/3}3m^{4/3}n}. \]
Finally, note that $s \geq d_5^3$. Thus, for $d_5^{3\epsilon} \geq (mn)^{O(1)}$ large enough, 
\[s^{\epsilon} \geq d_5^{3\epsilon} \geq {2^{4/3}3m^{4/3}n} \implies \frac{s^{1/3}}{2^{4/3}3m^{4/3}n} \geq s^{1/3 - \epsilon}.\]
Hence, the gap is at least $s^{1/3 - \epsilon}$. Therefore, $\SAT$ reduces to $\gapet$ for $\alpha = s^{1/3 - \epsilon}$.
\end{proof}
\section{$\NP$-hardness of $\etcsp$} \label{Section: ETsupport NP-hard}
In this section, we prove Theorem \ref{Theorem: ETsupport NP-hard}. All lemmas and observations are proved in Appendix \ref{Section: ETsupport proofs}. As stated at the end of Section \ref{Section: Results}, we strengthen the construction in \cite{BDSS24} to prove Theorem \ref{Theorem: ETsupport NP-hard} for $\sigma \geq 5$. At the end of Section \ref{secConstructPoly}, we describe the differences between the current construction and that in \cite{BDSS24}.
\vspace{-4mm}
\paragraph{Proof sketch.}We map $\psi$, a $\CNF$, to a polynomial $f$, which is the sum of polynomials which are degree separated or degree separated with respect to some variable. At least one polynomial in $f$ is of support $\sigma + 1$ and the rest of support $\sigma$ (where $\sigma$ is a constant). As the summands are degree separated or degree separated with respect to some variable, $\supp(f) = \sigma + 1$ and for any invertible linear transform $A$, $\supp(A(f))$ is equal to the maximum support size among the transformed summands. Claim \ref{FullSupport} is used to show that $\psi \in \SAT$ iff there exists an invertible linear transform $A$, such that $\supp(A(f)) \leq \sigma$. Thus, the reduction also holds for $\prometcsp$. For characteristic $p$ fields, we assume $p > \sigma + 1$ so that Claim \ref{FullSupport} holds.

\subsection{Construction of $f$} \label{secConstructPoly}
Let $\sigma \geq 5$ be an even integer constant and $\psi$ be as denoted in Section \ref{subsubsection: sparse f construction}. For odd $\sigma$, we describe the changes in the construction/argument at the appropriate points. As $\sigma$ is a constant, let $n \geq \sigma + 4$. The proofs of Lemmas \ref{P1Sep} and \ref{P2Sep} use $n \geq \sigma + 4$. To ensure $\supp(f) = \sigma + 1$, we assume that all the variables in the first clause are complemented (see footnote \ref{footnote: complemented clause}) and that no clause is repeated. Let $\vecx := \{x_1,\dots,x_n\}$, $\vecy := \{y_{1},\dots,y_{n}\}$ and $\vecz := \{z_{1},\dots, z_{\sigma-5}\}$ and $\vecw := \vecx \sqcup \vecy \sqcup \vecz$. Note that when $\sigma = 5$, $\vecz$ is an empty set and hence we do not consider the $\vecz$ variables in such a case. By $(w_1 \cdots w_l)^{\star}$, where $w_i \in \vecw$ and $l,\star \in \N$, we denote a power of $w_1\cdots w_l$. Consider the polynomials:

\begin{itemize}
    \item First, introduce $\binom{n + \sigma - 5}{\sigma}$ many monomials defined by the set 
    \[P := \{(w_{1}\cdots w_{{\sigma}})^\star \ | \ w_{1}, \dots, w_{{\sigma}} \in \vecz \sqcup \vecx \ \text{ and are pairwise distinct} \}.\]
 
    \item Then, introduce $\binom{n}{\frac{\sigma}{2}}$ many monomials defined by the set 
    \[Q := \{((x_{i_1}y_{i_1})\cdots (x_{i_{\frac{\sigma}{2}}} y_{i_{\frac{\sigma}{2}}}))^\star \ | \ i_1,\dots,i_{\frac{\sigma}{2}} \in [n] \ \text{ and are pairwise distinct} \}.\]
    \textbf{Note:} For odd $\sigma$, the monomials are of form $((x_{i_1}y_{i_1})\cdots (x_{i_{\frac{\sigma-1}{2}}} y_{i_{\frac{\sigma-1}{2}}})x_{i_\frac{\sigma+1}{2}})^\star$. Thus $|Q| = \binom{n}{\frac{\sigma+1}{2}}\frac{\sigma+1}{2}$. 

    \item Let $R := \{R_k(\vecw) \ | \ k \in [m]\}$, where $R_k(\vecw)$ is defined corresponding to the $k\ith$ clause as:
    \[R_{k}(\vecw) := (\prod_{j \in C_k} (y_{j} - a_{k,j}x_{j} )^{2+a_{k,j}}) (z_{1} \cdots z_{\sigma-5} ). \]
\end{itemize}
Define $f(\vecw) := \sum_{g \in P} g(\vecw) + \sum_{h \in Q} h(\vecw) + \sum_{k=1}^{m} R_k(\vecw)$. The powers, denoted by $\star$, must be such that the following hold:
\begin{enumerate} \label{Conditions: deg-csp case}
    \item \label{pow-degsep-CSP} All the polynomials in $P \sqcup Q$ are degree separated from one another and from the $R_k$'s, with the powers being at least $\sigma + 1$. 
    \item \label{pow-finitechar-CSP} Over characteristic $p$ fields, where $p > 0$, all the powers are less than $p$ or are of the form $p^k -1$ for some $k \in \N$.
\end{enumerate} 
In Section \ref{etcsp-param}, we choose the powers to satisfy the above conditions over any field. The following observations hold by the definition of the polynomials and the above conditions.
\begin{observation} \label{Obs: deg sep clause poly}
For $k \in [m]$, the degree of $R_k$ is at most $\sigma + 4$ and $\supp (R_k(\vecw)) \leq \sigma + 1$ with equality in both cases when $a_{k,j} = 1$ for all $ j \in C_k$. For $k, l \in [m]$ with $l \neq k$, any monomial of $R_k$ is degree separated from every monomial of $R_l$ with respect to some variable in $\vecx \sqcup \vecy$. 
\end{observation}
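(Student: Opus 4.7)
Both assertions follow from a direct inspection of the definition
\[ R_k(\vecw) = \Big(\prod_{j \in C_k} (y_j - a_{k,j} x_j)^{2 + a_{k,j}}\Big) \cdot (z_1 \cdots z_{\sigma-5}), \]
combined with the standing assumption (stated before Section~\ref{Subsection: Reduction ETSparse}) that every clause has exactly $3$ distinct variables and that no clause is repeated.

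\textbf{Part 1 (degree and support).} For each $j \in C_k$, the factor $(y_j - a_{k,j} x_j)^{2+a_{k,j}}$ contributes degree $2 + a_{k,j} \in \{2,3\}$ and support $1 + a_{k,j} \in \{1,2\}$ (the variable $y_j$ alone if $a_{k,j}=0$, and both $x_j,y_j$ if $a_{k,j}=1$). Summing over $|C_k| = 3$ clause indices and adding the monomial $z_1 \cdots z_{\sigma-5}$ (degree and support both $\sigma - 5$), the total degree of $R_k$ is $\sum_{j \in C_k}(2 + a_{k,j}) + (\sigma - 5) \leq 9 + (\sigma - 5) = \sigma + 4$, and the total support is at most $\sum_{j \in C_k}(1+a_{k,j}) + (\sigma-5) \leq 6 + (\sigma-5) = \sigma+1$, both saturated precisely when every $a_{k,j} = 1$.

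\textbf{Part 2 (degree separation of monomials).} Since the $z$-part $z_1 \cdots z_{\sigma-5}$ is identical in $R_k$ and $R_l$, any distinguishing variable must lie in $\vecx \sqcup \vecy$. I would fix arbitrary monomials $m_k$ of $R_k$ and $m_l$ of $R_l$, and produce such a variable by splitting into two cases based on how the $k$-th and $l$-th clauses differ (they must differ because no clause is repeated):

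\emph{Case (a): $C_k \neq C_l$.} WLOG pick $j \in C_k \setminus C_l$. Then in $m_l$ both $x_j$ and $y_j$ have degree $0$, while in $m_k$ the contribution from the factor $(y_j - a_{k,j}x_j)^{2+a_{k,j}}$ forces $\deg_{x_j}(m_k) + \deg_{y_j}(m_k) = 2 + a_{k,j} \geq 2$, so at least one of $x_j, y_j$ has positive degree in $m_k$ and degree $0$ in $m_l$, giving the required distinguishing variable.

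\emph{Case (b): $C_k = C_l$ but $a_{k,j} \neq a_{l,j}$ for some $j \in C_k$.} WLOG $a_{k,j}=1$ and $a_{l,j}=0$. Then $m_l$ comes from a factor $y_j^2$, so $\deg_{x_j}(m_l)=0$ and $\deg_{y_j}(m_l)=2$. On the other hand $m_k$ inherits from $(y_j - x_j)^3$ a factor $x_j^{e} y_j^{3-e}$ for some $e \in \{0,1,2,3\}$. If $e \geq 1$ then $x_j$ separates $m_k$ from $m_l$; if $e = 0$ then $\deg_{y_j}(m_k) = 3 \neq 2 = \deg_{y_j}(m_l)$, so $y_j$ separates them.

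Neither part involves an essential difficulty: Part 1 is pure arithmetic on fixed exponents, and Part 2 is a short case analysis. The only point that needs care is making sure Case (b) handles the possibility that the $y_j$-degrees of $m_k$ and $m_l$ happen to coincide, which is why the analysis splits on the exponent $e$ and falls back to the $x_j$ variable whenever $e \geq 1$. No uses of Claim~\ref{lem_div} or Claim~\ref{FullSupport} are needed here; those will enter only in the subsequent support analysis of $f(A\vecw)$.
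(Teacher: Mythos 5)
Your proof is correct and takes essentially the same route as the paper's: Part 1 by direct inspection of the variable-disjoint factors, and Part 2 by the same two-case split (clauses on different variable sets versus the same variables with $a_{k,j} \neq a_{l,j}$ for some $j$), your exponent-$e$ analysis merely making explicit what the paper asserts tersely as ``degree separated with respect to $x_j$ or $y_j$.'' The one point to add is that your claim that the factor $(y_j - x_j)^3$ has support $2$ (needed for the equality $\supp(R_k) = \sigma+1$ when all $a_{k,j}=1$) uses that its cross-term coefficients $\pm 3$ do not vanish, i.e.\ $\textnormal{char}(\F) \notin \{2,3\}$ --- guaranteed by the hypothesis of Theorem \ref{Theorem: ETsupport NP-hard} and invoked explicitly in the paper's proof --- so this should be stated.
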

\begin{observation} \label{Obs: low support properties}
$\cal S(f(\vecw)) = O(n^\sigma + m)$ and $\supp(f(\vecw)) = \sigma + 1$.
\end{observation}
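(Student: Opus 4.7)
The plan is to compute $\cal S(f)$ and $\supp(f)$ separately by decomposing $f$ according to the three groups $P$, $Q$, and $\{R_k\}_{k\in[m]}$. By condition \ref{pow-degsep-CSP}, all polynomials in $P \sqcup Q \sqcup \{R_k\}_{k\in[m]}$ are pairwise degree separated, so iterated application of Observation \ref{lem_deg_sep_sum} gives
\[
\cal S(f) \;=\; \sum_{g\in P}\cal S(g) \;+\; \sum_{h\in Q}\cal S(h) \;+\; \sum_{k=1}^{m}\cal S(R_k).
\]
Each element of $P$ and $Q$ is a single monomial, so these contribute $|P|=\binom{n+\sigma-5}{\sigma}=O(n^\sigma)$ and $|Q|=O(n^{\lceil\sigma/2\rceil})$ respectively (handling even and odd $\sigma$ uniformly up to a constant factor). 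Each $R_k$ is a product of $|C_k|\le 3$ affine forms raised to powers at most $3$, multiplied by $z_1\cdots z_{\sigma-5}$, so it involves at most $\sigma+1$ variables and has total degree at most $\sigma+4$; hence $\cal S(R_k)$ is bounded by a constant depending only on $\sigma$, yielding $\sum_{k}\cal S(R_k)=O(m)$. Summing the three contributions gives $\cal S(f)=O(n^\sigma+m)$.

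For the support bound, I would argue that every summand has support at most $\sigma+1$ and that this bound is attained. By construction, each monomial $g \in P$ has support exactly $\sigma$ (it is a pure power of $\sigma$ distinct variables), and each monomial $h \in Q$ also has support exactly $\sigma$, coming from $\sigma/2$ distinct pairs $(x_{i_j},y_{i_j})$ in the even case, or $(\sigma-1)/2$ such pairs together with one extra $x_i$ in the odd case. By Observation \ref{Obs: deg sep clause poly}, every $R_k$ has support at most $\sigma+1$, with equality exactly when all variables in the $k$th clause are complemented. Since the first clause is assumed (without loss of generality) to have all its variables complemented, $R_1$ achieves support $\sigma+1$, and hence $\supp(f)=\sigma+1$.

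No real obstacle is expected: the entire argument is a direct application of the definitions together with Observations \ref{lem_deg_sep_sum} and \ref{Obs: deg sep clause poly} and elementary binomial-coefficient estimates. The only place where one must be slightly careful is bounding $\cal S(R_k)$ uniformly in $m$ and $n$: this follows because the number of monomials of a polynomial in $\sigma+1$ variables and degree $\sigma+4$ is at most $\binom{2\sigma+5}{\sigma+4}$, which depends only on $\sigma$.
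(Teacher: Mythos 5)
Your decomposition and counting follow the paper's own proof almost step for step, but there is one genuine inaccuracy in the justification: Condition \ref{pow-degsep-CSP} does \emph{not} make the $R_k$'s pairwise degree separated --- it only says that the polynomials of $P \sqcup Q$ are degree separated from one another and from the $R_k$'s. In fact, in this construction $\deg R_k = \sigma + 1 + |\{j \in C_k : a_{k,j} = 1\}|$, so any two clauses with the same number of complemented variables yield $R_k$'s of equal total degree; making the $R_k$'s degree separated was exactly the feature of the construction in \cite{BDSS24} that the present paper deliberately drops (it is what forced $\sigma \geq 6$ there). For the sparsity statement your slip is harmless, since sparsity is subadditive and only the upper bound $O(n^{\sigma}+m)$ is claimed; your constant bound on each $\cal S(R_k)$ and the counts $|P|+|Q| = O(n^{\sigma})$ are fine.

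For the equality $\supp(f(\vecw)) = \sigma + 1$, however, the false premise is doing real work: to conclude that $f$ itself has a monomial of support $\sigma+1$, you must rule out cancellation of the support-$(\sigma+1)$ monomials of $R_1$ against monomials of some other $R_l$, and ``all summands are pairwise degree separated'' was your only stated reason. The correct replacement is the second assertion of Observation \ref{Obs: deg sep clause poly}: for $k \neq l$, every monomial of $R_k$ is degree separated from every monomial of $R_l$ \emph{with respect to some variable} in $\vecx \sqcup \vecy$ (this uses that no clause is repeated and the exponents $2+a_{k,j}$), and Observation \ref{lem_deg_sep_sum} is stated so as to cover this variable-wise notion as well. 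With that substitution --- Condition \ref{pow-degsep-CSP} separating $P \sqcup Q$ from everything else, Observation \ref{Obs: deg sep clause poly} separating distinct $R_k$'s --- the summands contribute pairwise disjoint sets of monomials, and your remaining argument (each summand has support at most $\sigma+1$; $R_1$, having all literals complemented and the characteristic not $2$ or $3$, attains support exactly $\sigma+1$) is precisely the paper's.
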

\begin{observation} \label{Obs: support of orbit poly}
 For any $A \in \GL(|\vecw|,\F)$, it holds that
\[\supp (f(A\vecw)) = \max\left(\max_{g \in P, h \in Q}\big(\supp (g(A\vecw)),\supp (h(A\vecw))\big),\supp \left(\sum_{k=1}^{m}R_k(A\vecw)\right)\right).\]
\end{observation}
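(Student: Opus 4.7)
The plan is to exploit the degree separation imposed by Condition~1 of Section~\ref{secConstructPoly} together with the basic fact that an invertible linear transform preserves the total degree of every monomial. First I would note that each $g \in P$ and each $h \in Q$ is, by construction, a single monomial (a power of a product of distinct variables), hence homogeneous of some fixed total degree $\deg(g)$ or $\deg(h)$. Each $R_k$ is a product of the homogeneous factors $(y_j - a_{k,j}x_j)^{2+a_{k,j}}$ and $z_1\cdots z_{\sigma-5}$, and so is itself homogeneous, of total degree $\deg(R_k)$. By Condition~1, the degrees $\{\deg(g)\}_{g\in P}$, $\{\deg(h)\}_{h\in Q}$ and $\{\deg(R_k)\}_{k\in[m]}$ are pairwise disjoint across the three groups, and distinct within $P\sqcup Q$; the $R_k$'s among themselves need not be degree-separated.

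Next I would use the fact that any $A \in \GL(|\vecw|,\F)$ sends each variable to a linear form, so the image under $A$ of a monomial of degree $d$ is a homogeneous polynomial of the same degree $d$. Therefore $g(A\vecw)$ remains homogeneous of degree $\deg(g)$, $h(A\vecw)$ remains homogeneous of degree $\deg(h)$, and $R_k(A\vecw)$ remains homogeneous of degree $\deg(R_k)$. Consequently the three collections $\{g(A\vecw)\}_{g\in P}$, $\{h(A\vecw)\}_{h\in Q}$ and $\sum_{k=1}^m R_k(A\vecw)$ are pairwise degree separated, and the $g(A\vecw)$'s and $h(A\vecw)$'s are moreover degree separated among themselves.

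Since degree-separated polynomials have disjoint supports of monomials (two distinct-degree monomials cannot combine), no cancellation can occur between any two of these groups, nor between distinct $g(A\vecw)$'s or distinct $h(A\vecw)$'s. The monomial set of $f(A\vecw)$ is therefore the disjoint union of the monomial sets of the individual $g(A\vecw)$'s, the individual $h(A\vecw)$'s, and of $\sum_k R_k(A\vecw)$. Since the support of a polynomial is by definition the maximum support over its monomials, taking the maximum over all these disjoint pieces yields precisely the asserted identity.

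I do not anticipate any serious obstacle. The only point that needs care is why the $R_k$'s must be bundled together inside a single $\supp(\cdot)$ on the right-hand side: the $R_k$'s may share total degrees with one another (Observation~\ref{Obs: deg sep clause poly} only guarantees degree separation with respect to some variable, a property not preserved by arbitrary $A$), so nontrivial cancellation among the $R_k(A\vecw)$'s is possible after transformation. That is exactly why the statement writes $\supp\!\bigl(\sum_{k=1}^m R_k(A\vecw)\bigr)$ rather than $\max_k \supp(R_k(A\vecw))$, and nothing further is needed to close the argument.
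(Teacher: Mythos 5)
Your proof is correct and follows essentially the same route as the paper, which simply observes that the statement follows from Condition 1 of Section \ref{secConstructPoly}: invertible linear maps send monomials to homogeneous polynomials of the same degree, so the degree separation between the members of $P\sqcup Q$ and the $R_k$'s is preserved, preventing any cross-cancellation and letting the support split as the stated maximum. Your closing remark about why the $R_k(A\vecw)$'s must stay bundled inside a single $\supp(\cdot)$ is exactly the right caveat and matches the paper's treatment.
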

\noindent \textbf{Note:} There are three differences between the current construction and that in \cite{BDSS24}:
\begin{enumerate}
    \item In \cite{BDSS24}, the clause polynomial $R_k$ was defined as:
    \[R_k(\vecw) := (\prod_{j \in C_k}(y_j - a_{k,j}x_j)^2 )(z_1z_2\dots z_{\sigma -5})^{\star}.\]
    The monomial $(z_1z_2\dots z_{\sigma-5})^{\star}$ was used to make the $R_k$'s degree separated from one another and from the polynomials in $P \sqcup Q$ (by assigning appropriate values to the powers $\star$), and to ensure that some $R_k$ has support $\sigma + 1$. The former condition necessitated imposing $\sigma \geq 6$. In the current construction, we leverage the fact that a variable can be complemented or uncomplemented in a clause to make the $R_k$'s degree separated from one another with respect to some variable in $\vecx \sqcup \vecy$ (see Observation \ref{Obs: deg sep clause poly}) regardless of the presence or absence of the monomial $z_1z_2\dots z_{\sigma-5}$, which improves $\sigma \geq 6$ to $\sigma \geq 5$. We still need $z_1z_2 \dots z_{\sigma-5}$ in the $R_k$'s to ensure some $R_k$ has support $\sigma + 1$ when $\sigma \geq 6$.

    \item In \cite{BDSS24}, Condition \ref{pow-degsep-CSP} stated that all polynomials in $P \sqcup Q \sqcup R$ are degree separated with the powers being at least $\sigma + 1$, which implied that $\supp(f(A\vecw))$, for any $A \in \GL(|\vecw|,\F)$ can be analyzed by analyzing the support of individual polynomials. In contrast, the condition is now less restrictive as the nature of the polynomials in $P \sqcup Q$ is such that if these polynomials are degree separated amongst themselves and from the $R_k's$, and if $A \in \GL(|\vecw|,\F)$ is such that $\supp(f(A\vecw)) \leq \sigma$, then $A$ has a certain structure as implied by Observation \ref{Obs: support of orbit poly} and Lemmas \ref{P1Sep} and \ref{P2Sep}. The structured nature of $A$ implies that $R_k(A\vecw)$ looks similar to $R_k(\vecw)$ and then an argument similar to the proof of Observation \ref{Obs: deg sep clause poly} shows that it suffices to analyze the support of individual $R_k(A\vecw)$'s which leads to the recovery of a satisfying assignment (see the proof of Proposition \ref{Proposition: ETsupport reverse direction}).

    \item In \cite{BDSS24}, the values assigned to the powers, $\star$, were dependent on $m$ (the number of clauses), while in this version the powers are independent of $m$.
\end{enumerate}

\subsection{The forward direction} \label{Section: supp fwd direction}
Proposition \ref{Proposition: supp fwd direction} shows how a satisfying assignment for $\psi$ implies the existence of an invertible $A$, such that $\supp(f(A\vecw)) = \sigma$, by constructing $A$ from the satisfying assignment.

\begin{proposition} \label{Proposition: supp fwd direction}
Let $\psi \in \SAT$ with $(u_1,\dots,u_n) \in \{0,1\}^n$ a satisfying assignment. Then, $\supp(f(A\vecw)) = \sigma$, where the transform $A$ is defined as
\begin{equation} \label{actioncsp}
        \begin{split}
            A: z_j \mapsto z_j, \  x_i \mapsto x_i, \ y_i \mapsto y_i + (1-u_i)x_i \ \ i \in [n], j \in [\sigma - 5].
        \end{split}
    \end{equation}
\end{proposition}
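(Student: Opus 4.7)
The plan is to combine Observation \ref{Obs: support of orbit poly} with a case-by-case bound, showing that under $A$ every summand of $f(A\vecw)$ produces only monomials of support at most $\sigma$, while at least one summand produces a monomial of support exactly $\sigma$. Since $A$ acts as the identity on $\vecz \sqcup \vecx$ and sends each $y_i$ to $y_i + (1-u_i)x_i$, no transformation introduces any variable outside those already present in the original summand. Thus I only need to control how the support of each summand can change, and the \emph{only} nontrivial bound comes from the clause polynomials $R_k$, where the satisfying assignment is used.

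First, for $g \in P$: since $g$ is a monomial in $\vecz \sqcup \vecx$ and $A$ fixes these variables, $g(A\vecw) = g(\vecw)$ is a single monomial of support exactly $\sigma$. This already forces $\supp(f(A\vecw)) \geq \sigma$. Second, for $h \in Q$: the variables appearing in $h$ are the $\sigma$ variables $x_{i_1}, y_{i_1}, \ldots, x_{i_{\sigma/2}}, y_{i_{\sigma/2}}$ (or the odd-$\sigma$ analogue with an extra $x$-factor), and $A$ maps each $y_{i_j}$ to a form inside $\spa\{y_{i_j}, x_{i_j}\}$, so $h(A\vecw)$ is still a polynomial in the same $\sigma$ variables, giving $\supp(h(A\vecw)) \leq \sigma$.

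The main step is the $R_k$ bound, which is where $\psi(\vecu)=1$ is exploited. Since the $k$-th clause is satisfied by $\vecu$, there exists $j^* \in C_k$ with $u_{j^*} \neq a_{k,j^*}$, i.e.\ $1 - u_{j^*} - a_{k,j^*} = 0$. For this $j^*$, the action of $A$ collapses the affine form $y_{j^*} - a_{k,j^*} x_{j^*}$ to the single variable $y_{j^*}$, so the factor $(y_{j^*} - a_{k,j^*} x_{j^*})^{2+a_{k,j^*}}$ becomes the monomial $y_{j^*}^{2+a_{k,j^*}}$. Each of the remaining two factors $(y_l - a_{k,l} x_l)^{2+a_{k,l}}$ for $l \in C_k \setminus \{j^*\}$ maps to a power of an affine form in at most the two variables $\{y_l, x_l\}$, and the trailing factor $z_1 \cdots z_{\sigma-5}$ introduces exactly $\sigma-5$ further variables. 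Counting, every monomial of $R_k(A\vecw)$ is supported on at most $1 + 2\cdot 2 + (\sigma-5) = \sigma$ variables, so $\supp(R_k(A\vecw)) \leq \sigma$, and hence $\supp\bigl(\sum_k R_k(A\vecw)\bigr) \leq \sigma$.

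Putting these three bounds into Observation \ref{Obs: support of orbit poly} gives $\supp(f(A\vecw)) \leq \sigma$, and combined with the lower bound $\sigma$ from any $g \in P$ we conclude $\supp(f(A\vecw)) = \sigma$. The only genuinely delicate point is the $R_k$ count, and it is delicate only because the constant $\sigma-5$ is fine-tuned precisely so that one ``collapsed'' factor frees up exactly the extra variable that would otherwise push the support to $\sigma+1$; everything else is a direct inspection of which variables each transformed summand can touch.
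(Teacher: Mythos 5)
Your proof is correct and follows essentially the same route as the paper: invoke Observation \ref{Obs: support of orbit poly}, note $g(A\vecw)=g(\vecw)$ for $g \in P$ (giving the lower bound $\sigma$), observe $h(A\vecw)$ stays in the same $\sigma$ variables for $h \in Q$, and use the satisfying assignment to collapse one factor of each $R_k$ so that the per-factor variable count gives $1+2\cdot 2+(\sigma-5)=\sigma$. One small caveat: your opening claim that no transformation introduces a variable outside those already present in the original summand is not literally true for the clause polynomials (if $a_{k,l}=0$ and $u_l=0$, the factor $y_l^2$ maps to $(y_l+x_l)^2$, introducing $x_l$), but this does not affect your argument since the bound you actually use is the explicit count of at most two variables per uncollapsed factor.
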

\begin{proof}
As Condition \ref{pow-degsep-CSP} and Observation \ref{Obs: support of orbit poly} are satisfied, it suffices to analyse the action of $A$ on individual polynomials of $P \sqcup Q$ and then analyse $\supp(\sum_{k=1}^{m}R_k(A\vecw))$. Clearly, for $g(\vecw) \in P$, $g(A\vecw) = g(\vecw)$. Let $h(\vecw) \in Q$. Then $h(\vecw)$ is of form $((x_{t_1}y_{t_1})\cdots (x_{t_{\frac{\sigma}{2}}} y_{t_{\frac{\sigma}{2}}}))^\star$, $t_j \in [n]$, and $A$ acts on $h$ as:
    \[A: ((x_{t_1}y_{t_1})\cdots (x_{t_{\frac{\sigma}{2}}} y_{t_{\frac{\sigma}{2}}}))^\star \mapsto 
     ((x_{t_1})(y_{t_1} + (1-u_{t_1}) x_{t_1})\cdots (x_{t_{\frac{\sigma}{2}}})( y_{t_{\frac{\sigma}{2}}} + (1- u_{t_{\frac{\sigma}{2}}})x_{t_{\frac{\sigma}{2}}}))^\star.\]
Note $|\cup_{t=1}^{\sigma} \var(\ell_t)| = \sigma$, where $\ell_t = A(w)$ and $w \in \var(h(\vecw))$. Thus, $\supp(h(A\vecw)) \leq \sigma$. When $\sigma$ is odd, a similar argument holds for the modified construction of $Q$. For $k \in [m]$, $A$ acts on $R_k(\vecw)$ as:
    \[A: (\prod_{j \in C_k}(y_{j} - a_{k,j} x_j )^{2+a_{k,j}}) \cdot (z_{1} \cdots z_{\sigma-5} ) \mapsto (\prod_{j \in C_k}(y_{j} + (1 - a_{k,j} - u_{j}) x_j )^{2+a_{k,j}} \cdot (z_{1} \cdots z_{\sigma-5} ).\]
If $a_{k,j} \neq u_{j}$, then $a_{k,j} = 1 - u_{j}$. Since $\psi$ is satisfiable, therefore for all $k \in [m]$, $a_{k,j} \neq u_{j}$ for some $j \in C_k$. Hence, $\supp(R_k(A\vecw)) \leq (\sigma-5) + 5 = \sigma$ for all $k \in [m]$. This implies $\supp(\sum_{k=1}^{m}R_k(A\vecw))  \leq \sigma$. Thus, $\supp(f(A\vecw)) = \sigma$.
\end{proof}
\subsection{The reverse direction} \label{Subsection: csp bwd direction}
Now, we show that if $\supp(f(A\vecw)) \leq \sigma$ for $A \in \GL(|\vecw|,\F)$, then a satisfying assignment can be recovered for $\psi$. Lemmas \ref{P1Sep} and \ref{P2Sep}, proved using Claim \ref{FullSupport}, together show that $A$ is as:
\[A: z_j \mapsto z_j, \ x_i \mapsto x_i, \ y_i \mapsto y_i + c_ix_i \ \ \ c_i \in \F, \ j \in [\sigma-5], i \in [n]\]
without loss of generality.\footnote{as permutation and non-zero scaling of variables do not affect the support.} Proposition \ref{Proposition: ETsupport reverse direction} derives a satisfying assignment for $\psi$ from $A$.

\begin{lemma} \label{P1Sep}
    If $\supp(f(A\vecw)) \leq \sigma$, then $\forall w \in \vecz \sqcup \vecx$, $A(w) = W$, for scaled variable $W \in \vecw$.
\end{lemma}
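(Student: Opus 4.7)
The plan is to apply Claim \ref{FullSupport} to each monomial $g \in P$ in order to pin down the variable-support structure of $A$ on $V := \vecz \sqcup \vecx$, and then run a short downward combinatorial induction to conclude that each $A(w)$, $w \in V$, is a scaled variable.

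First, by Observation \ref{Obs: support of orbit poly}, the hypothesis $\supp(f(A\vecw)) \leq \sigma$ forces $\supp(g(A\vecw)) \leq \sigma$ for every $g \in P$. Fix such a $g = (w_1 \cdots w_\sigma)^d$ with $d \geq \sigma + 1$, which holds by condition \ref{pow-degsep-CSP}. Since $A \in \GL(|\vecw|,\F)$, the forms $A(w_1), \ldots, A(w_\sigma)$ are linearly independent. I would invoke Claim \ref{FullSupport} with the claim's ``$n$'' set to our $\sigma$ and the claim's ``$\sigma$'' set to our $\sigma + 1$: if $|\bigcup_{i=1}^\sigma \var(A(w_i))| \geq \sigma + 1$, then $\supp(g(A\vecw)) \geq \sigma + 1$, a contradiction. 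The characteristic hypotheses of Claim \ref{FullSupport} hold because of condition \ref{pow-finitechar-CSP} on the powers together with the assumption $\textnormal{char}(\F) = 0$ or $\textnormal{char}(\F) > \sigma + 1$ from Theorem \ref{Theorem: ETsupport NP-hard}. Combined with the trivial lower bound of $\sigma$ from linear independence, this step yields $|\bigcup_{w \in T} \var(A(w))| = \sigma$ for every $\sigma$-subset $T \subseteq V$.

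The second step is a downward induction on $k$ from $k = \sigma$ down to $k = 1$ showing that every $k$-subset $T' \subseteq V$ satisfies $|\bigcup_{w \in T'} \var(A(w))| = k$. The base case $k = \sigma$ has just been established. For the inductive step, given a $(k-1)$-subset $\{w_1, \ldots, w_{k-1}\}$ and using $|V| = n + \sigma - 5 \geq 2\sigma - 1 \geq k + 1$, I would pick two distinct fresh elements $w_0, w_k \in V$ and consider the $k$-subsets $T_1 := \{w_0, w_1, \ldots, w_{k-1}\}$ and $T_2 := \{w_k, w_1, \ldots, w_{k-1}\}$. Writing $U_T := \bigcup_{w \in T} \var(A(w))$, the induction hypothesis gives $|U_{T_1}| = |U_{T_2}| = k$, while linear independence of $A(w_0), A(w_1), \ldots, A(w_k)$ forces $|U_{T_1} \cup U_{T_2}| \geq k + 1$. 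Inclusion–exclusion then yields $|U_{T_1} \cap U_{T_2}| \leq k - 1$, and combining with $U_{T_1} \cap U_{T_2} \supseteq \bigcup_{i=1}^{k-1} \var(A(w_i))$ (size $\geq k - 1$ by linear independence) pins the latter union at exactly $k - 1$.

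Specializing to $k = 1$ gives $|\var(A(w))| = 1$ for every $w \in V$, meaning $A(w)$ is a nonzero scalar multiple of a single variable in $\vecw$, which is exactly the conclusion of the lemma. The main technical ingredient is Claim \ref{FullSupport}, which does the heavy lifting of converting a support constraint on $f(A\vecw)$ into a combinatorial constraint on the structure of $A$; the subsequent combinatorial induction is routine but crucially relies on the hypothesis $n \geq \sigma + 4$ (equivalently $|V| \geq 2\sigma - 1$) to ensure there is always room to extend a $(k-1)$-subset by two distinct fresh elements.
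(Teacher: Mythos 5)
Your proposal is correct. The first step coincides exactly with the paper's: since $P$ contains a monomial for every $\sigma$-subset of $\vecz \sqcup \vecx$, Observation \ref{Obs: support of orbit poly} plus Claim \ref{FullSupport} (applied with the claim's support parameter set to $\sigma+1$, legitimate under condition \ref{pow-finitechar-CSP} and the characteristic assumption of Theorem \ref{Theorem: ETsupport NP-hard}) forces $|\bigcup_{w\in T}\var(A(w))| = \sigma$ for every $\sigma$-subset $T$. Where you diverge is in extracting the conclusion from this fact. The paper fixes a variable $x_1$, chooses two monomials $g_1, g_2 \in P$ with $\var(g_1)\cap\var(g_2) = \{x_1\}$ (this is what consumes $n + \sigma - 6 \geq 2(\sigma-1)$, i.e.\ $|\vecz\sqcup\vecx| \geq 2\sigma-1$), and argues via spans: the image spaces $\langle B_1\rangle, \langle B_2\rangle$ are coordinate subspaces whose intersection has dimension $1$ and contains $A(x_1)$, so $A(x_1)$ is a scaled variable. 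You instead run a downward induction on subset size, showing every $k$-subset has variable-support union of size exactly $k$, via inclusion--exclusion on $U_{T_1}, U_{T_2}$ and the linear-independence lower bounds; specializing to singletons gives the lemma. Your route is more elementary (pure counting, no span-isomorphism bookkeeping) and, as a side benefit, only needs $|\vecz\sqcup\vecx| \geq \sigma+1$ rather than $2\sigma-1$ for this lemma -- so your closing remark that the induction ``crucially relies'' on $|V| \geq 2\sigma-1$ overstates the requirement (the full strength of $n \geq \sigma+4$ is genuinely needed only in Lemma \ref{P2Sep} and in the paper's own proof of this lemma). The paper's span argument, on the other hand, is the template that is reused almost verbatim in Lemma \ref{P2Sep}, where one must track two-dimensional intersections $\langle\{X_1, Y_1\}\rangle$; your counting induction would need some adaptation there, since in that setting the forms $A(y_i)$ are not yet known to be supported on few variables individually.
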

\begin{lemma} \label{P2Sep}
    If $\supp(f(A\vecw)) \leq \sigma$, then $A(x_i) = X_i$ and $A(y_i) = Y_i + c_i X_i$, for scaled variables $Y_i,X_i \in \vecw$ and $c_i \in \F$.
\end{lemma}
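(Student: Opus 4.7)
The plan is to extend Lemma \ref{P1Sep} by analyzing how $A$ acts on the $\vecy$ variables. After permuting and rescaling variables (which do not change the support), we may assume without loss of generality that $A(x_i)=x_i$ for $i\in[n]$ and $A(z_j)=z_j$ for $j\in[\sigma-5]$. Under this normalization, every image decomposes uniquely as $A(y_i)=\alpha_i(\vecx,\vecz)+\beta_i(\vecy)$, a linear form in $\vecx\sqcup\vecz$ plus a linear form in $\vecy$. Invertibility of $A$ forces the $n\times n$ matrix $M=(m_{ik})$ defined by $\beta_i=\sum_k m_{ik}y_k$ to be invertible.

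The heart of the argument is to feed each polynomial $h=(\prod_{i\in I}x_iy_i)^\star\in Q$, with $|I|=\sigma/2$, into the support bound. By Observation \ref{Obs: support of orbit poly} we have $\supp(h(A\vecw))\leq\sigma$, while the $\sigma$ linear forms $\{x_i,A(y_i):i\in I\}$ are linearly independent since $A$ is invertible. Applying Claim \ref{FullSupport} with parameter $\sigma+1$ (legitimate because the exponent $\star$ is chosen at least $\sigma+1$) rules out these forms using more than $\sigma$ distinct variables; combined with the lower bound of $\sigma$ from linear independence, the variable set has size exactly $\sigma$. Since $\{x_i:i\in I\}$ accounts for $\sigma/2$ of these, an ``extras'' set $E_I$ of size $\sigma/2$ remains. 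The $\beta_i$ for $i\in I$ form $\sigma/2$ linearly independent linear forms in $\vecy$ (they are a submatrix of rows of the invertible $M$), hence already require at least $\sigma/2$ distinct $y$-variables, so $E_I\subseteq\vecy$. Two consequences follow: $\var(\alpha_i)\subseteq\{x_j:j\in I\}$ and $\supp(\beta_i)\subseteq E_I$ for every $i\in I$.

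Intersecting over choices of $I$ completes the structural analysis. Fixing $i$ and varying $I\ni i$ so as to exclude any specified $j\neq i$ (possible since $n\geq\sigma+4\gg\sigma/2$), we get $\alpha_i\in\bigcap_{I\ni i}\spa\{x_j:j\in I\}=\spa\{x_i\}$, so $\alpha_i=c_ix_i$ for some $c_i\in\F$. For the $\vecy$-part, the Leibniz expansion of $\det(M)\neq 0$ yields a permutation $\tau$ of $[n]$ with $m_{i,\tau(i)}\neq 0$ for every $i$; thus $\{\tau(i):i\in I\}\subseteq E_I=\bigcup_{i\in I}\supp(\beta_i)$, and since both sides have size $\sigma/2$ they coincide. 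Fixing $i_0$ and intersecting over $I\ni i_0$ of size $\sigma/2$ leaves $\supp(\beta_{i_0})\subseteq\bigcap_{I\ni i_0}\{\tau(j):j\in I\}=\{\tau(i_0)\}$, whence $\beta_{i_0}=\lambda_{i_0}y_{\tau(i_0)}$ for some nonzero $\lambda_{i_0}$. Combining gives $A(y_i)=c_ix_i+\lambda_iy_{\tau(i)}=c_iA(x_i)+Y_i$ with $Y_i:=\lambda_iy_{\tau(i)}$ a scaled variable of $\vecw$, as claimed.

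The main obstacle I anticipate is making the ``extras are all $y$-variables'' step precise, which requires carefully separating the $\vecx\sqcup\vecz$- and $\vecy$-components of the $A(y_i)$'s and arguing that the linear independence of the row-block $\{\beta_i:i\in I\}$ exactly exhausts the slack $\sigma/2$ in $E_I$, leaving no room for $z_j$'s or for $x_j$'s with $j\notin I$. The odd-$\sigma$ variant of $Q$ introduces one additional standalone $x$-factor per monomial, which only shifts the $x$-count to $(\sigma+1)/2$ and the extras count to $(\sigma-1)/2$ but otherwise leaves the argument unchanged.
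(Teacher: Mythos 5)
Your proposal is correct. The first half is exactly the paper's argument: after invoking Lemma~\ref{P1Sep}, you apply Observation~\ref{Obs: support of orbit poly} and Claim~\ref{FullSupport} (with the power $\star \geq \sigma+1$ and the characteristic assumption $p > \sigma+1$) to each monomial of $Q$ to conclude that the variable set of its image has size exactly $\sigma$; this is the same use of the same key claim as in the paper. Where you diverge is the localization step. The paper finishes by taking just two monomials $h_1,h_2 \in Q$ with $\var(h_1)\cap\var(h_2)=\{x_1,y_1\}$ and arguing that $A$ maps the two-dimensional space $\spa\{x_1,y_1\}$ isomorphically onto $\langle B_1\rangle\cap\langle B_2\rangle=\langle B_1\cap B_2\rangle$, a span of two variables containing $X_1$, which immediately gives $A(y_1)=Y_1+c_1X_1$; this needs only $n-1\geq\sigma-2$ and no explicit coordinates. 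You instead normalize $A$ on $\vecx\sqcup\vecz$, split $A(y_i)=\alpha_i(\vecx,\vecz)+\beta_i(\vecy)$, use invertibility of the $\vecy$-block $M$ to force the ``extras'' of each index set $I$ to be exactly $\sigma/2$ many $y$-variables, and then pin down $\alpha_i$ and $\beta_i$ by intersecting over many sets $I$ together with a Leibniz-permutation argument for $M$. Your route is sound (the decomposition, the invertibility of $M$, the counting $|E_I|=\sigma/2$, and the intersections all check out, including the odd-$\sigma$ variant), and it is more explicit, but it is also longer: the two-monomial intersection of coordinate subspaces in the paper replaces your whole matrix/permutation bookkeeping with a single dimension count, and it handles the $x$- and $y$-parts of $A(y_i)$ simultaneously rather than separately.
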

\begin{proposition} \label{Proposition: ETsupport reverse direction}
    A satisfying assignment $\vecu$ for $\psi$ can be extracted from $A$.
\end{proposition}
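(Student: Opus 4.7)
The plan is to recover a satisfying assignment by reading off the field elements $c_i$ provided by Lemmas~\ref{P1Sep} and~\ref{P2Sep}. Absorbing scaling and permutation of variables into the construction, I assume $A:z_j\mapsto z_j$, $x_i\mapsto x_i$, $y_i\mapsto y_i+c_ix_i$ with $c_i\in\F$. Then $g(A\vecw)=g(\vecw)$ for every $g\in P$, and for each clause index $k$,
\[
R_k(A\vecw)=\Big(\prod_{j\in C_k}\bigl(y_j+(c_j-a_{k,j})x_j\bigr)^{2+a_{k,j}}\Big)\cdot z_1\cdots z_{\sigma-5}.
\]

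The first substantive step is to show that $\supp(R_k(A\vecw))\le\sigma$ for each $k$ individually. Under $A$ both $x_j$ and $y_j$ are mapped into $\spacespanned{x_j,y_j}$, so for every $j\in[n]$ the combined $(x_j,y_j)$-degree of any monomial is an invariant, and the total degree is likewise preserved. By Condition~\ref{pow-degsep-CSP} the polynomials in $P\sqcup Q$ are total-degree separated from the $R_k$'s and from one another. Moreover, inspecting the factored definition of the $R_k$'s shows that for $k\ne l$, some $j^*\in C_k\triangle C_l$ (or, if $C_k=C_l$, some $j^*$ with $a_{k,j^*}\ne a_{l,j^*}$) gives distinct combined $(x_{j^*},y_{j^*})$-degrees to $R_k$ and $R_l$, strengthening Observation~\ref{Obs: deg sep clause poly}. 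All of these separations are $A$-invariant, so the monomials of the transformed summands remain mutually disjoint; no cancellation occurs, and $\supp(f(A\vecw))\ge\supp(R_k(A\vecw))$ for every $k$, forcing $\supp(R_k(A\vecw))\le\sigma$.

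Now the counting: the factors $(y_j+(c_j-a_{k,j})x_j)^{2+a_{k,j}}$ for $j\in C_k$ involve pairwise disjoint variable sets $\{x_j,y_j\}$, so the support of their product equals the sum of their individual supports, and multiplying by $z_1\cdots z_{\sigma-5}$ contributes an additional $\sigma-5$. By Observation~\ref{lem_binom} (applicable since $\textnormal{char}(\F)=0$ or $\textnormal{char}(\F)>\sigma+1>2+a_{k,j}$), each such factor has support $2$ when $c_j\ne a_{k,j}$ and support $1$ when $c_j=a_{k,j}$. If $c_j\ne a_{k,j}$ for all three $j\in C_k$, then $\supp(R_k(A\vecw))=(\sigma-5)+6=\sigma+1$, contradicting the bound from the previous step. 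Hence every $k$ admits some $j\in C_k$ with $c_j=a_{k,j}\in\{0,1\}$; setting $u_j:=1-c_j$ on each forced index $j$ and $u_j\in\{0,1\}$ arbitrarily otherwise yields $u_j=1-a_{k,j}$, so the literal $x_j\oplus a_{k,j}$ is true in clause $k$, and $\vecu=(u_1,\ldots,u_n)$ satisfies $\psi$. The main obstacle is the middle step: generic degree separation with respect to a single variable in $\vecx\sqcup\vecy$ need not survive $A$, but the stronger combined-bidegree separation that in fact underlies Observation~\ref{Obs: deg sep clause poly} does survive because $A$ stabilizes each subspace $\spacespanned{x_j,y_j}$.
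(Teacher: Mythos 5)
Your proof is correct and follows essentially the same route as the paper: reduce $A$ to the form $y_i\mapsto y_i+c_ix_i$ via Lemmas~\ref{P1Sep} and~\ref{P2Sep}, show the transformed clause polynomials remain separated so that $\supp(R_k(A\vecw))\le\sigma$ for each $k$, and then use the characteristic assumption on the variable-disjoint factors $(y_j+(c_j-a_{k,j})x_j)^{2+a_{k,j}}$ to force $c_j=a_{k,j}$ for some $j\in C_k$ and read off $u_j=1-c_j$. Your ``$A$-invariant combined $(x_j,y_j)$-bidegree'' framing of the separation step is just a repackaging of the argument the paper invokes (the remark at the end of the proof of Observation~\ref{Obs: deg sep clause poly} covering the general $c_j\in\F$ form), so there is no substantive difference.
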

\begin{proof}
    The action of $A$ on $R_{k}$, where $k \in [m]$, is:
    \[(\prod_{j \in C_k}(y_{j} - a_{k,j} x_j )^{2+a_{k,j}}) \cdot (z_{1} \cdots z_{\sigma-5} ) \mapsto (\prod_{j \in C_k}(y_{j} + (c_j - a_{k,j}) x_j )^{2+a_{k,j}}) \cdot (z_{1} \cdots z_{\sigma-5} ).\]
    An argument similar to the proof of Observation \ref{Obs: deg sep clause poly} shows that for any $k, l \in [m]$ with $k \neq l$, any monomial of $R_k(A\vecw)$ is degree separated from every monomial of $R_l(A\vecw)$ with respect to some variable in $\vecx \sqcup \vecy$. This implies $\supp(\sum_{k=1}^{m}R_k(A\vecw)) = \max_{k \in [m]}(\supp (R_k(A\vecw))$. Since $\supp (f(A\vecw)) \leq \sigma$, therefore $\supp (R_k(A\vecw)) \leq \sigma$ for all $k \in [m]$. Using the fact that the characteristic of the field is not equal to $2$ or $3$ and the action of $A$ of $R_k(\vecw)$, it can be seen that  $\supp(R_k(A\vecw)) \leq \sigma$ iff for some $j \in C_k$, $c_j = a_{k,j}$. Hence, for each $R_k(\vecw)$, there exists $j \in C_k$ such that $c_j \in \{0,1\}$. Construct $\vecu \in \{0,1\}^n$ by setting $u_j := 1 - c_{j}$, for appropriate $j \in C_k$ and the remaining $u_i$'s to arbitrary values in $\{0,1\}$. From the definition of $\vecu$, it follows that for the $k\ith$ clause, there exists $j \in C_k$ such that $u_{j} \neq a_{k,j}$. As $k$ is arbitrary, all clauses are satisfied.
\end{proof}

\section{$\NP$-hardness of $\sparseshift$ and $\gapsparseshift$} \label{section-shifteqvhard}
In Section \ref{section-sparseshiftNPH}, we adapt the reduction and analysis of Section \ref{Subsection: Reduction ETSparse} to prove part one of Theorem \ref{Theorem: ETSparse_shift NP-hard}. In Section \ref{section-gapsparseshiftNPH}, we prove part two of Theorem \ref{Theorem: ETSparse_shift NP-hard} by an analysis similar to that in Section \ref{Subsection: gap reduction}. Note that for uniformity of presentation, we show the reduction and analysis over fields; the same also holds over any integral domain. In Section \ref{Sec:SETsparse prev work compare}, we briefly describe the reduction in \cite{ChillaraGS23} and then compare our results with theirs. The proofs of the observations and lemmas in this section can be found in Appendix \ref{Section: sparse-shift proofs}.
\subsection{Hardness of $\sparseshift$} \label{section-sparseshiftNPH}

\paragraph{Proof sketch.}The reduction is similar to that in Section \ref{Subsection: Reduction ETSparse}. Each variable and clause of a $\CNF$ $\psi$ is mapped to a distinct polynomial and these polynomials are degree separated from one another with respect to a fixed variable. Summing these polynomials gives the polynomial $f$. The degrees are chosen such that $f$ is shift-equivalent to an $s$-sparse polynomial (for a suitable sparsity parameter $s$) if and only if $\psi \in \SAT$. We first describe the reduction for fields $\F$ of characteristic not equal to $2$ and then do so for characteristic $2$ fields.

Formally, let $\psi$, $\vecx$ and $x_0$ be as denoted in Section \ref{Subsection: Reduction ETSparse}. Let $\vecz := \vecx \sqcup \{x_0\}$. For $\vecb \in \F^{|\vecz|}$, $\vecb_{|w}$ denotes the component of $\vecb$ corresponding to the variable $w \in \vecz$.  Let $d_1$, $d_2$, $d_3 \in \N$. Consider the following polynomials:

\begin{itemize}
    \item Corresponding to $x_i$, where $i\in [n]$, define $Q_i(\vecz)$ as: 
\begin{equation*} 
\begin{split}
    Q_i(\vecz) &:= Q_{i,1}(\vecz) + Q_{i,2}(\vecz), \  Q_{i,1}(\vecz) := x_0^{(2i-1)(d_2+1)}(x_i + 1)^{d_2} \text{ and } \\  Q_{i,2}(\vecz) &:= x_0^{2i(d_2+1)}(x_i - 1)^{d_2}. 
\end{split}
\end{equation*}

\item For the $k\ith$ clause, $k \in [m]$,  define $R_k(\vecz) := x_0^{k}\prod_{j\in C_k}(x_j + (-1)^{a_{k,j}})^{d_3}.$ 
\end{itemize}
Define $s:= 1 + n(d_2+2) + m(d_3+1)^2$. Impose the following conditions on the $d_i$'s:
\begin{equation} \label{Sparse-shift-cond}
    d_1 \geq 4n(d_2+1)+2d_2 + 2, \
    d_2 \geq m(d_3+1)^2 + 1, \
    d_3 \geq m.
\end{equation}
Finally, define $f(\vecz)$ as:
\begin{equation} \label{Sparse-shift-poly}
  f(\vecz) := x_0^{d_1} + \sum_{i=1}^{n} Q_i(\vecz) + \sum_{k=1}^{m} R_k(\vecz).  
\end{equation}

The $d_i$'s are chosen in Section \ref{Choice of degrees-shift} such that they are $(mn)^{O(1)}$ while also satisfying the inequalities of \eqref{Sparse-shift-cond}. Observations \ref{Obs: deg separated poly-shift}, \ref{Obs: shift-poly degree} and \ref{Obs: shift-poly sparsity} hold under the conditions of \eqref{Sparse-shift-cond}.
\begin{observation} \label{Obs: deg separated poly-shift}
For all $i \in [n]$, $k \in [m]$, the polynomials $x_0^{d_1}$, $Q_{i,1}(\vecz)$, $Q_{i,2}(\vecz)$ and $R_k(\vecz)$ are degree separated from one another with respect to $x_0$. Also, $Q_i(\vecz)$ is degree separated from other $Q_j(\vecz)$'s with respect to $x_0$, for $i,j \in [n]$ and $i \neq j$. Similarly, $R_k(\vecz)$ is degree separated from $R_l(\vecz)$ with respect to $x_0$ for $k,l \in [m]$ and $k \neq l$.
\end{observation}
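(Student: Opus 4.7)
The plan is straightforward: each of the polynomials listed in the observation is of the form $x_0^{e}\cdot g$ where $g \in \F[\vecx]$ does not involve $x_0$. Consequently, every monomial of such a polynomial has $x_0$-degree exactly $e$, and so ``degree separated with respect to $x_0$'' collapses to ``pairwise distinct $x_0$-degrees.'' So the first step is simply to tabulate the relevant $x_0$-degrees: $d_1$ for $x_0^{d_1}$; $(2i-1)(d_2+1)$ for $Q_{i,1}$; $2i(d_2+1)$ for $Q_{i,2}$; and $k$ for $R_k$.

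Next, I would verify pairwise distinctness by a short case analysis. The separation between $Q_{i,1}$'s and $Q_{j,2}$'s (for any $i,j \in [n]$) follows from a parity argument: $(2i-1)(d_2+1)$ is an odd multiple of $(d_2+1)$ while $2j(d_2+1)$ is even, so they can never coincide. Distinctness among $Q_{i,1}$'s (resp.\ $Q_{i,2}$'s) for $i \neq j$ is immediate since $(2i-1)(d_2+1) = (2j-1)(d_2+1)$ forces $i=j$, and likewise for the even multiples. Distinctness among $R_k$'s is trivial from $k \neq \ell$.

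The cross cases use the inequalities of \eqref{Sparse-shift-cond}. For $R_k$ versus any $Q_{i,j}$, we have $k \leq m \leq d_2 < d_2+1 \leq (2i-1)(d_2+1)$ using $d_2 \geq m(d_3+1)^2+1 > m$, so these degrees cannot match. For $x_0^{d_1}$ versus any $Q_{i,j}$ (with $i \in [n]$), the constraint $d_1 \geq 4n(d_2+1)+2d_2+2 > 2n(d_2+1)$ forces $d_1$ to exceed every possible $Q_{i,j}$-degree. For $x_0^{d_1}$ versus $R_k$, the same inequality yields $d_1 > m \geq k$. Finally, the statements about $Q_i$ vs $Q_j$ and $R_k$ vs $R_\ell$ (across different indices) follow by combining the $Q_{i,1}$-vs-$Q_{j,2}$ parity argument with the individual distinctness among $Q_{i,1}$'s and $Q_{i,2}$'s.

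There is no real obstacle here; it is a routine verification that the prescribed degrees stay out of each other's way, with the only mild subtlety being the odd/even separation between $Q_{i,1}$'s and $Q_{j,2}$'s. I would present it as a brief table of $x_0$-degrees followed by a few one-line inequality checks invoking \eqref{Sparse-shift-cond}.
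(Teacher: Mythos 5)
Your proposal is correct and follows essentially the same route as the paper's proof: since every listed polynomial is $x_0^{e}\cdot g$ with $g$ free of $x_0$, both arguments reduce the claim to checking that the exponents $d_1$, $(2i-1)(d_2+1)$, $2i(d_2+1)$ and $k$ are pairwise distinct, using $d_2 > m$ and $d_1 \geq 4n(d_2+1)+2d_2+2$ from \eqref{Sparse-shift-cond}. Your parity observation for $Q_{i,1}$ versus $Q_{j,2}$ replaces the paper's interval comparison $(2j-1)(d_2+1) > 2i(d_2+1)$ for $i<j$, but this is only a cosmetic difference.
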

\begin{observation} \label{Obs: shift-poly degree}
The degree of $f$ is $d_1 = (mn)^{O(1)}$. Also, $\frac{d_1}{2} > s$.
\end{observation}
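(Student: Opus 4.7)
The plan is to verify both assertions of Observation 6.3 by a direct calculation of the per-summand degrees followed by an application of the inequalities in \eqref{Sparse-shift-cond}. First I would list the degrees of the summands of $f$ from \eqref{Sparse-shift-poly}: the leading term $x_0^{d_1}$ contributes degree $d_1$; the summand $Q_{i,1}$ has degree $(2i-1)(d_2+1) + d_2$ and $Q_{i,2}$ has degree $2i(d_2+1) + d_2$, so the largest degree among the $Q$-terms is $2n(d_2+1) + d_2$ (attained by $Q_{n,2}$); the summand $R_k$ has degree $k + 3d_3$, so the largest degree among the $R$-terms is $m + 3d_3$. The task then reduces to showing that $d_1$ strictly dominates both $2n(d_2+1) + d_2$ and $m + 3d_3$ under \eqref{Sparse-shift-cond}.

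For the first domination, observe that $d_1 \geq 4n(d_2+1) + 2d_2 + 2$ is already about twice the largest $Q$-degree, so $d_1 > 2n(d_2+1) + d_2$ is immediate. For the second, I would use $d_2 \geq m(d_3+1)^2 + 1$, which implies $d_2 > d_3$ and $d_2 > m$, so that $d_1 \geq 2d_2 > m + 3d_3$ (in fact $d_1 \geq 2d_2 \geq 2m(d_3+1)^2 + 2$, which is comfortably larger than $m + 3d_3$). This establishes that $\deg f = d_1$. The polynomial-in-$mn$ bound then follows from the explicit choice of $d_1,d_2,d_3$ in Section~\ref{Choice of degrees-shift}, where each $d_i$ is set to $(mn)^{O(1)}$ while respecting \eqref{Sparse-shift-cond}; I would simply cite that setting rather than redo it here.

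For the second assertion, $d_1/2 > s$, the plan is to bound $s$ by plugging the constraint $m(d_3+1)^2 \leq d_2 - 1$ from \eqref{Sparse-shift-cond} into the definition $s = 1 + n(d_2 + 2) + m(d_3+1)^2$, which yields $2s \leq 2nd_2 + 2d_2 + 4n$. Comparing this with $d_1 \geq 4n(d_2+1) + 2d_2 + 2 = 4nd_2 + 2d_2 + 4n + 2$ gives $d_1 - 2s \geq 2nd_2 + 2 > 0$, i.e., $d_1/2 > s$, as desired. I do not foresee any real obstacle: both parts of the observation are bookkeeping checks designed to be direct consequences of the way the $d_i$'s were stipulated in \eqref{Sparse-shift-cond}, and the only mild care needed is in confirming that the inequality $d_1 \geq 2d_2$ already swallows the $R_k$-degrees via the transitive use of $d_2 \geq m(d_3+1)^2 + 1$.
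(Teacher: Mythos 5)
Your proposal is correct and follows essentially the same route as the paper's proof: compute the summand degrees ($Q_{i}$'s of degree at most $2n(d_2+1)+d_2$, $R_k$'s of degree $k+3d_3$), observe that the constraint $d_1 \geq 4n(d_2+1)+2d_2+2$ makes $d_1$ strictly dominate, and then bound $s$ using $m(d_3+1)^2 \leq d_2-1$ to get $\frac{d_1}{2} > s$. The only cosmetic difference is that the paper routes the degree comparison through Observation \ref{Obs: deg separated poly-shift} and the intermediate quantity $2n(d_2+1)+d_2 > s$, whereas you do the arithmetic directly, which is equally valid since the top-degree term $x_0^{d_1}$ cannot be cancelled by any lower-degree summand.
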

\begin{observation} \label{Obs: shift-poly sparsity}
    $\cal S(f(\vecz)) = 1 + n(2d_2+2) + m(d_3+1)^3$ and $\supp(f) = 4$.
\end{observation}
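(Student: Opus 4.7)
The plan is to exploit the degree separation established in Observation \ref{Obs: deg separated poly-shift}. By that observation combined with Observation \ref{lem_deg_sep_sum}, the summands $x_0^{d_1}$, $Q_{i,1}(\vecz)$, $Q_{i,2}(\vecz)$ (for $i \in [n]$), and $R_k(\vecz)$ (for $k \in [m]$) are pairwise degree-separated with respect to $x_0$, so no monomial of any one summand can cancel or coincide with a monomial of any other. Consequently $\cal S(f)$ equals the sum of the sparsities of the individual summands, and I can analyze each piece in isolation.

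For the individual contributions, $\cal S(x_0^{d_1}) = 1$ is immediate. Each $Q_{i,j}(\vecz)$ is the product of a monomial in $x_0$ and a univariate power $(x_i \pm 1)^{d_2}$; the univariate factor has degree $d_2$, so its sparsity is at most $d_2+1$ trivially, and at least $d_2+1$ by Observation \ref{Lemma: affine form sparsity} (the parameter $d_2$ will be chosen in Section \ref{Choice of degrees-shift} so that the small-characteristic hypotheses of that observation are satisfied). Hence $\cal S(Q_{i,1}) = \cal S(Q_{i,2}) = d_2+1$. For $R_k(\vecz) = x_0^k \prod_{j \in C_k}(x_j + (-1)^{a_{k,j}})^{d_3}$, the three factors in the product involve pairwise disjoint variable sets (one per literal of the $k\ith$ clause, which is assumed to have three distinct variables), so the sparsity of the product equals the product of the sparsities; each univariate factor again has sparsity $d_3+1$, and multiplication by the monomial $x_0^k$ preserves sparsity, yielding $\cal S(R_k) = (d_3+1)^3$. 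Summing,
\[
\cal S(f) = 1 + n\bigl((d_2+1)+(d_2+1)\bigr) + m(d_3+1)^3 = 1 + n(2d_2+2) + m(d_3+1)^3.
\]

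For the support claim, I would inspect each summand separately: $x_0^{d_1}$ has support $1$; every monomial of $Q_{i,1}$ or $Q_{i,2}$ involves only the two variables $x_0$ and $x_i$, giving support at most $2$; and the expansion of $R_k$ contains the monomial $x_0^k \prod_{j \in C_k} x_j^{d_3}$ with nonzero coefficient, which has support exactly $4$, while no monomial of $R_k$ can mention any variable outside $\{x_0\} \cup \{x_j : j \in C_k\}$. Taking the maximum across all summands gives $\supp(f) = 4$. The only mild technical point in the whole argument is ensuring that $(x_i \pm 1)^{d_2}$ really attains sparsity $d_2+1$ even in positive characteristic; this is not a true obstacle but simply a compatibility condition that is absorbed into the parameter-setting of Section \ref{Choice of degrees-shift}.
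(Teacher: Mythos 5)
Your proof is correct and follows essentially the same route as the paper's: degree separation with respect to $x_0$ (Observation \ref{Obs: deg separated poly-shift} plus Observation \ref{lem_deg_sep_sum}) reduces the sparsity count to the individual summands, each $Q_{i,j}$ contributes exactly $d_2+1$ and each $R_k$ contributes $(d_3+1)^3$ by variable-disjointness, and the support is the maximum over summands, attained by $R_k$. Your explicit remark that the exact value $d_2+1$ in positive characteristic is guaranteed by the parameter choice in Section \ref{Choice of degrees-shift} is the same point the paper handles implicitly via its choice of the $d_i$'s.
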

\paragraph{The forward direction.} If $\vecu \in \{0,1\}^n$ is such that $\psi(\vecu) = 1$ and $f$, as described in \eqref{Sparse-shift-poly}, is the polynomial corresponding to $\psi$, then Proposition \ref{Proposition: fwd direction-shift} shows that there exists a $\vecb \in \F^{|\vecz|}$, defined using $\vecu$, such that $\cal S(f(\vecz + \vecb)) \leq s$ holds.

\begin{proposition} \label{Proposition: fwd direction-shift}
    Let $\vecu \in \{0,1\}^n$, such that $\psi(\vecu) = 1$. Then $\cal S(f(\vecz + \vecb)) \leq s$, where $\vecb \in \{-1,0,1\}^{|\vecz|}$ is defined as follows:
    \begin{equation} \label{Eqn: sparse shift vector}
        \vecb_{|x_0} := 0, \vecb_{|x_i} := (-1)^{u_i}, i \in [n]
    \end{equation}
\end{proposition}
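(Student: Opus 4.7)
The plan is to mirror the structure of the proof of Proposition \ref{Proposition: fwd direction} for $\et$. First, observe that since $\vecb_{|x_0} = 0$, the shift $\vecz \mapsto \vecz+\vecb$ fixes $x_0$ and therefore preserves the $x_0$-degree of every monomial. Consequently, the $x_0$-degree separation of the summands $x_0^{d_1}$, $Q_{i,1}$, $Q_{i,2}$, and $R_k$ established in Observation \ref{Obs: deg separated poly-shift} is retained after shifting. Invoking the ``degree separated with respect to $x_0$'' analog of Observation \ref{lem_deg_sep_sum} then yields
$$\cal S(f(\vecz+\vecb)) \;=\; 1 + \sum_{i=1}^{n}\cal S(Q_i(\vecz+\vecb)) + \sum_{k=1}^{m}\cal S(R_k(\vecz+\vecb)),$$
so it suffices to bound the individual summands.

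Next I would analyze $Q_i(\vecz+\vecb)$. Under $x_i \mapsto x_i + (-1)^{u_i}$, the inner form $x_i+1$ in $Q_{i,1}$ becomes either $x_i$ (when $u_i=1$) or $x_i+2$ (when $u_i=0$), and symmetrically the inner form $x_i-1$ in $Q_{i,2}$ becomes either $x_i-2$ or $x_i$. In either case, exactly one of $Q_{i,1}(\vecz+\vecb)$, $Q_{i,2}(\vecz+\vecb)$ is a single monomial while the other is the $d_2$-th power of a binomial (in two variables, together with the preserved $x_0$-factor), so by Observation \ref{lem_binom} its sparsity equals $d_2+1$. Thus $\cal S(Q_i(\vecz+\vecb)) = d_2 + 2$ for every $i \in [n]$.

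Finally, for each clause polynomial,
$$R_k(\vecz+\vecb) \;=\; x_0^{k}\prod_{j\in C_k}\bigl(x_j + (-1)^{a_{k,j}} + (-1)^{u_j}\bigr)^{d_3}.$$
If $a_{k,j} = u_j$, the $j$th factor equals $(x_j + 2(-1)^{u_j})^{d_3}$, of sparsity $d_3+1$; if $a_{k,j} \neq u_j$, the factor collapses to $x_j^{d_3}$, a single monomial. Since $\psi(\vecu)=1$, each clause contains some $j \in C_k$ with $a_{k,j}\neq u_j$, so at least one of the (at most three) factors collapses to a monomial. As the factors are in disjoint variable sets, sparsity multiplies across the product, giving $\cal S(R_k(\vecz+\vecb)) \leq (d_3+1)^2$. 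Summing the three contributions produces
$$\cal S(f(\vecz+\vecb)) \;\leq\; 1 + n(d_2+2) + m(d_3+1)^2 \;=\; s.$$
There is no substantive obstacle here; the only point that requires a tiny bit of care is confirming the variant of Observation \ref{Lemma: equiv deg sep} for ``degree separated with respect to $x_0$'', which is immediate because the shift preserves $x_0$-degrees.
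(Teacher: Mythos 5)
Your proposal is correct and follows essentially the same route as the paper's proof: use the fact that $\vecb_{|x_0}=0$ to preserve $x_0$-degree separation of the summands, reduce to the individual sparsities of $Q_i(\vecz+\vecb)$ and $R_k(\vecz+\vecb)$, and use satisfiability to collapse one factor per clause, giving $1+n(d_2+2)+m(d_3+1)^2=s$. The only (cosmetic) inaccuracy is citing Observation \ref{lem_binom} for the factors $(x_i\pm 2)^{d_2}$ and $(x_j\pm 2)^{d_3}$: these are univariate affine binomials, not linear forms in two variables, so the right justification is the binomial theorem together with the choice of the $d_i$'s and Lucas's theorem in finite characteristic (as the paper does); in any case the upper bound $d_2+1$ (resp.\ $d_3+1$) needed here is immediate.
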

\begin{proof}
It follows from the fact that $\vecb_{|x_0}$ is $0$, the definition of $f$ in \eqref{Sparse-shift-poly} and the choice of the $d_i$'s that $x_0^{d_1}$ and the remaining summands of $f(\vecz + \vecb)$ are degree separated from one another with respect to $x_0$ (The proof is similar to that of Lemma \ref{Lemma: degree separation-shift}). Thus, by Observation \ref{lem_deg_sep_sum} 
\[\cal S(f(\vecz + \vecb)) = \cal S(x_0^{d_1}) + \sum_{i=1}^{n} \cal S(Q_i(\vecz + \vecb)) + \sum_{k=1}^{m} \cal S(R_k(\vecz + \vecb)).\]
Thus, it suffices to analyze the sparsity of $Q_i(\vecz + \vecb)$'s and $R_k(\vecz + \vecb)$'s. We now analyze $\cal S(Q_i(\vecz + \vecb))$ for $i \in [n]$. If $u_i = 0$, then
    \begin{align*}
     Q_{i,1}(\vecz + \vecb) = x_0^{(2i-1)(d_2+1)}(x_i+2)^{d_2}, \ Q_{i,2}(\vecz + \vecb) =  x_0^{2i(d_2+1)} x_i^{d_2}.
    \end{align*} 
    If $u_i = 1$, then 
    \begin{align*}
     Q_{i,1}(\vecz + \vecb) = x_0^{(2i-1)(d_2+1)}x_i^{d_2}, \ Q_{i,2}(\vecz + \vecb) =  x_0^{2i(d_2+1)} (x_i-2)^{d_2}.
    \end{align*}  
By the binomial theorem, the choice of the $d_2$ and Lucas's theorem (for the finite characteristic case), if $u_i = 0$, then $\cal S(Q_{i,1}(\vecz + \vecb)) = d_2 + 1$ and $\cal S(Q_{i,2}(\vecz + \vecb)) = 1$ and, if $u_i = 1$ then $\cal S(Q_{i,1}(\vecz + \vecb)) = 1$ and $\cal S(Q_{i,2}(\vecz + \vecb)) = d_2 + 1$. In either case, 
\[\cal S(Q_i(\vecz + \vecb)) = \cal S(Q_{i,1}(\vecz + \vecb)) + \cal S(Q_{i,2}(\vecz + \vecb)) = d_2+2.\] 
For the $k\ith$ clause, $k \in [m]$,the corresponding polynomial $R_k(\vecz + \vecb)$ is as:
    \begin{equation*}
   R_k(\vecz + \vecb) = x_0^{k}\prod_{j \in C_k}(x_j+ (-1)^{a_{k,j}}+(-1)^{u_j})^{d_3}.
\end{equation*}
As the multiplicands in $R_k(\vecz + \vecb)$ do not share any variables, $\cal S(R_{k}(\vecz + \vecb))$ is the product of the sparsity of the multiplicands. Since $\psi(\vecu) = 1$, therefore in the $k\ith$ clause there exists $j \in C_k$ such that $a_{k,j} \neq u_j$. For that $j$, $(x_j+(-1)^{a_{k,j}} + (-1)^{u_j})^{d_3} = x_j^{d_3}$. As at least one literal is true in every clause under $\vecu$, $\cal S(R_{k}(\vecz + \vecb)) \leq (d_3+1)^2$ by the choice of $d_3$ and Lucas's theorem (for the finite characteristic case). Thus,
\[\cal S(f(\vecz + \vecb)) = \cal S(x_0^{d_1}) + \sum_{i=1}^{n} \cal S(Q_i(\vecz + \vecb)) + \sum_{k=1}^{m} \cal S(R_k(\vecz + \vecb)) \leq 1 + n(d_2+2) + m(d_3+1)^2 = s.\] 
\end{proof}

\paragraph{The reverse direction.} We leverage the constraints in \eqref{Sparse-shift-cond} to show that if $\cal S(f(\vecz + \vecb)) \leq s$ for some $\vecb \in \F^{|\vecz|}$, then $\vecb$ is as described in \eqref{Eqn: sparse shift vector}. Lemma \ref{Lemma: x0 fixed-shift} shows that $\vecb_{|x_0} = 0$. Then, Lemma \ref{Lemma: degree separation-shift} shows that the summands of $f(\vecz + \vecb)$ are degree separated with respect to $x_0$. 

\begin{lemma} \label{Lemma: x0 fixed-shift}
     If $\cal S(f(\vecz + \vecb)) \leq s$, then $\vecb_{|x_0} = 0$.
\end{lemma}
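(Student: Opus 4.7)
The plan is to argue by contradiction. Suppose $\vecb_{|x_0} = c$ with $c \neq 0$. Then under the shift, the summand $x_0^{d_1}$ of $f$ becomes $(x_0+c)^{d_1}$, and the idea is to show this polynomial contributes many monomials of large $x_0$-degree that cannot be cancelled by anything else in $f(\vecz + \vecb)$, driving the sparsity above $s$.

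First, I would observe by direct inspection of the construction that every summand of $f(\vecz+\vecb)$ other than $(x_0+c)^{d_1}$ has small $x_0$-degree. Specifically, $Q_i(\vecz+\vecb)$ has $x_0$-degree at most $2i(d_2+1) \leq 2n(d_2+1)$ (since shifting does not alter the explicit $x_0^{(2i-1)(d_2+1)}$ and $x_0^{2i(d_2+1)}$ prefactors, and the second multiplicand is $x_0$-free), while $R_k(\vecz+\vecb)$ has $x_0$-degree at most $k \leq m$. Hence every monomial appearing in $f(\vecz+\vecb) - (x_0+c)^{d_1}$ has $x_0$-degree at most $2n(d_2+1)$.

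Next, I would use the choice of $d_1$ together with the binomial theorem (in characteristic $0$) or Lucas's theorem (in characteristic $p$, noting that $d_1$ is chosen of the form $p^k - 1$ in Section \ref{Choice of degrees-shift}) to conclude that $(x_0+c)^{d_1}$ has sparsity exactly $d_1+1$, with a nonzero coefficient on $x_0^j$ for every $j \in \{0, 1, \ldots, d_1\}$. In particular, the $d_1 - 2n(d_2+1)$ monomials $x_0^j$ with $j > 2n(d_2+1)$ have no potential cancellation partner in $f(\vecz+\vecb)$ and therefore survive, yielding
\[\cal S(f(\vecz+\vecb)) \;\geq\; d_1 - 2n(d_2+1).\]

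To finish, I would invoke the constraint $d_1 \geq 4n(d_2+1) + 2d_2 + 2$ from \eqref{Sparse-shift-cond} to obtain $d_1 - 2n(d_2+1) \geq d_1/2$, and then Observation \ref{Obs: shift-poly degree} gives $d_1/2 > s$, contradicting $\cal S(f(\vecz+\vecb)) \leq s$. The only non-routine aspect is ensuring that $(x_0+c)^{d_1}$ truly has the full expected sparsity in positive characteristic, but this is precisely why $d_1$ is chosen of the form $p^k - 1$ in Section \ref{Choice of degrees-shift}, so it is not a real obstacle.
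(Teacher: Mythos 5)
Your proposal is correct and follows essentially the same route as the paper's proof: expand $(x_0+\vecb_{|x_0})^{d_1}$ by the binomial theorem, note all coefficients are nonzero (via the choice of $d_1$ and Lucas's theorem in finite characteristic), and observe that the high-$x_0$-degree terms cannot be cancelled since every $Q_i(\vecz+\vecb)$ and $R_k(\vecz+\vecb)$ has $x_0$-degree at most $2n(d_2+1)$, forcing $\cal S(f(\vecz+\vecb)) \geq d_1/2 > s$. The only cosmetic difference is that you count the $d_1 - 2n(d_2+1)$ surviving pure powers of $x_0$ explicitly, whereas the paper directly notes that at least $s+1$ of them are degree separated with respect to $x_0$ from the remaining summands; both counts rest on the same inequality $\frac{d_1}{2} > 2n(d_2+1) + d_2 > s$.
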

\begin{lemma} \label{Lemma: degree separation-shift}
    For all $i \in [n]$, $k \in [m]$, $x_0^{d_1}$, $Q_{i,1}(\vecz + \vecb)$, $Q_{i,2}(\vecz + \vecb)$ and $R_k(\vecz + \vecb)$ are degree separated from one another with respect to $x_0$. Also, $Q_i(\vecz + \vecb)$ is degree separated with respect to $x_0$ from other $Q_j(\vecz + \vecb)$'s, for $i,j \in [n]$ and $i \neq j$. Similarly, $R_k(\vecz + \vecb)$ is degree separated with respect to $x_0$ from $R_l(\vecz + \vecb)$ for $k,l \in [m]$ and $k \neq l$.
\end{lemma}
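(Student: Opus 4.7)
The plan is to leverage Lemma \ref{Lemma: x0 fixed-shift}, which tells us that under the hypothesis $\cal S(f(\vecz + \vecb)) \leq s$ we have $\vecb_{|x_0} = 0$. Consequently the shift leaves $x_0$ untouched, so every monomial of each summand in $f(\vecz+\vecb)$ retains its original $x_0$-degree. In particular, the only $x_0$-dependence in $Q_{i,1}(\vecz+\vecb)$, $Q_{i,2}(\vecz+\vecb)$, and $R_k(\vecz+\vecb)$ comes from the unaltered prefactors $x_0^{(2i-1)(d_2+1)}$, $x_0^{2i(d_2+1)}$, and $x_0^{k}$ respectively. So the $x_0$-degree of every monomial of each summand is a fixed integer, namely $d_1$ for $x_0^{d_1}$, $(2i-1)(d_2+1)$ for $Q_{i,1}(\vecz+\vecb)$, $2i(d_2+1)$ for $Q_{i,2}(\vecz+\vecb)$, and $k$ for $R_k(\vecz+\vecb)$.

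Next I would verify that the set
\[
\{d_1\} \,\cup\, \{(2i-1)(d_2+1),\, 2i(d_2+1) : i \in [n]\} \,\cup\, \{k : k \in [m]\}
\]
consists of pairwise distinct integers, using the inequalities in \eqref{Sparse-shift-cond}. The $R_k$-degrees are exactly $1,2,\ldots,m$, so they are pairwise distinct; since $m \leq d_3 \leq d_2 < d_2+1$, every $R_k$-degree is strictly smaller than the smallest $Q_{i,j}$-degree. The $Q_{i,j}$-degrees are the nonzero multiples $(d_2+1), 2(d_2+1), \ldots, 2n(d_2+1)$ of $d_2+1$, hence pairwise distinct and bounded above by $2n(d_2+1)$. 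Finally, $d_1 \geq 4n(d_2+1) + 2d_2 + 2 > 2n(d_2+1)$, so $d_1$ exceeds every $Q_{i,j}$-degree. This establishes pairwise $x_0$-degree separation of all the listed summands, which in particular implies the two remaining assertions: $Q_i$ is degree separated with respect to $x_0$ from $Q_j$ for $i \neq j$ (their $x_0$-degrees come from disjoint pairs of multiples of $d_2+1$), and $R_k$ from $R_l$ for $k \neq l$.

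The proof is essentially routine bookkeeping once Lemma \ref{Lemma: x0 fixed-shift} is in hand; the only subtle point is recognising that $\vecb_{|x_0}=0$ preserves the $x_0$-degree of every monomial (so the $x_0$-degree computation reduces to reading off the prefactor), after which the degree-gap conditions in \eqref{Sparse-shift-cond} are built precisely to guarantee the required separations. I would not expect any genuine obstacle here.
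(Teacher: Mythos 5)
Your proposal is correct and follows essentially the same route as the paper: both proofs observe that (by Lemma \ref{Lemma: x0 fixed-shift}) $\vecb_{|x_0}=0$ so every monomial of each summand keeps the $x_0$-degree of its prefactor ($d_1$, $(2i-1)(d_2+1)$, $2i(d_2+1)$, or $k$), and then use the inequalities in \eqref{Sparse-shift-cond} to check that these exponents are pairwise distinct. The only cosmetic difference is that you verify distinctness of the whole exponent set at once, while the paper checks the pairwise comparisons case by case.
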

\vspace{-7mm}
\[ \therefore  \cal S(f(\vecz + \vecb)) = \cal S(x_0^{d_1}) + \sum_{i=1}^{n} \cal S(Q_i(\vecz + \vecb)) + \sum_{k=1}^{m} \cal S(R_k(\vecz + \vecb)) \text{ by Lemma \ref{Lemma: degree separation-shift}.}\]
Lemma \ref{Qi sparsity analysis: shift} analyses the sparsity of $Q_i(\vecz+\vecb)$. The proof of Lemma \ref{Lemma: equality condition Qi-shift} uses the condition $d_2 \geq m(d_3+1)^2 + 1$. 
\begin{lemma} \label{Qi sparsity analysis: shift}
    For any $\vecb \in \F^{|\vecz|}$ and $i \in [n]$:
        \begin{equation*}
            \cal S(Q_i(\vecz+\vecb)) = \cal S(Q_{i,1}(\vecz+\vecb))+\cal S(Q_{i,2}(\vecz+\vecb)) \geq d_2 + 2,
        \end{equation*}
    where $Q_i$, $Q_{i,1}$ and $Q_{i,2}$ are as defined in this section. Equality holds if and only if $\vecb_{|x_i} \in \{-1,1\}$. Further, if $\cal S(Q_i(\vecz+\vecb)) \neq d_2 + 2$, then $\cal S(Q_i(\vecz+\vecb)) = 2d_2 + 2$.
\end{lemma}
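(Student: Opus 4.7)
The plan is to reduce the sparsity of $Q_i(\vecz + \vecb)$ to the sparsities of two univariate polynomials of the form $(x_i + c)^{d_2}$, and then to conclude by a case analysis on $\vecb_{|x_i}$. Throughout the argument I may take $\vecb_{|x_0} = 0$, which is the only regime where this lemma is ever applied, by Lemma~\ref{Lemma: x0 fixed-shift}. Under this reduction,
\[
Q_{i,1}(\vecz + \vecb) = x_0^{(2i-1)(d_2+1)} (x_i + \vecb_{|x_i} + 1)^{d_2}, \qquad Q_{i,2}(\vecz + \vecb) = x_0^{2i(d_2+1)} (x_i + \vecb_{|x_i} - 1)^{d_2}.
\]

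The first step is to establish additivity of the two sparsities. Every monomial of $Q_{i,1}(\vecz + \vecb)$ has $x_0$-degree exactly $(2i-1)(d_2+1)$, while every monomial of $Q_{i,2}(\vecz + \vecb)$ has $x_0$-degree exactly $2i(d_2+1)$, and these two integers are distinct. Thus the two polynomials are degree separated with respect to $x_0$, and Observation~\ref{lem_deg_sep_sum} yields $\cal S(Q_i(\vecz + \vecb)) = \cal S(Q_{i,1}(\vecz + \vecb)) + \cal S(Q_{i,2}(\vecz + \vecb))$. Then I would read off each summand's sparsity using Observation~\ref{Lemma: affine form sparsity}: since $d_2$ is chosen in Section~\ref{Choice of degrees-shift} to be of the form $p^k - 1$ in finite characteristic $p$, that observation (combined with the trivial upper bound for a univariate polynomial of degree $d_2$) gives $\cal S((x_i + c)^{d_2}) = d_2 + 1$ for every $c \neq 0$ and $\cal S(x_i^{d_2}) = 1$.

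A two-case split on $\vecb_{|x_i}$ now completes the proof. If $\vecb_{|x_i} \in \{-1, 1\}$, then exactly one of the shifted constants $\vecb_{|x_i} + 1$ and $\vecb_{|x_i} - 1$ vanishes, so the two univariate sparsities are $1$ and $d_2 + 1$ in some order, summing to $d_2 + 2$. Otherwise both constants are nonzero, both univariate sparsities equal $d_2 + 1$, and the sum is $2d_2 + 2$. This simultaneously yields the lower bound $d_2 + 2$, the exact equality criterion $\vecb_{|x_i} \in \{-1, 1\}$, and the dichotomy that otherwise the sparsity is precisely $2d_2 + 2$. The only technical point I would need to track carefully is the finite-characteristic arithmetic, since Observation~\ref{Lemma: affine form sparsity} only gives full sparsity $d_2 + 1$ when $d_2 = p^k - 1$ or $p > d_2$; this is exactly why Section~\ref{Choice of degrees-shift} must produce a $d_2$ of the required special form while also meeting the inequality $d_2 \geq m(d_3 + 1)^2 + 1$ from~\eqref{Sparse-shift-cond}.
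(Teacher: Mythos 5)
Your proposal is correct and takes essentially the same route as the paper's proof: additivity of the two sparsities via degree separation with respect to $x_0$ (the paper invokes Lemma~\ref{Lemma: degree separation-shift} together with Observation~\ref{lem_deg_sep_sum}, in the regime $\vecb_{|x_0}=0$ secured by Lemma~\ref{Lemma: x0 fixed-shift}), followed by the same two-case analysis on $\vecb_{|x_i}$, where the exact sparsity $d_2+1$ of $(x_i+c)^{d_2}$ for $c\neq 0$ is obtained from the binomial theorem and Lucas's theorem --- precisely the content of Observation~\ref{Lemma: affine form sparsity} that you combine with the trivial upper bound of $d_2+1$. Your explicit restriction to $\vecb_{|x_0}=0$ matches the paper's implicit use of the same fact, so there is no substantive difference.
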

\begin{lemma} \label{Lemma: equality condition Qi-shift}
Under the given $\vecb$, $\cal S(Q_i(\vecz + \vecb)) = d_2 + 2$ holds for all $i \in [n]$.
\end{lemma}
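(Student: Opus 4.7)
The plan is to argue by contradiction, leveraging the two possible sparsities predicted by Lemma \ref{Qi sparsity analysis: shift} together with the degree separation established in Lemma \ref{Lemma: degree separation-shift}. Suppose that for some index $j \in [n]$ we have $\cal S(Q_j(\vecz+\vecb)) \neq d_2 + 2$. Then Lemma \ref{Qi sparsity analysis: shift} forces $\cal S(Q_j(\vecz+\vecb)) \geq 2d_2 + 2$, so this ``bad'' index contributes an extra additive $d_2$ over the generic bound. The remaining $Q_i$'s still contribute at least $d_2+2$ each by the same lemma, and $x_0^{d_1}$ contributes $1$.

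Next I would lower bound the clause polynomials. Since $\vecb_{|x_0} = 0$ (Lemma \ref{Lemma: x0 fixed-shift}), for each $k \in [m]$ the polynomial $R_k(\vecz+\vecb)$ has the form $x_0^k \prod_{j \in C_k}(x_j + c_{k,j})^{d_3}$ for suitable constants $c_{k,j} \in \F$; hence $\cal S(R_k(\vecz+\vecb)) \geq 1$. Combining with the above, using Observation \ref{lem_deg_sep_sum} and Lemma \ref{Lemma: degree separation-shift} (which lets us add the summand sparsities), I obtain
\[
\cal S(f(\vecz+\vecb)) \;\geq\; 1 + (n-1)(d_2+2) + (2d_2+2) + m \;=\; 1 + n(d_2+2) + d_2 + m.
\]
Finally, the constraint $d_2 \geq m(d_3+1)^2 + 1$ from \eqref{Sparse-shift-cond} yields $d_2 + m > m(d_3+1)^2$, so the above bound strictly exceeds $s = 1 + n(d_2+2) + m(d_3+1)^2$, contradicting the assumption $\cal S(f(\vecz+\vecb)) \leq s$. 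Hence $\cal S(Q_i(\vecz+\vecb)) = d_2 + 2$ for every $i \in [n]$.

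The only non-routine step is making sure the bookkeeping uses the right decomposition of $\cal S(f(\vecz+\vecb))$: we must invoke Lemma \ref{Lemma: degree separation-shift} first so that the overall sparsity genuinely equals the sum of the individual summand sparsities, and only then apply the per-summand lower bounds from Lemma \ref{Qi sparsity analysis: shift} plus the trivial $\cal S(R_k(\vecz+\vecb)) \geq 1$. Everything else is arithmetic that is enforced precisely by the inequality $d_2 \geq m(d_3+1)^2 + 1$ chosen in \eqref{Sparse-shift-cond} for exactly this purpose.
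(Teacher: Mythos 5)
Your proof is correct and follows essentially the same route as the paper: a contradiction argument that combines the dichotomy of Lemma \ref{Qi sparsity analysis: shift} with the degree separation of Lemma \ref{Lemma: degree separation-shift} and the constraint $d_2 \geq m(d_3+1)^2 + 1$ to push the total sparsity above $s$. The only cosmetic difference is that you retain the $R_k$ contributions (adding $+m$) while the paper simply drops them; both yield the same contradiction.
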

Lemmas \ref{Lemma: x0 fixed-shift}, \ref{Qi sparsity analysis: shift} and \ref{Lemma: equality condition Qi-shift} together show that $\vecb$ is as described in \eqref{Eqn: sparse shift vector}. Then, Proposition \ref{Proposition: bwd direction-shift} uses $d_3 \geq m$ to show how a satisfying assignment for $\psi$ can be recovered from $\vecb$.
\begin{proposition} \label{Proposition: bwd direction-shift}
   With $\vecb$ as described in \eqref{Eqn: sparse shift vector}, $\vecu = (u_1, \dots , u_n)$ is a satisfying assignment for $\psi$.  
\end{proposition}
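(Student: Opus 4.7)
The plan is to mirror the strategy of Proposition~\ref{Proposition: bwd direction} (and its characteristic-$2$ counterpart Proposition~\ref{Proposition: bwd direction char 2}), adapted to the translation-only setting. Assume, for contradiction, that $\vecu = (u_1, \ldots, u_n)$ does not satisfy $\psi$, and let $k \in [m]$ be the index of an unsatisfied clause, so that $u_j = a_{k,j}$ for every $j \in C_k$. The objective is to show that under this assumption the polynomial $R_k(\vecz + \vecb)$ becomes too sparsity-heavy for $\cal S(f(\vecz + \vecb))$ to fit within the budget $s$, contradicting the hypothesis $\cal S(f(\vecz+\vecb)) \leq s$ that was used to conclude $\vecb_{|x_0} = 0$ and $\vecb_{|x_i} = (-1)^{u_i}$ in the preceding lemmas.

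With $\vecb_{|x_i} = (-1)^{u_i}$, the shift transforms each clause polynomial into
\[
R_k(\vecz + \vecb) \;=\; x_0^{k} \prod_{j \in C_k} \bigl(x_j + (-1)^{a_{k,j}} + (-1)^{u_j}\bigr)^{d_3}.
\]
Since $u_j = a_{k,j}$ for every $j \in C_k$ and $\textnormal{char}(\F) \neq 2$, each constant shift equals $\pm 2 \neq 0$, so every factor is of the form $(x_j \pm 2)^{d_3}$. By the binomial theorem (using the choice of $d_3$ as specified in Section~\ref{Choice of degrees-shift} so that Lucas's theorem applies in the finite-characteristic case), each such factor has sparsity exactly $d_3 + 1$. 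Because the three factors are in pairwise disjoint sets of variables, the product has sparsity $(d_3+1)^3$.

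Combining this with Lemma~\ref{Lemma: degree separation-shift}, which guarantees that the summands of $f(\vecz + \vecb)$ are degree separated with respect to $x_0$, and with Lemma~\ref{Qi sparsity analysis: shift}, which yields $\cal S(Q_i(\vecz+\vecb)) \geq d_2 + 2$ for every $i \in [n]$, we obtain
\[
\cal S(f(\vecz + \vecb)) \;\geq\; 1 + n(d_2 + 2) + (d_3+1)^3.
\]
The condition $d_3 \geq m$ from \eqref{Sparse-shift-cond} gives $(d_3+1)^3 \geq (m+1)(d_3+1)^2 > m(d_3+1)^2$, so
\[
\cal S(f(\vecz + \vecb)) \;>\; 1 + n(d_2 + 2) + m(d_3+1)^2 \;=\; s,
\]
contradicting $\cal S(f(\vecz+\vecb)) \leq s$. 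Hence $\vecu$ must satisfy $\psi$. The only delicate point in the plan is verifying that $(x_j \pm 2)^{d_3}$ indeed attains the full sparsity $d_3 + 1$ in positive characteristic; this is the reason for insisting in Section~\ref{Choice of degrees-shift} that $d_3$ be less than $p$ or of the form $p^k - 1$, and is the analogue of the role played by Observation~\ref{lem_binom} in the $\et$ reduction.
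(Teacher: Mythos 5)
Your proposal is correct and follows essentially the same route as the paper's proof: assume an unsatisfied clause $k$, note $R_k(\vecz+\vecb) = x_0^k\prod_{j\in C_k}(x_j\pm 2)^{d_3}$ has sparsity $(d_3+1)^3 \geq (m+1)(d_3+1)^2$ (via variable-disjointness, the binomial theorem, the choice of $d_3$, and Lucas's theorem in finite characteristic), and then use the degree separation with respect to $x_0$ together with $\cal S(Q_i(\vecz+\vecb)) \geq d_2+2$ to exceed the budget $s$, yielding the contradiction. The sparsity accounting matches the paper's line by line, so there is nothing further to flag.
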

\begin{proof}
Suppose not; then there exists $k \in [m]$ such that the $k\ith$ clause, $\lor_{j \in C_k}(x_j\oplus a_{k,j})$, in $\psi$ is unsatisfied. Since this clause is unsatisfied, $u_j = a_{k,j}$ for all $j \in C_k$. Thus, $R_k(\vecz + \vecb) = x_0^{k} \prod_{j \in C_k}(x_j\pm 2)^{d_3}$ and $\cal S(R_k(\vecz + \vecb)) = (d_3+1)^3\geq (m+1)(d_3+1)^2$, where the equality holds by the fact that $R_k(\vecz + \vecb)$ is a product of linear forms not sharing variables, the binomial theorem, the choice of $d_3$ and Lucas's Theorem (for finite characteristic fields), and the inequality holds as $d_3 \geq m$. By the definition of $f$ and $s$, Observations \ref{Lemma: degree separation-shift} and \ref{lem_deg_sep_sum}, it holds that
\begin{equation*}
    \begin{split}
            \cal S(f(\vecz + \vecb)) & \geq  \cal S((x_0)^{d_1}) + \sum_{i = 1}^{n} \cal S(Q_i(\vecz+\vecb)) +\cal S(R_k(\vecz+\vecb)) \\
            &\geq  1 + n(d_2 + 2) + m(d_3+1)^2 + (d_3+1)^2 = s + (d_3+1)^2 > s,
    \end{split}
\end{equation*}
a contradiction. Thus, $\vecu$ is a satisfying assignment for $\psi$.
\end{proof}

\subsubsection*{\underline{Construction for characteristic $2$ fields}}
Since over characteristic $2$ fields, $x_i - 1$ and $x_i + 1$ are the same polynomials, we need to modify the previous construction. Moreover, the sparsifying vector will also be slightly different. Formally, let $\psi$, $\vecx, x_0,$ and $\vecz$ be as denoted earlier. Let $d_1$, $d_2$, $d_3 \in \N$. Consider the following polynomials:
\begin{itemize}
    \item Corresponding to $x_i$, where $i\in [n]$, define $Q_i(\vecz)$ as: 
\begin{equation*} 
\begin{split}
    Q_i(\vecz) &:= Q_{i,1}(\vecz) + Q_{i,2}(\vecz), \  Q_{i,1}(\vecz) := x_0^{(2i-1)(d_2+1)}x_i^{d_2}, Q_{i,2}(\vecz) := x_0^{2i(d_2+1)}(x_i + 1)^{d_2}. 
\end{split}
\end{equation*}
\item For the $k\ith$ clause, $k \in [m]$,  define $R_k(\vecz) := x_0^{k}\prod_{j\in C_k}(x_j + a_{k,j})^{d_3}.$ 
\end{itemize}
Define $s:= 1 + n(d_2+2) + m(d_3+1)^2$. Choose $d_i$'s as specified in Section \ref{Choice of degrees-shift} so that they are $(mn)^{O(1)}$ and satisfy the conditions of \eqref{Sparse-shift-cond}. Finally, define $f(\vecz)$ as:
\begin{equation} \label{affine variable poly char 2-shift}
  f(\vecz) := x_0^{d_1} + \sum_{i=1}^{n} Q_i(\vecz) + \sum_{k=1}^{m} R_k(\vecz).  
\end{equation}
Observations \ref{Obs: deg separated poly-shift}, \ref{Obs: shift-poly degree} hold with little change. Observation \ref{Obs: poly sparsity char 2-shift} analyses the sparsity and support of $f$ and has a proof similar to that of Observations \ref{Obs: translated poly sparsity}.

\begin{observation} \label{Obs: poly sparsity char 2-shift}
    $\cal S(f(\vecz)) \leq 1 + n(d_2+2) + m(d_3+1)^3$ and $\supp(f) = 4$.
\end{observation}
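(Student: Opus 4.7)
The plan is to mirror the reasoning used for the analogous Observation \ref{Obs: translated poly sparsity} (which handles the inhomogeneous characteristic-$2$ case for $\et$): first reduce the problem to bounding the sparsity of individual summands by invoking degree separation, and then bound each summand directly using the characteristic-$2$ version of Observation \ref{lem_binom}.

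First, I would note that Observation \ref{Obs: deg separated poly-shift} still holds for the polynomial $f$ in \eqref{affine variable poly char 2-shift} (its proof only uses the $x_0$-degrees of the summands, which are unchanged from the $p>2$ construction). Combined with Observation \ref{lem_deg_sep_sum}, this gives
\[
\cal S(f) \;=\; \cal S(x_0^{d_1}) \;+\; \sum_{i=1}^{n}\bigl(\cal S(Q_{i,1}) + \cal S(Q_{i,2})\bigr) \;+\; \sum_{k=1}^{m} \cal S(R_k).
\]
Now I bound each summand. Clearly $\cal S(x_0^{d_1}) = 1$ and $\cal S(Q_{i,1}) = 1$ since $Q_{i,1}$ is a single monomial. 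For $Q_{i,2} = x_0^{2i(d_2+1)}(x_i+1)^{d_2}$, the factor $(x_i+1)^{d_2}$ is a power of a linear form in two variables, so Observation \ref{lem_binom} (applicable over $\textnormal{char}(\F)=2$ because the $d_i$'s in Section \ref{Choice of degrees-shift} are chosen to be of the form $p^k-1$, or else by the general $\leq d+1$ bound via Lucas's theorem) yields $\cal S(Q_{i,2}) \leq d_2+1$. Hence $\cal S(Q_i) \leq d_2+2$.

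Next, for $R_k = x_0^{k}\prod_{j\in C_k}(x_j+a_{k,j})^{d_3}$, the three factors $(x_j+a_{k,j})^{d_3}$ involve pairwise disjoint variables, so the sparsity of their product is the product of their sparsities. Each such factor is either $x_j^{d_3}$ (when $a_{k,j}=0$, sparsity $1$) or $(x_j+1)^{d_3}$ (when $a_{k,j}=1$, sparsity at most $d_3+1$ by the same Observation \ref{lem_binom} argument). Therefore $\cal S(R_k) \leq (d_3+1)^3$. Summing over $i$ and $k$ gives the desired bound $\cal S(f) \leq 1 + n(d_2+2) + m(d_3+1)^3$.

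Finally, for $\supp(f)$: the monomial $x_0^{d_1}$ has support $1$, each monomial of $Q_i$ has support at most $2$, and each monomial of $R_k$ has support at most $1 + |C_k| = 4$ (one for $x_0$ and at most $3$ for the variables in the clause). The monomial $x_0^{k}\prod_{j\in C_k} x_j^{d_3}$, which arises from picking the $x_j$-term in each factor of $R_k$, achieves support exactly $4$. So $\supp(f) = 4$. There is no real obstacle here — the only subtlety is making sure that the characteristic-$2$ identifications $x_i+1 = x_i-1$ do not collapse any of the $x_0$-degree-separated pieces, which they do not because the $x_0$-degrees of $Q_{i,1}$ and $Q_{i,2}$ were chosen distinct by construction.
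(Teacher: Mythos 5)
Your proposal is correct and takes essentially the same route the paper intends: the characteristic-$2$ analogue of Observation \ref{Obs: deg separated poly-shift} plus Observation \ref{lem_deg_sep_sum} to split $\cal S(f)$ over the summands, per-summand bounds of $1$, $d_2+2$ and $(d_3+1)^3$, and the support computed as the maximum over the degree-separated pieces, attained by $x_0^k\prod_{j\in C_k}x_j^{d_3}$ in $R_k$. The only nitpick is that Observation \ref{lem_binom} is stated for linear forms in two variables rather than for the affine form $x_i+1$, but the trivial binomial-theorem bound of at most $d_2+1$ (resp.\ $d_3+1$) terms, which you also mention, fully covers the needed upper bounds.
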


\begin{example} \label{Remark: char two not natural shift}
The sparsity of the polynomial output by the reduction over characteristic $2$ fields depends on the number of variables which are complemented within a clause. Hence, for the same number of variables $n$ and the same number of clauses $m$, the output polynomial corresponding to two different $\psi$'s may have different sparsity. Thus, the reduction is not natural over characteristic $2$ fields.
\end{example}
\paragraph{The forward direction.} Let $\vecu \in \{0,1\}^n$ be such that $\psi(\vecu) = 1$ and $f$, as described in \eqref{affine variable poly char 2-shift}, be the polynomial corresponding to $\psi$. Proposition \ref{Proposition: fwd direction char 2-shift} shows how $\vecu$ can be used to construct a sparsifying vector $\vecb$ and has a proof similar to that of Proposition \ref{Proposition: fwd direction-shift}.
\begin{proposition} \label{Proposition: fwd direction char 2-shift}
    Let $\vecu \in \{0,1\}^n$, such that $\psi(\vecu) = 1$. Then $\cal S(f(\vecz + \vecb)) \leq s$, where $\vecb \in \{0,1\}^{|\vecz|}$ is defined as follows:
    \begin{equation} \label{Eqn: sparse shift vector char 2}
        \vecb_{|x_0} := 0, \vecb_{|x_i} := 1 - u_i, i \in [n]
    \end{equation}
\end{proposition}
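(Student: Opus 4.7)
The plan is to mirror the proof of Proposition \ref{Proposition: fwd direction-shift}, adapting the computations to characteristic $2$ where $-1 = 1$. First I would observe that because $\vecb_{|x_0} = 0$, the translation $\vecz \mapsto \vecz + \vecb$ leaves the $x_0$-degree of every monomial unchanged. Consequently, the four families $x_0^{d_1}$, $\{Q_{i,1}(\vecz+\vecb)\}$, $\{Q_{i,2}(\vecz+\vecb)\}$, and $\{R_k(\vecz+\vecb)\}$ remain pairwise degree separated with respect to $x_0$ (just as the unshifted counterparts are, by the characteristic-$2$ analogue of Observation \ref{Obs: deg separated poly-shift}). Invoking Observation \ref{lem_deg_sep_sum} then reduces $\cal S(f(\vecz+\vecb))$ to the sum of the sparsities of the individual translated summands, so it suffices to bound each piece separately.

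Next I would compute $Q_i(\vecz+\vecb)$ in characteristic $2$. Since $\vecb_{|x_i} = 1 - u_i = 1 + u_i$, the two summands become $x_0^{(2i-1)(d_2+1)}(x_i+1+u_i)^{d_2}$ and $x_0^{2i(d_2+1)}(x_i+u_i)^{d_2}$. Regardless of whether $u_i$ is $0$ or $1$, exactly one of these is a monomial (sparsity $1$) and the other is a pure power of $x_0$ times $(x_i+1)^{d_2}$; by the choice of $d_2$ in Section \ref{Choice of degrees-shift} as $2^k - 1$ for some $k$ and Lucas's theorem, $(x_i+1)^{d_2}$ has sparsity exactly $d_2 + 1$. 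Hence $\cal S(Q_i(\vecz+\vecb)) = d_2 + 2$ for every $i \in [n]$.

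For the clause polynomials, the shift yields $R_k(\vecz+\vecb) = x_0^k \prod_{j \in C_k}(x_j + a_{k,j} + 1 + u_j)^{d_3}$; in characteristic $2$, the $j$-th factor equals $x_j^{d_3}$ exactly when $u_j + a_{k,j} = 1$, i.e., when $u_j$ satisfies the literal $x_j \oplus a_{k,j}$. Because $\psi(\vecu) = 1$, at least one such $j$ exists in every clause, giving at least one multiplicand of sparsity $1$; the remaining (at most two) multiplicands are of the form $(x_j + 1)^{d_3}$, which by the same Lucas's-theorem argument have sparsity $d_3 + 1$. Since the multiplicands share no variables, $\cal S(R_k(\vecz+\vecb)) \leq (d_3+1)^2$. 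Summing gives $\cal S(f(\vecz+\vecb)) \leq 1 + n(d_2+2) + m(d_3+1)^2 = s$, as required.

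The only genuinely characteristic-$2$-sensitive step is guaranteeing that $(x+1)^{d_2}$ and $(x+1)^{d_3}$ attain their maximum possible sparsity over $\F_2$-algebras; but this is precisely what the form $d_2, d_3 = 2^k - 1$ buys via Lucas's theorem, so no new obstacle arises beyond tracking the identity $-1 = 1$ carefully in both the $Q_i$ and $R_k$ computations.
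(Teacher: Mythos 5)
Your proposal is correct and follows exactly the route the paper intends: the paper itself only remarks that this proposition is proved "similarly to Proposition \ref{Proposition: fwd direction-shift}", and your argument is precisely that adaptation — degree separation with respect to $x_0$ since $\vecb_{|x_0}=0$, the observation that for each $i$ exactly one of $Q_{i,1}(\vecz+\vecb)$, $Q_{i,2}(\vecz+\vecb)$ is a monomial while the other contributes $d_2+1$ via the choice $d_2 = 2^k-1$ and Lucas's theorem, and the satisfied literal forcing one factor of each $R_k(\vecz+\vecb)$ to collapse to $x_j^{d_3}$, giving the bound $1+n(d_2+2)+m(d_3+1)^2 = s$. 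No gaps.
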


\paragraph{The reverse direction.} Let $\vecb \in \F^{|\vecz|}$ be such that $\cal S(f(\vecz + \vecb)) \leq s$. The analysis of the reverse direction in the previous section holds with some changes. Formally, Lemma \ref{Lemma: x0 fixed-shift} holds without any change in its proof, while Lemma \ref{Lemma: degree separation-shift} holds with some change in its statement and proof. Thus, $\vecb_{|x_0} = 0$. Lemma \ref{Qi sparsity analysis: shift char 2} analyses $\cal S(Q_i(\vecz + \vecb))$, where $i \in [n]$, and its proof is similar to that of Lemma \ref{Qi sparsity analysis: shift}. 
\begin{lemma} \label{Qi sparsity analysis: shift char 2}
    For any $\vecb \in \F^{|\vecz|}$ and $i \in [n]$:
        \begin{equation*}
            \cal S(Q_i(\vecz+\vecb)) = \cal S(Q_{i,1}(\vecz+\vecb))+\cal S(Q_{i,2}(\vecz+\vecb)) = d_2 + 2,
        \end{equation*}
    where $Q_i$, $Q_{i,1}$ and $Q_{i,2}$ are as defined in this section.
\end{lemma}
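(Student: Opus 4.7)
My plan is to mirror the proof of Lemma \ref{Qi sparsity analysis: shift} with the char-$2$ adjustments inherent in the construction \eqref{affine variable poly char 2-shift}. The first move is to invoke Lemma \ref{Lemma: x0 fixed-shift} (already established in the reverse direction at the point this lemma is applied) to reduce to the situation $\vecb_{|x_0}=0$. Under this reduction the two summands take the clean form
\[
Q_{i,1}(\vecz+\vecb)=x_0^{(2i-1)(d_2+1)}(x_i+\vecb_{|x_i})^{d_2},\qquad Q_{i,2}(\vecz+\vecb)=x_0^{2i(d_2+1)}(x_i+\vecb_{|x_i}+1)^{d_2},
\]
and every monomial of $Q_{i,1}(\vecz+\vecb)$ has $x_0$-degree exactly $(2i-1)(d_2+1)$ while every monomial of $Q_{i,2}(\vecz+\vecb)$ has $x_0$-degree exactly $2i(d_2+1)$. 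These two values differ, so Observation \ref{lem_deg_sep_sum} applied with respect to $x_0$ forbids any cross-cancellation and yields the first equality $\cal S(Q_i(\vecz+\vecb))=\cal S(Q_{i,1}(\vecz+\vecb))+\cal S(Q_{i,2}(\vecz+\vecb))$.

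For the numerical value $d_2+2$, I would invoke the parameter choice in Section \ref{Choice of degrees-shift}, which picks $d_2=2^k-1$, so that the char-$p$ formulation of Observation \ref{lem_binom} applies: for any scalar $\alpha$, $\cal S((x_i+\alpha)^{d_2})$ equals $d_2+1$ when $\alpha\ne 0$ and equals $1$ when $\alpha=0$. The two $x_i$-factors appearing above have shift constants $\vecb_{|x_i}$ and $\vecb_{|x_i}+1$, which in characteristic $2$ always differ by $1$; hence exactly one of them is $0$, so one summand contributes sparsity $1$ and the other contributes sparsity $d_2+1$, totaling $d_2+2$, as required.

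The delicate step, and the main obstacle of the argument, is justifying that ``exactly one of $\vecb_{|x_i}$, $\vecb_{|x_i}+1$ is $0$'' is in fact available for every $\vecb$ to which the lemma is applied. I would handle this through the reverse-direction bookkeeping of Section \ref{section-sparseshiftNPH}: a char-$2$ analogue of Lemma \ref{Lemma: degree separation-shift} turns each $\cal S(Q_i(\vecz+\vecb))$ into a direct additive contribution to $\cal S(f(\vecz+\vecb))\le s$, and any $\vecb_{|x_i}$ lying outside $\{0,1\}$ would instead produce $\cal S(Q_i(\vecz+\vecb))=2(d_2+1)$; summed over $i\in[n]$ this would already exceed the budget $s=1+n(d_2+2)+m(d_3+1)^2$, contradicting the reverse-direction hypothesis. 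Therefore for every admissible $\vecb$ one has $\vecb_{|x_i}\in\{0,1\}$ and the stated unconditional equality $\cal S(Q_i(\vecz+\vecb))=d_2+2$ holds verbatim.
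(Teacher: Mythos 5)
Your proposal is correct in substance and follows the route the paper intends: after Lemma \ref{Lemma: x0 fixed-shift} forces $\vecb_{|x_0}=0$, degree separation with respect to $x_0$ (the characteristic-$2$ analogue of Lemma \ref{Lemma: degree separation-shift}) gives the additivity $\cal S(Q_i)=\cal S(Q_{i,1})+\cal S(Q_{i,2})$, and the choice $d_2=2^k-1$ plus Lucas's theorem gives $\cal S((x_i+\alpha)^{d_2})\in\{1,d_2+1\}$ according to whether $\alpha$ vanishes; the paper's own proof is just the case analysis of Lemma \ref{Qi sparsity analysis: shift} transported to the constants $\vecb_{|x_i}$ and $\vecb_{|x_i}+1$. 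Where you deviate is in scope, and your instinct there is sound: the intermediate assertion ``the two constants differ by $1$, hence exactly one is zero'' is a non sequitur over a characteristic-$2$ field larger than $\F_2$ (take $\vecb_{|x_i}=\omega\in\F_4\setminus\F_2$; then both constants are nonzero and the sparsity is $2d_2+2$), so the unconditional ``$=d_2+2$ for any $\vecb$'' in the statement cannot be proved as written — the honest statement is the dichotomy $d_2+2$ versus $2d_2+2$, exactly parallel to Lemma \ref{Qi sparsity analysis: shift}. You flag this and repair it with the sparsity-budget argument (one index with $\vecb_{|x_i}\notin\{0,1\}$ already forces $\cal S(f(\vecz+\vecb))\geq 1+n(d_2+2)+d_2>s$ since $d_2\geq m(d_3+1)^2+1$), which is precisely the role Lemma \ref{Lemma: equality condition Qi-shift} plays in the other characteristic; in effect you merge that lemma into this one, which is what the surrounding reverse-direction argument actually needs. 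The only caveat is that your argument establishes the equality for the admissible $\vecb$'s (those meeting the budget) rather than literally for all $\vecb$, but that restriction is unavoidable, matches how the lemma is used, and you state it explicitly.
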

\noindent Lemmas \ref{Lemma: x0 fixed-shift} and \ref{Qi sparsity analysis: shift char 2} together show that $\vecb$ is as described in \eqref{Eqn: sparse shift vector char 2}. Then, Proposition \ref{Proposition: bwd direction char 2-shift} shows how a satisfying assignment for $\psi$ can be recovered from $\vecb$ and can be proved similarly as Proposition \ref{Proposition: bwd direction-shift}.

\begin{proposition} \label{Proposition: bwd direction char 2-shift}
   With $\vecb$ as described in \eqref{Eqn: sparse shift vector char 2}, $\vecu = (u_1, \dots , u_n)$ is a satisfying assignment for $\psi$.  
\end{proposition}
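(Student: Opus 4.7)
The plan is to argue by contradiction, following the same strategy as in Proposition \ref{Proposition: bwd direction-shift} but adapted to the characteristic $2$ setting where $-1 = 1$ and the definitions of $Q_{i,2}$ and $R_k$ are slightly different. Suppose $\vecu$ does not satisfy $\psi$; then there is some clause $k \in [m]$ whose literals are all false under $\vecu$, which means $u_j = a_{k,j}$ for every $j \in C_k$.

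Next, I would compute $R_k(\vecz+\vecb)$ explicitly under the shift $\vecb$ described in \eqref{Eqn: sparse shift vector char 2}. Since $\vecb_{|x_j} = 1 - u_j$ in characteristic $2$, we have
\[
R_k(\vecz+\vecb) \;=\; x_0^{k}\prod_{j \in C_k}\bigl(x_j + (1 - u_j) + a_{k,j}\bigr)^{d_3} \;=\; x_0^{k}\prod_{j \in C_k}\bigl(x_j + 1\bigr)^{d_3},
\]
where the last equality uses $u_j = a_{k,j}$ and the fact that $1 + u_j + a_{k,j} = 1$ in characteristic $2$. The multiplicands $(x_j+1)^{d_3}$ involve pairwise disjoint variable sets, so $\cal S(R_k(\vecz+\vecb))$ equals the product of their individual sparsities. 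Because $d_3$ is chosen of the form $2^t-1$ as per Section \ref{Choice of degrees-shift}, Lucas's theorem (invoked similarly as in Observation \ref{lem_binom} for finite characteristic) gives $\cal S((x_j+1)^{d_3}) = d_3+1$, hence $\cal S(R_k(\vecz+\vecb)) = (d_3+1)^3$. The condition $d_3 \geq m$ in \eqref{Sparse-shift-cond} then yields $(d_3+1)^3 \geq (m+1)(d_3+1)^2$.

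Finally, I would combine this clause-level lower bound with the rest of $f(\vecz+\vecb)$. The characteristic-$2$ analogues of Lemma \ref{Lemma: x0 fixed-shift} and Lemma \ref{Lemma: degree separation-shift} (which hold here since $\vecb_{|x_0} = 0$) ensure that $x_0^{d_1}$, the $Q_i(\vecz+\vecb)$'s, and the $R_k(\vecz+\vecb)$'s are mutually degree separated with respect to $x_0$, so by Observation \ref{lem_deg_sep_sum} their sparsities add. Using Lemma \ref{Qi sparsity analysis: shift char 2} to get $\cal S(Q_i(\vecz+\vecb)) = d_2+2$ for every $i \in [n]$, we conclude
\[
\cal S(f(\vecz+\vecb)) \;\geq\; 1 + n(d_2+2) + m(d_3+1)^2 + (d_3+1)^2 \;=\; s + (d_3+1)^2 \;>\; s,
\]
contradicting $\cal S(f(\vecz+\vecb)) \leq s$. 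Hence $\vecu$ must satisfy $\psi$.

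The main obstacle is essentially cosmetic, namely making sure that the arithmetic collapses correctly in characteristic $2$: one has to carefully check that $u_j = a_{k,j}$ forces $(x_j + 1 - u_j + a_{k,j})$ to reduce to $x_j + 1$ (and not to $x_j$), which is what guarantees that each factor of $R_k(\vecz+\vecb)$ has sparsity $d_3+1$ rather than $1$. Once this is verified, the estimate is the same as in Proposition \ref{Proposition: bwd direction-shift}.
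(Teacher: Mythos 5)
Your proof is correct and follows essentially the same route the paper intends (it explicitly proves this proposition "similarly to Proposition \ref{Proposition: bwd direction-shift}"): an unsatisfied clause forces every factor of $R_k(\vecz+\vecb)$ to collapse to $(x_j+1)^{d_3}$, whose sparsity is $d_3+1$ by the choice $d_3 = 2^t-1$ and Lucas's theorem, so $\cal S(R_k(\vecz+\vecb)) = (d_3+1)^3 \geq (m+1)(d_3+1)^2$, and combining with the degree separation with respect to $x_0$ and Lemma \ref{Qi sparsity analysis: shift char 2} gives $\cal S(f(\vecz+\vecb)) > s$, a contradiction. The characteristic-$2$ bookkeeping ($1 - u_j + a_{k,j} = 1$ when $u_j = a_{k,j}$) is handled correctly, matching the forward-direction convention.
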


\subsubsection{Choice of degrees} \label{Choice of degrees-shift}
\paragraph{For characteristic $0$ fields.} In this case, the inequalities in \eqref{Sparse-shift-cond} can be converted to equalities. Thus
\begin{equation*}
    \begin{split}
        d_3 &= m, \ d_2 = m(m+1)^2+1 = O(m^3) \implies s = O(nm^3), \\
        d_1 &= 4n(d_2+1) + 2d_2 + 2 = O(nm^3).
    \end{split}
\end{equation*}
\paragraph{For finite characteristic fields.} Let $\textnormal{char}(\F) = p > 0$. If $p > d_1$, where the value of $d_1$ is as in the $\textnormal{char}(\F) = 0$ case, then the $d_i$'s are set as per the $\textnormal{char}(\F) = 0$ case. Otherwise  $p = O(nm^3)$. In such a case, we choose the $d_i$'s to be of form $p^k - 1$ for some $k \in \N$. Then
\begin{equation*}
    \begin{split} 
        d_3 &\leq pm, \ d_2 \leq pm(d_3+1)^2+p = O(p^3m^3) \implies s = O(nm^3p^3), \\ 
        d_1 &\leq p(4n(d_2+1) + 2d_2 + 2) = O(nm^3p^4).
    \end{split}
\end{equation*}
As $p = O(nm^3)$, therefore
\[        
d_1 = O(n^5m^{15}), \ d_2 = O(n^3m^{12}),\  d_3= O(nm^4) \text{ and }  s = O(n^4m^{12}).
\]
\subsection{Hardness of $\gapsparseshift$} \label{section-gapsparseshiftNPH}
\paragraph{Proof sketch.} Like in Section \ref{Subsection: gap reduction}, for a $\CNF$ $\psi$, we carefully analyze the sparsity of the corresponding polynomial $f$ as defined in Section \ref{section-sparseshiftNPH}. For unsatisfiable $\psi$'s, we do a slightly deeper analysis on $\cal S(f(\vecz + \vecb))$, for all $\vecb \in \F^{|\vecz|}$, to show a lower bound. For satisfiable $\psi$'s, $\cal S(f(\vecz + \vecb))$ has already been upper bounded for an appropriate $\vecb \in \F^{|\vecz|}$. The degree parameters are also chosen differently so that the gap between the lower bound and the upper bound is significant. Comparing the sparsities for satisfiable and unsatisfiable $\psi$'s proves part two of Theorem \ref{Theorem: ETSparse_shift NP-hard}. Throughout this section, we assume $\epsilon \in (0,\frac{1}{3})$ to be an arbitrary constant.
\subsubsection{For characteristic $0$ fields} \label{subsection-gap-char0}
For a $\CNF$ $\psi$, consider the polynomial $f$ as defined in \eqref{Sparse-shift-poly}. Choose the $d_i$'s to satisfy the following while also being $(mn)^{O(1)}$:
\begin{equation}\label{Eqn: gap ineq-shift}
    d_3 \geq \max(4mn,(mn)^{O(1/\epsilon)}), \ 
    d_2 = m(d_3+1)^2 + 1, \ 
    d_1 = d_2^2 + 4n(d_2+1) + 2d_2 + 2. 
\end{equation}
Under these constraints, the conditions in \eqref{Sparse-shift-cond} are also satisfied. Let $s := \cal S(f)$. Then, $s = 1+n(2d_2+2) + m(d_3+1)^3$ by Observation \ref{Obs: shift-poly sparsity}. For $\psi \in \bar{\SAT}$, Lemma \ref{Lemma: unsat gap-shift} shows lower bounds on $\cal S(f(\vecz + \vecb))$ for all $\vecb \in \F^{|\vecz|}$. In the lemma, Item \ref{case 1 unsat gap-shift} is essentially Lemma \ref{Lemma: x0 fixed-shift}, and Item \ref{case 2 unsat gap-shift} is the analysis in Proposition \ref{Proposition: bwd direction-shift}. Thus, Lemma \ref{Lemma: unsat gap-shift} encapsulates the analysis of the reverse direction in Section \ref{section-sparseshiftNPH}.  For $\psi \in \SAT$, by Proposition \ref{Proposition: fwd direction-shift}, there exists $\vecb \in \F^{|\vecz|}$ such that $\cal S(f(\vecz + \vecb)) \leq s_0$, where $s_0 = 1+n(d_2+2) + m(d_3+1)^2$. Proposition \ref{Proposition: Sparsity gap-shift} shows $\gapsparseshift$ is $\NP$-hard by comparing the sparsity for satisfiable and unsatisfiable $\psi$'s, and uses Lemma \ref{Lemma: unsat gap-shift} and the conditions in \eqref{Eqn: gap ineq-shift}. 

\begin{lemma} \label{Lemma: unsat gap-shift}
   Let $\psi \in \bar{\SAT}$, $f$ be as defined in \eqref{Sparse-shift-poly} corresponding to $\psi$ and $\vecb \in \F^{|\vecz|}$.
   \begin{enumerate}
       \item \label{case 1 unsat gap-shift} If $\vecb_{|x_0} \neq 0$, then $\cal S(f(\vecz + \vecb)) \geq \frac{d_1}{2}$.

       \item \label{case 2 unsat gap-shift} If $\vecb_{|x_0} = 0$, then $\cal S(f(\vecz+\vecb)) \geq (d_3+1)^3$.
   \end{enumerate}
\end{lemma}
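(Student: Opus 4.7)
The plan is to analyze the two cases separately, in each case exploiting the large multiplicative gaps among $d_1$, $d_2$, and $d_3$ imposed by \eqref{Eqn: gap ineq-shift}.

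For Item 1, I would set $c := \vecb_{|x_0} \neq 0$ and focus on monomials that are pure powers of $x_0$. Since $\textnormal{char}(\F) = 0$, the shift of $x_0^{d_1}$ becomes $(x_0+c)^{d_1}$ whose binomial expansion produces $d_1 + 1$ distinct nonzero pure-$x_0$ monomials $x_0^\alpha$. In contrast, the $x_0$-degree of each summand $Q_i(\vecz+\vecb)$ and $R_k(\vecz+\vecb)$ is unchanged by the shift and is bounded above by $2n(d_2+1)$ (attained by $Q_{n,2}$). Thus every monomial $x_0^\alpha$ with $\alpha > 2n(d_2+1)$ in $f(\vecz+\vecb)$ arises solely from $(x_0+c)^{d_1}$ and survives without cancellation. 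A count gives $\cal S(f(\vecz+\vecb)) \geq d_1 - 2n(d_2+1)$, which is at least $d_1/2$ because $d_1 = d_2^2 + 4n(d_2+1) + 2d_2 + 2$ easily satisfies $d_1 \geq 4n(d_2+1)$.

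For Item 2, the shift preserves $x_0$, so Lemma \ref{Lemma: degree separation-shift} applies and yields
\[ \cal S(f(\vecz+\vecb)) \;=\; 1 \;+\; \sum_{i=1}^n \cal S(Q_i(\vecz+\vecb)) \;+\; \sum_{k=1}^m \cal S(R_k(\vecz+\vecb)). \]
I would then extract an effective Boolean assignment $\vecu \in \{0,1\}^n$ by declaring $u_j = 0$ when $\vecb_{|x_j} = 1$, $u_j = 1$ when $\vecb_{|x_j} = -1$, and $u_j = 0$ (say) otherwise. Since $\psi \in \bar{\SAT}$, there exists $k \in [m]$ with $a_{k,j} = u_j$ for every $j \in C_k$. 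For this distinguished $k$ I would argue that each factor $(x_j + (-1)^{a_{k,j}} + \vecb_{|x_j})^{d_3}$ of $R_k(\vecz+\vecb)$ has sparsity exactly $d_3+1$, so that the multiplicative decomposition over disjoint variables forces $\cal S(R_k(\vecz+\vecb)) = (d_3+1)^3$, and the displayed identity gives the claim.

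The crux, and the main obstacle, is verifying that the constant term $(-1)^{a_{k,j}} + \vecb_{|x_j}$ is nonzero for every $j \in C_k$, \emph{regardless of whether} $\vecb_{|x_j}$ lies in $\{-1,1\}$. This splits into two sub-cases: if $\vecb_{|x_j} \in \{-1,1\}$, then by construction $\vecb_{|x_j} = (-1)^{u_j} = (-1)^{a_{k,j}}$, and the constant term equals $2(-1)^{a_{k,j}} \neq 0$ in characteristic $0$; otherwise $\vecb_{|x_j} \notin \{-1,1\}$, and hence $\vecb_{|x_j} \neq -(-1)^{a_{k,j}}$ trivially. This is exactly the place where the unsatisfiability of $\psi$ is used, and handling both sub-cases in a single clean case analysis is what drives the final bound $(d_3+1)^3$.
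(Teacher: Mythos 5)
Your proposal is correct and follows essentially the same route as the paper: for item 1 you count the high-$x_0$-degree monomials of $(x_0+\vecb_{|x_0})^{d_1}$ that lie strictly above the maximal $x_0$-degree $2n(d_2+1)$ of the remaining summands and hence cannot cancel (this is exactly the argument the paper borrows from Lemma \ref{Lemma: x0 fixed-shift}), and for item 2 you invoke Lemma \ref{Lemma: degree separation-shift} to additively decompose the sparsity and then use unsatisfiability to exhibit a clause $k$ whose three shifted factors all have nonzero constant term, giving $\cal S(R_k(\vecz+\vecb))=(d_3+1)^3$. Your explicit extraction of the assignment $\vecu$ from $\vecb$ and the two sub-cases for $\vecb_{|x_j}\in\{-1,1\}$ versus $\vecb_{|x_j}\notin\{-1,1\}$ merely fill in details the paper leaves implicit in its phrase ``which follows by the binomial theorem.''
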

\begin{proposition} \label{Proposition: Sparsity gap-shift}
 Let $\textnormal{char}(\F) = 0$. If the input in $\gapsparseshift$ is an $s$-sparse polynomial, then $\gapsparseshift$ is $\NP$-hard for $\alpha = s^{1/3-\epsilon}$.
\end{proposition}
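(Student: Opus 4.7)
The plan is to mirror the structure of the proof of Proposition \ref{Proposition: Sparsity gap} (and \ref{Proposition: Sparsity gap homogeneous}), since the skeleton of the argument is identical: on the YES side we quote the upper bound obtained in the forward direction, on the NO side we quote the case-by-case lower bound from Lemma \ref{Lemma: unsat gap-shift}, and then we verify that the choice of $d_i$'s forces a large gap between the two.

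First, I would handle the YES case: if $\psi \in \SAT$, then Proposition \ref{Proposition: fwd direction-shift} produces a $\vecb \in \F^{|\vecz|}$ with $\cal S(f(\vecz+\vecb)) \leq s_0$, where $s_0 = 1 + n(d_2+2) + m(d_3+1)^2$. Next, for the NO case I would invoke Lemma \ref{Lemma: unsat gap-shift}, which says that for any $\vecb$ we have $\cal S(f(\vecz+\vecb)) \geq \min\bigl(\frac{d_1}{2},\,(d_3+1)^3\bigr)$. The constraints in \eqref{Eqn: gap ineq-shift} imply $d_1 \geq d_2^2 \geq (m(d_3+1)^2+1)^2 > (d_3+1)^3$, so the minimum is $(d_3+1)^3$.

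Having isolated the relevant lower bound, the rest is an easy calculation. Since $d_2 = m(d_3+1)^2+1$, we have $s_0 \leq 3nd_2 \leq 4mn(d_3+1)^2$, so the ratio satisfies
\[
\frac{(d_3+1)^3}{s_0} \;\geq\; \frac{(d_3+1)^3}{4mn(d_3+1)^2} \;=\; \frac{d_3+1}{4mn}.
\]
Using Observation \ref{Obs: shift-poly sparsity} the overall sparsity satisfies $s \leq 2m(d_3+1)^3$, hence $d_3+1 \geq \bigl(\tfrac{s}{2m}\bigr)^{1/3}$, which gives $\tfrac{(d_3+1)^3}{s_0} \geq \tfrac{s^{1/3}}{2^{1/3}\cdot 4 m^{4/3} n}$.

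The last step is to absorb the $\poly(m,n)$ factor in the denominator into $s^\epsilon$. Since $s \geq (d_3+1)^3 \geq d_3^3$ and $d_3 \geq (mn)^{O(1/\epsilon)}$, we obtain $s^{\epsilon} \geq d_3^{3\epsilon} \geq 2^{1/3}\cdot 4 m^{4/3} n$ for a suitable implied constant, which yields gap $\geq s^{1/3-\epsilon}$. Combined with the forward direction this gives the many-one reduction from $\SAT$ to $s^{1/3-\epsilon}\text{-}\mathrm{gap}\text{-}\sparseshift$. There is no real obstacle: essentially all the work has been done in setting up $f$, in Lemma \ref{Lemma: unsat gap-shift}, and in Proposition \ref{Proposition: fwd direction-shift}; the present proof is just the numerical bookkeeping that confirms the three regimes $\tfrac{d_1}{2}$, $(d_3+1)^3$, and $s_0$ are separated in the intended way.
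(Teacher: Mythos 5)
Your proposal is correct and follows essentially the same route as the paper's proof: YES-side bound from Proposition \ref{Proposition: fwd direction-shift}, NO-side bound $\min\bigl(\tfrac{d_1}{2},(d_3+1)^3\bigr)$ from Lemma \ref{Lemma: unsat gap-shift}, then the identical chain $s_0 \leq 4mn(d_3+1)^2$, $d_3+1 \geq (s/2m)^{1/3}$, and $s^{\epsilon} \geq d_3^{3\epsilon} \geq 2^{1/3}4m^{4/3}n$. The only cosmetic slip is comparing $d_1$ rather than $\tfrac{d_1}{2}$ against $(d_3+1)^3$ when identifying the minimum, but the slack in $(m(d_3+1)^2+1)^2 \geq 2(d_3+1)^3$ makes this immediate.
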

\begin{proof}
If $\psi \in \SAT$, then $\cal S(f(\vecz+\vecb)) \leq s_0$, where $\vecb$ is as described in \eqref{Eqn: sparse shift vector}. If $\psi \in \bar{\SAT}$, then it follows from Lemma \ref{Lemma: unsat gap-shift} that for any $\vecb \in \F^{|\vecz|}$: 
\[\cal S(f(\vecz + \vecb)) \geq \min\left(\frac{d_1}{2},(d_3+1)^3\right).\]
The constraints imposed in \eqref{Eqn: gap ineq-shift} ensure that $(d_3+1)^3$ is the minimum. As $d_2 = m(d_3+1)^2 + 1$, therefore $s_0 = 1 + n(d_2+2) + m(d_3+1)^2 \leq 3nd_2 = 3mn(d_3+1)^2 + 3n \leq 4mn(d_3+1)^2$. Thus, the gap in the sparsities of the YES instances and the NO instances is 
\[  \frac{(d_3+1)^3}{s_0} \geq \frac{(d_3+1)^3}{4mn(d_3+1)^2} = \frac{d_3+1}{4mn}. \]
Also, as $d_3 \geq 4mn$, ${\cal S(f)} = s \leq 2m(d_3+1)^3 \implies d_3+1 \geq (\frac{s}{2m})^{1/3}$. Then, the gap is
\[  \frac{(d_3+1)^3}{s_0} \geq \frac{d_3+1}{4mn} \geq \frac{s^{1/3}}{2^{1/3}4m^{4/3}n}. \]
Finally, note that $s \geq d_3^3$. Thus, for $d_3^{3\epsilon} \geq (mn)^{O(1)}$ large enough, 
\[s^{\epsilon} \geq d_3^{3\epsilon} \geq {2^{1/3}4m^{4/3}n} \implies \frac{s^{1/3}}{2^{1/3}4m^{4/3}n} \geq s^{1/3 - \epsilon}.\]
Hence, the gap is at least $s^{1/3 - \epsilon}$. Therefore, $\SAT$ reduces to $\gapsparseshift$ for $\alpha = s^{1/3 - \epsilon}$.
\end{proof} 
\subsubsection{Extension to finite characteristic fields} \label{subsection-gap-charp}
Let the characteristic be $p > 0$. If $p > d_1$, where $d_1$ is as chosen in Section \ref{subsection-gap-char0}, then the argument of that section continues to hold. Otherwise, $p \leq d_1 = (mn)^{O(1)}$. In this case, choose the $d_i$'s to be of form $p^k - 1$ so that they are $(mn)^{O(1)}$ and satisfy the following conditions:
\begin{equation}\label{Eqn: gap ineq-shift-prime}
    d_3 \geq \max(3pmn,(mn)^{O(1/\epsilon)}), \ 
    d_2 > m(d_3+1)^2, \ 
    d_1 > d_2^2 + 4n(d_2+1) + 2d_2 + 2. 
\end{equation}
We can get $d_1 = O(p(d_2^2 + 4n(d_2+1) + 2d_2 + 2))$ and $d_2 = O(pm(d_3+1)^2)$. For this choice of the $d_i$'s, the conditions in \eqref{Sparse-shift-cond} are also satisfied. Let $s := \cal S(f)$. Then, $s = 1+n(2d_2+2) + m(d_3+1)^3$ by Observation \ref{Obs: shift-poly sparsity}. Note that Lemma \ref{Lemma: unsat gap-shift} also holds over finite characteristic fields due to the choice of the $d_i$'s and Lucas's Theorem. Thus, For $\psi \in \bar{\SAT}$, Lemma \ref{Lemma: unsat gap-shift} shows lower bounds on $\cal S(f(\vecz + \vecb))$ for all $\vecb \in \F^{|\vecz|}$. In the lemma, Item \ref{case 1 unsat gap-shift} is essentially Lemma \ref{Lemma: x0 fixed-shift}, and Item \ref{case 2 unsat gap-shift} is the analysis in Proposition \ref{Proposition: bwd direction-shift}. Thus, Lemma \ref{Lemma: unsat gap-shift} encapsulates the analysis of the reverse direction in Section \ref{section-sparseshiftNPH}.  For $\psi \in \SAT$, by Proposition \ref{Proposition: fwd direction-shift}, there exists $\vecb \in \F^{|\vecz|}$ such that $\cal S(f(\vecz + \vecb)) \leq s_0$, where $s_0 = 1+n(d_2+2) + m(d_3+1)^2$. Proposition \ref{Proposition: Sparsity gap-shift-prime} shows $\gapsparseshift$ is $\NP$-hard by comparing the sparsity for satisfiable and unsatisfiable $\psi$'s, and uses Lemma \ref{Lemma: unsat gap-shift} and the conditions in \eqref{Eqn: gap ineq-shift-prime}.
\begin{proposition} \label{Proposition: Sparsity gap-shift-prime}
 Let $\textnormal{char}(\F) = p > 2$. If the input in $\gapsparseshift$ is an $s$-sparse polynomial, then $\gapsparseshift$ is $\NP$-hard for $\alpha = s^{1/3-\epsilon}$.
\end{proposition}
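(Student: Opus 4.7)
The plan is to mirror the proof of Proposition \ref{Proposition: Sparsity gap-shift} almost verbatim, replacing each use of the binomial theorem over $\textnormal{char}(\F) = 0$ with the finite-characteristic counterpart of Observation \ref{lem_binom}, which is valid because Section \ref{subsection-gap-charp} chooses each $d_i$ to be of the form $p^k-1$. The paper explicitly notes that under this choice, Lemma \ref{Lemma: unsat gap-shift} goes through over characteristic $p$ fields (via Lucas's theorem), and Proposition \ref{Proposition: fwd direction-shift} was already proved uniformly across fields, so both the YES-case upper bound and the NO-case lower bound carry over without modification.

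Concretely, I would first handle the YES case: for a satisfying assignment of $\psi$, Proposition \ref{Proposition: fwd direction-shift} produces $\vecb \in \F^{|\vecz|}$ with $\cal S(f(\vecz+\vecb)) \leq s_0 := 1 + n(d_2+2) + m(d_3+1)^2$. Next, for $\psi \in \bar{\SAT}$, I would apply Lemma \ref{Lemma: unsat gap-shift} to get
\[
\cal S(f(\vecz+\vecb)) \;\geq\; \min\!\left(\tfrac{d_1}{2},\,(d_3+1)^3\right)
\]
for every shift $\vecb$. The inequalities in \eqref{Eqn: gap ineq-shift-prime}, namely $d_1 > d_2^2 + 4n(d_2+1) + 2d_2 + 2$ and $d_2 > m(d_3+1)^2$, imply $d_1/2 > (d_3+1)^3$, so the minimum is $(d_3+1)^3$.

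Finally, I would compute the ratio exactly as in the characteristic $0$ case. Using $d_2 = O(pm(d_3+1)^2)$ from Section \ref{subsection-gap-charp}, we have $s_0 \leq 3nd_2 \leq 3pmn(d_3+1)^2$, so
\[
\frac{(d_3+1)^3}{s_0} \;\geq\; \frac{d_3+1}{3pmn}.
\]
Since $d_3 \geq 3pmn$, the bound $s \leq 2m(d_3+1)^3$ gives $d_3+1 \geq (s/2m)^{1/3}$, and hence the ratio is at least $\frac{s^{1/3}}{2^{1/3}\cdot 3p m^{4/3} n}$. Because $s \geq d_3^3$ and $d_3 \geq (mn)^{O(1/\epsilon)}$, for sufficiently large instances $s^{\epsilon} \geq d_3^{3\epsilon} \geq 2^{1/3}\cdot 3 p m^{4/3} n$, so the gap is at least $s^{1/3-\epsilon}$, completing the reduction.

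The one point that needs a little care (rather than being routine book-keeping) is verifying that the factor of $p$, which appears in the finite-characteristic bounds on $d_2$ and $s_0$ but was absent in Section \ref{subsection-gap-char0}, is genuinely absorbed into the slack provided by the lower bound $d_3 \geq (mn)^{O(1/\epsilon)}$. Since $p = (mn)^{O(1)}$ whenever $p$ is small enough that $p \leq d_1$ (and the characteristic $0$ proof already handles the remaining case $p > d_1$), the exponent $O(1/\epsilon)$ in the choice of $d_3$ dominates any polynomial factor in $p$, so no new parameter tuning is required. For characteristic $2$, I would additionally note that the modified construction of \eqref{affine variable poly char 2-shift} must be used and that the analogous lemma (the finite-characteristic version of Lemma \ref{Lemma: unsat gap-shift}, obtained by the same argument but using Lemma \ref{Qi sparsity analysis: shift char 2} in place of Lemma \ref{Qi sparsity analysis: shift}) still yields the $(d_3+1)^3$ lower bound, so the ratio computation is unchanged.
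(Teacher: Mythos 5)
Your proposal is correct and follows essentially the same route as the paper: the YES bound from Proposition \ref{Proposition: fwd direction-shift}, the NO bound $\min(d_1/2,(d_3+1)^3)$ from Lemma \ref{Lemma: unsat gap-shift} (valid in characteristic $p$ by the $p^k-1$ choice of the $d_i$'s and Lucas's theorem), and the identical ratio computation in which the extra factor of $p = (mn)^{O(1)}$ is absorbed by $d_3 \geq (mn)^{O(1/\epsilon)}$. The closing remark about characteristic $2$ is superfluous here, since that case is handled by a separate proposition, but it does not affect the argument.
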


\begin{proof}
The proof is similar to that of Proposition \ref{Proposition: Sparsity gap-shift}. If $\psi \in \SAT$, then $\cal S(f(\vecz + \vecb)) \leq s_0$ where $\vecb$ is as described in \eqref{Eqn: sparse shift vector} and $s_0 = 1 + n(d_2+2) + m(d_3+1)^2$. If $\psi \in \bar{\SAT}$, it follows from Lemma \ref{Lemma: unsat gap-shift} that for any $\vecb \in \F^{|\vecz|}$: 
\[\cal S(f(\vecz + \vecb)) \geq \min\bigg(\frac{d_1}{2},(d_3+1)^3\bigg).\]
The conditions imposed in \eqref{Eqn: gap ineq-shift-prime} ensure that $(d_3+1)^3 > s_0$ and $(d_3+1)^3$ is the minimum. As $d_2 > m(d_3+1)^2$, therefore $s_0 = 1 + n(d_2+2) + m(d_3+1)^2 \leq 3nd_2 \leq 3pmn(d_3+1)^2$. Consequently, the gap in the sparsities of the YES instances and NO instances is 
\[  \frac{(d_3+1)^3}{s_0} \geq \frac{(d_3+1)^3}{3pmn(d_3+1)^2} = \frac{d_3+1}{3pmn}. \]
Also, note that as $d_3 \geq 3pmn$, therefore $s \leq 2m(d_3+1)^3 \implies d_3+1 \geq (\frac{s}{2m})^{1/3}$. Then, the gap is
\[  \frac{(d_3+1)^3}{s_0} \geq \frac{d_3+1}{3pmn} \geq \frac{s^{1/3}}{p2^{1/3}3m^{4/3}n}. \]
Finally, note that $s \geq d_3^3$. Thus, for $d_3^{3\epsilon} \geq (mn)^{O(1)}$ large enough, 
\[s^{\epsilon} \geq d_3^{3\epsilon} \geq {p2^{1/3}3m^{4/3}n} \implies \frac{s^{1/3}}{p2^{1/3}3m^{4/3}n} \geq s^{1/3 - \epsilon}.\]
Hence, the gap is at least $s^{1/3 - \epsilon}$. Therefore, $\SAT$ reduces to $\gapsparseshift$ for $\alpha = s^{1/3 - \epsilon}$.
\end{proof}

\subsubsection*{\underline{Extension to characteristic $2$ fields}} 
Let the characteristic be $2$. Consider the polynomial as defined in \eqref{affine variable poly char 2-shift}. Let $s := \cal S(f)$. Now, $s$ depends on the number of variables complemented in a clause. To prove the hardness of $\gapsparseshift$, $s \geq d_3^3$ is required (see the proof of Proposition \ref{Proposition: Sparsity gap-shift-char2}), and this can be achieved if there is at least one clause where all the variables are complemented. Thus, assume, without loss of generality, that such a clause exists (see footnote \ref{footnote: complemented clause}). Choose the $d_i$'s to satisfy \eqref{Eqn: gap ineq-shift-prime} with $p$ set to $2$. By Observation \ref{Obs: poly sparsity char 2-shift} and the assumption on $\psi$, it holds that 
\[1 + n(d_2 + 2) + (d_3 + 1)^3 \leq s \leq 1 + n(d_2 + 2) + m(d_3 + 1)^3.\]
For $\psi \in \bar{\SAT}$, Lemma \ref{Lemma: unsat gap-shift}, with some changes in its statement and proof, shows lower bounds on $\cal S(f(\vecz + \vecb))$ for all $\vecb \in \F^{|\vecz|}$. In the lemma, Item \ref{case 1 unsat gap-shift} is essentially Lemma \ref{Lemma: x0 fixed-shift}, and Item \ref{case 2 unsat gap-shift} is the analysis in Proposition \ref{Proposition: bwd direction char 2-shift}. Thus, Lemma \ref{Lemma: unsat gap-shift} encapsulates the analysis of the reverse direction in Section \ref{section-sparseshiftNPH}. For $\psi \in \SAT$, by Proposition \ref{Proposition: fwd direction char 2-shift}, there exists $\vecb \in \F^{|\vecz|}$ such that $\cal S(f(\vecz + \vecb)) \leq s_0$, where $s_0 = 1+n(d_2+2) + m(d_3+1)^2$. Proposition \ref{Proposition: Sparsity gap-shift-char2} shows $\gapsparseshift$ is $\NP$-hard by comparing the sparsity for satisfiable and unsatisfiable $\psi$'s, and uses Lemma \ref{Lemma: unsat gap-shift} and the conditions in \eqref{Eqn: gap ineq-shift-prime}.

\begin{proposition} \label{Proposition: Sparsity gap-shift-char2}
 Let $\textnormal{char}(\F) = 2$. If the input in $\gapsparseshift$ is an $s$-sparse polynomial, then $\gapsparseshift$ is $\NP$-hard for $\alpha = s^{1/3-\epsilon}$.
\end{proposition}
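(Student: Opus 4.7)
The plan is to follow the template of Propositions \ref{Proposition: Sparsity gap-shift} and \ref{Proposition: Sparsity gap-shift-prime}, with the only real differences being (i) the input polynomial is the one from \eqref{affine variable poly char 2-shift}, (ii) the sparsity upper bound in the YES case comes from Proposition \ref{Proposition: fwd direction char 2-shift} via the shift \eqref{Eqn: sparse shift vector char 2}, and (iii) the lower bound in the NO case comes from the char-2 adaptation of Lemma \ref{Lemma: unsat gap-shift} (the adaptation is legitimate because the $d_i$'s are chosen of the form $2^k-1$, so Observation \ref{lem_binom} and Lucas's theorem still yield the same binomial sparsities).

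First I would note that if $\psi \in \SAT$ then by Proposition \ref{Proposition: fwd direction char 2-shift} there exists $\vecb \in \F^{|\vecz|}$ with $\cal S(f(\vecz+\vecb)) \leq s_0 = 1 + n(d_2+2) + m(d_3+1)^2$. If instead $\psi \in \bar{\SAT}$, the char-2 analog of Lemma \ref{Lemma: unsat gap-shift} gives, for every $\vecb \in \F^{|\vecz|}$,
\[
\cal S(f(\vecz+\vecb)) \;\geq\; \min\!\left(\frac{d_1}{2},\,(d_3+1)^3\right).
\]
The conditions in \eqref{Eqn: gap ineq-shift-prime} specialized to $p=2$ ensure $d_1/2 > (d_3+1)^3 > s_0$, so the minimum is $(d_3+1)^3$. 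Using $d_2 > m(d_3+1)^2$ we get $s_0 \leq 3n d_2 \leq 6mn(d_3+1)^2$, and hence the YES/NO sparsity gap is at least
\[
\frac{(d_3+1)^3}{s_0} \;\geq\; \frac{d_3+1}{6mn}.
\]

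Next, I would convert the gap into a function of $s$ using the assumed existence of a clause with all variables complemented, which forces $s \geq (d_3+1)^3$ by Observation \ref{Obs: poly sparsity char 2-shift} together with the fact that such a clause contributes a product of $d_3$-th powers of pairwise variable-disjoint affine forms of the form $x_j + 1$. Combined with the upper bound $s \leq 2m(d_3+1)^3$ (valid for $d_3$ large), this yields $d_3 + 1 \geq (s/(2m))^{1/3}$, hence
\[
\frac{(d_3+1)^3}{s_0} \;\geq\; \frac{s^{1/3}}{2^{4/3}\cdot 3\, m^{4/3} n}.
\]
Finally, since $s \geq d_3^3$, choosing $d_3 \geq (mn)^{O(1/\epsilon)}$ as in \eqref{Eqn: gap ineq-shift-prime} guarantees $s^{\epsilon} \geq d_3^{3\epsilon} \geq 2^{4/3}\cdot 3 m^{4/3}n$, so the gap is at least $s^{1/3-\epsilon}$, establishing the claimed reduction from $\SAT$.

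The only step that requires genuine care, rather than being routine, is verifying the char-2 version of Lemma \ref{Lemma: unsat gap-shift}: one must check that for $\vecb_{|x_0}\neq 0$ the $x_0^{d_1}$ summand still expands to $\geq d_1/2$ monomials (which follows from Observation \ref{lem_binom} for $d_1 = 2^k-1$), and that when $\vecb_{|x_0} = 0$ an unsatisfied clause forces the corresponding $R_k(\vecz+\vecb)$ to contribute $(d_3+1)^3$ monomials because each multiplicand $(x_j + a_{k,j} + \vecb_{|x_j})^{d_3}$ is a nonconstant affine form raised to a power of the form $2^k-1$ in a single variable, so Lucas's theorem keeps all $d_3+1$ coefficients nonzero. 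Everything else then follows by the same degree-separation bookkeeping as in the char-$p$ case.
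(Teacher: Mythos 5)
Your proposal is correct and follows essentially the same route as the paper's own proof: the YES-case upper bound $s_0$ via Proposition \ref{Proposition: fwd direction char 2-shift}, the NO-case lower bound $\min\left(\frac{d_1}{2},(d_3+1)^3\right)$ from the characteristic-$2$ adaptation of Lemma \ref{Lemma: unsat gap-shift}, and the identical chain of estimates $s_0 \leq 6mn(d_3+1)^2$, $d_3+1 \geq (s/(2m))^{1/3}$, and $s \geq d_3^3$ (using the all-complemented clause) giving the $s^{1/3-\epsilon}$ gap with the same constant $2^{4/3}\cdot 3\, m^{4/3} n$. The only phrasing to tighten is in your justification of the clause bound: one should say that for an arbitrary shift $\vecb$, unsatisfiability of $\psi$ (applied to the assignment read off from the entries of $\vecb$ lying in $\{0,1\}$) yields some clause $k$ with $a_{k,j}+\vecb_{|x_j}\neq 0$ for every $j \in C_k$, i.e.\ every factor has a nonzero constant term, which is what makes each factor have sparsity $d_3+1$ by Lucas's theorem -- but this is exactly the argument the paper uses as well.
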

\begin{proof}

The proof is similar to that of Proposition \ref{Proposition: Sparsity gap-shift}. If $\psi \in \SAT$, then $\cal S(f(\vecz + \vecb)) \leq s_0$ where $\vecb$ is as described in \eqref{Eqn: sparse shift vector char 2} and $s_0 = 1 + n(d_2+2) + m(d_3+1)^2$. If $\psi \in \bar{\SAT}$, it follows from Lemma \ref{Lemma: unsat gap-shift} that for any $\vecb \in \F^{|\vecz|}$: 
\[\cal S(f(\vecz + \vecb)) \geq \min\left(\frac{d_1}{2},(d_3+1)^3\right).\]
The conditions imposed in \eqref{Eqn: gap ineq-shift-prime} ensure that $(d_3+1)^3 > s_0$ and $(d_3+1)^3$ is the minimum. As $d_2 > m(d_3+1)^2$, therefore $s_0 = 1 + n(d_2+2) + m(d_3+1)^2 \leq 3nd_2 \leq 6mn(d_3+1)^2$. Consequently, the gap in the sparsities of the YES instances and NO instances is 
\[  \frac{(d_3+1)^3}{s_0} \geq \frac{(d_3+1)^3}{6mn(d_3+1)^2} = \frac{d_3+1}{6mn}. \]
Also, note that as $d_3 \geq 6mn$, therefore $s \leq 2m(d_3+1)^3 \implies d_3+1 \geq (\frac{s}{2m})^{1/3}$. Then, the gap is
\[  \frac{(d_3+1)^3}{s_0} \geq \frac{d_3+1}{6mn} \geq \frac{s^{1/3}}{2^{4/3}3m^{4/3}n}. \]
Finally, note that $s \geq d_3^3$. Thus, for $d_3^{3\epsilon} \geq (mn)^{O(1)}$ large enough, 
\[s^{\epsilon} \geq d_3^{3\epsilon} \geq {2^{4/3}3m^{4/3}n} \implies \frac{s^{1/3}}{2^{4/3}3m^{4/3}n} \geq s^{1/3 - \epsilon}.\]
Hence, the gap is at least $s^{1/3 - \epsilon}$. Therefore, $\SAT$ reduces to $\gapsparseshift$ for $\alpha = s^{1/3 - \epsilon}$.    
\end{proof}
\subsection{Comparison with the work of \cite{ChillaraGS23}} \label{Sec:SETsparse prev work compare}
In this section, we discuss the results by \cite{ChillaraGS23} on $s\text{-to-}(s-1)\sparseshift$, the variant of $\sparseshift$ as described before Theorem \ref{Theorem: ETSparse_shift NP-hard}, and on $\gapsparseshift$ alongwith a brief overview of the proofs, and compare their results with ours.

\paragraph{Result on $s\text{-to-}(s-1)\sparseshift$.} The authors of \cite{ChillaraGS23} showed that polynomial solvability over $R$, an integral domain that is \emph{not} a field, reduces to $s\text{-to-}(s-1)\sparseshift$, where the input polynomial can be given in the sparse representation or as a circuit. Thus, their reduction also implies $\sparseshift$ (considered over $R$) is at least as hard as polynomial solvability. Their reduction proceeds in two stages. In the first stage, they take as input a system of polynomial equations $T$ (given as circuits or in the sparse representation) and construct another system of equations $\tilde{T}$, where each equation is an affine form or a degree $2$ polynomial with just two monomials, such that a solution of $T$ extends to a solution of $\tilde{T}$ and a solution of $\tilde{T}$ implies a solution of $T$. In the second stage, a polynomial $P$ is constructed from the polynomials in $\tilde{T}$ by introducing a set of variables disjoint from that of $\tilde{T}$ and multiplying each polynomial in $\tilde{T}$ with a distinct variable from the set and adding all of them. The polynomial $P$ is of degree $3$ and can be represented by a depth-$4$ circuit. By analysing the sparsity of $P$ under translations, they show that $\tilde{T}$ (and thus $T$) has a solution if and only if the sparsity of $P$ reduces under some translation vector. The construction of $P$ and this analysis requires that $R$ contain an element with no multiplicative inverse. Hence, the reduction does not hold as is over fields. Over $\Z$, the undecidability of solving Diophantine equations \cite{Dav73,Mat70} further implies the undecidability of $s\text{-to-}(s-1)\sparseshift$ over $\Z$. 

In contrast, we reduce from $\SAT$ and show the $\NP$-hardness of $\sparseshift$ over any integral domain \emph{including} fields when the input polynomial in $\sparseshift$ is given in the sparse representation.

\paragraph{Result on $\gapsparseshift$.} The authors of \cite{ChillaraGS23} showed that for any constant $\alpha > 1$, $\gapsparseshift$ is $\NP$-hard over $\R$,$\Q,\F_p$ and $\Z_q$ when the input polynomial is given in the sparse representation and that there exists a constant $\beta > 1$ such that $\beta^{d}\text{-}\mathrm{gap}\text{-}\sparseshift$ is $\NP$-hard over the mentioned algebraic structures when the input polynomial is in $nd$ variables, of degree $d$ with $d \leq n^{O(1)}$ and is  given as a circuit. Note that this also shows $\sparseshift$ is $\NP$-hard over the mentioned algebraic structures. They also showed that $s^{o(1)}\text{-}\mathrm{gap}\text{-}\sparseshift$, where $s$ is the sparsity of the input polynomial given in the sparse representation, is undecidable over $\Z$.

To prove the undecidability of $s^{o(1)}\text{-}\mathrm{gap}\text{-}\sparseshift$ over $\Z$, they show a reduction from polynomial solvability by using the polynomial $P$, with $\sigma := \cal S(P)$, output by the reduction to $\sigma\text{-to-}(\sigma-1)\sparseshift$. They observed that $\sigma\text{-to-}(\sigma-1)\sparseshift$ can be viewed as $\frac{\sigma}{\sigma - 1}\text{-}\mathrm{gap}\text{-}\sparseshift$ where the inputs are a polynomial of sparsity $\sigma$ and an integer $\sigma - 1$. By multiplying $d$ variable-disjoint copies of $P$, where $d$ is a positive integer, they get a polynomial $Q$ with $s := \cal S(Q) = \sigma^d$. By analyzing the sparsity of $Q$ under translations, they show a gap of $(\frac{\sigma}{\sigma-1})^{d} \approx e^{\frac{d}{\sigma-1}} = s^{\frac{1}{(\sigma-1)\log(\sigma)}} = s^{o(1)}$. Thus, they amplify the gap in $\frac{\sigma}{\sigma - 1}\text{-}\mathrm{gap}\text{-}\sparseshift$ to show polynomial solvability reduces to $s^{o(1)}\text{-}\mathrm{gap}\text{-}\sparseshift$, where $s$ is the sparsity of the input polynomial, over any integral domain that is \emph{not} a field. Then, the undecidability of $s^{o(1)}\text{-}\mathrm{gap}\text{-}\sparseshift$ over $\Z$ follows from that of solving Diophantine equations over $\Z$. 

To show the $\NP$-hardness of $\gapsparseshift$ for all constants $\alpha > 1$ over $\R$,$\Q, \F_p$ and $\Z_q$, they first show a reduction from the $(\epsilon,\delta)\text{-}\mathrm{Max}$-$3\mathrm{Lin}$ problem\footnote{In the $(\epsilon,\delta)\text{-}\mathrm{Max}$-$3\mathrm{Lin}$ problem, where $1 > \epsilon > \delta > 0$, a system of linear equations over the underlying algebraic structure is given as input, where each equation is dependent on exactly $3$ variables, and we need to decide if there exists an assignment to the variables such that at least $\epsilon$ fraction of the equations are satisfiable or if for all assignments at most $\delta$ fraction are satisfiable.}, which is $\NP$-hard over the specified algebraic structures for appropriate $\epsilon$ and $\delta$ by the PCP theorem \cite{Has01,FGKP06,GR06,GR07}, to ${\beta\text{-}\mathrm{gap}\text{-}}\sparseshift$ for a small constant $\beta > 1$ by constructing a polynomial $P$ of degree $2$ from the system of linear equations given as input in $(\epsilon,\delta)\text{-}\mathrm{Max}$-$3\mathrm{Lin}$. Let $n$ be the number of variables in $P$, then $P$ can also be represented as a depth-$2$ circuit of size $n^{O(1)}$. They then multiply $d$ variable-disjoint copies of $P$, where $d$ is a parameter, to get an $nd$-variate polynomial $Q$, which has sparsity $\cal S(P)^d$ if $P$ is given in the sparse representation and a circuit of size $Nd+1$ if $P$ is given as a circuit of size $N$. In particular, $Q$ has a depth-$3$ ($\Pi\Sigma\Pi$) circuit of size $n^{O(1)}d$.  By analyzing the sparsity of $Q$ under translations, they amplify the gap $\beta$ to $\beta^d$ and show the $\NP$-hardness of ${\beta^d\text{-}\mathrm{gap}\text{-}}\sparseshift$. For any constant $\alpha > 1$, taking $d = \log_{\beta}(\alpha)$ shows the $\NP$-hardness of $\gapsparseshift$. Their reduction runs in polynomial time as long as $d = O(1)$ when $P$ is given in the sparse representation, and $d \leq N^{O(1)}$ when $P$ is given as a circuit of size $N$. Thus, ${\beta^d\text{-}\mathrm{gap}\text{-}}\sparseshift$ is $\NP$-hard for $d \leq n^{O(1)}$ as $P$ can be given as a depth-$2$ circuit. Note that the reduction from $(\epsilon,\delta)\text{-}\mathrm{Max}$-$3\mathrm{Lin}$ does \emph{not} imply the $\NP$-hardness of $s\text{-to-}(s-1)\sparseshift$ over the specified algebraic structures because for any translation vector $\vecb$, the lower bound on $\cal S(P(\vecx + \vecb))$ is strictly less than $\cal S(P(\vecx))$ and the same holds for $Q$. 

In contrast, we reduce from $\SAT$ and show that ${s^{1/3 -\epsilon}\text{-}\mathrm{gap}\text{-}}\sparseshift$, where $s$ is the sparsity of the input polynomial given in the sparse representation, is $\NP$-hard over any integral domain \emph{including} fields. We establish the gap, which is \emph{super-constant}, by a slightly deeper analysis of the sparsity of the polynomial $f$, constructed in Section \ref{section-sparseshiftNPH}, under translations and comparing the sparsities corresponding to satisfiable and unsatisfiable $\SAT$ formulas. Our reduction does \emph{not} use gap amplification or invoke the PCP theorem. 

\section{Conclusion} \label{Section: Conclusion}
In this work, we show that ET for sparse polynomials is $\NP$-hard. Particularly, we show the $\NP$-hardness of MCSP for orbits of homogeneous sparse polynomials (a dense subclass of $\hdthree$ circuits). We also define a gap version of ET for sparse polynomials and show it is $\NP$-hard, which implies the $\NP$-hardness of $s^{\frac{1}{3} -\epsilon}$-factor approximation of the sparse-orbit complexity of $s$-sparse polynomials. We also show that ET for constant-support polynomials is $\NP$-hard. Lastly, we show that SET for sparse polynomials is also $\NP$-hard along with the hardness of a gap version of the problem. In all four cases, we reduce $\SAT$ to the respective problems.  We end by listing some problems whose solutions we do not know:

\begin{enumerate}
    \item \textbf{Hardness of $\et$ and $\sparseshift$ for constant degree polynomials:} In the reduction of Theorems \ref{Theorem: ETSparse NP-hard} and \ref{Theorem: ETSparse_shift NP-hard}, can the degree of the output polynomial be made constant? Currently, the degree is polynomial in the number of clauses and variables.

    \item \textbf{Improving the gap in Theorem \ref{Theorem: gap ETSparse NP-hard} and part two of Theorem \ref{Theorem: ETSparse_shift NP-hard}:} Can $\gapet$ and $\gapsparseshift$ be shown $\NP$-hard for $\alpha = s^{1-\epsilon}$, where $s$ is the sparsity of the input polynomial and $\epsilon > 0$ is an arbitrary constant? 

    \item \textbf{Hardness of $\etcsp$ for $\sigma=2$}: Is checking if a given polynomial is in the orbit of a support-$2$ polynomial $\NP$-hard? Theorem \ref{Theorem: ETsupport NP-hard} shows that $\etcsp$ for $\sigma \geq 5$ is $\NP$-hard. 
    \item \textbf{Hardness of MCSP for $\hdthree$ circuits:} Is MCSP for $\hdthree$ circuits $\NP$-hard? 

\end{enumerate}

\section*{Acknowledgments}
\addcontentsline{toc}{section}{\protect\numberline{}Acknowledgments}
We thank the anonymous reviewers for their detailed and constructive feedback, which has helped us improve the presentation of this work. In particular, we thank one of the reviewers for pointing out some inaccuracies in the original proofs of Lemmas \ref{P1Sep} and \ref{P2Sep}; simpler proofs for both lemmas came up in the process of fixing these inaccuracies.

\nocite{*}
\bibliographystyle{alpha}
\bibliography{bibl}

\appendix
\addtocontents{toc}{\protect\setcounter{tocdepth}{1}}
 
\section{Handling translations in Theorems \ref{Theorem: ETSparse NP-hard} and \ref{Theorem: gap ETSparse NP-hard}} \label{Section: Translations} \subsection{Extending Theorem \ref{Theorem: ETSparse NP-hard} for translations}\label{Section: Affine ETsparse}
In this section, we modify the construction of Section \ref{subsubsection: sparse f construction} to prove part $1$ of Theorem \ref{Theorem: ETSparse NP-hard} while considering translations. The idea is to choose the degree parameters for the polynomial $f$ such that if $f(A\vecz + \vecb)$ is $s$-sparse, where $A \in \GL(|\vecz|,\F)$ and $\vecb \in \F^{|\vecz|}$, then $\vecb$ must be $\mathbf{0}$ (the all $0$s vector in $\F^{|\vecz|}$) and $A$ must be as described in \eqref{action}. We first show the reduction over any field of characteristic not equal to $2$ and give a separate construction for characteristic $2$ fields.

Formally, let $\psi$, $\vecx, x_0, \vecy$ and $\vecz$ be as denoted in Section \ref{subsubsection: sparse f construction}. For $\vecb \in \F^{|\vecz|}$, $\vecb_{|w}$ denotes the component of $\vecb$ corresponding to the variable $w \in \vecz$.  Let $d_1$, $d_2$, $d_3$, $d_4 \in \N$. Consider the following polynomials:

\begin{itemize}
    \item Corresponding to $x_i$, where $i\in [n]$, define $Q_i(\vecz)$ as: 
\begin{equation*} 
\begin{split}
    Q_i(\vecz) &:= Q_{i,1}(\vecz) + Q_{i,2}(\vecz) + Q_{i,3}(\vecz), \  Q_{i,1}(\vecz) := x_0^{(3i-2)(d_2+1)}x_i^{d_2}, \\ Q_{i,2}(\vecz) &:= x_0^{(3i-1)(d_2+1)}(y_i+x_i)^{d_3} \text{ and } Q_{i,3}(\vecz) := x_0^{3i(d_2+1)}(y_i-x_i)^{d_3}. 
\end{split}
\end{equation*}

\item For the $k\ith$ clause, $k \in [m]$,  define $R_k(\vecz) := x_0^{k(3d_4+1)}\prod_{j\in C_k}(y_j+ (-1)^{a_{k,j}}x_j)^{d_4}.$ 
\end{itemize}
Define $s:= 1 + n(d_3+3) + m(d_4+1)^2$. Impose the following conditions on the $d_i$'s:
\begin{equation} \label{Affine ETsparse conditions}
    d_1 \geq 6n(d_2 + 1) + 2d_3 + 2, \
    d_2 \geq 2d_3, \
    d_3 \geq m(d_4+1)^2 + 1, \
    d_4 \geq m.
\end{equation}
Finally, define $f(\vecz)$ as:
\begin{equation} \label{affine variable poly}
  f(\vecz) := x_0^{d_1} + \sum_{i=1}^{n} Q_i(\vecz) + \sum_{k=1}^{m} R_k(\vecz).  
\end{equation}

The $d_i$'s are chosen in Section \ref{Choice of degrees} such that they are $(mn)^{O(1)}$ and also satisfy the inequalities of \eqref{Affine ETsparse conditions} over any field. Observations \ref{Obs: deg separated polys translated}, \ref{Obs: translated poly degree} and \ref{Obs: translated poly sparsity} hold under the conditions of \eqref{Affine ETsparse conditions}.
\begin{observation} \label{Obs: deg separated polys translated}
For all $i \in [n]$, $k \in [m]$, the polynomials $x_0^{d_1}$, $Q_{i,1}(\vecz)$, $Q_{i,2}(\vecz)$, $Q_{i,3}(\vecz)$ and $R_k(\vecz)$ are degree separated from one another. Also, $Q_i(\vecz)$ is degree separated from other $Q_j(\vecz)$'s, for $i,j \in [n]$ and $i \neq j$. Similarly, $R_k(\vecz)$ is degree separated from $R_l(\vecz)$ for $k,l \in [m]$ and $k \neq l$.
\end{observation}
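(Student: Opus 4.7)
}
The plan is to observe that every polynomial in the list is homogeneous (each is a monomial in $x_0$ times a power of a linear form in $\{x_i,y_i\}$ or of a product of such linear forms), so degree separation reduces to showing that the total degrees of $x_0^{d_1}$, of the $Q_{i,j}$'s, and of the $R_k$'s are pairwise distinct. I would just compute these degrees and check they are distinct, using the constraints in \eqref{Affine ETsparse conditions}. Explicitly:
\begin{align*}
\deg\bigl(x_0^{d_1}\bigr) &= d_1, \\
\deg\bigl(Q_{i,1}\bigr) &= (3i-2)(d_2+1)+d_2, \\
\deg\bigl(Q_{i,2}\bigr) &= (3i-1)(d_2+1)+d_3, \\
\deg\bigl(Q_{i,3}\bigr) &= 3i(d_2+1)+d_3, \\
\deg\bigl(R_k\bigr) &= k(3d_4+1)+3d_4.
\end{align*}

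The strategy is to show the four-tier ordering ``all $R_k$'s $<$ all $Q_{i,j}$'s $<$ $x_0^{d_1}$,'' with strict inequalities inside each tier. First I would verify the ordering inside each tier:
\begin{itemize}
\item Distinctness among the $R_k$'s is immediate, since consecutive $R_k$ and $R_{k+1}$ have degrees differing by exactly $3d_4+1$.
\item Within a fixed $i$, $\deg(Q_{i,2})-\deg(Q_{i,1})=1+d_3>0$ and $\deg(Q_{i,3})-\deg(Q_{i,2})=d_2+1>0$. For $j>i$, $\deg(Q_{j,1})-\deg(Q_{i,3})=(3j-2-3i)(d_2+1)+d_2-d_3\geq (d_2+1)+(d_2-d_3)>0$, where I use $d_2\geq 2d_3$ from \eqref{Affine ETsparse conditions}. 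So the $Q_{i,j}$'s are linearly ordered in degree by the lexicographic order on $(i,j)$.
\end{itemize}

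Next I would separate the tiers. For ``$R_k$'s below $Q_{i,j}$'s,'' the maximum $R$-degree is $m(3d_4+1)+3d_4=3(m+1)d_4+m$, while the minimum $Q$-degree is $\deg(Q_{1,1})=2d_2+1$. Using $d_4\geq m$ and $d_3\geq m(d_4+1)^2+1$ and $d_2\geq 2d_3$, a short chain of inequalities gives $3(m+1)d_4+m\leq 3(d_4+1)^2\leq 3d_3\leq \tfrac{3}{2}d_2<2d_2+1$. For ``$Q_{i,j}$'s below $x_0^{d_1}$,'' the maximum $Q$-degree is $\deg(Q_{n,3})=3n(d_2+1)+d_3$, and by the first condition of \eqref{Affine ETsparse conditions}, $d_1\geq 6n(d_2+1)+2d_3+2>3n(d_2+1)+d_3$. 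This proves the claim. The whole proof is a bookkeeping exercise—there is no real obstacle, and the conditions in \eqref{Affine ETsparse conditions} were chosen precisely to make these degree gaps hold (with the slack from $d_2 \geq 2d_3$ and $d_1 \geq 6n(d_2+1)+2d_3+2$ providing the needed separation between tiers).
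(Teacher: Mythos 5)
Your proposal is correct and follows essentially the same route as the paper's proof: both compute the total degrees of the homogeneous summands $x_0^{d_1}$, $Q_{i,1}$, $Q_{i,2}$, $Q_{i,3}$, $R_k$ and use the constraints in \eqref{Affine ETsparse conditions} (in particular $d_2 \geq 2d_3$, $d_3 > m(d_4+1)^2$, $d_4 \geq m$, and the lower bound on $d_1$) to show all these degrees are pairwise distinct, with the $R_k$'s below the $Q_{i,j}$'s below $x_0^{d_1}$. The only cosmetic difference is your explicit remark that homogeneity reduces degree separation to distinctness of total degrees, and a slightly different (but equally valid) chain of inequalities separating the $R$-tier from the $Q$-tier.
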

\begin{proof}
    Let $i \in [n]$. Note that $Q_{i,1}(\vecz)$ has degree $(3i-2)(d_2+1) + d_2$, $Q_{i,2}(\vecz)$ has degree $(3i-1)(d_2+1) + d_3$ and $Q_{i,3}(\vecz)$ has degree $3i(d_2+1) + d_3$. Clearly, 
    \[3i(d_2+1) + d_3 > (3i-1)(d_2+1) + d_3 > (3i-2)(d_2+1) + d_2.\] 
    Thus, $Q_{i,1}(\vecz)$, $Q_{i,2}(\vecz)$ and $Q_{i,3}(\vecz)$ are degree separated and $Q_i(\vecz)$ is a sum of three degree separated polynomials and has degree $3i(d_2+1) + d_3$.

    Now, let $i \in [n]$ and $k \in [m]$. The lowest degree of any monomial of $Q_i(\vecz)$ is $(3i-2)(d_2+1) + d_2 > 2d_2$, while the degree of any monomial of $R_k(\vecz)$ is $k(3d_4+1) + 3d_4 \leq m(3d_4+1) + 3d_4$. As $d_2 \geq 2d_3$ and $d_3 > m(d_4+1)^2$ from \eqref{Affine ETsparse conditions}, therefore
    \begin{equation*}
        \begin{split}
           2d_2 \geq 4d_3 > 4m(d_4+1)^2 > m(3d_4+1) + 3d_4
        \end{split}
    \end{equation*}
    Thus $Q_i(\vecz)$ is degree separated from $R_k(\vecz)$.
            
    Lastly, let $i,j \in [n]$ where $i < j$ without loss of generality. The highest degree of any monomial of $Q_i(\vecz)$ is $3i(d_2+1) + d_3$ while the lowest degree of any monomial of $Q_j(\vecz)$ is $(3j-2)(d_2+1) + d_2$. Note that
    \[(3j-2)(d_2+1) + d_2 \geq (3i+1)(d_2+1) + d_2 > 3i(d_2+1) + d_3 \]
    because $j \geq i+1$ and $d_2 > d_3$ from \eqref{Affine ETsparse conditions}. Therefore $Q_i(\vecz)$ and $Q_j(\vecz)$ are degree separated. That $R_k(\vecz)$ is degree separated from $R_l(\vecz)$ for $k,l \in [m]$ and $k \neq l$ can be observed from the fact that the degree of $R_k(\vecz)$ is $k(3d_4+1) + 3d_4$. Clearly, $x_0^{d_1}$ is degree separated from the rest of the polynomials because $d_1 \geq 6n(d_2+1) + 2d_3 + 2$ while the highest degree polynomial among $Q_i(\vecz)$'s and $R_k(\vecz)$'s is $Q_n(\vecz)$ of degree $3n(d_2+1) + d_3$. 

\end{proof}
\begin{observation} \label{Obs: translated poly degree}
The degree of $f$ is $d_1 = (mn)^{O(1)}$ with $\frac{d_1}{2} > s$.
\end{observation}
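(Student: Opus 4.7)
The plan is to verify the three claims by inspecting the degrees of the summands of $f$ and invoking the inequalities in \eqref{Affine ETsparse conditions}. First I would compute the degree of each summand. By construction, $Q_{i,1}(\vecz)$ has degree $(3i-2)(d_2+1)+d_2$, $Q_{i,2}(\vecz)$ has degree $(3i-1)(d_2+1)+d_3$ and $Q_{i,3}(\vecz)$ has degree $3i(d_2+1)+d_3$, so $\deg(Q_i) = 3i(d_2+1)+d_3$, maximised at $i=n$. Similarly $\deg(R_k) = k(3d_4+1)+3d_4$, maximised at $k=m$.

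Next I would show $d_1$ dominates all these degrees. The condition $d_1 \geq 6n(d_2+1)+2d_3+2$ gives $d_1 > 3n(d_2+1)+d_3 = \max_i \deg(Q_i)$. For the $R_k$'s, the chain $d_2 \geq 2d_3$ and $d_3 \geq m(d_4+1)^2+1$ yields $6n(d_2+1) \geq 12nd_3 \gg m(3d_4+1)+3d_4$, so $d_1 > \max_k \deg(R_k)$ as well. Hence $\deg(f) = d_1$.

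For $d_1 = (mn)^{O(1)}$, I would simply quote Section \ref{Choice of degrees}, which sets all four parameters $d_1, d_2, d_3, d_4$ to polynomial-in-$mn$ values satisfying \eqref{Affine ETsparse conditions}. Finally, to establish $\tfrac{d_1}{2} > s$, I would chain: using $d_2 \geq 2d_3$,
\[
\tfrac{d_1}{2} \;\geq\; 3n(d_2+1) + d_3 + 1 \;\geq\; 3n(2d_3+1) + d_3 + 1 \;=\; 6nd_3 + 3n + d_3 + 1,
\]
while the bound $m(d_4+1)^2 \leq d_3 - 1$ (from $d_3 \geq m(d_4+1)^2+1$) gives
\[
s \;=\; 1 + n(d_3+3) + m(d_4+1)^2 \;<\; 1 + nd_3 + 3n + d_3.
\]
Comparing the two and using $d_3 \geq 2$ (which follows from $d_3 \geq m(d_4+1)^2+1 \geq 2$) yields $\tfrac{d_1}{2} - s \geq 5nd_3 - 1 > 0$.

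There is no real obstacle here; the statement is essentially a bookkeeping check that the constants in \eqref{Affine ETsparse conditions} were chosen large enough. The only mild point to watch is that the inequality $\tfrac{d_1}{2} > s$ must be strict (not just $\geq$), which is why I would use the slack $d_1 \geq 6n(d_2+1)+2d_3+2$ rather than a bare $6n(d_2+1)+2d_3$; the extra ``$+2$'' absorbs the constant $1$ in $s$.
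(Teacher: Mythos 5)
Your proposal is correct and follows essentially the same route as the paper: compute the degrees of the $Q_i$'s and $R_k$'s, use the constraint $d_1 \geq 6n(d_2+1)+2d_3+2$ from \eqref{Affine ETsparse conditions} to see that $x_0^{d_1}$ dominates, cite the parameter setting for $d_1 = (mn)^{O(1)}$, and chain the inequalities $d_2 \geq 2d_3$ and $d_3 \geq m(d_4+1)^2+1$ to get $\tfrac{d_1}{2} > s$ (the paper compresses this last step to $\tfrac{d_1}{2} > 3nd_2 > 3nd_3 > s$, while you expand it slightly more explicitly). The only cosmetic difference is that the paper invokes Observation \ref{Obs: deg separated polys translated} to identify $\deg(f)$ with the maximum summand degree, whereas your strict dominance of $d_1$ over all other summand degrees makes that citation unnecessary.
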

\begin{proof}
By the definition of $f$ in \eqref{affine variable poly} and Observation \ref{Obs: deg separated polys translated}, the degree of $f$ is the maximum degree among $x_0^{d_1}$, $Q_i$'s and $R_k$'s, where $i \in [n]$ and $k \in [m]$. The degree of $Q_i$ is $3i(d_2+1) + d_3$ and that of $R_k$ is $k(3d_4+1) + 3d_4$ with $3i(d_2+1) + d_3  > k(3d_4+1) + 3d_4$. Further, $d_1 \geq 6nd_2 + 6n + 2d_3 + 2$. Hence the degree of $f$ is $d_1$. Finally, it holds by \eqref{Affine ETsparse conditions} that $\frac{d_1}{2} > 3nd_2 > 3nd_3 > s$.
\end{proof}
\begin{observation} \label{Obs: translated poly sparsity}
    $\cal S(f(\vecz)) = 1 + n(2d_3+3) + m(d_4+1)^3$ and $\supp(f) = 7$.
\end{observation}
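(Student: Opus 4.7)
The plan is to start from Observation \ref{Obs: deg separated polys translated}, which asserts that the $1 + 3n + m$ summands $x_0^{d_1}$, $Q_{i,1}(\vecz)$, $Q_{i,2}(\vecz)$, $Q_{i,3}(\vecz)$ and $R_k(\vecz)$ are pairwise degree separated. Applying Observation \ref{lem_deg_sep_sum} inductively then gives
\[
\cal S(f) = 1 + \sum_{i=1}^{n}\bigl(\cal S(Q_{i,1}) + \cal S(Q_{i,2}) + \cal S(Q_{i,3})\bigr) + \sum_{k=1}^{m}\cal S(R_k),
\]
so it suffices to compute each summand's sparsity.

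For each $i\in[n]$: $Q_{i,1}$ is a single monomial, so $\cal S(Q_{i,1}) = 1$. The polynomial $Q_{i,2}$ is $x_0^{(3i-1)(d_2+1)}$ times $(y_i+x_i)^{d_3}$; since multiplication by a monomial in a fresh variable preserves sparsity, and $y_i+x_i$ is a linear form in exactly two variables, Observation \ref{lem_binom} gives $\cal S(Q_{i,2}) = d_3+1$, and similarly $\cal S(Q_{i,3}) = d_3+1$. Hence $\cal S(Q_i) = 2d_3+3$. For each $k\in[m]$, $R_k$ is $x_0^{k(3d_4+1)}$ times a product of three powers of linear forms, each in two variables drawn from pairwise disjoint pairs $\{y_j,x_j\}$ for $j\in C_k$. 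The sparsity of a product of polynomials in disjoint variables is the product of their sparsities, so $\cal S(R_k) = (d_4+1)^3$ by another application of Observation \ref{lem_binom}. Summing gives $\cal S(f) = 1 + n(2d_3+3) + m(d_4+1)^3$.

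For the support claim, I would note that monomials of $x_0^{d_1}$, $Q_{i,1}$, $Q_{i,2}$, $Q_{i,3}$ involve only variables in $\{x_0,x_i,y_i\}$ and hence have support at most $3$, whereas every monomial of $R_k$ uses $x_0$ together with monomials drawn from the three disjoint pairs $\{y_j,x_j\}$, $j\in C_k$, giving support at most $1 + 3\cdot 2 = 7$. To see the bound is tight, expand $R_k$ via the binomial theorem on each of the three factors and pick the ``middle'' term $y_j x_j^{d_4-1}$ (up to sign) from each; the resulting monomial $x_0^{k(3d_4+1)} \prod_{j\in C_k} y_j x_j^{d_4-1}$ has support $7$ and appears with coefficient $\pm d_4^3$, which is nonzero under our standing assumptions.

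The only subtle point is that Observation \ref{lem_binom} and the nonvanishing of the ``middle'' binomial coefficients require $\textnormal{char}(\F) = 0$ or, in finite characteristic $p$, that $d_3$ and $d_4$ have the form $p^k-1$. The parameter choices in Section \ref{Choice of degrees} are made precisely so that these hypotheses are satisfied, so no further case analysis is needed beyond pointing to that section.
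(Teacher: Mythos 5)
Your proposal is correct and follows essentially the same route as the paper, which proves this observation exactly as it proves Observation \ref{Obs: ETsparse poly sparsity }: degree separation of the summands (Observation \ref{Obs: deg separated polys translated}) plus Observation \ref{lem_deg_sep_sum} to add sparsities, and Observation \ref{lem_binom} to get $d_3+1$ and $(d_4+1)$ per two-variable factor, with the finite-characteristic caveat handled by the parameter choices of Section \ref{Choice of degrees}. Your explicit ``middle-term'' computation certifying $\supp(R_k)=7$ is a slightly more detailed justification than the paper, which simply asserts it, but it is the same argument in substance.
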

\noindent The proof of Observation \ref{Obs: translated poly sparsity} is similar to that of Observation \ref{Obs: ETsparse poly sparsity } and uses Observations \ref{Obs: deg separated polys translated}, \ref{lem_deg_sep_sum} and \ref{lem_binom}.

\paragraph{The forward direction.} If $\vecu \in \{0,1\}^n$ is such that $\psi(\vecu) = 1$ and $f$, as described in \eqref{affine variable poly}, is the polynomial corresponding to $\psi$, then Proposition \ref{Proposition: fwd direction}, with some small changes to its statement and proof, shows that for $\vecb = \mathbf{0}$ and the transform $A$ as described in \eqref{action}, $\cal S(f(A\vecz)) \leq s$ holds. 

\paragraph{The reverse direction.} We leverage the constraints in \eqref{Affine ETsparse conditions} to show that if $\cal S(f(A\vecz + \vecb)) \leq s$ for some $A \in \GL(|\vecz|,\F)$ and $\vecb \in \F^{|\vecz|}$, then $A$ is as described in \eqref{action} and $\vecb = \mathbf{0}$. Lemma \ref{Lemma: affine x0 fixed} shows that $A(x_0) = x_0$, without loss of generality, and $\vecb_{|x_0} = 0$. Then, Lemma \ref{Lemma: degree separation affine case} shows that the summands of $f(A\vecz + \vecb)$ are degree separated with respect to $x_0$. 

\begin{lemma} \label{Lemma: affine x0 fixed}
    $A(x_0) = x_0$, without loss of generality, and $\vecb_{|x_0} = 0$.
\end{lemma}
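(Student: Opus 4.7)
The plan is to exploit the fact that $x_0^{d_1}$ is by far the highest-degree monomial of $f$, well-separated in degree from everything else. Concretely, by Observation~\ref{Obs: deg separated polys translated} and the condition $d_1 \geq 6n(d_2+1)+2d_3+2$ in \eqref{Affine ETsparse conditions}, every monomial of $f$ other than $x_0^{d_1}$ has degree at most $3n(d_2+1) + d_3 < d_1/2$. Under the substitution $\vecz \mapsto A\vecz + \vecb$, a monomial of degree $e$ yields a polynomial of degree at most $e$; hence for every integer $k$ with $d_1/2 < k \leq d_1$, the degree-$k$ homogeneous component of $f(A\vecz + \vecb)$ is forced to come entirely from the image of $x_0^{d_1}$, which is $(A(x_0) + \vecb_{|x_0})^{d_1}$.

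First, to conclude $A(x_0) = x_0$ (up to permutation and scaling of variables): the degree-$d_1$ piece of $(A(x_0)+\vecb_{|x_0})^{d_1}$ equals $A(x_0)^{d_1}$ (since $A(x_0)$ is a linear form). If $A(x_0)$ were a linear form in at least two variables, Observation~\ref{Lemma: linear form sparsity} would give $\cal S(A(x_0)^{d_1}) \geq d_1+1$. But the degree-$d_1$ monomials of $f(A\vecz+\vecb)$ are pairwise distinct from all its lower-degree monomials, so $\cal S(f(A\vecz+\vecb)) \geq d_1+1$, which together with Observation~\ref{Obs: translated poly degree} ($d_1/2 > s$) contradicts $\cal S(f(A\vecz+\vecb))\leq s$. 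Hence $A(x_0)$ is a scaled variable, and absorbing the permutation and scaling into $A$ (which do not affect sparsity) we may assume $A(x_0) = x_0$.

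Next, to conclude $a := \vecb_{|x_0} = 0$: under $A(x_0)=x_0$, the binomial expansion gives
\[
(x_0+a)^{d_1} \;=\; \sum_{k=0}^{d_1}\binom{d_1}{k}\,a^{\,d_1-k}\,x_0^{k}.
\]
By the degree-separation observation above, for each $k$ with $d_1/2 < k \leq d_1$ the degree-$k$ component of $f(A\vecz+\vecb)$ is exactly $\binom{d_1}{k}\,a^{d_1-k}\,x_0^{k}$. If $a \neq 0$, then, as long as the relevant binomial coefficients $\binom{d_1}{k}$ are nonzero in $\F$, each such $k$ contributes a distinct nonzero monomial $x_0^{k}$ to $f(A\vecz+\vecb)$. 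This produces at least $\lceil d_1/2\rceil$ distinct monomials of $f(A\vecz+\vecb)$, and since $d_1/2 > s$ we again contradict $\cal S(f(A\vecz+\vecb))\leq s$. Therefore $a = 0$.

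The main obstacle is handling fields of positive characteristic, where some $\binom{d_1}{k}$ could vanish and sabotage the count of distinct monomials in the binomial expansion. The plan is to resolve this by picking $d_1$ in Section~\ref{Choice of degrees} to be of the form $p^j-1$ (when $\mathrm{char}(\F)=p$); by Lucas's theorem every $\binom{d_1}{k}$ for $0\leq k \leq d_1$ is then nonzero mod $p$, so the argument of the previous paragraph goes through verbatim. In characteristic $0$ the binomial coefficients are automatically nonzero and no adjustment is needed.
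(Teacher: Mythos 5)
Your proof is correct and follows essentially the same route as the paper's: both exploit that $d_1 \geq 6n(d_2+1)+2d_3+2$ forces all summands other than $x_0^{d_1}$ to have degree below $d_1/2 > s$, then use the binomial expansion of $(A(x_0)+\vecb_{|x_0})^{d_1}$ together with nonvanishing binomial coefficients (Lucas's theorem plus the choice $d_1 = p^j-1$ in positive characteristic, exactly as in Section~\ref{Choice of degrees}) to force more than $s$ monomials unless $A(x_0)$ is a single scaled variable and $\vecb_{|x_0}=0$. The only cosmetic difference is organizational (you isolate the degree-$d_1$ component first, whereas the paper case-splits on $\vecb_{|x_0}$), and for the sparsity bound $\cal S(A(x_0)^{d_1}) \geq d_1+1$ in finite characteristic the sharper citation is Observation~\ref{lem_binom} rather than Observation~\ref{Lemma: linear form sparsity}, though your parameter choice makes either work.
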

\begin{proof}
    Let $\ell_0 = A(x_0)$ and $b_0 = \vecb_{|x_0}$. If $b_0 = 0$ and $\ell_0$ is a linear form in at least two variables, then by the choice of $d_1$ and Observation \ref{lem_binom}, $\cal S((\ell_0 + b_0)^{d_1}) \geq d_1 + 1 > s$. If $b_0 \neq 0$, then by the binomial theorem
    \[(\ell_0 + b_0)^{d_1}= \sum_{i=0}^{d_1} \binom{d_1}{i} b_0^i \ell_0^{d_1-i}.\]
    The summands in the above expansion are degree separated and $\binom{d_1}{i} \neq 0$ for all $i \in [0,d_1]$ because of the choice of $d_1$ (and Lucas's Theorem if the characteristic is finite). Thus, $(\ell_0 + b_0)^{d_1}$ contains at least one monomial of degree $i$, for all $i \in [0,d_1]$. Since $\frac{d_1}{2} > 3n(d_2+1) + d_3 > s$ and the maximum degree of any polynomial among the $Q_i$'s and $R_k$'s, with $i \in [n]$ and $k \in [m]$, is $3n(d_2+1)+d_3$, therefore at least $s+1$ monomials in $(\ell_0 + b_0)^{d_1}$ are degree separated from $Q_i(A\vecz+\vecb)$'s and $R_k(A\vecz+\vecb)$'s. Hence $\cal S(f(A\vecz + \vecb))  > s$, a contradiction. Thus, $\vecb_{|x_0} = b_0$ must be $0$ and $A(x_0) = \ell_0$ must have only one variable, which can be assumed to be $x_0$ without loss of generality, as permutation and non-zero scaling of variables does not affect the sparsity.
\end{proof}
\begin{lemma} \label{Lemma: degree separation affine case}
    For all $i \in [n]$, $k \in [m]$, $x_0^{d_1}$, $Q_{i,1}(A\vecz + \vecb)$, $Q_{i,2}(A\vecz + \vecb)$, $Q_{i,3}(A\vecz + \vecb)$ and $R_j(A\vecz + \vecb)$ are degree separated from one another with respect to $x_0$. Also, $Q_i(A\vecz + \vecb)$ is degree separated with respect to $x_0$ from other $Q_j(A\vecz + \vecb)$'s, for $i,j \in [n]$ and $i \neq j$. Similarly, $R_k(A\vecz + \vecb)$ is degree separated with respect to $x_0$ from $R_l(A\vecz + \vecb)$ for $k,l \in [m]$ and $k \neq l$.
\end{lemma}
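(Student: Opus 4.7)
The plan is to use Lemma \ref{Lemma: affine x0 fixed} as the foundation: once we know $A(x_0) = x_0$ and $\vecb_{|x_0} = 0$, the $x_0$ prefactors in every summand of $f$ pass through the map $\vecz \mapsto A\vecz+\vecb$ unchanged. Concretely, for each summand $P(\vecz) = x_0^{\alpha} \cdot G(\vecz)$ appearing in the definition of $f$, we get $P(A\vecz+\vecb) = x_0^{\alpha} \cdot G(A\vecz+\vecb)$, where $G(A\vecz+\vecb)$ is a polynomial of degree at most $\deg(G)$ in which $x_0$ can appear (since the affine forms $A(x_i) + \vecb_{|x_i}$, $A(y_i)+\vecb_{|y_i}$ may involve $x_0$). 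Thus the $x_0$-degrees of the monomials in $P(A\vecz+\vecb)$ form an interval $[\alpha, \alpha + \deg(G)]$.

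Applying this to each summand, I would list the resulting intervals:
\begin{itemize}
    \item $x_0^{d_1}$ contributes the singleton $\{d_1\}$.
    \item $Q_{i,1}(A\vecz+\vecb)$: $\bigl[(3i-2)(d_2+1),\ (3i-2)(d_2+1)+d_2\bigr]$.
    \item $Q_{i,2}(A\vecz+\vecb)$: $\bigl[(3i-1)(d_2+1),\ (3i-1)(d_2+1)+d_3\bigr]$.
    \item $Q_{i,3}(A\vecz+\vecb)$: $\bigl[3i(d_2+1),\ 3i(d_2+1)+d_3\bigr]$.
    \item $R_k(A\vecz+\vecb)$: $\bigl[k(3d_4+1),\ k(3d_4+1)+3d_4\bigr]$.
\end{itemize}
The task then reduces to verifying pairwise disjointness of these intervals using the parameter constraints in \eqref{Affine ETsparse conditions}.

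The verification is a routine but systematic arithmetic check. The gap between consecutive $Q_{i,j}$-intervals within a fixed $i$ is $(d_2+1) - d_3 \geq d_3 + 1 > 0$ using $d_2 \geq 2d_3$; the gap between $Q_{i,3}$ and $Q_{i+1,1}$ is likewise $(d_2+1) - d_3 > 0$; consecutive $R_k$-intervals are separated by exactly $1$. For the separation between the $R_k$'s and the $Q_{i,j}$'s, the top of $R_m$ is $m(3d_4+1) + 3d_4$, which is dominated by $d_2 + 1$, the bottom of $Q_{1,1}$, because $d_2 \geq 2d_3 \geq 2m(d_4+1)^2 + 2$. Finally, the top of $Q_{n,3}$ is $3n(d_2+1) + d_3$, which is strictly less than $d_1$ by the chosen $d_1 \geq 6n(d_2+1) + 2d_3 + 2$.

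I do not foresee a genuine obstacle here: the content of the lemma is a bookkeeping statement about $x_0$-degree intervals once translation in $x_0$ has been ruled out, and the chain of inequalities in \eqref{Affine ETsparse conditions} was deliberately chosen to make all gaps positive. The only mild subtlety is remembering that the inner affine forms $A(x_i)+\vecb_{|x_i}$, etc., may genuinely carry $x_0$, so one cannot claim each summand has a single $x_0$-degree; the argument must work with intervals rather than points, which is exactly what the constraints in \eqref{Affine ETsparse conditions} are designed to accommodate.
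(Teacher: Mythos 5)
Your proposal is correct and follows essentially the same route as the paper: invoke Lemma \ref{Lemma: affine x0 fixed} to fix $A(x_0)=x_0$, $\vecb_{|x_0}=0$, record the $x_0$-degree interval $[\alpha,\alpha+\deg(G)]$ of each summand $x_0^{\alpha}G(A\vecz+\vecb)$ (the same intervals the paper lists), and check pairwise disjointness from the constraints in \eqref{Affine ETsparse conditions}. One small arithmetic slip that does not affect the conclusion: the gap between the $Q_{i,1}$ and $Q_{i,2}$ intervals is $(d_2+1)-d_2=1$ rather than $(d_2+1)-d_3$ (the latter is the gap for $Q_{i,2}$ versus $Q_{i,3}$ and for $Q_{i,3}$ versus $Q_{i+1,1}$), but positivity is all that is needed, so the argument stands.
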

\begin{proof}
    Let $i \in [n]$. For $Q_{i,1}(A\vecz + \vecb)$, $Q_{i,2}(A\vecz + \vecb)$ and $Q_{i,3}(A\vecz + \vecb)$, the respective range of $x_0$-degree of a monomial of respective polynomials is $[(3i-2)(d_2+1),(3i-2)(d_2+1) + d_2]$, $[(3i-1)(d_2+1),(3i-1)(d_2+1) + d_3]$ and $[3i(d_2+1),3i(d_2+1) + d_3]$. As $d_2 > d_3$, it can be observed that these ranges are disjoint, implying
    $Q_{i,1}(A\vecz + \vecb),Q_{i,2}(A\vecz + \vecb)$ and $Q_{i,3}(A\vecz + \vecb)$ are degree separated from one another with respect to $x_0$.

    Let $i,j \in [n]$ and $i < j$ without loss of generality. For $Q_i(A\vecz + \vecb)$ and $Q_j(A\vecz + \vecb)$, the respective range of $x_0$-degree of a monomial of respective polynomials is $[(3i-2)(d_2+1),3i(d_2+1) + d_3]$ and $[(3j-2)(d_2+1),3j(d_2+1) + d_3]$. As $d_2 > d_3$ and $j \geq i+1$, therefore $(3j-2)(d_2+1) > 3i(d_2+1) + d_3$. Hence, $Q_i(A\vecz + \vecb)$ is degree separated from $Q_j(A\vecz + \vecb)$ with respect to $x_0$. 
    
    Now, let $k, l \in [m]$ and $k < l$ without loss of generality. For $R_k(A\vecz + \vecb)$ and $R_l(A\vecz + \vecb)$, the respective range of $x_0$-degree of a monomial of respective polynomials is $[k(3d_4+1),k(3d_4+1) + 3d_4]$ and $[l(3d_4+1),l(3d_4+1) + 3d_4]$. As $l \geq k+1$, therefore $l(3d_4+1) > k(3d_4+1) + 3d_4$. Hence, $R_k(A\vecz + \vecb)$ is degree separated from $R_l(A\vecz + \vecb)$ with respect to $x_0$.

    Lastly, let $i \in [n]$ and $k \in [m]$. The highest $x_0$-degree of a monomial in $R_k(A\vecz+\vecb)$ is $k(3d_4+1) + 3d_4 \leq m(3d_4+1) + 3d_4$, while the lowest $x_0$-degree of any monomial in $Q_i(A\vecz+\vecb)$ is $(3i-2)(d_2+1) > d_2$. Now,
    \[ d_2 \geq 2d_3 > 2m(d_4+1)^2 > m(3d_4+1) + 3d_4 \]
    where the inequalities follow from the conditions in \eqref{Affine ETsparse conditions}. Therefore $Q_i(A\vecz+\vecb)$ is degree separated from $R_k(A\vecz+\vecb)$'s with respect to $x_0$. Clearly, $x_0^{d_1}$ is degree separated with respect to $x_0$ from $Q_i(A\vecz+\vecb)$ and $R_k(A\vecz+\vecb)$ because $d_1 > 3n(d_2 + 1) + d_3$, the highest $x_0$-degree of any monomial among the $Q_i(A\vecz+\vecb)'s$ and $R_k(A\vecz+\vecb)$'s.
\end{proof}
\vspace{-5mm}
\[ \therefore  \cal S(f(A\vecz + \vecb)) = \cal S(x_0^{d_1}) + \sum_{i=1}^{n} \cal S(Q_i(A\vecz + \vecb)) + \sum_{k=1}^{m} \cal S(R_k(A\vecz + \vecb)) \text{ by Lemma \ref{Lemma: degree separation affine case}.}
\]
Lemma \ref{Qi sparsity analysis: translated} analyses the sparsity of $Q_i(A\vecz+\vecb)$. The proof of Lemma \ref{Qi sparsity analysis: translated} is similar to that of Lemma \ref{var_spars_lem} and uses Observations \ref{lem_binom} and \ref{Lemma: affine form sparsity}. 
\begin{lemma} \label{Qi sparsity analysis: translated}
    For any invertible $A$, $\vecb \in \F^{|\vecz|}$ and $i \in [n]$:
        \begin{equation*}
            \cal S(Q_i(A\vecz+\vecb)) = \cal S(Q_{i,1}(A\vecz+\vecb))+\cal S(Q_{i,2}(A\vecz+\vecb))+\cal S(Q_{i,3}(A\vecz+\vecb))\geq d_3 + 3,
        \end{equation*}
    where $Q_i$, $Q_{i,1}$, $Q_{i,2}$ and $Q_{i,3}$ are as defined in this section. Equality holds if and only if $\vecb_{|x_i} = 0, \vecb_{|y_i} = 0$ and under $A$
        \[x_i \mapsto X_i  \text{ and } y_i \mapsto Y_i + (-1)^{u_i}X_i \] 
    for some scaled $X_i, Y_i \in \vecz$ and $u_i\in \{0,1\}$. Further, if $\cal S(Q_i(A\vecz+\vecb)) \neq d_3 + 3$, then $\cal S(Q_i(A\vecz+\vecb)) \geq 2d_3 + 3$.
\end{lemma}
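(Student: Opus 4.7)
The plan is to split $Q_i(A\vecz+\vecb)$ via Lemma \ref{Lemma: degree separation affine case} into three summands that are degree separated with respect to $x_0$, so that $\cal S(Q_i(A\vecz+\vecb)) = \cal S(Q_{i,1}(A\vecz+\vecb)) + \cal S(Q_{i,2}(A\vecz+\vecb)) + \cal S(Q_{i,3}(A\vecz+\vecb))$. Since $A(x_0)=x_0$ and $\vecb_{|x_0}=0$ by Lemma \ref{Lemma: affine x0 fixed}, each summand factors as $x_0^{e_j}\cdot h_j^{d_j}$ for an affine form $h_j$ in the remaining variables, and its sparsity equals $\cal S(h_j^{d_j})$. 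The key tool throughout is the dichotomy, supplied by Observations \ref{lem_binom} and \ref{Lemma: affine form sparsity}, that a power $h^d$ of an affine form is either of sparsity $1$ (which happens iff $h$ is a scaled single variable) or of sparsity at least $d+1$.

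Apply this dichotomy first to $h_1 = \ell_i + c_i$ with $\ell_i := A(x_i)$ and $c_i := \vecb_{|x_i}$. Then $\cal S(Q_{i,1}(A\vecz+\vecb)) = 1$ exactly when $\ell_i$ is a scaled single variable $X_i$ and $c_i = 0$; otherwise $\cal S(Q_{i,1}(A\vecz+\vecb)) \geq d_2 + 1 \geq 2d_3 + 1$ by $d_2 \geq 2d_3$. In this ``otherwise'' branch, since each of the other two summands has sparsity at least $1$ (nonzero by invertibility of $A$), we immediately obtain $\cal S(Q_i(A\vecz+\vecb)) \geq 2d_3 + 3$, which is the ``further'' part of the claim in this branch.

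In the remaining case $A(x_i) = X_i$, $\vecb_{|x_i} = 0$, set $p := A(y_i) + X_i + \vecb_{|y_i}$ and $q := A(y_i) - X_i + \vecb_{|y_i}$, so that $\cal S(Q_{i,2}(A\vecz+\vecb)) = \cal S(p^{d_3})$ and $\cal S(Q_{i,3}(A\vecz+\vecb)) = \cal S(q^{d_3})$. By the dichotomy each sparsity is either $1$ or at least $d_3 + 1$. I would show that $\cal S(p^{d_3})=1$ forces $\vecb_{|y_i}=0$ and $A(y_i) = Y_i - X_i$ (corresponding to $u_i = 1$) for some scaled single variable $Y_i$ that is not a scalar multiple of $X_i$ (the latter coming from invertibility). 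Symmetrically, $\cal S(q^{d_3})=1$ forces $\vecb_{|y_i}=0$ and $A(y_i) = Y_i + X_i$ (corresponding to $u_i = 0$). Thus either exactly one of the sparsities equals $1$, yielding $\cal S(Q_i(A\vecz+\vecb)) = d_3 + 3$ with $A(y_i) = Y_i + (-1)^{u_i}X_i$, or both are at least $d_3 + 1$, yielding $\cal S(Q_i(A\vecz+\vecb)) \geq 2d_3 + 3$.

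The main obstacle is ruling out the degenerate scenario in the last case where both $p$ and $q$ simultaneously collapse to scaled single variables, which would appear to force $\cal S(Q_i(A\vecz+\vecb))$ to be smaller than $d_3 + 3$. The argument here uses invertibility together with $\textnormal{char}(\F) \neq 2$: if both $p = \alpha u$ and $q = \beta v$ for scaled variables $u,v$, then subtracting gives $2X_i = \alpha u - \beta v$, which contradicts $X_i$ being a scaled single variable distinct from $u$ and $v$ (while $\alpha,\beta \neq 0$ are required for $p,q$ to be scaled variables). This char-not-$2$ obstruction is the only subtle point; the rest is a routine case analysis.
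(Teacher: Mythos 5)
Your proposal is correct and follows essentially the same route as the paper's proof: decompose $\cal S(Q_i(A\vecz+\vecb))$ via the $x_0$-degree separation lemma, then case-analyse using Observations \ref{lem_binom} and \ref{Lemma: affine form sparsity}, with invertibility forcing that at most one of $Q_{i,2},Q_{i,3}$ can collapse to sparsity $1$. Your explicit subtraction argument ($2X_i = \alpha u - \beta v$, using $\textnormal{char}(\F)\neq 2$) is just a spelled-out version of the step the paper attributes to ``invertibility of $A$'' in its third case, so there is no substantive difference.
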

\begin{proof}
    From Lemma \ref{Lemma: degree separation affine case} and Observation \ref{lem_deg_sep_sum}, it follows that $\cal S(Q_i(A\vecz+\vecb)) = \cal S(Q_{i,1}(A\vecz+\vecb))+\cal S(Q_{i,2}(A\vecz+\vecb))+\cal S(Q_{i,3}(A\vecz+\vecb))$. The if direction of the lemma statement is easy to verify. For the only if direction consider the following cases of $A$ and $\vecb$:
\begin{enumerate}
    \item $\cal S(A(x_i) + \vecb_{|x_i}) \geq 2$: If $\vecb_{|x_i} = 0$  then $\cal S(A(x_i)) \geq 2$ and by Observation \ref{lem_binom}, $\cal S(Q_{i,1}(A\vecz + \vecb)) \geq d_2 + 1$. If $\vecb_{|x_i} \neq 0$ then by the choice of $d_2$ and Observation \ref{Lemma: affine form sparsity}, $\cal S(Q_{i,1}(A\vecz + \vecb)) \geq d_2 + 1$. Also, $\cal S(Q_{i,2}(A\vecz + \vecb)) \geq 1$ and $\cal S(Q_{i,3}(A\vecz + \vecb)) \geq 1$. Hence, $\cal S(Q_{i}(A\vecz + \vecb)) \geq d_2+3 \geq 2d_3 + 3$ as $d_2 \geq 2d_3$. For the remaining cases, we consider $\cal S(A(x_i) + \vecb_{|x_i}) = 1$, meaning $A(x_i) = X_i$ for some scaled variable $X_i \in \vecz$ and $\vecb_{|x_i} = 0$. 
    
    \item $\cal S(A(x_i) + \vecb_{|x_i}) = 1$, $\cal S(A(y_i + x_i)+ \vecb_{|y_i} + \vecb_{|x_i})\geq 2$ and $\cal S(A(y_i -x_i) + \vecb_{|y_i} - \vecb_{|x_i})\geq 2$: Like in the previous case, it can be shown using Observation \ref{lem_binom} (if $\vecb_{|y_i} = 0$) or Observation \ref{Lemma: affine form sparsity} (if $\vecb_{|y_i} \neq 0$) that $\cal S(Q_{i,2}(A\vecz + \vecb)) \geq d_3+1$ and $\cal S(Q_{i,3}(A\vecz + \vecb)) \geq d_3+1$ implying $\cal S(Q_{i}(A\vecz + \vecb)) \geq 2d_3 + 3$.
    
    \item $\cal S(A(x_i) + \vecb_{|x_i}) = 1$ with $\cal S(A(y_i + x_i)+ \vecb_{|y_i} + \vecb_{|x_i}) = 1$ or $\cal S(A(y_i -x_i) + \vecb_{|y_i} - \vecb_{|x_i}) = 1$: Because $A$ is invertible, $\cal S(A(y_i-x_i)) \geq 1$ and $\cal S(A(y_i+x_i)) \geq 1$. Further, since $\vecb_{|x_i} = 0$, therefore $\vecb_{|y_i}$ must be $0$. This observation and the invertibility of $A$ imply that exactly one of $\cal S(A(y_i + x_i)) = 1$ or $\cal S(A(y_i - x_i)) = 1$ holds. Without loss of generality, let $\cal S(A(y_i + x_i)) = 1$, which implies $A(y_i) = Y_i - X_i$ for some scaled variable $Y_i \in \vecz$. Then $A(y_i - x_i) = Y_i - 2X_i$. Hence, $\cal S(Q_{i,1}(A\vecz + \vecb)) = 1$, $\cal S(Q_{i,2}(A\vecz + \vecb)) = 1$ and $\cal S(Q_{i,3}(A\vecz + \vecb)) = d_3 + 1$ (by Observation \ref{lem_binom}) implying $\cal S(Q_{i}(A\vecz + \vecb)) = d_3 + 3$.
\end{enumerate}
The first two cases show that if $A$ and $\vecb$ are not as per the lemma statement, then $\cal S(Q_i(A\vecz + \vecb)) \geq 2d_3 + 3$; otherwise, $\cal S(Q_{i}(A\vecz + \vecb)) = d_3 + 3$.
\end{proof}
\noindent Lemma \ref{Lemma: equality condition Qi} holds with some change to its statement and can be proved using Lemmas \ref{Lemma: degree separation affine case} and \ref{Qi sparsity analysis: translated}, and Observation \ref{lem_deg_sep_sum}. Lemmas \ref{Lemma: affine x0 fixed}, \ref{Qi sparsity analysis: translated} and \ref{Lemma: equality condition Qi} together show that $A$ is a permuted scaled version of the transform of \eqref{action} and that $\vecb = \mathbf{0}$. Then, Proposition \ref{Proposition: bwd direction} shows how a satisfying assignment for $\psi$ can be recovered from $A$.

\subsubsection*{\underline{Construction for characteristic $2$ fields}} \label{Section: construction affine char 2}
Since over characteristic $2$ fields $y_i + x_i$ and $y_i-x_i$ are the same polynomials, we need to modify the previous construction. Moreover, the sparsifying transform will also be slightly different. Formally, let $\psi$, $\vecx, x_0, \vecy$ and $\vecz$ be as denoted in Section \ref{subsubsection: sparse f construction}. Let $d_1$, $d_2$, $d_3$, $d_4 \in \N$. Consider the following polynomials:
\begin{itemize}
    \item Corresponding to $x_i$, where $i\in [n]$, define $Q_i(\vecz)$ as: 
\begin{equation*} 
\begin{split}
    Q_i(\vecz) &:= Q_{i,1}(\vecz) + Q_{i,2}(\vecz) + Q_{i,3}(\vecz), \  Q_{i,1}(\vecz) := x_0^{(3i-2)(d_2+1)}x_i^{d_2}, \\ Q_{i,2}(\vecz) &:= x_0^{(3i-1)(d_2+1)}(y_i+x_i)^{d_3} \text{ and } Q_{i,3}(\vecz) := x_0^{3i(d_2+1)}y_i^{d_3}. 
\end{split}
\end{equation*}

\item For the $k\ith$ clause, $k \in [m]$,  define $R_k(\vecz) := x_0^{k(3d_4+1)}\prod_{j\in C_k}(y_j + a_{k,j}x_j)^{d_4}.$ 
\end{itemize}
Define $s:= 1 + n(d_3+3) + m(d_4+1)^2$. Choose $d_i$'s as specified in Section \ref{Choice of degrees} so that they are $(mn)^{O(1)}$ and satisfy the conditions of \eqref{Affine ETsparse conditions}. Finally, define $f(\vecz)$ as:
\begin{equation} \label{affine variable poly char 2}
  f(\vecz) := x_0^{d_1} + \sum_{i=1}^{n} Q_i(\vecz) + \sum_{k=1}^{m} R_k(\vecz).  
\end{equation}
Observations \ref{Obs: deg separated polys translated}, \ref{Obs: translated poly degree} hold with little change. Observation \ref{Obs: translated poly sparsity char 2} analyses the sparsity and support of $f$ and has a proof similar to that of Observations \ref{Obs: translated poly sparsity} and Observation \ref{Obs: poly sparsity char 2 case}.

\begin{observation} \label{Obs: translated poly sparsity char 2}
    $\cal S(f(\vecz)) \leq 1 + n(d_3+3) + m(d_4+1)^3$ and $4 \leq \supp(f) \leq 7$.
\end{observation}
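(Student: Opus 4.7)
The plan is to mirror the proof of Observation \ref{Obs: translated poly sparsity} (the analogous statement for characteristic $\neq 2$) with careful attention to the two structural changes introduced by characteristic $2$: the polynomial $Q_{i,3}$ is now the monomial $x_0^{3i(d_2+1)}y_i^{d_3}$ rather than a power of a two-variable linear form, and the clause polynomial $R_k$ uses factors of the form $(y_j + a_{k,j}x_j)^{d_4}$ (with $a_{k,j} \in \{0,1\}$) instead of $(y_j + (-1)^{a_{k,j}}x_j)^{d_4}$. Both changes can only make sparsity and support \emph{smaller}, which is why the statement turns into inequalities rather than equalities.

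For the sparsity bound, first I would invoke Observation \ref{Obs: deg separated polys translated} (whose proof for the char-$2$ construction is essentially identical, since the degree analysis uses only the $x_0$-exponents) together with Observation \ref{lem_deg_sep_sum} to split
$$\cal S(f) = 1 + \sum_{i=1}^{n}\bigl(\cal S(Q_{i,1}) + \cal S(Q_{i,2}) + \cal S(Q_{i,3})\bigr) + \sum_{k=1}^{m}\cal S(R_k).$$
Then $\cal S(Q_{i,1}) = \cal S(Q_{i,3}) = 1$ trivially, while $\cal S(Q_{i,2}) = d_3 + 1$ follows from Observation \ref{lem_binom}, invoking the finite-characteristic clause with the choice of $d_3$ of the form $2^k-1$ specified in Section \ref{Choice of degrees}. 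For each $R_k$, the three multiplicands of the product share no variables, so $\cal S(R_k)$ equals the product of the sparsities of the three factors; each factor has sparsity $d_4+1$ when $a_{k,j}=1$ (again by Observation \ref{lem_binom} with $d_4$ of form $2^k-1$) and sparsity $1$ when $a_{k,j}=0$, giving $\cal S(R_k) \leq (d_4+1)^3$. Summing yields $\cal S(f) \leq 1 + n(d_3+3) + m(d_4+1)^3$, and the inequality (rather than equality) reflects exactly the $a_{k,j}=0$ case.

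For the support bound, since the summands are degree separated there is no cancellation, so $\supp(f)$ is the maximum support attained by any monomial across the summands. The monomial $x_0^{d_1}$ contributes support $1$; each $Q_{i,1}$ and $Q_{i,3}$ contributes support $2$; $Q_{i,2}$ contributes support at most $3$ (via the cross monomials of $(y_i+x_i)^{d_3}$ together with $x_0$). For $R_k$, a generic monomial of the product $\prod_{j\in C_k}(y_j+a_{k,j}x_j)^{d_4}$ contributes $2$ variables per $j$ with $a_{k,j}=1$ and exactly $1$ variable per $j$ with $a_{k,j}=0$, and adding $x_0$ gives an upper bound of $1 + 2\cdot 3 = 7$ (when all three literals are complemented) and a lower bound of $1+3 = 4$ (when none are). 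Since $|C_k|=3$ and $m\geq 1$, at least one $R_k$ is present, so $\supp(f)\geq 4$; and since no other summand has support larger than $7$, we get $\supp(f)\leq 7$.

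There is no real obstacle here: the only mildly delicate point is making sure that Observation \ref{lem_binom} applies in characteristic $2$, which is handled by the choice of $d_3, d_4$ as numbers of the form $2^k-1$ (hence Lucas's theorem gives no vanishing binomial coefficients and the powers of the two-variable linear forms expand to their full sparsity). Everything else is bookkeeping of monomial supports in the two structurally different factors introduced by the char-$2$ construction, exactly paralleling the proof of the corresponding Observation in the non-characteristic-$2$ case.
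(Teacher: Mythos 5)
Your proposal is correct and follows essentially the same route as the paper, which proves Observation \ref{Obs: translated poly sparsity char 2} exactly as you do: degree separation of the summands (Observation \ref{Obs: deg separated polys translated} adapted to the characteristic-$2$ construction) plus Observation \ref{lem_deg_sep_sum}, Observation \ref{lem_binom} applied to $(y_i+x_i)^{d_3}$ and $(y_j+a_{k,j}x_j)^{d_4}$ with $d_3,d_4$ of the form $2^k-1$, variable-disjointness of the clause factors to get $\cal S(R_k)\leq (d_4+1)^3$, and a max-over-summands computation of the support giving $4\leq\supp(f)\leq 7$ (mirroring the proofs of Observations \ref{Obs: translated poly sparsity} and \ref{Obs: poly sparsity char 2 case}). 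The only slip is your justification for reusing Observation \ref{Obs: deg separated polys translated}: its proof separates summands by \emph{total} degree, not by $x_0$-exponents, but since every summand in the characteristic-$2$ variant has the same total degree as its counterpart, the argument indeed carries over verbatim.
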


\begin{example} \label{Remark: char two not natural affine}
 Like in Section \ref{section-homogeneous-char2}, the sparsity of the polynomial output by the reduction over characteristic $2$ fields depends on the number of variables which are complemented within a clause. Hence, for the same number of variables $n$ and the same number of clauses $m$, the output polynomial corresponding to two different $\psi$'s may have different sparsity. Thus, the reduction is not natural over characteristic $2$ fields.
\end{example}
\paragraph{The forward direction.} Let $\vecu \in \{0,1\}^n$ be such that $\psi(\vecu) = 1$ and $f$, as described in \eqref{affine variable poly char 2}, be the polynomial corresponding to $\psi$. Proposition \ref{Proposition: fwd direction char 2} shows how $\vecu$ can be used to construct a sparsifying transform $A$ with $\vecb = \mathbf{0}$.

\paragraph{The reverse direction.} Let $A \in \GL(|\vecz|,\F)$ and $\vecb \in \F^{|\vecz|}$ be such that $\cal S(f(A\vecz + \vecb)) \leq s$. The analysis of the reverse direction in the previous section holds with some changes. Formally, Lemma \ref{Lemma: affine x0 fixed} holds without any change in its proof while Lemma \ref{Lemma: degree separation affine case} holds with some change in its statement and proof. Thus, $A(x_0) = x_0$ without loss of generality. Lemma \ref{Qi sparsity analysis: affine char 2} analyses $\cal S(Q_i(A\vecz + \vecb))$, where $i \in [n]$, and its proof is similar to that of Lemma \ref{Qi sparsity analysis: translated}. 

\begin{lemma} \label{Qi sparsity analysis: affine char 2}
    For any $A \in \GL(|\vecz|,\F^{|\vecz|})$, $\vecb \in \F^{|\vecz|}$ and $i \in [n]$:
        \begin{equation*}
            \cal S(Q_i(A\vecz+\vecb)) = \cal S(Q_{i,1}(A\vecz+\vecb))+\cal S(Q_{i,2}(A\vecz+\vecb))+\cal S(Q_{i,3}(A\vecz+\vecb))\geq d_3 + 3,
        \end{equation*}
    where $Q_i$, $Q_{i,1}$, $Q_{i,2}$ and $Q_{i,3}$ are as defined in this section. Equality holds if and only if $\vecb_{|x_i} = 0, \vecb_{|y_i} = 0$ and under $A$
        \[x_i \mapsto X_i  \text{ and } y_i \mapsto Y_i + (1-u_i)X_i \] 
    for some scaled $X_i, Y_i \in \vecz$ and $u_i\in \{0,1\}$. Further, if $\cal S(Q_i(A\vecz+\vecb)) \neq d_3 + 3$, then $\cal S(Q_i(A\vecz+\vecb)) \geq 2d_3 + 3$.
\end{lemma}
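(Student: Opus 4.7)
The proof plan follows that of Lemma \ref{Qi sparsity analysis: translated} very closely, with the only genuine change being that in characteristic $2$ the third summand $Q_{i,3}$ uses $y_i^{d_3}$ in place of $(y_i-x_i)^{d_3}$, since $y_i - x_i$ and $y_i + x_i$ coincide. First, applying the characteristic-$2$ variant of Lemma \ref{Lemma: degree separation affine case} and Observation \ref{lem_deg_sep_sum} — together with $A(x_0) = x_0$ and $\vecb_{|x_0} = 0$ from Lemma \ref{Lemma: affine x0 fixed} — immediately reduces the lemma to analyzing the sparsity of each summand $Q_{i,j}(A\vecz+\vecb)$ separately.

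Setting $\alpha_i := A(x_i)+\vecb_{|x_i}$, $\beta_i := A(y_i)+\vecb_{|y_i}$, and $\gamma_i := A(y_i+x_i)+\vecb_{|y_i}+\vecb_{|x_i}$, the three summands reduce to scaled powers of the forms $\alpha_i^{d_2}$, $\gamma_i^{d_3}$, $\beta_i^{d_3}$. I then perform the same case split as in Lemma \ref{Qi sparsity analysis: translated}. If $\cal S(\alpha_i) \geq 2$, then either Observation \ref{lem_binom} (when $\vecb_{|x_i} = 0$, so $\alpha_i$ is a linear form in at least two variables) or Observation \ref{Lemma: affine form sparsity} (when $\vecb_{|x_i} \neq 0$, so $\alpha_i$ is a nontrivial affine form) gives $\cal S(\alpha_i^{d_2}) \geq d_2+1$, which together with the trivial lower bounds on the other two summands yields $\cal S(Q_i(A\vecz+\vecb)) \geq d_2+3 \geq 2d_3+3$ using $d_2 \geq 2d_3$. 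Otherwise $\cal S(\alpha_i) = 1$, which — since $\alpha_i$ cannot be a pure constant by invertibility of $A$ — forces $\vecb_{|x_i} = 0$ and $A(x_i) = X_i$ for some scaled variable $X_i$.

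In this latter situation I would further distinguish based on $\cal S(\beta_i)$ and $\cal S(\gamma_i)$. If both are $\geq 2$, the same observations drive $\cal S(Q_{i,2}(A\vecz+\vecb)), \cal S(Q_{i,3}(A\vecz+\vecb)) \geq d_3+1$ and hence $\cal S(Q_i(A\vecz+\vecb)) \geq 2d_3+3$. The only remaining possibility is that at least one of $\beta_i$ or $\gamma_i$ equals a scaled variable $Y_i$ (and in fact both cannot simultaneously, since $\gamma_i - \beta_i = X_i$ and two scaled variables cannot differ by a third distinct scaled variable when $A$ is invertible), which also forces $\vecb_{|y_i} = 0$. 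The two sub-possibilities, namely $A(y_i) = Y_i$ (so $\beta_i = Y_i$ and $\gamma_i = Y_i + X_i$) and $A(y_i) = Y_i + X_i$ (so $\gamma_i = Y_i + 2X_i = Y_i$ using $\textnormal{char}(\F) = 2$, with $\beta_i = Y_i + X_i$), correspond respectively to $u_i = 1$ and $u_i = 0$ in the equality condition $A(y_i) = Y_i + (1-u_i)X_i$; and in both cases Observation \ref{lem_binom} gives $\cal S(Q_{i,1}) + \cal S(Q_{i,2}) + \cal S(Q_{i,3}) = 1 + 1 + (d_3+1) = d_3+3$ exactly.

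The only delicate point is the characteristic-$2$ collapse $y_i + x_i \mapsto Y_i + 2X_i = Y_i$, which is precisely what makes the $u_i = 0$ case realize the claimed equality and which explains why the statement uses $(1-u_i)$ in place of the $(-1)^{u_i}$ that appears in the odd-characteristic version; apart from this, the arithmetic is identical to that of Lemma \ref{Qi sparsity analysis: translated}.
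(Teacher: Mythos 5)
Your proposal is correct and takes essentially the same route as the paper: the paper dispatches this lemma by declaring its proof ``similar to that of Lemma~\ref{Qi sparsity analysis: translated}'', and your argument is precisely that adaptation --- the same reduction via Lemma~\ref{Lemma: affine x0 fixed}, the characteristic-$2$ variant of Lemma~\ref{Lemma: degree separation affine case} and Observation~\ref{lem_deg_sep_sum}, followed by the same three-way case split on $\cal S(\alpha_i)$, $\cal S(\beta_i)$, $\cal S(\gamma_i)$. The one genuinely new point, the collapse $Y_i+2X_i=Y_i$ in characteristic $2$ that yields the $(1-u_i)$ form of the equality condition, is handled correctly, and your observation that $\beta_i$ and $\gamma_i$ cannot both be scaled variables (else invertibility of $A$ fails) properly secures the lower bound $d_3+3$ in the remaining case.
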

\noindent Lemma \ref{Lemma: equality condition Qi} holds with the same proof (for translations) as before. Lemmas \ref{Lemma: affine x0 fixed}, \ref{Qi sparsity analysis: affine char 2} and \ref{Lemma: equality condition Qi} together show that $A$ is a permuted scaled version of the transform of \eqref{action char 2} and that $\vecb = \mathbf{0}$. Then, Proposition \ref{Proposition: bwd direction char 2} shows how a satisfying assignment for $\psi$ can be recovered from $A$.

\subsubsection{Setting of parameters} \label{Choice of degrees}
\paragraph{For characteristic $0$ fields.} In this case, the inequalities in \eqref{Affine ETsparse conditions} can be converted to equalities. Thus
\begin{equation*}
    \begin{split}
        d_4 &= m, \ d_3 = m(m+1)^2+1=O(m^3)  \implies s = O(nm^3), \\
        d_2 &= 2m(m+1)^2+2=O(m^3), \ d_1 = 6nd_2 + 6n + 2d_3 + 2 = O(nm^3).
    \end{split}
\end{equation*}
\paragraph{For finite characteristic fields.} Let the characteristic be $p > 0$. If $p > d_1$, where the value of $d_1$ is as in the characteristic $0$ fields case, then the $d_i$'s are set as per the characteristic $0$ fields case. Otherwise  $p = O(nm^3)$. In such a case, we choose the $d_1 > d_2 > d_3 > d_4$ to be of form $p^k - 1$, for some $k \in \N$, so that the conditions in \eqref{Affine ETsparse conditions} are satisfied and Observation \ref{lem_binom} can be used over characteristic $p$ fields. The lemmas and the observations in the previous section hold for this choice of the $d_i$'s. Now, the following bounds hold on the $d_i$'s: 
\begin{equation*}
    \begin{split}
        d_4 &\leq pm, \ d_3 \leq pm(d_4+1)^2+p = O(p^3m^3) \implies s = O(nm^3p^3), \\
        d_2 &= pd_3 + (p-1)= O(p^4m^3), \ d_1 \leq p(6nd_2+ 6n + 2d_3 + 2) = O(nm^3p^5).
    \end{split}
\end{equation*}
As $p = O(nm^3)$, therefore
\[        
d_1 = O(n^6m^{18}), \ d_2 = O(n^4m^{15}), \ d_3 = O(n^3m^{12}),\  d_4= O(nm^4) \text{ and }  s = O(n^4m^{12}).
\]
\subsection{Extending Theorem \ref{Theorem: gap ETSparse NP-hard} for translations} \label{Section: Affine gap result}
In this section, we prove part $1$ of Theorem \ref{Theorem: gap ETSparse NP-hard} over all fields while considering translations. The proof involves a careful analysis of the sparsity of $f$ as defined corresponding to a $\CNF$ $\psi$. We split the proof into two cases: over characteristic $0$ fields and finite characteristic fields. The reason for the split is that in the analysis, we consider the sparsity of powers of affine forms $h = \ell + c$, where $\cal S(h) \geq 3$. The sparsity of such affine forms depends on the underlying field, as shown in the following analysis using Observation \ref{Lemma: linear form sparsity} and the binomial theorem. Note that we take $\epsilon \in (0,1/3)$ to be an arbitrary constant throughout this section.

\subsubsection{For characteristic $0$ fields} \label{section-gap-affine}
Consider the polynomial $f$ as defined in \eqref{affine variable poly} for a $\CNF$ $\psi$ and choose the $d_i$'s to be $(mn)^{O(1)}$ and also satisfy the following conditions: 
\begin{equation}\label{Eqn: gap ineq affine}
    d_4 \geq \max(4mn,(mn)^{O(1/\epsilon)}), \  d_3 = m(d_4+1)^2 + 1, \ d_2 = d_3^2 + 1, \ d_1 = 6nd_2 + 6n + 2d_3 + 2.
\end{equation}
Note that the constraints in \eqref{Affine ETsparse conditions} are also satisfied under \eqref{Eqn: gap ineq affine}. Let $s: = \cal S(f)$. By Observation \ref{Obs: translated poly sparsity}, $s = 1 + n(2d_3+3)+m(d_4+1)^3$. From Section \ref{Section: Affine ETsparse}, it follows that for satisfiable $\psi$'s, there exists $A \in \GL(|\vecz|,\F)$, $\vecb = \mathbf{0}$ such that $\cal S(f(A\vecz)) \leq s_0$, where $s_0 = 1 + n(d_3 + 3) + m(d_4+1)^2$. For unsatisfiable $\psi$'s, Lemma \ref{Lemma: unsat gap affine} gives lower bounds on $\cal S(f(A\vecz+\vecb))$, where $A \in \GL(|\vecz|,\F)$ and $\vecb \in \F^{|\vecz|}$, and encapsulates the argument of the reverse direction of the reduction in Section \ref{Section: Affine ETsparse} with a slightly deeper analysis. Comparing the sparsities for satisfiable and unsatisfiable $\psi$'s proves part $1$ of Theorem \ref{Theorem: gap ETSparse NP-hard} for translations. Proposition \ref{Proposition: Sparsity gap affine} shows $\gapet$ is $\NP$-hard using Lemma \ref{Lemma: unsat gap affine} and the conditions in \eqref{Eqn: gap ineq affine}.

\begin{lemma} \label{Lemma: unsat gap affine}
   Let $\psi \in \bar{\SAT}$, $f$, as defined in \eqref{affine variable poly}, be the polynomial corresponding to $\psi$, $A \in \GL(|\vecz|,\F)$ and $\vecb \in \F^{|\vecz|}$.
   \begin{enumerate}
       \item \label{case 1 unsat gap affine} If $A(x_0) + \vecb_{|x_0}$ is a non-trivial affine form, then $\cal S(f(A\vecz + \vecb)) \geq \frac{d_1}{2}$.\footnote{an affine form $\ell + c$, where $\ell$ is a linear form and $c$ is a constant, is non-trivial if $\ell$ is a linear form in at least two variables or $c \neq 0$ or both.}

       \item \label{case 2 unsat gap affine} If $A$ and $\vecb$ are not as in item \ref{case 1 unsat gap affine} and for some $j \in [n]$, $A(x_j) + \vecb_{|x_j}$ is a non-trivial affine form, then $\cal S(f(A\vecz + \vecb)) \geq d_2+1$.

       \item If $A$ and $\vecb$ are not as in items \ref{case 1 unsat gap affine} and \ref{case 2 unsat gap affine} and for some $j \in [n]$, $\cal S(A(y_j + x_j) + \vecb_{|y_j}) \geq 3$ or $\cal S(A(y_j - x_j) + \vecb_{|y_j}) \geq 3$, then $\cal S(f(A\vecz + \vecb)) \geq \frac{(d_3 + 1)(d_3 + 2)}{2}$. 

       \item If $A$ and $\vecb$ are not of the form described in the previous three cases, then $\vecb = \mathbf{0}$ and  $\cal S(f(A\vecz)) \geq (d_4 + 1)^3.$
   \end{enumerate}
\end{lemma}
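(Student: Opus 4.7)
The plan is to prove the four items sequentially, each under the assumption that the preceding items do not apply so that more and more structure on $(A,\vecb)$ can be exploited. The unifying strategy is: once we know $A(x_0)$ is a single (scaled) variable and $\vecb_{|x_0}=0$, an argument analogous to Lemma \ref{Lemma: degree separation affine case} yields degree separation with respect to $x_0$ of the summands $x_0^{d_1}$, $Q_{i,1}(A\vecz+\vecb)$, $Q_{i,2}(A\vecz+\vecb)$, $Q_{i,3}(A\vecz+\vecb)$, $R_k(A\vecz+\vecb)$; the sparsity of $f(A\vecz+\vecb)$ then decomposes additively across these summands, so each bound reduces to analyzing a single summand via the algebraic sparsity results of Section \ref{Section: Preliminaries}.

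For Item \ref{case 1 unsat gap affine}, I would mirror the proof of Lemma \ref{Lemma: affine x0 fixed}. Writing $\ell_0 := A(x_0)$ and $b_0 := \vecb_{|x_0}$: if $b_0 \neq 0$, the binomial expansion $(\ell_0+b_0)^{d_1}=\sum_{i=0}^{d_1}\binom{d_1}{i}b_0^{\,i}\ell_0^{d_1-i}$ yields $d_1+1$ pairwise degree-separated nonzero terms (all binomial coefficients are nonzero in characteristic $0$). By Observation \ref{Obs: translated poly degree}, every other summand of $f(A\vecz+\vecb)$ has total degree at most the maximum degree of any $Q_i$ or $R_k$, which is bounded by $d_1/2-1$ under \eqref{Affine ETsparse conditions}; hence at least $d_1/2$ of the binomial terms are degree separated from every other summand, so $\cal S(f(A\vecz+\vecb))\geq d_1/2$. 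If $b_0=0$ but $\ell_0$ involves at least two variables, Observation \ref{lem_binom} gives $\cal S(\ell_0^{d_1})\geq d_1+1$ and all of these monomials have total degree $d_1$, which again exceeds the degree of every other summand.

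For Item \ref{case 2 unsat gap affine}, Item \ref{case 1 unsat gap affine} failing lets us take $A(x_0)=x_0$ and $\vecb_{|x_0}=0$, and Lemma \ref{Lemma: degree separation affine case} gives additive sparsity. The non-trivial affine form $A(x_j)+\vecb_{|x_j}$ is either a linear form in $\geq 2$ variables (apply Observation \ref{lem_binom}) or has $\vecb_{|x_j}\neq 0$ (apply Observation \ref{Lemma: affine form sparsity}); either way $\cal S(Q_{j,1}(A\vecz+\vecb))\geq d_2+1$. For Item \ref{case 3 unsat gap affine}, the failure of the first two items forces every $A(x_j)$ to be a scaled variable and $\vecb_{|x_j}=0$, so $A(y_j\pm x_j)+\vecb_{|y_j}$ is simply $A(y_j)\pm A(x_j)+\vecb_{|y_j}$, and the hypothesis says this affine form has at least three distinct monomials. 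When $\vecb_{|y_j}=0$ the underlying linear form lives in $\geq 3$ variables and Observation \ref{Lemma: linear form sparsity} bounds the sparsity of its $d_3$-th power by $\binom{d_3+2}{2}=\tfrac{(d_3+1)(d_3+2)}{2}$; when $\vecb_{|y_j}\neq 0$ the same bound follows from Observation \ref{Lemma: affine form sparsity} (using that the linear part already has $\geq 2$ variables, so the sparsity of its $d_3$-th power is at least $d_3+1$, and the nonzero constant term contributes the extra $\binom{d_3+1}{2}$ via the binomial expansion of the shift). This lower bounds $\cal S(Q_{j,2}(A\vecz+\vecb))$ or $\cal S(Q_{j,3}(A\vecz+\vecb))$ as required.

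For Item \ref{case 4 unsat gap affine}, once Items \ref{case 1 unsat gap affine}--\ref{case 3 unsat gap affine} fail, the equality condition of Lemma \ref{Qi sparsity analysis: translated} together with Lemma \ref{Lemma: equality condition Qi} (which bounds the number of $i$'s for which equality can fail) forces $A$ to be, up to permutation and scaling of variables, the transform of \eqref{action}, with $\vecb=\mathbf{0}$. Unsatisfiability of $\psi$ then means the derived assignment $\vecu=(u_1,\ldots,u_n)$ falsifies some clause $k$; for that $k$, $R_k(A\vecz)=x_0^{k(3d_4+1)}\prod_{j\in C_k}(y_j\pm 2x_j)^{d_4}$ is a product of three $d_4$-th powers of linear forms in pairwise disjoint variable pairs, so by Observation \ref{lem_binom} its sparsity equals $(d_4+1)^3$. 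The main obstacle I foresee is the careful case analysis in Item \ref{case 3 unsat gap affine}: the hypothesis $\cal S(A(y_j\pm x_j)+\vecb_{|y_j})\geq 3$ mixes the size of the support of the linear part with the presence or absence of a nonzero constant, and the $\tfrac{(d_3+1)(d_3+2)}{2}$ bound must be produced uniformly across all sub-cases from Observations \ref{Lemma: linear form sparsity} and \ref{Lemma: affine form sparsity}; the remaining items are direct adaptations of arguments already appearing in Sections \ref{subsubsection: sparse bwd direction} and \ref{Section: Affine ETsparse}.
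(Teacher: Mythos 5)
Your treatment of Items 1--3 is correct and essentially identical to the paper's proof (binomial expansion plus degree separation for Item 1, Observation \ref{Lemma: affine form sparsity} for Item 2, and the two sub-cases $\vecb_{|y_j}=0$ / $\vecb_{|y_j}\neq 0$ with the term-by-term bound $\sum_{i=0}^{d_3}(i+1)=\binom{d_3+2}{2}$ for Item 3). The genuine gap is in Item 4. You invoke the equality conditions of Lemma \ref{Qi sparsity analysis: translated} together with Lemma \ref{Lemma: equality condition Qi} to force $A$ to be (up to permutation and scaling) the transform of \eqref{action} with $\vecb=\mathbf{0}$. But Lemma \ref{Lemma: equality condition Qi} is proved by contradiction against the hypothesis $\cal S(f(A\vecz+\vecb))\leq s$, and no such upper bound is available here: the whole point of this lemma is to lower-bound $\cal S(f(A\vecz+\vecb))$ for \emph{every} $(A,\vecb)$ not covered by Items 1--3. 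When Items 1--3 fail, all you know is $A(x_i)=X_i$, $\vecb_{|x_i}=0$, and $A(y_i)=Y_i+c_iX_i$ with $c_i\in\F$ \emph{arbitrary}; nothing forces $c_i\in\{-1,1\}$, so $A$ need not be of the form \eqref{action} and your ``derived assignment $\vecu$'' is not even defined when some $c_j\notin\{\pm 1\}$. Nor can you fall back on the inequality branch of Lemma \ref{Qi sparsity analysis: translated}: it only yields $\cal S(Q_i(A\vecz+\vecb))\geq 2d_3+3$, and under the gap parameters \eqref{Eqn: gap ineq affine} (where $d_3=m(d_4+1)^2+1$ and $d_4\geq 4mn$) one has $2d_3+3\approx 2m(d_4+1)^2<(d_4+1)^3$, so that route cannot deliver the claimed bound.

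The correct argument, as in the last case of Lemma \ref{Lemma: unsat gap}, handles arbitrary $c_j$ directly. First establish $\vecb=\mathbf{0}$ without the equality lemmas: for each $i$, the failure of Item 3 gives $\cal S(A(y_i\pm x_i)+\vecb_{|y_i})\leq 2$; if $A(y_i+x_i)$ already has two monomials the constant must vanish, and if $A(y_i+x_i)$ is a single scaled variable it must be a variable $Y_i\neq X_i$ (invertibility), whence $A(y_i-x_i)=Y_i-2X_i$ has two monomials (characteristic $\neq 2$) and again $\vecb_{|y_i}=0$. Then use unsatisfiability of $\psi$ on the arbitrary scalars $c_j$: if every clause $k$ contained some $j\in C_k$ with $c_j+(-1)^{a_{k,j}}=0$, those $c_j$ would lie in $\{\pm 1\}$ and would define a satisfying assignment of $\psi$; hence there is a clause $k$ with $c_j+(-1)^{a_{k,j}}\neq 0$ for all $j\in C_k$, so every factor $(Y_j+(c_j+(-1)^{a_{k,j}})X_j)^{d_4}$ of $R_k(A\vecz)$ is a power of a two-variable linear form of sparsity $d_4+1$, giving $\cal S(R_k(A\vecz))=(d_4+1)^3$; degree separation (Lemma \ref{Lemma: degree separation affine case}) then yields $\cal S(f(A\vecz))\geq (d_4+1)^3$.
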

\begin{proof}
\begin{enumerate}
\item The proof of this case follows from the argument in the proof of Lemma \ref{Lemma: affine x0 fixed}. Henceforth, we assume $A(x_0) = x_0$ and $\vecb_{|x_0} = 0$. Then, by Lemma \ref{Lemma: degree separation affine case}, it follows that

\[\cal S(f(A\vecz + \vecb)) = \cal S(x_0^{d_1}) + \sum_{i=1}^{n} \cal S(Q_i(A\vecz + \vecb)) + \sum_{k=1}^{m} \cal S(R_k(A\vecz + \vecb)).\]
\item 
By Lemma \ref{Lemma: degree separation affine case} and Observation \ref{Lemma: affine form sparsity} (applied to $Q_{i,1}(A\vecz + \vecb)$) it follows that 
\[\cal S(f(A\vecz + \vecb)) \geq \cal S(Q_{i,1}(A\vecz + \vecb)) \geq d_2 + 1.\]
\item In this case, for $i \in [0,n]$, $A(x_i)$ is some scaled variable in $\vecz$ and $\vecb_{|x_i} = 0$. If $\vecb_{|y_j} = 0$, then this case is the same as the third case of Lemma \ref{Lemma: unsat gap}. Otherwise, let $A(y_j + x_j) + \vecb_{|y_j} = \ell_j + b_j$, where $\ell_j$ is a linear form in at least two variables. Then, using the binomial theorem and Observations \ref{lem_deg_sep_sum} and \ref{Lemma: linear form sparsity}, it follows that
\[\cal S((\ell_j + b_j)^{d_3}) = \cal S\bigg(\sum_{i=0}^{d_3}\binom{d_3}{i}b_j^{d_3-i}\ell_j^i\bigg) = \sum_{i=0}^{d_3} \cal S\bigg(\binom{d_3}{i}b_j^{d_3-i}\ell_j^i\bigg) \geq \sum_{i=0}^{d_3}(i+1) = \frac{(d_3 + 1)(d_3 + 2)}{2}.\]
Thus, \[\cal S(f(A\vecz + \vecb)) \geq \cal S(Q_{i,2}(A\vecz + \vecb)) \geq  \frac{(d_3 + 1)(d_3 + 2)}{2}.\]
Similarly, if $\cal S(A(y_j - x_j) + \vecb_{|y_j}) \geq 3$, then $\cal S(f(A\vecz + \vecb)) \geq \frac{(d_3 + 1)(d_3 + 2)}{2}$.
\item In this case, for $i \in [0,n]$, $A(x_i) = X_i$ and $\vecb_{|x_i} = 0$. For $i \in [n]$ , $\cal S(A(y_i + x_i) + \vecb_{|y_i}) \leq 2$ and $\cal S(A(y_i - x_i) + \vecb_{|y_i}) \leq 2$. As $A$ is invertible, $\cal S(A(y_i + x_i)) \geq 1$ and $\cal S(A(y_i - x_i)) \geq 1$. If $\cal S(A(y_i + x_i)) \geq 2$ or $\cal S(A(y_i - x_i)) \geq 2$, then $\vecb_{|y_i}$ must be $0$. Further, as $A$ is invertible and $A(x_i) = X_i$, therefore if $\cal S(A(y_i + x_i)) = 1$, then $\cal S(A(y_i - x_i)) \geq 2$ and vice versa also holds, which again implies $\vecb_{|y_i} = 0$. Thus, for all $i \in [n]$, $\vecb_{|y_i} = 0$ implying $\vecb = \mathbf{0}$. This case can then be proved in the same way as the last case of Lemma \ref{Lemma: unsat gap}.
\end{enumerate}
\end{proof}
\begin{proposition} \label{Proposition: Sparsity gap affine}
    Let $\textnormal{char}(\F) = 0$. If the input in $\gapet$ is an $s$-sparse polynomial, then $\gapet$ is $\NP$-hard for $\alpha = s^{1/3-\epsilon}$.
\end{proposition}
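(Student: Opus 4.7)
The plan is to mirror the argument of Proposition \ref{Proposition: Sparsity gap} almost verbatim, with the four-case lower bound of Lemma \ref{Lemma: unsat gap affine} replacing that of the three-case Lemma \ref{Lemma: unsat gap}. First I would dispatch the YES case: if $\psi \in \SAT$, then the forward direction of Section \ref{Section: Affine ETsparse} supplies a sparsifying pair $(A, \mathbf{0})$ (with $A$ as in \eqref{action}) satisfying $\cal S(f(A\vecz)) \leq s_0 := 1 + n(d_3+3) + m(d_4+1)^2$.

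Next, for the NO case, if $\psi \in \bar{\SAT}$ then Lemma \ref{Lemma: unsat gap affine} implies that for every $A \in \GL(|\vecz|,\F)$ and every $\vecb \in \F^{|\vecz|}$,
\[
\cal S(f(A\vecz + \vecb)) \;\geq\; \min\!\left(\frac{d_1}{2},\; d_2 + 1,\; \frac{(d_3+1)(d_3+2)}{2},\; (d_4+1)^3\right).
\]
The constraints of \eqref{Eqn: gap ineq affine} — in particular $d_3 = m(d_4+1)^2+1$, $d_2 = d_3^2+1$, and $d_1 = 6nd_2 + 6n + 2d_3 + 2$ — are chosen precisely so that each of the first three quantities exceeds $(d_4+1)^3$, forcing $(d_4+1)^3$ to be the minimum. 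This is a direct numerical check analogous to the one in the proof of Proposition \ref{Proposition: Sparsity gap}.

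It then remains to compute the gap. Using $d_3 = m(d_4+1)^2+1$, one bounds $s_0 \leq 3nd_3 \leq 4mn(d_4+1)^2$, so the ratio between the NO-instance lower bound and the YES-instance upper bound is at least
\[
\frac{(d_4+1)^3}{s_0} \;\geq\; \frac{d_4+1}{4mn}.
\]
Since $d_4 \geq 4mn$ by \eqref{Eqn: gap ineq affine}, Observation \ref{Obs: translated poly sparsity} gives $s \leq 2m(d_4+1)^3$, so $d_4 + 1 \geq (s/(2m))^{1/3}$, and the ratio is at least $s^{1/3}/(2^{1/3} \cdot 4 m^{4/3} n)$. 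Finally, $s \geq d_4^3$, and the additional condition $d_4 \geq (mn)^{O(1/\epsilon)}$ in \eqref{Eqn: gap ineq affine} ensures $s^{\epsilon} \geq d_4^{3\epsilon} \geq 2^{1/3} \cdot 4 m^{4/3} n$, yielding a gap of at least $s^{1/3-\epsilon}$.

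I do not expect any essential obstacle here: once Lemma \ref{Lemma: unsat gap affine} is available, the proof is a routine comparison of parameters, and the structure is identical to that of Proposition \ref{Proposition: Sparsity gap} except that translations introduce the extra case bounded by $\frac{(d_3+1)(d_3+2)}{2}$, which is already $\Theta(d_3^2)$ and hence dominates $d_2 + 1$ trivially while remaining dominated by $(d_4+1)^3$ under the chosen $d_i$'s. The only minor care needed is to verify that the translation-dependent terms introduced by $\vecb$ do not affect the final estimates, which is already handled inside Lemma \ref{Lemma: unsat gap affine}.
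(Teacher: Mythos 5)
Your proposal is correct and is essentially the paper's own proof: the same YES-instance bound $s_0 = 1 + n(d_3+3) + m(d_4+1)^2$, the same four-way minimum supplied by Lemma \ref{Lemma: unsat gap affine}, the same use of \eqref{Eqn: gap ineq affine} to make $(d_4+1)^3$ the minimum, and the identical chain of estimates $\frac{(d_4+1)^3}{s_0} \geq \frac{d_4+1}{4mn} \geq \frac{s^{1/3}}{2^{1/3}4m^{4/3}n} \geq s^{1/3-\epsilon}$. Only your closing remark has the comparison backwards --- what is needed (and what your main text correctly imposes) is $\frac{(d_3+1)(d_3+2)}{2} \geq (d_4+1)^3$, i.e.\ this term dominates $(d_4+1)^3$ rather than being dominated by it (and it is in fact slightly smaller than $d_2+1 = d_3^2+2$, not larger) --- but this slip is inconsequential to the argument.
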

\begin{proof}
The proof is similar to that of Proposition \ref{Proposition: Sparsity gap}.  If $\psi \in \SAT$, then $\cal S(f(A\vecz)) \leq s_0$ where $A$ is as described in \eqref{action} and $s_0 = 1 + n(d_3+3) + m(d_4+1)^2$. If $\psi \in \bar{\SAT}$, it follows from Lemma \ref{Lemma: unsat gap affine} that for any $A \in \GL(|\vecz|,\F)$ and $\vecb \in \F^{|\vecz|}$: 
\[\cal S(f(A\vecz + \vecb)) \geq \min\bigg(\frac{d_1}{2},d_2+1,\frac{d_3^2+3d_3+2}{2},(d_4+1)^3\bigg).\]
The conditions imposed in \eqref{Eqn: gap ineq} ensure that $(d_4+1)^3 > s_0$ and $(d_4+1)^3$ is the minimum. As $d_3 = m(d_4+1)^2 + 1$, therefore $s_0 = 1 + n(d_3+3) + m(d_4+1)^2 \leq 3nd_3 \leq 4mn(d_4+1)^2$. Consequently, the gap in the sparsities of the YES instances and NO instances is 
\[  \frac{(d_4+1)^3}{s_0} \geq \frac{(d_4+1)^3}{4mn(d_4+1)^2} = \frac{d_4+1}{4mn}. \]
Also, note that as $d_4 \geq 4mn$, therefore $s \leq 2m(d_4+1)^3 \implies d_4+1 \geq (\frac{s}{2m})^{1/3}$. Then, the gap is
\[  \frac{(d_4+1)^3}{s_0} \geq \frac{d_4+1}{4mn} \geq \frac{s^{1/3}}{2^{1/3}4m^{4/3}n}. \]
Finally, note that $s \geq d_4^3$. Thus, for $d_4^{3\epsilon} \geq (mn)^{O(1)}$ large enough, 
\[s^{\epsilon} \geq d_4^{3\epsilon} \geq {2^{1/3}4m^{4/3}n} \implies \frac{s^{1/3}}{2^{1/3}4m^{4/3}n} \geq s^{1/3 - \epsilon}.\]
Hence, the gap is at least $s^{1/3 - \epsilon}$. Therefore, $\SAT$ reduces to $\gapet$ for $\alpha = s^{1/3 - \epsilon}$.
\end{proof} 
    
\subsubsection{For finite characteristic fields} \label{section-gap-affine-finitechar}
Let the characteristic be $p$, where $p >2$. If $p > d_1$, where $d_1$ is as chosen in Section \ref{section-gap-affine}, then the argument of that section holds. Hence, it is assumed that $p \leq d_1 =  (mn)^{O(1)}$. We again consider the polynomial $f$ as defined in \eqref{affine variable poly} and impose the following constraints on the $d_i$'s.
\begin{equation}\label{Eqn: gap ineq affinprime}
    d_4 \geq \max(3pmn,(mn)^{O(1/\epsilon)}), \ d_3 > m(d_4+1)^2, \ d_2 > (d_3+1)^2, \
    d_1 > 6nd_2 + 6n + 2d_3 + 2.
\end{equation}
Note, we can get $d_3 = O(pm(d_4+1)^2)$, $d_2 = O(p(d_3+1)^2)$ and $d_1 = O(p(6nd_2+6n+2d_3+2))$. The conditions of \eqref{Affine ETsparse conditions} are also satisfied under \eqref{Eqn: gap ineq affinprime}. From Section \ref{Section: Affine ETsparse}, it follows that for satisfiable $\psi$'s, there exists $A \in \GL(|\vecz|,\F)$, $\vecb = \mathbf{0}$ such that $\cal S(f(A\vecz)) \leq s_0$, where $s_0 = 1 + n(d_3 + 3) + m(d_4+1)^2$. For unsatisfiable $\psi$'s, Lemma \ref{Lemma: unsat gap affine char p} gives lower bounds on $\cal S(f(A\vecz+\vecb))$, where $A \in \GL(|\vecz|,\F)$ and $\vecb \in \F^{|\vecz|}$. Proposition \ref{Proposition: Sparsity gap affine char p} shows $\gapet$ is $\NP$-hard using Lemma \ref{Lemma: unsat gap affine char p} and the constraints in \eqref{Eqn: gap ineq affinprime}.

\begin{lemma} \label{Lemma: unsat gap affine char p}
   Let $\psi \in \bar{\SAT}$, $f(\vecz)$ be as defined in \eqref{affine variable poly} corresponding to $\psi$, $A \in \GL(|\vecz|,\F)$ and $\vecb \in \F^{|\vecz|}$.
   \begin{enumerate}
       \item \label{case 1 unsat gap affine char p} If $A(x_0) + \vecb_{|x_0}$ is a non-trivial affine form, then $\cal S(f(A\vecz + \vecb)) \geq \frac{d_1}{2}$.

       \item \label{case 2 unsat gap affine char p} If $A$ and $\vecb$ are not as in item \ref{case 1 unsat gap affine char p} and for some $j \in [n]$, $A(x_j) + \vecb_{|x_j}$ is a non-trivial affine form, then $\cal S(f(A\vecz + \vecb)) \geq d_2+1$.

       \item If $A$ and $\vecb$ are not as in items \ref{case 1 unsat gap affine char p} and \ref{case 2 unsat gap affine char p} and for some $j \in [n]$, $\cal S(A(y_j + x_j) + \vecb_{|y_j}) \geq 3$ or $\cal S(A(y_j - x_j) + \vecb_{|y_j}) \geq 3$, then $\cal S(f(A\vecz + \vecb)) \geq (d_3 + 1)^{1.63}$. 

       \item If $A$ and $\vecb$ are not of the form described in the previous three cases, then $\vecb = \mathbf{0}$ and  $\cal S(f(A\vecz)) \geq (d_4 + 1)^3.$
   \end{enumerate}
\end{lemma}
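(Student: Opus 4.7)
The plan is to follow the four-case structure of Lemma~\ref{Lemma: unsat gap affine}, substituting the characteristic-$p$ sparsity analysis at each place where the characteristic-$0$ argument invokes binomial nondegeneracy. Throughout, I would exploit the setting of Section~\ref{Choice of degrees}: the parameters $d_1,d_2,d_3,d_4$ are chosen of the form $p^{k}-1$ for suitable $k$, so that Lucas's theorem keeps all relevant binomial coefficients nonzero and Observations~\ref{lem_binom}, \ref{Lemma: linear form sparsity}, \ref{Lemma: affine form sparsity} apply.

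For Case~\ref{case 1 unsat gap affine char p}, I would reproduce the proof of Lemma~\ref{Lemma: affine x0 fixed}: since $d_1=p^{k_1}-1$, Lucas's theorem gives $\binom{d_1}{i}\neq 0$ for all $i\in[0,d_1]$, so $(A(x_0)+\vecb_{|x_0})^{d_1}$ contributes a degree-$i$ monomial for every such $i$; at least $d_1/2 > s$ of these have degree larger than any monomial of the other summands and therefore survive in $f(A\vecz+\vecb)$. For Case~\ref{case 2 unsat gap affine char p}, assuming $A(x_0)=x_0$ and $\vecb_{|x_0}=0$, Lemma~\ref{Lemma: degree separation affine case} still separates the summands with respect to $x_0$, so $\mathcal{S}(f(A\vecz+\vecb))\geq \mathcal{S}(Q_{i,1}(A\vecz+\vecb))$; because $d_2=p^{k_2}-1$, Observation~\ref{Lemma: affine form sparsity} applies directly to the affine power $(A(x_j)+\vecb_{|x_j})^{d_2}$ and yields the lower bound $d_2+1$.

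Case~\ref{case 1 unsat gap affine char p} and Case~\ref{case 2 unsat gap affine char p} already dispose of $x_0$ and of each $A(x_i)+\vecb_{|x_i}$. For Case~3, let $h\in\{A(y_j+x_j)+\vecb_{|y_j},\,A(y_j-x_j)+\vecb_{|y_j}\}$ satisfy $\mathcal{S}(h)\geq 3$. I would homogenize by introducing a fresh formal variable $y$ and setting $\tilde{h}:=A(y_j\pm x_j)+\vecb_{|y_j}\,y$, so that $\tilde{h}$ is a linear form in at least $3$ variables. The substitution $y\mapsto 1$ is injective on distinct monomials of $\tilde{h}^{d_3}$ (different multidegrees in the original variables come from different monomials), hence $\mathcal{S}(h^{d_3})=\mathcal{S}(\tilde{h}^{d_3})$. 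Since $d_3=p^{k}-1=\sum_{i=0}^{k-1}(p-1)p^i$, Observation~\ref{Lemma: linear form sparsity} gives
\[
\mathcal{S}(\tilde{h}^{d_3}) \;\geq\; \binom{p+1}{2}^{k} \;=\; \left(\tfrac{p(p+1)}{2}\right)^{k} \;=\; (d_3+1)^{1+\log_p\!\left(\frac{p+1}{2}\right)}.
\]
For every prime $p\geq 3$ one checks $1+\log_p((p+1)/2)\geq 1+\log_3 2 > 1.63$, so $\mathcal{S}(Q_{i,2}(A\vecz+\vecb))\geq(d_3+1)^{1.63}$, which, combined with the degree separation from Lemma~\ref{Lemma: degree separation affine case}, bounds $\mathcal{S}(f(A\vecz+\vecb))$ from below.

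For Case~4, when none of the prior situations occurs, each $A(x_i)$ is a scaled variable, $\vecb_{|x_i}=0$, and every $A(y_i\pm x_i)+\vecb_{|y_i}$ has sparsity at most $2$; the invertibility of $A$ (exactly as in item~4 of Lemma~\ref{Lemma: unsat gap affine}) forces $\vecb_{|y_i}=0$ for all $i$, so $\vecb=\mathbf{0}$, and the remaining analysis is that of item~4 of Lemma~\ref{Lemma: unsat gap char p}, giving $\mathcal{S}(f(A\vecz))\geq(d_4+1)^3$. The main obstacle is Case~3: the characteristic-$0$ estimate $\frac{(d_3+1)(d_3+2)}{2}$ relied on a binomial expansion in which every coefficient is nonzero, and this fails dramatically in characteristic $p$; the homogenization trick together with the precise formula of Observation~\ref{Lemma: linear form sparsity} for $d_3=p^k-1$ is what recovers a super-linear bound, and the resulting exponent $1+\log_p((p+1)/2)$ is exactly where the constant $1.63$ in the statement comes from.
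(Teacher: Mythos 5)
Your proposal is correct, and for the one case the paper actually works out in detail it takes a genuinely different (and arguably cleaner) route. Like the paper, you dispose of cases 1, 2 and 4 by rerunning the arguments of Lemma \ref{Lemma: affine x0 fixed}, Lemma \ref{Lemma: degree separation affine case} and item 4 of Lemmas \ref{Lemma: unsat gap affine}/\ref{Lemma: unsat gap char p}, with Observations \ref{lem_binom} and \ref{Lemma: affine form sparsity} supplying the characteristic-$p$ sparsity bounds thanks to the choice $d_i = p^k-1$; this is exactly what the paper does. The difference is case 3: the paper fixes $h=\ell_j+b_j$ with $b_j\neq 0$, expands $h^{d_3}$ by the binomial theorem (all $\binom{d_3}{i}\neq 0$ by Lucas since $d_3=p^k-1$), lower-bounds each homogeneous piece via Observation \ref{Lemma: linear form sparsity}, and then evaluates $\sum_{i=0}^{d_3}\prod_l(e_{i,l}+1)$ digit by digit to obtain $\bigl(\tfrac{p(p+1)}{2}\bigr)^k$. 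You instead homogenize: since $\mathcal{S}(h)\geq 3$, the form $\tilde h=\ell_j+b_j y$ is a linear form in at least three variables, the dehomogenization $y\mapsto 1$ is injective on the monomials of the homogeneous polynomial $\tilde h^{d_3}$ (so $\mathcal{S}(h^{d_3})=\mathcal{S}(\tilde h^{d_3})$), and Observation \ref{Lemma: linear form sparsity} applied directly with $m\geq 3$ and all base-$p$ digits equal to $p-1$ gives $\mathcal{S}(\tilde h^{d_3})\geq \binom{p+1}{2}^k=(d_3+1)^{1+\log_p((p+1)/2)}\geq (d_3+1)^{1.63}$ for $p\geq 3$. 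This yields exactly the same quantity as the paper's summation identity while avoiding it, and it treats the subcases $\vecb_{|y_j}=0$ and $\vecb_{|y_j}\neq 0$ uniformly, whereas the paper handles the former by falling back on Lemma \ref{Lemma: unsat gap char p}. Two cosmetic caveats: your one-line gloss of case 1 (``a degree-$i$ monomial for every $i$'') only describes the subcase $\vecb_{|x_0}\neq 0$; when $\vecb_{|x_0}=0$ and $A(x_0)$ has at least two variables you need Observation \ref{lem_binom} as in Lemma \ref{Lemma: affine x0 fixed}, which your appeal to that lemma's proof does cover. Likewise in case 2, Observation \ref{Lemma: affine form sparsity} handles $\vecb_{|x_j}\neq 0$, and the zero-constant subcase again uses Observation \ref{lem_binom}; and the ``obstacle'' is not that binomial coefficients vanish (they do not, by the choice $d_3=p^k-1$) but that $\mathcal{S}(\ell^i)$ for a two-variable $\ell$ drops below $i+1$ in characteristic $p$.
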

\begin{proof}
The first, second and fourth cases can be proved similarly to those of Lemma \ref{Lemma: unsat gap affine}. Hence, we consider the third case. Then, for $i \in [0,n]$, $A(x_i)$ is some scaled variable in $\vecz$ and $\vecb_{|x_i} = 0$. If $\vecb_{|y_j} = 0$, then this case is the same as the third case of Lemma \ref{Lemma: unsat gap char p}. If $\vecb_{|y_j} \neq 0$, then, without loss of generality, $A(y_j + x_j) + \vecb_{|y_j} = \ell_j + b_j$, with $\ell_j$ a linear form in at least two variables. Using the binomial theorem, the fact that $d_3 = p^k-1 = \sum_{i=0}^{k-1}(p-1)p^i$, Lucas's theorem and Observations \ref{lem_deg_sep_sum} and \ref{Lemma: linear form sparsity} gives
\[\cal S((\ell_j + b_j)^{d_3}) = \cal S\left(\sum_{i=0}^{d_3}\binom{d_3}{i}b_j^{d_3-i}\ell_j^i\right) = \sum_{i=0}^{d_3}\cal S\left(\binom{d_3}{i}b_j^{d_3-i}\ell_j^i\right)\] \[ \geq \sum_{i=0}^{d_3} \prod_{l=0}^{k-1}\binom{e_{i,l}+2-1}{2-1} = \sum_{i=0}^{d_3} \prod_{l=0}^{k-1}(e_{i,l}+1)\]
where $i = \sum_{l=0}^{k-1}e_{i,l} p^l$ with $e_{i,l} \in [0,p-1]$ ($e_{i,l}$ represents the $l\ith$ ``digit'' in the base-$p$ expansion of $i$). Note that for $i = rp$, where $r \in [0,p^{k-1}-1]$, the value of $e_{t,l}$, where $l \geq 1$ and $t \in [i,i+p-1]$, is the same for all  $t$ while $e_{t,0} = t-i$. Therefore, for such $i$'s, the following holds
\[\sum_{t=i}^{i+p-1}\prod_{l=0}^{k-1}(e_{t,l}+1) = \bigg(\prod_{l=1}^{k-1}(e_{rp,l}+1)\bigg) \cdot \sum_{t=rp}^{rp+p-1}(t-rp+1) = \frac{p(p+1)}{2}\prod_{l=1}^{k-1}(e_{rp,l}+1). \]
Using the above observation and the fact that $d_3 = p^k-1$,
\[\sum_{i=0}^{d_3} \prod_{l=0}^{k-1}(e_{i,l}+1) = \sum_{r=0}^{p^{k-1}-1} \sum_{t=0}^{p-1}\prod_{l=0}^{k-1}(e_{rp+t,l}+1) = \frac{p(p+1)}{2}\sum_{r=0}^{p^{k-1}-1}\prod_{l=1}^{k-1}(e_{rp,l}+1).\]
By repeating the same argument, we get 
\[\sum_{i=0}^{d_3} \prod_{l=0}^{k-1}(e_{i,l}+1) = \bigg(\frac{p(p+1)}{2}\bigg)^k = (d_3+1)^{1+ \log_p((p+1)/2)}.\]
Now, $\log_p((p+1)/2)$ is an increasing function for $p \geq 3$. Thus, $\log_p((p+1)/2) \geq \log_3((3+1)/2) \geq 0.63$. We can then conclude that
\[\cal S(f(A\vecz)) \geq \cal S(Q_{j,2}(A\vecz + \vecb)) \geq (d_3+1)^{1.63}.\]
Similarly, if $\cal S(A(y_j - x_j) + \vecb_{|y_j}) \geq 3$, then $\cal S(f(A\vecz + \vecb)) \geq \cal S(Q_{j,3}(A\vecz + \vecb)) \geq (d_3 + 1)^{1.63}$.
\end{proof}

\begin{proposition} \label{Proposition: Sparsity gap affine char p}
    Let $\textnormal{char}(\F) = p > 2$. If the input in $\gapet$ is an $s$-sparse polynomial, then $\gapet$ is $\NP$-hard for $\alpha = s^{1/3-\epsilon}$.
\end{proposition}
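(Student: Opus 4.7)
The plan is to follow closely the template of the proof of Proposition \ref{Proposition: Sparsity gap affine}, replacing the characteristic-zero gap lemma with Lemma \ref{Lemma: unsat gap affine char p} and carrying the extra factor of $p$ through the bookkeeping. For $\psi \in \SAT$, the forward direction of Section \ref{Section: Affine ETsparse} already gives an $A \in \GL(|\vecz|,\F)$ (namely, the one from \eqref{action}) together with $\vecb = \mathbf{0}$ such that $\cal S(f(A\vecz)) \leq s_0 = 1 + n(d_3+3) + m(d_4+1)^2$. For $\psi \in \bar{\SAT}$, Lemma \ref{Lemma: unsat gap affine char p} gives, for every $A \in \GL(|\vecz|,\F)$ and $\vecb \in \F^{|\vecz|}$,
\[
\cal S(f(A\vecz + \vecb)) \;\geq\; \min\!\left(\tfrac{d_1}{2},\; d_2+1,\; (d_3+1)^{1.63},\; (d_4+1)^3\right).
\]

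The next step is to verify, under the constraints of \eqref{Eqn: gap ineq affinprime}, that $(d_4+1)^3$ is the minimum in the above expression. Since $d_3 > m(d_4+1)^2 \geq (d_4+1)^2$ (as $m \geq 1$) we have $(d_3+1)^{1.63} > (d_4+1)^3$; then $d_2 > (d_3+1)^2 > (d_3+1)^{1.63}$ and $d_1 > 6nd_2 + \dots > 2(d_2+1)$, so indeed $d_1/2 > d_2+1 > (d_3+1)^{1.63} > (d_4+1)^3$. Having pinned down the minimum, I would use $d_3 = O(pm(d_4+1)^2)$ from \eqref{Eqn: gap ineq affinprime} to bound $s_0 \leq 3nd_3 \leq 3pmn(d_4+1)^2$, whence the sparsity gap between YES and NO instances satisfies
\[
\frac{(d_4+1)^3}{s_0} \;\geq\; \frac{(d_4+1)^3}{3pmn(d_4+1)^2} \;=\; \frac{d_4+1}{3pmn}.
\]

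Finally, using $d_4 \geq 3pmn$ together with the crude bound $s \leq 2m(d_4+1)^3$ (so $d_4+1 \geq (s/2m)^{1/3}$), I would conclude
\[
\frac{(d_4+1)^3}{s_0} \;\geq\; \frac{s^{1/3}}{p \cdot 2^{1/3}\,3\,m^{4/3}\,n},
\]
and then invoke the standing hypothesis $d_4 \geq (mn)^{O(1/\epsilon)}$ (which here must be interpreted to absorb the extra factor $p$, i.e.\ $d_4^{3\epsilon} \geq p \cdot 2^{1/3}3m^{4/3}n$, achievable since $p = (mn)^{O(1)}$ by our reduction to $p \leq d_1$). Using $s \geq d_4^3$, this gives $s^\epsilon \geq p \cdot 2^{1/3}3m^{4/3}n$, hence the gap is at least $s^{1/3-\epsilon}$, completing the reduction from $\SAT$ to $s^{1/3-\epsilon}\text{-}\mathrm{gap}\text{-}\et$. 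The main subtlety is just the first step above — confirming that $(d_3+1)^{1.63}$ still dominates $(d_4+1)^3$ under \eqref{Eqn: gap ineq affinprime}, since the power $1.63$ is weaker than the $(d_3^2+3d_3+2)/2$ lower bound available in characteristic zero; but the constraint $d_3 > m(d_4+1)^2$ comfortably absorbs this loss.
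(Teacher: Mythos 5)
Your proposal is correct and follows essentially the same route as the paper's proof: invoke the forward direction for YES instances, Lemma \ref{Lemma: unsat gap affine char p} for NO instances, verify under \eqref{Eqn: gap ineq affinprime} that $(d_4+1)^3$ is the minimum, bound $s_0 \leq 3pmn(d_4+1)^2$, and use $d_4 \geq \max(3pmn,(mn)^{O(1/\epsilon)})$ with $s \geq d_4^3$ to convert the gap into $s^{1/3-\epsilon}$. Your explicit verification of the inequality chain $d_1/2 > d_2+1 > (d_3+1)^{1.63} > (d_4+1)^3$ (which the paper only asserts) is sound, and you correctly cite the characteristic-$p$ gap lemma where the paper's text nominally refers to the characteristic-zero one.
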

\begin{proof}
The proof is similar to that of Proposition \ref{Proposition: Sparsity gap affine}. For the polynomial $f$ defined in \eqref{affine variable poly}, $s := \cal S(f) = 1 + n(2d_3 + 3) + m(d_4 + 1)^3$. If $\psi \in \SAT$, then $\cal S(f(A\vecz)) \leq s_0$ where $A$ is as described in \eqref{action} and $s_0 = 1 + n(d_3+3) + m(d_4+1)^2$. If $\psi \in \bar{\SAT}$, it follows from Lemma \ref{Lemma: unsat gap affine} that for any $A \in \GL(|\vecz|,\F)$ and $\vecb \in \F^{|\vecz|}$: 
\[\cal S(f(A\vecz + \vecb)) \geq \min\bigg(\frac{d_1}{2},d_2+1,(d_3+1)^{1.63},(d_4+1)^3\bigg).\]
The conditions imposed in \eqref{Eqn: gap ineq affinprime} ensure that $(d_4+1)^3 > s_0$ and $(d_4+1)^3$ is the minimum. As $d_3 > m(d_4+1)^2$, therefore $s_0 = 1 + n(d_3+3) + m(d_4+1)^2 \leq 3nd_3 \leq 3pmn(d_4+1)^2$. Consequently, the gap in the sparsities of the YES instances and NO instances is 
\[  \frac{(d_4+1)^3}{s_0} \geq \frac{(d_4+1)^3}{3pmn(d_4+1)^2} = \frac{d_4+1}{3pmn}. \]
Also, note that as $d_4 \geq 3pmn$, therefore $s \leq 2m(d_4+1)^3 \implies d_4+1 \geq (\frac{s}{2m})^{1/3}$. Then, the gap is
\[  \frac{(d_4+1)^3}{s_0} \geq \frac{d_4+1}{3pmn} \geq \frac{s^{1/3}}{p2^{1/3}3m^{4/3}n}. \]
Finally, note that $s \geq d_4^3$. Thus, for $d_4^{3\epsilon} \geq (mn)^{O(1)}$ large enough, 
\[s^{\epsilon} \geq d_4^{3\epsilon} \geq {p2^{1/3}3m^{4/3}n} \implies \frac{s^{1/3}}{p2^{1/3}3m^{4/3}n} \geq s^{1/3 - \epsilon}.\]
Hence, the gap is at least $s^{1/3 - \epsilon}$. Therefore, $\SAT$ reduces to $\gapet$ for $\alpha = s^{1/3 - \epsilon}$.
\end{proof}
\subsubsection*{\underline{For characteristic $2$ fields}} \label{section-gap-affine-char2}
Let the characteristic be $2$. Consider the polynomial as defined in \eqref{affine variable poly char 2}. Let $s := \cal S(f)$. Now, $s$ depends on the number of variables complemented in a clause. To prove the hardness of $\gapet$, $s \geq d_4^3$ is required (see the proof of Proposition \ref{Proposition: Sparsity gap affine char 2}), and this can be achieved if there is at least one clause where all the variables are complemented. Thus, assume, without loss of generality, that such a clause exists (see footnote \ref{footnote: complemented clause}). Choose the $d_i$'s to satisfy \eqref{Eqn: gap ineq affinprime} with $p$ set to $2$. By Observation \ref{Obs: translated poly sparsity char 2} and the assumption on $\psi$, it holds that  
\[1 + n(d_3 + 3) + (d_4 + 1)^3 \leq s \leq 1 + n(d_3 + 3) + m(d_4 + 1)^3.\]
For $\psi \in \bar{\SAT}$, Lemma \ref{Lemma: unsat gap affine char 2}, which can be proved in the same way as Lemma \ref{Lemma: unsat gap affine char p}, shows lower bounds on $\cal S(f(A\vecz + \vecb))$ where $A \in \GL(|\vecz|,\F)$ and $\vecb \in \F^{|\vecz|}$. For $\psi \in \SAT$, by Proposition \ref{Proposition: fwd direction char 2} there exists $A \in \GL(|\vecz|,\F)$ such that $\cal S(f(A\vecz)) \leq s_0$, where $s_0 = 1 + n(d_3 + 3) + m(d_4+1)^2$. Proposition \ref{Proposition: Sparsity gap affine char 2} shows the $\NP$-hardness of $\gapet$ using Lemma \ref{Lemma: unsat gap affine char 2} and the inequalities in \eqref{Eqn: gap ineq affinprime}.

\begin{lemma} \label{Lemma: unsat gap affine char 2}
   Let $\psi \in \bar{\SAT}$, $f(\vecz)$ be as defined in \eqref{affine variable poly char 2} corresponding to $\psi$, $A \in \GL(|\vecz|,\F)$ and $\vecb \in \F^{|\vecz|}$.
   \begin{enumerate}
       \item \label{case 1 unsat gap affine char 2} If $A(x_0) + \vecb_{|x_0}$ is a non-trivial affine form, then $\cal S(f(A\vecz + \vecb)) \geq \frac{d_1}{2}$.

       \item \label{case 2 unsat gap affine char 2} If $A$ and $\vecb$ are not as in item \ref{case 1 unsat gap affine char p} and for some $j \in [n]$, $A(x_j) + \vecb_{|x_j}$ is a non-trivial affine form, then $\cal S(f(A\vecz + \vecb)) \geq d_2+1$.

       \item If $A$ and $\vecb$ are not as in items \ref{case 1 unsat gap affine char 2} and \ref{case 2 unsat gap affine char 2} and for some $j \in [n]$, $\cal S(A(y_j + x_j) + \vecb_{|y_j}) \geq 3$ or $\cal S(A(y_j) + \vecb_{|y_j}) \geq 3$, then $\cal S(f(A\vecz + \vecb)) \geq (d_3 + 1)^{1.58}$. 

       \item If $A$ and $\vecb$ are not of the form described in the previous three cases, then $\vecb = \mathbf{0}$ and  $\cal S(f(A\vecz)) \geq (d_4 + 1)^3.$
   \end{enumerate}
\end{lemma}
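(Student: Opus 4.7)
The plan is to adapt the proof of Lemma \ref{Lemma: unsat gap affine char p} case by case, incorporating two modifications dictated by characteristic $2$: the polynomial $Q_{i,3}$ in \eqref{affine variable poly char 2} is the pure power $y_i^{d_3}$ rather than $(y_i-x_i)^{d_3}$, so the hypothesis of case $3$ involves $A(y_j)+\vecb_{|y_j}$ in place of $A(y_j-x_j)+\vecb_{|y_j}$; and the digit-by-digit count that in odd characteristic produced the exponent $1+\log_p((p+1)/2)$ will now specialize to $1+\log_2(3/2) > 1.58$. I would choose the $d_i$'s as in \eqref{Eqn: gap ineq affinprime} with $p=2$, all of the form $2^k-1$, so that Lucas's theorem makes every binomial coefficient encountered nonzero.

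For case $1$, the plan is to expand $(A(x_0)+\vecb_{|x_0})^{d_1}$ by the binomial theorem and obtain $d_1+1$ pairwise degree-separated monomials; since $\frac{d_1}{2}$ exceeds the total degree of every $Q_i(A\vecz+\vecb)$ and every $R_k(A\vecz+\vecb)$, at least $\frac{d_1}{2}$ of these monomials are degree-separated from the remainder of $f(A\vecz+\vecb)$ and contribute unconditionally. For case $2$, after case $1$ is settled we may take $A(x_0)=x_0$ and $\vecb_{|x_0}=0$, and the char-$2$ analogue of Lemma \ref{Lemma: degree separation affine case} degree-separates the summands by $x_0$-degree; this reduces the bound to $\cal S((A(x_j)+\vecb_{|x_j})^{d_2}) \geq d_2+1$, which follows from Observation \ref{Lemma: affine form sparsity} applied at $d_2=2^k-1$.

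Case $3$ is the main computational step. With $A(x_i)$ forced to be a scaled variable and $\vecb_{|x_i}=0$ for all $i$, I would split on whether $\vecb_{|y_j}$ vanishes: if $\vecb_{|y_j}=0$, the argument of Lemma \ref{Lemma: unsat gap char 2} applies directly, and if $\vecb_{|y_j}\neq 0$, I would write the relevant summand as $(\ell_j+b_j)^{d_3}$ with $\ell_j$ a linear form in at least two variables and $b_j\neq 0$, expand by the binomial theorem, group the $d_3+1$ summands into blocks of two consecutive exponents, and invoke Lucas together with Observation \ref{Lemma: linear form sparsity} on each block — exactly as in Lemma \ref{Lemma: unsat gap affine char p}. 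The telescoping count collapses to $\cal S((\ell_j+b_j)^{d_3}) \geq (d_3+1)^{1+\log_2(3/2)} \geq (d_3+1)^{1.58}$, and by $x_0$-degree separation this transfers to a lower bound on $\cal S(f(A\vecz+\vecb))$.

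Case $4$, which I expect to be the main obstacle specific to characteristic $2$, requires forcing $\vecb=\mathbf{0}$ from strictly weaker hypotheses than in odd characteristic: since $y_j-x_j=y_j+x_j$, the two independent ``witnesses'' to $\vecb_{|y_j}$ used in Lemma \ref{Lemma: unsat gap affine char p} collapse to one, leaving only the constraints $\cal S(A(y_j+x_j)+\vecb_{|y_j})\leq 2$ and $\cal S(A(y_j)+\vecb_{|y_j})\leq 2$. The plan is to argue by contradiction: if $\vecb_{|y_j}\neq 0$, then the nonzero constant in $A(y_j)+\vecb_{|y_j}$ forces $A(y_j)$ to lie in $\F\cdot Y$ for some scaled variable $Y$; adding $X_j=A(x_j)$ then gives an expression of sparsity $\leq 2$ only if $Y\in\F\cdot X_j$, contradicting invertibility of $A$. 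With $\vecb=\mathbf{0}$ in hand, the unsatisfiability of $\psi$ combined with Proposition \ref{Proposition: bwd direction char 2} and Observation \ref{lem_binom} (valid since $d_4=2^k-1$) produces a clause polynomial $R_k(A\vecz)$ that factors as a product of three linear forms in disjoint pairs of variables, yielding $\cal S(R_k(A\vecz))=(d_4+1)^3$ and completing the case.
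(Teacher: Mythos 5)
Your proposal is correct and follows essentially the same route as the paper, which proves this lemma by adapting Lemma \ref{Lemma: unsat gap affine char p} verbatim: cases 1--2 via the degree-separation and affine-power arguments, case 3 via the Lucas digit count specializing to $3^k=(d_3+1)^{\log_2 3}\geq(d_3+1)^{1.58}$, and case 4 by using the pair $y_j$, $y_j+x_j$ (which in characteristic $2$ replaces the pair $y_j\pm x_j$) to force $\vecb=\mathbf{0}$ before running the unsatisfiability argument on some $R_k(A\vecz)$. The only cosmetic point is that in case 2 the zero-constant subcase needs Observation \ref{lem_binom} rather than Observation \ref{Lemma: affine form sparsity}, exactly the dichotomy already handled in Lemma \ref{Qi sparsity analysis: affine char 2}.
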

\begin{proposition} \label{Proposition: Sparsity gap affine char 2}
    Let $\textnormal{char}(\F) = 2$. If the input in $\gapet$ is an $s$-sparse polynomial, then $\gapet$ is $\NP$-hard for $\alpha = s^{1/3-\epsilon}$. 
\end{proposition}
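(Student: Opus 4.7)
The plan is to mirror the proof of Proposition \ref{Proposition: Sparsity gap affine char p} almost verbatim, with $p$ set to $2$ throughout and using the characteristic-$2$ variants of the supporting results. Namely, I would take $f$ as defined in \eqref{affine variable poly char 2} with the $d_i$'s chosen to satisfy \eqref{Eqn: gap ineq affinprime} with $p=2$. The forward direction is handled by Proposition \ref{Proposition: fwd direction char 2}, which gives an $A \in \GL(|\vecz|,\F)$ (with $\vecb = \mathbf{0}$) such that $\cal S(f(A\vecz)) \leq s_0 := 1 + n(d_3+3) + m(d_4+1)^2$ when $\psi \in \SAT$. For the reverse direction, Lemma \ref{Lemma: unsat gap affine char 2} yields, for every $A \in \GL(|\vecz|,\F)$ and every $\vecb \in \F^{|\vecz|}$, the bound
\[
\cal S(f(A\vecz + \vecb)) \;\geq\; \min\!\left(\tfrac{d_1}{2},\, d_2+1,\, (d_3+1)^{1.58},\, (d_4+1)^3\right).
\]

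Next I would verify that the constraints in \eqref{Eqn: gap ineq affinprime} force $(d_4+1)^3$ to be the minimum of the four quantities above and that $(d_4+1)^3 > s_0$. The only nontrivial check is $(d_3+1)^{1.58} > (d_4+1)^3$; this follows because $d_3 > m(d_4+1)^2 \geq (d_4+1)^2$ gives $(d_3+1)^{1.58} > (d_4+1)^{3.16}$. Since $d_3 > m(d_4+1)^2$, we have $s_0 \leq 3nd_3 \leq 6mn(d_4+1)^2$, so the sparsity gap between YES and NO instances is at least
\[
\frac{(d_4+1)^3}{s_0} \;\geq\; \frac{(d_4+1)^3}{6mn(d_4+1)^2} \;=\; \frac{d_4+1}{6mn}.
\]

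Finally, since the assumption $s \geq (d_4+1)^3$ on $\psi$ (guaranteed by the existence of a clause with all variables complemented, as in the setup) together with Observation \ref{Obs: translated poly sparsity char 2} gives $s \leq 2m(d_4+1)^3$, hence $d_4 + 1 \geq (s/(2m))^{1/3}$. Substituting yields
\[
\frac{(d_4+1)^3}{s_0} \;\geq\; \frac{s^{1/3}}{2^{4/3} \cdot 3 m^{4/3} n}.
\]
Since $s \geq d_4^3$ and $d_4 \geq (mn)^{O(1/\epsilon)}$, taking $d_4$ sufficiently large makes $s^\epsilon \geq d_4^{3\epsilon} \geq 2^{4/3} \cdot 3 m^{4/3} n$, giving a gap of at least $s^{1/3-\epsilon}$. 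This completes the reduction from $\SAT$ to $s^{1/3-\epsilon}\text{-}\mathrm{gap}\text{-}\et$.

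I do not expect any genuine obstacle: the entire argument is a numerical rewriting of Proposition \ref{Proposition: Sparsity gap affine char p} with $p=2$, and the only substitution that matters is the exponent $1.58$ (coming from $1 + \log_2(3/2)$) in place of $1.63$. The mild point to be careful about is ensuring $(d_3+1)^{1.58} > (d_4+1)^3$ survives with this smaller exponent, which is why the constraint $d_3 > m(d_4+1)^2$ (rather than a weaker inequality) is needed; the rest is routine algebra identical to the $p > 2$ case.
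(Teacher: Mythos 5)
Your proposal is correct and follows essentially the same route as the paper's own proof: the same lower bound from Lemma \ref{Lemma: unsat gap affine char 2}, the same bound $s_0 \leq 6mn(d_4+1)^2$, and the identical final chain $d_4+1 \geq (s/2m)^{1/3}$, gap $\geq s^{1/3}/(2^{4/3}3m^{4/3}n) \geq s^{1/3-\epsilon}$; your explicit check that $(d_3+1)^{1.58} > (d_4+1)^3$ is a welcome detail the paper leaves implicit. One small wording slip: the upper bound $s \leq 2m(d_4+1)^3$ comes from Observation \ref{Obs: translated poly sparsity char 2} together with $d_4 \geq 6mn$, not from the complemented-clause assumption, whose only role (correctly used in your last step) is to guarantee $s \geq d_4^3$.
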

\begin{proof}
The proof is similar to that of Proposition \ref{Proposition: Sparsity gap affine char p}. If $\psi \in \SAT$, then $\cal S(f(A\vecz)) \leq s_0$ where $A$ is as described in \eqref{action} and $s_0 = 1 + n(d_3+3) + m(d_4+1)^2$. If $\psi \in \bar{\SAT}$, it follows from Lemma \ref{Lemma: unsat gap affine char 2} that for any $A \in \GL(|\vecz|,\F)$ and $\vecb \in \F^{|\vecz|}$: 
\[\cal S(f(A\vecz + \vecb)) \geq \min\bigg(\frac{d_1}{2},d_2+1,(d_3+1)^{1.58},(d_4+1)^3\bigg).\]
The conditions imposed in \eqref{Eqn: gap ineq affinprime} ensure that $(d_4+1)^3 > s_0$ and $(d_4+1)^3$ is the minimum. As $d_3 > m(d_4+1)^2$, therefore $s_0 = 1 + n(d_3+3) + m(d_4+1)^2 \leq 3nd_3 \leq 6mn(d_4+1)^2$. Consequently, the gap in the sparsities of the YES instances and NO instances is 
\[  \frac{(d_4+1)^3}{s_0} \geq \frac{(d_4+1)^3}{6mn(d_4+1)^2} = \frac{d_4+1}{6mn}. \]
Also, note that as $d_4 \geq 6mn$, therefore $s \leq 2m(d_4+1)^3 \implies d_4+1 \geq (\frac{s}{2m})^{1/3}$. Then, the gap is
\[  \frac{(d_4+1)^3}{s_0} \geq \frac{d_4+1}{6mn} \geq \frac{s^{1/3}}{2^{4/3}3m^{4/3}n}. \]
Finally, note that $s \geq d_4^3$. Thus, for $d_4^{3\epsilon} \geq (mn)^{O(1)}$ large enough, 
\[s^{\epsilon} \geq d_4^{3\epsilon} \geq {2^{4/3}3m^{4/3}n} \implies \frac{s^{1/3}}{2^{4/3}3m^{4/3}n} \geq s^{1/3 - \epsilon}.\]
Hence, the gap is at least $s^{1/3 - \epsilon}$. Therefore, $\SAT$ reduces to $\gapet$ for $\alpha = s^{1/3 - \epsilon}$. 
\end{proof}

\section{Missing proofs from Section \ref{sec:prelim}}\label{Section: Proofs prelim}
\subsection{Proof of Observation \ref{Lemma: equiv degrees same}} 

Under any invertible linear transform applied to the variables of $f$, every monomial of $f$ maps to a linear combination of monomials of the same degree. Thus, no new degree can be added to the set of degrees of $f$ under any invertible linear transform. As $f \sim g$, the set of degrees of $f$ is contained in the set of degrees of $g$, and vice versa, implying the two sets are the same.

\subsection{Proof of Observation \ref{lem_deg_sep_sum}}
    As $f$ and $g$ are degree separated (or degree separated with respect to some variable), each monomial of $f+g$ is a monomial of $f$ or $g$, but not both.

\subsection{Proof of Observation \ref{Lemma: equiv deg sep}}
     As $f_1 \sim f$ and $g_1 \sim g$, therefore $f_1$ and $g_1$ are degree separated by Observation \ref{Lemma: equiv degrees same}. By Observation \ref{lem_deg_sep_sum}, the statement holds.
\subsection{Proof of Observation \ref{Lemma: linear form sparsity}}
    Without loss of generality, let $\ell = \sum_{i=1}^{m} c_i x_i$ where $c_i \in \F \backslash \{0\}$. If $\textnormal{char}(\F) = 0$, then $\cal S(\ell^d) = \binom{d+m-1}{m-1}$ follows from the multinomial theorem and the fact that the number of monomials of degree $d$ in $m$ variables is $\binom{d+m-1}{m-1}$. Suppose $\textnormal{char}(\F) = p$. Then, $d$ is expressible as in the observation statement. It will be shown by   induction on $k$ that,
    \[\cal S(\ell^d) = \prod_{i=0}^{k}\binom{e_i+m-1}{m-1}.\]
    In the base case $k = 0$, $d < p$ and, like the $\textnormal{char}(\F) = 0$ case, it easily follows that $\cal S(\ell^d) = \binom{d+m-1}{m-1}$. Assume the statement for all $j < k$. Suppose $d = e_kp^k + \sum_{i=0}^{k-1} e_ip^i$, where $0 < e_k < p$. Then, using the fact that $(\sum_{j=1}^{m} c_jx_j)^p = \sum_{j=1}^{m} c_j^p x_j^p$ over $\F$, 
    \[\ell^d = \bigg(\sum_{j=1}^{m} c_j x_j\bigg)^{\sum_{i=0}^{k}e_ip^i} = \bigg(\sum_{j=1}^{m} c_j^{p^k} x_j^{p^k}\bigg)^{e_k} \dot \prod_{i=0}^{k-1}\bigg(\sum_{j=1}^{m} c_j^{p^i} x_j^{p^i}\bigg)^{e_i}.\]
    Let $h = \prod_{i=0}^{k-1}\big(\sum_{j=1}^{m} c_j^{p^i} x_j^{p^i}\big)^{e_i}$. Note,
    \[\bigg(\sum_{j=1}^{m} c_j^{p^k} x_j^{p^k}\bigg)^{e_k} = \sum_{\alpha_1 + \dots + \alpha_m = e_k} \binom{e_k}{\alpha_1 \dots \alpha_m} \bigg(\prod_{i=1}^{m} (c_i^{p^k}x_i^{p^k})^{\alpha_i}\bigg).\]
    By the inductive hypothesis, $\cal S(h) = \prod_{i=0}^{k-1}\binom{e_i+m-1}{m-1}$, while $\cal S((\sum_{j=1}^{m} c_j^{p^k} x_j^{p^k})^{e_k}) = \binom{e_k + m-1}{m-1}$, as $e_k < p$. Now,
    \[\ell^d = \sum_{\alpha_1 + \dots + \alpha_m = e_k} \binom{e_k}{\alpha_1 \dots \alpha_m} \bigg(\prod_{i=1}^{m} (c_i^{p^k}x_i^{p^k})^{\alpha_i}\bigg) \cdot h.\]
    The degree of $h < p^k$, while any two monomials in the above expansion are degree separated by at least $p^k$ in at least one variable. Consequently, by Observation \ref{lem_deg_sep_sum}, $\cal S(\ell^d) = \prod_{i=0}^{k}\binom{e_i+m-1}{m-1}$. The above inductive argument is similar to the multinomial version of Lucas's theorem. 

\subsection{Proof of Observation \ref{lem_binom}}
Let $\ell$ be a linear form in exactly $2$ variables. When $\textnormal{char}(\F) = 0$, then by Observation \ref{Lemma: linear form sparsity}, $\cal S(\ell^d) = \binom{d+2-1}{2-1} = d + 1$. When $\textnormal{char}(\F) = p$, then $\cal S(\ell^d) = \prod_{i=0}^{k-1}\binom{e_i+2-1}{2-1} = \prod_{i=0}^{k-1}(e_i + 1)$, where $d = \sum_{i=0}^{k-1}e_i p^i$. It is easy to see the observation holds when $d < p$. When $d = p^k - 1 = \sum_{i=0}^{k-1} (p-1)p^i$, then $\prod_{i=0}^{k-1}(e_i + 1) = p^k = d + 1$. Finally, when $\ell$ is a linear form in  $m \geq 2$ variables, then the observation follows from the fact that $\binom{c+m-1}{m-1} \geq c + 1$ for any $c \in \N$.

\subsection{Proof of Observation \ref{Lemma: affine form sparsity}}
    Using the binomial theorem, 
    \[h^d= \ell^d + c_0^d + \sum_{i=1}^{d-1} \binom d i c_0^i\ell^{d-i}.\]
    As the degree of every monomial in $\ell^{d-i}$ is $d-i$, all the summands in the above expansion are degree separated. From Observations \ref{lem_deg_sep_sum} and \ref{lem_binom}, it holds that $\cal S(h^d) \geq \cal S(\ell^d) + 1$. More precisely, $\cal S(h^d) \geq d+1$, as $\cal S(\ell^{d-i}) \geq 1$ and $\binom{d}{i} \neq 0$ for $d$ as in the observation statement (by Lucas's theorem).

\subsection{Proof of Claim \ref{lem_div}}
     We prove this by induction on $d$. For the base case $d=0$, it is easy to see that the sparsity of any non-zero polynomial is at least $1$. Suppose now the result holds for all $k < d$. Let $\ell = \sum_{i=1}^{n} c_i x_i$ and $f=\ell^d h$. Without loss of generality, assume $f$ is not divisible by any variable, for if it were divisible by some variable $x_i$, then $x_i$ must not divide $\ell$ as $\ell$ contains at least two distinct variables and hence $x_i$ divides $h$, in which case we can replace $f$ and $h$ by $\frac{f}{x_i}$ and $\frac{h}{x_i}$respectively. 
    
    Let $x_j$ be a variable in $\ell$ with a non-zero coefficient and consider $\frac {\partial f}{\partial x_j}$. Now, 
     \[\cal S(f)\geq 1 + \cal S\left(\frac {\partial f}{\partial x_j}\right)\]
     as the derivative map either sends monomials to distinct monomials or eliminates them, and by assumption some monomial in $f$ is not divisible by $x_j$ and will be eliminated. As $f=\ell^d h$, 
     \[\frac{\partial f}{\partial x_j}= c_jd\ell^{d-1} h + \ell^d \frac {\partial h}{\partial x_j}.\] 
    Clearly, $\ell^{d-1}$ divides $\frac{\partial f}{\partial x_j}$. By induction, $\cal S(\frac{\partial f}{\partial x_j}) \geq d$. Hence, $\cal S(f) \geq 1 + \cal S(\frac{\partial f}{\partial x_j}) \geq d + 1$. 

\subsection{Proof of Claim \ref{FullSupport}}
    The claim is first proven for $n=1$. Thus, $g = \ell^d$ where $\ell = \sum_{i=1}^{|\var(\ell)|} c_{i}x_{i}$, $c_i \neq 0$, and $|\var(\ell)| \geq \sigma$, without loss of generality. Note that
     \[\frac{\partial^\sigma g}{\partial x_{1} \cdots \partial x_{\sigma}} = \sigma!\binom{d}{\sigma} c_{1} c_{2} \cdots c_{\sigma} \ell^{d-\sigma}.\]
     Clearly, $\sigma!\binom{d}{\sigma} \neq 0$ when $\textnormal{char}(\F) = 0$. When $\textnormal{char}(\F) = p$ with $p > d$, or $p > \sigma$ and $d = p^k -1$ for some $k \in \N$, this follows by Lucas's Theorem \cite{Lucas78}. So the derivative is non-zero, as $d\geq \sigma$, implying there exists a monomial of support at least $\sigma$ in $g$.
     
     Now, for arbitrary $n$, $g = (\ell_1 \cdots \ell_n)^d$, where $|\cup_{i=1}^{n} \var(\ell_i)| \geq  \sigma$. Observe that
    \begin{equation*}
         \frac{\partial^\sigma g}{\partial x_{i_1} \cdots \partial x_{i_\sigma}} = \qquad \sum_{\mathclap{\substack{j_1 + \cdots + j_n = \sigma \\ j_i \geq 0}}} c_{j_1,\cdots,j_n} \cdot  
         \frac{\partial^\sigma (\ell_1^{j_1} \cdots \ell_n^{j_n})}{\partial x_{i_1} \cdots \partial x_{i_\sigma}} \cdot (\ell_1^{d-j_1} \cdots \ell_n^{d-j_n})
    \end{equation*}
     where $c_{j_1 \cdots j_n} = \binom{d}{j_1} \cdots \binom{d}{j_n}$. Clearly, when $\textnormal{char}(\F) = 0$ or $> d$, $c_{j_1,\cdots,j_n} \neq 0$. When $\textnormal{char}(\F) = p$ and $d = p^k -1$ for some $k \in \N$, then by Lucas's Theorem, all the binomial coefficients are non-zero. Hence $c_{j_1,\cdots,j_n} \not= 0$.

     Observe that the elements of the set $ \mathcal{M} : = \{\ell_1^{d-j_1} \cdots \ell_n^{d-j_n} \ |\ j_1 + \cdots + j_n = \sigma, \ \ j_i \geq 0\}$ are linearly independent as the $\ell_i$'s are linearly independent and $d \geq \sigma$. Also, $\frac{\partial^\sigma (\ell_1^{j_1} \cdots \ell_n^{j_n})}{\partial x_{i_1} \cdots \partial x_{i_\sigma}} \in \mathbb{F}$ as $j_1 + \cdots + j_n = \sigma$. It suffices to show that for some choice of $j_1,\dots,j_n$  and $x_{i_1},\dots,x_{i_\sigma}$ ,where $i_1,\dots,i_\sigma$ are pairwise distinct, $\frac{\partial^\sigma (\ell_1^{j_1} \cdots \ell_n^{j_n})}{\partial x_{i_1} \cdots \partial x_{i_\sigma}} \neq 0$. As elements of $\mathcal{M}$ are linearly independent and  $c_{j_1 \cdots j_n}\not= 0$, this would imply $\frac{\partial^\sigma g}{\partial x_{i_1} \cdots \partial x_{i_\sigma}} \neq 0$, indicating that $\supp(g) \geq \sigma$.
     
     For every $i \geq 1$, define $S_i := \var(\ell_i) \setminus \cup_{j=1}^{i-1} S_j$  if $|\cup_{j=1}^i S_j| < \sigma$ else choose $S_i \subseteq  \var(\ell_i) \setminus \cup_{j=1}^{i-1} S_j$ such that $|\cup_{j=1}^i S_j| = \sigma$; here, $\cup_{j=1}^{i-1} S_j = \emptyset$ for $i=1$. Note that such a collection of sets always exists as $|\cup_{i=1}^n \var(\ell_i)| \geq \sigma$. Say we choose $m \leq n$ such non-empty sets. Let $j_i := |S_i|$ and $S_i := \{x_{i1},\dots,x_{ij_i}\}$.  Hence, 
     \begin{equation*}
         \frac{\partial^\sigma (\ell_1^{j_1} \cdots \ell_m^{j_m})}{(\partial x_{11} \cdots \partial x_{1j_1}) \cdots (\partial x_{m1} \cdots \partial x_{mj_m})} = \prod_{i = 1}^{m} \frac{\partial^{j_i} \ell_i^{j_i}}{\partial x_{i1} \cdots \partial x_{ij_i}}.
     \end{equation*}
     As $\{x_{i1},\dots,x_{ij_i}\} \subseteq \var(\ell_i)$ is not empty and $j_i \leq \sigma < p$ ( in case of finite characteristic fields), by the analysis of the $n=1$ case, $\frac{\partial^{j_i} \ell_i^{j_i}}{\partial x_{i1} \cdots \partial x_{ij_i}} \not=0$. Hence,
     \[\frac{\partial^\sigma (\ell_1^{j_1} \cdots \ell_m^{j_m})}{(\partial x_{11} \cdots \partial x_{1j_1}) \cdots (\partial x_{m1} \cdots \partial x_{mj_m})} \not= 0\]
     and $\supp(g) \geq \sigma$.     

\section{Missing proofs from Section \ref{Subsection: Reduction ETSparse}} \label{Section: ETSparse proofs}
\subsection{Proof of Observation \ref{Obs: deg separated polys}}
    Let $i \in [n]$. Note that $Q_{i,1}(\vecz)$ has degree $(3i-2)d_1 + d_2$, $Q_{i,2}(\vecz)$ has degree $(3i-1)d_1 + d_3$ and $Q_{i,3}(\vecz)$ has degree $3id_1 + d_3$. Clearly $3id_1 + d_3 > (3i-1)d_1 + d_3$. Also, $(3i-1)d_1 + d_3 > (3i-2)d_1 + d_2$ because $d_1 > d_2 > d_2 - d_3$ by the conditions in \eqref{ineq}. Thus, $Q_{i,1}(\vecz)$, $Q_{i,2}(\vecz)$ and $Q_{i,3}(\vecz)$ are degree separated and $Q_i$ is a sum of $3$ degree separated polynomials and has degree $3id_1 + d_3$.

    Now, let $i \in [n]$ and $k \in [m]$. $R_k(\vecz)$ is a polynomial of degree $(3n+k)d_1 + 3d_4$ while the degree of $Q_i(\vecz)$ is $3id_1+d_3$. Note that

    \[(3n+k)d_1 +3d_4 \geq (3n+1)d_1 + 3d_4 > 3nd_1 + d_3 \geq 3id_1 + d_3,\]
    where the second inequality holds because $d_1 > d_3$ by the constraints in \eqref{ineq}. Therefore, $R_k(\vecz)$ and $Q_i(\vecz)$ (hence also $Q_{i,1}$, $Q_{i,2}$ and $Q_{i,3}$) are degree separated from one another. Further, the degree of $R_k(\vecz)$ and that of $Q_i(\vecz)$ are greater than $d_1$ implying $x_0^{d_1}$ is degree separated from $R_k(\vecz)$ and $Q_i(\vecz)$. 
    
    Lastly, let $i,j \in [n]$ where $i < j$ without loss of generality. The highest degree of a monomial in $Q_i(\vecz)$ is $3id_1 + d_3$, while the lowest degree of a monomial in $Q_j(\vecz)$ is $(3j-2)d_1 + d_2$. Now,
    \[(3j-2)d_1 + d_2 \geq (3i+1)d_1 + d_2 > 3id_1 + d_3\]
    as $j \geq i+1$ and $d_1 > d_2 > d_3$ by the conditions in \eqref{ineq}. Thus, $Q_i$ and $Q_j$ are degree separated. That $R_k(\vecz)$ is degree separated from $R_l(\vecz)$ for $k,l \in [n]$ and $k \neq l$ can be observed from the fact that the degree of $R_k(\vecz)$ is $(3n+k)d_1 + 3d_4$.
\subsection{Proof of Observation \ref{Obs: ETSparse poly degree}}
From the definition of $f$ in \eqref{Definition: 3SAT poly}, it follows that the degree of $f$ is the maximum of that of $x_0^{d_1}$, $Q_i$ and $R_k$ , where $i \in [n]$ and $k \in [m]$. As observed in the proof of Observation \ref{Obs: deg separated polys}, degree of $Q_i$ is $3id_1+d_3$, degree of $R_k$ is $(3n+k)d_1+ 3d_4$ and $(3n+k)d_1 + 3d_4 > 3id_1 + d_3 > d_1$. Since $k \leq m$, therefore the highest degree is $(3n+m)d_1 + 3d_4$. So, the degree of $f$ is $(3n+m)d_1 + 3d_4$.

\subsection{Proof of Observation \ref{Obs: ETsparse poly sparsity }}
By Observation \ref{Obs: deg separated polys}, $f$ is a sum of the $n+m+1$ degree separated polynomials $x_0^{d_1}$, $Q_i$ and $R_k$, where $i \in [n]$ and $k \in [m]$. Applying Observation \ref{lem_binom} (for linear forms in two variables over $\textnormal{char}(\F) = 0$ fields) to $Q_{i,2}$ and $Q_{i,3}$ and Observation \ref{lem_deg_sep_sum} to $Q_i$, we get
\[\cal S(Q_i(\vecz)) = \cal S(Q_{i,1}(\vecz)) + \cal S(Q_{i,2}(\vecz)) + \cal S(Q_{i,3}(\vecz)) = 2d_3+3, \ \  \forall i \in [n].\]
By Observation \ref{lem_binom} (for linear forms in two variables over $\textnormal{char}(\F) = 0$ fields) and the assumption that each clause in $\psi$ has $3$ distinct variables, we get that:
\[\cal S(R_k(\vecz)) = \cal S(x_0^{(3n+k)d_1})\prod_{j \in C_k} \cal S((y_j + (-1)^{a_{k,j}}x_j)^{d_4}) = (d_4 + 1)^3 \ \ \forall k \in [m].\]
Finally, applying Observations \ref{Obs: deg separated polys} and \ref{lem_deg_sep_sum} to $f$ gives
\[\cal S(f(\vecz)) = \cal S(x_0^{d_1}) + \sum_{i=1}^{n} \cal S(Q_i(\vecz)) + \sum_{k=1}^{m} \cal S(R_k(\vecz)) = 1 + n(2d_3 + 3) + m(d_4+1)^3.\]
Thus, $\cal S(f(\vecz)) > s$ but also $(mn)^{O(1)}$. 

For the support of $f$, note that $\supp(R_k) = 7$ for all $k \in [m]$ while $\supp(Q_i) = 3$ for all $i \in [n]$ and $\supp(x_0^{d_1}) = 1$. By Observation \ref{Obs: deg separated polys}, $\supp(f) = \supp(R_k) = 7$.

\subsection{Proof of Lemma \ref{Lemma: fix x0}}
    If $A(x_0)$ is a linear form in at least two variables, then it follows from the definition of $f$ in \eqref{Definition: 3SAT poly}, Observation \ref{Obs: deg separated polys}, Observation \ref{lem_binom} applied on $A(x_0^{d_1})$, Observation \ref{Lemma: equiv deg sep}, and the constraint $d_1 \geq s$ in \eqref{ineq} that $\cal S(f(A\vecz) > \cal S(A(x_0^{d_1})) \geq d_1 + 1 > s$, a contradiction. Hence, $A(x_0)$ has only one variable. By multiplying $A$ with a permutation and a scaling matrix, we can assume without loss of generality that $A(x_0) = x_0$. This can be assumed because permutation and non-zero scaling of variables do not affect the sparsity of a polynomial.

\subsection{Proof of Lemma \ref{var_spars_lem}}
It follows from Observations \ref{Obs: deg separated polys} and \ref{Lemma: equiv deg sep} that for any $A \in \GL(|\vecz|,\F)$, $\cal S(Q_i(A\vecz)) = \cal S(Q_{i,1}(A\vecz))+\cal S(Q_{i,2}(A\vecz))+\cal S(Q_{i,3}(A\vecz))$, where $Q_i$ is as described in Section \ref{subsubsection: sparse f construction}. Now, the if direction in the lemma statement is easy to verify. For the only if direction, consider the following cases of $A$:

\begin{enumerate}
    \item $\cal S(A(x_i))\geq 2$:  It follows from Observation \ref{lem_binom} and $d_2 \geq 2d_3$ that $\cal S(Q_{i,1}(A\vecz)) \geq d_2+1 \geq 2d_3 + 1$. Also, $\cal S(Q_{i,2}(A\vecz)) \geq 1$ and $\cal S(Q_{i,3}(A\vecz)) \geq 1$. Thus, $\cal S(Q_{i}(A\vecz)) \geq d_2+3 \geq 2d_3 + 3$.
    
    \item $\cal S(A(x_i)) = 1$, $\cal S(A(y_i + x_i))\geq 2$ and $\cal S(A(y_i -x_i))\geq 2$: It follows from Observation \ref{lem_binom} that $\cal S(Q_{i,2}(A\vecz)) \geq d_3+1$ and $\cal S(Q_{i,3}(A\vecz)) \geq d_3+1$ implying $\cal S(Q_{i}(A\vecz)) \geq 2d_3 + 3$.
    
    \item $\cal S(A(x_i)) = 1$ with  $\cal S(A(y_i + x_i)) = 1$ or $\cal S(A(y_i - x_i)) = 1$: Let $A(x_i) = X_i$ for some scaled variable $X_i \in \vecz$. Because $A$ is invertible exactly one of $\cal S(A(y_i + x_i)) = 1$ or $\cal S(A(y_i - x_i)) = 1$ holds true. Let $\cal S(A(y_i + x_i)) = 1$, without loss of generality. Then, it must be that $A(y_i) = Y_i-X_i$ for some scaled variable $Y_i \in \vecz$. Thus, $A(y_i - x_i) = Y_i - 2X_i$. Hence, $\cal S(Q_{i,1}(A\vecz)) = 1$, $\cal S(Q_{i,2}(A\vecz)) = 1$ and $\cal S(Q_{i,3}(A\vecz)) = d_3 + 1$ (by Observation \ref{lem_binom}) implying $\cal S(Q_{i}(A\vecz)) = d_3 + 3$.
\end{enumerate}
The first two cases show that if $A$ is not as per the lemma statement, then $\cal S(Q_i(A\vecz)) \geq 2d_3 + 3$; otherwise, $\cal S(Q_{i}(A\vecz)) = d_3 + 3$.

\subsection{Proof of Lemma \ref{Lemma: equality condition Qi}}
Suppose $\cal S(Q_j(A\vecz)) \neq d_3 + 3$ for some $j \in [n]$. Then, $\cal S(Q_j(A\vecz)) \geq 2d_3 + 3$ by Lemma \ref{var_spars_lem}. By the definition of $f$ in \eqref{Definition: 3SAT poly}, Observations \ref{Obs: deg separated polys} and \ref{Lemma: equiv deg sep} and the condition $d_3 \geq m(d_4+1)^2 + 1$, we get the following contradiction:
    \begin{equation*}
    \begin{split}
        \cal S(f(A\vecz)) &>  \cal S(A(x_0^{d_1})) + \sum_{i=1, i \neq j}^{n} \cal S(Q_i(A\vecz)) + \cal S(Q_j(A\vecz)) \geq  1+(n-1)(3+d_3)+(3+2d_3) \\
         &=1+n (3+d_3) +d_3 = s- m(d_4+1)^2 +d_3 > s.\end{split}
    \end{equation*}

\subsection{Proof of Observation \ref{Obs: deg separated polys homogeneous}}
The observation follows from the fact that the $x_0$-degree of the summands in $f$, as defined in \eqref{Homogeneous poly}, form an arithmetic progression with common difference $d_3 + 1$ and hence every polynomial in the observation statement has distinct $x_0$-degree. 

\subsection{Proof of Observation \ref{Obs: sparsity poly homogeneous}}
By Observation \ref{Obs: deg separated polys homogeneous}, $f$ is a sum of the $n+m+1$ polynomials $x_0^{d_1}y_0^{d_2 + (3n+m+1)(d_3+1)}$, $Q_i$ and $R_k$, where $i \in [n]$ and $k \in [m]$, which are degree separated with respect to $x_0$. Using arguments similar to the proof of Observation \ref{Obs: ETsparse poly sparsity }, it holds that

\[\cal S(Q_i(\vecz)) = \cal S(Q_{i,1}(\vecz)) + \cal S(Q_{i,2}(\vecz)) + \cal S(Q_{i,3}(\vecz)) = 2d_4+3 \ \  \forall i \in [n],\]
\[\cal S(R_k(\vecz)) = \cal S(x_0^{d_1+(3n+k)(d_3+1)}y_0^{d_2+(m-k+1)(d_3+1) - 3d_5})\prod_{j \in C_k} \cal S((y_j + (-1)^{a_{k,j}}x_j)^{d_5}) = (d_5 + 1)^3 \ \ \forall k \in [m],\]
and 
\[\cal S(f(\vecz)) = \cal S(x_0^{d_1}y_0^{d_2+(3n+m+1)(d_3+1)}) + \sum_{i=1}^{n} \cal S(Q_i(\vecz)) + \sum_{k=1}^{m} \cal S(R_k(\vecz)) = 1 + n(2d_4 + 3) + m(d_5+1)^3.\]
Thus, $\cal S(f(\vecz)) > s$ but also $(mn)^{O(1)}$. For the support of $f$, note that $\supp(R_k) = 8$ for all $k \in [m]$ while $\supp(Q_i) = 4$ for all $i \in [n]$ and $\supp(x_0^{d_1}y_0^{d_2+(3n+m+1)(d_3+1)}) = 2$. Hence, by Observation \ref{Obs: deg separated polys homogeneous}, $\supp(f) = \supp(R_k) = 8$.

\subsection{Proof of Lemma \ref{Lemma: y0,x0 fixed homogeneous}}
Suppose one of $A(x_0)$ or $A(y_0)$ is a linear form in at least two variables. As $A(x_0)^{d_1}$ and $A(y_0)^{d_2}$ divide $f(A\vecz)$, Claim \ref{lem_div} and the conditions of \eqref{ineq_homogeneous} imply $\cal S(f(A\vecz)) > s$, a contradiction. So, $A(x_0)$ and $A(y_0)$ must have only one variable each. Hence, without loss of generality (i.e., after applying scaling and permutation to $A$), $A(x_0) = x_0$ and $A(y_0) = y_0$.

\subsection{Proof of Lemma \ref{Lemma: y0 separated}}
Note that each of the $3n+m+1$ summand polynomials, as mentioned in the lemma statement, is of form $x_0^{d_1 + t(d_3+1)} \cdot y_0^{d_2 + v} \cdot h(\vecz)$, where $t \in [0,3n+m]$ and $v \in \N$. By the construction of $f$, each $t$ corresponds to a unique summand polynomial. For $x_0^{d_1}y_0^{d_2 + (3n+m+1)(d_3+1)}$ the degree of $h$ is $0$. Let $i \in [n]$. For $Q_{i,1}(A\vecz)$, the degree of $h$ is $d_3$ while for $Q_{i,2}(A\vecz)$ and $Q_{i,3}(A\vecz)$ the degree of $h$ is $d_4$. Lastly, for $R_k(A\vecz)$, where $k \in [m]$, the degree of $h$ is $3d_5$. Thus, going over all the summand polynomials, the degree of $h$ is at most the maximum of $d_3, d_4$ and $3d_5$. Since $d_3 > d_4$ and $d_3 > 3d_5$ by the conditions in \eqref{ineq_homogeneous}, therefore the degree of $h$ is at most $d_3$. As shown in Lemma \ref{Lemma: y0,x0 fixed homogeneous}, $A(x_0) = x_0$ and $A(y_0) = y_0$ while there may be monomials in $h(\vecz)$ which have non-zero $x_0$-degree. Thus, the possible range of $x_0$-degree of any monomial of a summand polynomial lies in the range $[d_1 + t(d_3+1), d_1 + t(d_3+1) + d_3]$, which is clearly disjoint for distinct $t$. Therefore, the summand polynomials are degree separated with respect to $x_0$.

\subsection{Proof of Observation \ref{Obs: poly sparsity char 2 case}} \label{proof-obs-polychar2}
    By Observation \ref{Obs: deg separated polys}, $f$ is a sum of the $n+m+1$ degree separated polynomials $x_0^{d_1}$, $Q_i$ and $R_k$, where $i \in [n]$ and $k \in [m]$. Applying Observation \ref{lem_binom} (for linear forms in two variables over finite characteristic fields) to $Q_{i,2}$ and Observation \ref{lem_deg_sep_sum} to $Q_i$, we get
\[\cal S(Q_i(\vecz)) = \cal S(Q_{i,1}(\vecz)) + \cal S(Q_{i,2}(\vecz)) + \cal S(Q_{i,3}(\vecz)) = d_3+3, \ \  \forall i \in [n].\]
    By Observation \ref{lem_binom} (for linear forms in two variables over finite characteristic fields) and the assumption that each clause in $\psi$ has $3$ distinct variables, we get that:
\[\cal S(R_k(\vecz)) = \cal S(x_0^{(3n+k)d_1})\prod_{j \in C_k} \cal S((y_j + a_{k,j}x_j)^{d_4}) \leq (d_4 + 1)^3 \ \ \forall k \in [m].\]
    depending on $a_{k,j} = 0$ or $1$. Finally, applying Observations \ref{Obs: deg separated polys} and \ref{lem_deg_sep_sum} to $f$ gives
\[\cal S(f(\vecz)) = \cal S(x_0^{d_1}) + \sum_{i=1}^{n} \cal S(Q_i(\vecz)) + \sum_{k=1}^{m} \cal S(R_k(\vecz)) \leq 1 + n(d_3 + 3) + m(d_4+1)^3.\]
    For the support of $f$, note that $4 \leq \supp(R_k) \leq 7$ for $k \in [m]$, $\supp(Q_i) = 3$ for all $i \in [n]$ and $\supp(x_0^{d_1}) = 1$. By Observation \ref{Obs: deg separated polys}, $4 \leq \supp(f) \leq 7$.
\subsection{Proof of Lemma \ref{Lemma: y0,x0 fixed homogeneous} over finite characteristic fields} \label{proof-x0,y0-fixed-homogeneous-finitechar}
Note that as $f$ is divisible by $x_0^{d_1}$ and $y_0^{d_2}$, therefore for any $A \in \GL(|\vecz|,\F)$ we can write:
    \[f(A\vecz) = A(x_0^{d_1})A(y_0^{d_2})g(A\vecz).\]
where the degree of $g(A\vecz)$ is $(3n+m+1)(d_3+1)$. Let $A(x_0) = \sum_{l=1}^{|\vecz|} c_lz_l$, where $z_l \in \vecz$ and $c_l \in \F$. The characteristic being finite and the choice of $d_1$ as in \eqref{d1 def} implies:
\[ A(x_0^{d_1}) = \bigg(\sum_{l=1}^{|\vecz|} c_lz_l\bigg)^{\sum_{t=k_3}^{k_1+k_3-1}(p-1)p^t} = \prod_{t=k_3}^{k_1+k_3-1}\bigg(\sum_{l=1}^{|\vecz|} c_l^{p^t}z_l^{p^t}\bigg)^{(p-1)}.\]
Thus, the monomials of $A(x_0^{d_1})$ are of form $\prod_{l=1}^{|\vecz|} z_l^{e_l}$,  where $e_l = \sum_{t=k_3}^{k_1+k_3-1} c_{l,t}p^t$ with $c_{l,t} \in [0,p-1]$, and $\sum_{l=1}^{|\vecz|} e_l = d_1$. Now, for any two monomials of $A(x_0^{d_1})$, there exists a variable $z_l \in \vecz$ such that the difference between the $z_l$-degree of these two monomials is at least $p^{k_3} > d_2 + (3n+m+1)(d_3+1)$, while the degree of $A(y_0^{d_2})g(A\vecz)$ is $d_2 + (3n+m+1)(d_3+1)$. This is because of the way $d_1$ is set in \eqref{d1 def}. Thus, with this observation and Observation \ref{lem_deg_sep_sum}, it holds that
    \[\cal S(f(A\vecz)) = \cal S(A(x_0^{d_1})) \cal S(A(y_0^{d_2})g(A\vecz)).\]
Similarly, because the characteristic is finite and $d_2$ is as chosen in \eqref{d2 def}, for any two monomials of $A(y_0^{d_2})$ there exists a variable $z_l \in \vecz$ such that the difference between the $z_l$-degree of these two monomials is at least $p^{k_2} > (3n+m+1)(d_3+1)$, while the degree of $g(A\vecz)$ is $(3n+m+1)(d_3+1)$. Thus, with this observation and Observation \ref{lem_deg_sep_sum}, it holds that
    \[\cal S(f(A\vecz)) =  \cal S(A(x_0^{d_1})) \cal S(A(y_0^{d_2})g(A\vecz)) = \cal S(A(x_0^{d_1})) \cal S(A(y_0^{d_2})) \cal S(g(A\vecz)).\]
Now, suppose $A(x_0) $ is a linear form in at least $2$ variables. By Observation \ref{Lemma: linear form sparsity} and the definition of $d_1$ in \eqref{d1 def}, it follows that 
    \[\cal S(A(x_0^{d_1})) \geq \prod_{k_3}^{k_1+k_3-1} \binom{p-1+2-1}{2-1} = p^{k_1} > s.\]
This implies $\cal S(f(A\vecz)) \geq \cal S(A(x_0^{d_1})) > s$. Thus, $A(x_0)$ must be some scaled variable in $\vecz$. Therefore, 
    \[\cal S(f(A\vecz)) = \cal S(A(y_0^{d_2})g(A\vecz)).\] 
Similarly, if $\cal S(A(y_0))$ is a linear form in at least $2$ variables, then by Observation \ref{Lemma: linear form sparsity} and the definition of $d_2$ in \eqref{d2 def}
    \[\cal S(A(y_0^{d_2})) \geq \prod_{k_2}^{k_1+k_2-1} \binom{p-1+2-1}{2-1} = p^{k_1} > s.\]
This implies $\cal S(f(A\vecz)) \geq \cal S(A(y_0^{d_2})) > s$. Thus, $A(y_0)$ must also be some scaled variable in $\vecz$. Therefore, $A(x_0) = x_0$ and $A(y_0) = y_0$ without loss of generality by applying an appropriate permutation and scaling transform.
    
\subsection{Proof of Observation \ref{Obs: poly sparsity char 2 case homogeneous}}
    By Observation \ref{Obs: deg separated polys homogeneous}, $f$ is a sum of the $n+m+1$ polynomials $x_0^{d_1}y_0^{d_2 + (3n+m+1)(d_3+1)}$, $Q_i$ and $R_k$, where $i \in [n]$ and $k \in [m]$, which are degree separated with respect to $x_0$. Applying Observation \ref{lem_binom} (for linear forms in two variables over finite characteristic fields) to $Q_{i,2}$ and Observation \ref{lem_deg_sep_sum} to $Q_i$, we get
\[\cal S(Q_i(\vecz)) = \cal S(Q_{i,1}(\vecz)) + \cal S(Q_{i,2}(\vecz)) + \cal S(Q_{i,3}(\vecz)) = d_4+3, \ \  \forall i \in [n].\]
    By Observation \ref{lem_binom} (for linear forms in two variables over finite characteristic fields) and the assumption that each clause in $\psi$ has $3$ distinct variables, we get that:
\[\cal S(R_k(\vecz)) = \cal S(x_0^{d_1 + (3n+k)(d_3+1)}) \cal S(y_0^{d_2 +(m-k+1)(d_3+1)-3d_5})\prod_{j \in C_k} \cal S((y_j + a_{k,j}x_j)^{d_5}) \leq (d_5 + 1)^3 \ \ \forall k \in [m].\]
    depending on whether $a_{k,j}$ is $0$ or $1$. Finally, applying Observations \ref{Obs: deg separated polys homogeneous} and \ref{lem_deg_sep_sum} to $f$ gives
\[\cal S(f(\vecz)) = \cal S(x_0^{d_1}y_0^{d_2 + (3n+m+1)(d_3+1)}) + \sum_{i=1}^{n} \cal S(Q_i(\vecz)) + \sum_{k=1}^{m} \cal S(R_k(\vecz)) \leq 1 + n(d_4 + 3) + m(d_5+1)^3.\]
    For the support of $f$, note that $5 \leq \supp(R_k) \leq 8$ for $k \in [m]$, $\supp(Q_i) = 4$ for all $i \in [n]$ and $\supp(x_0^{d_1}y_0^{d_2 + (3n+m+1)(d_3+1)}) = 2$. By Observation \ref{Obs: deg separated polys homogeneous}, $5 \leq \supp(f) \leq 8$.

\section{Missing proofs from Section \ref{Subsection: gap reduction}}\label{Section: Gap ETSparse proofs}
\subsection{Proof of Lemma \ref{Lemma: unsat gap}}
By Observations \ref{Obs: deg separated polys}, \ref{Lemma: equiv deg sep} and the definition of $f$ as in \eqref{Definition: 3SAT poly}, it follows that:
\[\cal S(f(A\vecz)) = \cal S(A(x_0^{d_1})) + \sum_{i=1}^{n} \cal S(Q_i(A\vecz)) + \sum_{k=1}^{m} \cal S(R_k(A\vecz)).\]
We now analyse $\cal S(f(A\vecz))$ under the transforms listed in the lemma statement. The list covers all possible types of transforms.
\begin{enumerate}
    \item By Observation \ref{lem_binom}, $\cal S(f(A\vecz)) \geq \cal S(A(x_0^{d_1})) \geq d_1 + 1$ follows.

    \item In this case, $A(x_0) = x_0$ without loss of generality, as permutation and non-zero scaling of variables do not influence the sparsity of the polynomial. By Observation \ref{lem_binom}, it follows that $\cal S(f(A\vecz)) \geq$ $\cal S(Q_{j,1}(A\vecz))  \geq d_2 + 1$.

    \item In this case, $A(x_i)$, where $i \in [0,n]$ is some scaled variable in $\vecz$. Without loss of generality, let $A(y_j + x_j)$ be a linear form in at least $3$ variables. By Observation \ref{Lemma: linear form sparsity}, $\cal S(Q_{j,2}(A\vecz)) \geq \binom{d_3+2}{2}$ holds. Therefore,
    \[\cal S(f(A\vecz)) \geq \cal S(Q_{j}(A\vecz)) \geq \cal S(Q_{j,2}(A\vecz)) \geq \binom{d_3+2}{2} = \frac{d_3^2 + 3d_3 + 2}{2}.\]

    \item In this case, $A(x_i) = X_i$, where $i \in [0,n]$ and $X_i \in \vecz$ is some scaled variable. Also, $A(y_i + x_i)$ and $A(y_i - x_i)$ are linear forms in at most two variables for all $i \in [n]$. Thus, $A(y_i) = Y_i + c_iX_i$, where $c_i \in \F$ and $Y_i \in \vecz$ is some scaled variable. As $A$ is invertible, the $Y_i$'s and $X_i$'s are distinct variables. Hence, 
    \[\cal S (R_k(A\vecz)) = \cal S(X_0^{(3n + k)d_1})\prod_{j \in C_k}\cal S((Y_j+ (c_j + (-1)^{a_{k,j}})X_j)^{d_4}). \]
    Since $\psi$ is unsatisfiable, for any such $A$, there exists $k \in [m]$ such that $\cal S(R_k(A\vecz)) \geq (d_4+1)^3$ (by Observation \ref{lem_binom}). Therefore, $\cal S(f(A\vecz)) \geq \cal S(R_k(A\vecz)) \geq (d_4+1)^3 $. 
\end{enumerate}
\subsection{Proof of Lemma \ref{Lemma: unsat homogeneous gap}}
Like in the proof of Lemma \ref{Lemma: unsat gap}, we analyse $\cal S(f(A\vecz))$ with $A$ as listed in the lemma statement. Suppose $A(x_0)$ is a linear form in at least $2$ variables. Since $x_0^{d_1}$ divides $f$, it follows from Claim \ref{lem_div} that $\cal S(f(A\vecz)) \geq d_1 + 1$. Similarly, if $A(x_0)$ is a variable while $A(y_0)$ is a linear form in at least $2$ variables, then since $y_0^{d_2}$ divides $f$, $\cal S(f(A\vecz)) \geq d_2 + 1$ holds by Claim \ref{lem_div}.  For the remaining cases, $A(x_0) = x_0$ and $A(y_0) = y_0$ without loss of generality. It then follows from Lemma \ref{Lemma: y0 separated} that
\[\cal S(f(A\vecz)) = \cal S(A(x_0^{d_1}y_0^{d_2 + (3n+m+1)(d_3+1)})) + \sum_{i=1}^{n} \cal S(Q_i(A\vecz)) + \sum_{k=1}^{m} \cal S(R_k(A\vecz)).\]
The last three cases then can be proved the same way as the last three cases of Lemma \ref{Lemma: unsat gap} in the non-homogeneous case.

\subsection{Proof of Lemma \ref{Lemma: unsat gap char p}}
By Observations \ref{Obs: deg separated polys}, \ref{Lemma: equiv deg sep} and the definition of $f$ as in \eqref{Definition: 3SAT poly}, it follows that:
\[\cal S(f(A\vecz)) = \cal S(A(x_0^{d_1})) + \sum_{i=1}^{n} \cal S(Q_i(A\vecz)) + \sum_{k=1}^{m} \cal S(R_k(A\vecz)).\]
We now analyse $\cal S(f(A\vecz))$ with $A$ as listed in the lemma statement. The analysis for the first, second and fourth cases is similar to that of the respective cases in the proof of Lemma \ref{Lemma: unsat gap}. In the third case, without loss of generality, let $A(y_j + x_j)$, for some $j \in [n]$, be a linear form in at least $3$ variables. By Observation \ref{Lemma: linear form sparsity} and the fact that $d_3 = p^k - 1 = \sum_{i=0}^{k-1} (p-1)p^i$ for some $k \in \N$, 
    \[\cal S(A((y_j + x_j)^{d_3})) \geq \prod_{i=0}^{k-1} \binom{p - 1 + 3 - 1}{3 - 1} = \bigg(\frac{p(p+1)}{2}\bigg)^k = (d_3+1) \bigg(\frac{p+1}{2}\bigg)^k = (d_3+1)^{1+ \log_p(\frac{p+1}{2})}\]
Note, $\log_p(\frac{p+1}{2})$ is an increasing function for $p \geq 3$. Hence, $\log_p(\frac{p+1}{2}) \geq \log_3(4/2) \geq 0.63$. Therefore,
    \[\cal S(f(A\vecz)) \geq \cal S(Q_{j,2}(A\vecz)) \geq (d_3+1)^{1.63}.\]
Similarly, if $A(y_j-x_j)$ is a linear form in at least $3$ variables, then $\cal S(f(A\vecz)) \geq \cal S(Q_{j,3}(A\vecz)) \geq(d_3+1)^{1.63}$.

\subsection{Proof of Lemma \ref{Lemma: unsat gap char 2}}
The proof of this lemma is very similar to that of Lemma \ref{Lemma: unsat gap char p}. In particular, the analysis for the first, second and fourth cases is similar to that of the respective cases in the proof of Lemma \ref{Lemma: unsat gap char p}. So, we consider the third case. Without loss of generality, let $A(y_j + x_j)$ be a linear form in at least $3$ variables for some $j \in [n]$. Then, by Observation \ref{Lemma: linear form sparsity} and the fact that $d_3 = 2^k - 1 = \sum_{i=0}^{k-1} 2^i$ for some $k \in \N$, 
    \[\cal S(A((y_j + x_j)^{d_3})) \geq \prod_{i=0}^{k-1} \binom{1 + 3 - 1}{3 - 1} = 3^k  = (d_3+1)^{\log_2 3} \geq (d_3+1)^{1.58}.\]
Therefore,
    \[\cal S(f(A\vecz)) \geq \cal S(Q_{j,2}(A\vecz)) \geq (d_3+1)^{1.58}.\]
Similarly, if $A(y_j)$ is a linear form in at least $3$ variables, then $\cal S(f(A\vecz)) \geq \cal S(Q_{j,3}(A\vecz)) \geq(d_3+1)^{1.58}$.

\subsection{Proof of Lemma \ref{Lemma: unsat gap homogeneous char p}}
Suppose $A(x_0)$ is a linear form in at least two variables. Applying the argument in the proof of Lemma \ref{Lemma: y0 separated} for the finite characteristic case (refer Section \ref{proof-x0,y0-fixed-homogeneous-finitechar}) shows that 
\[\cal S(f(A\vecz)) \geq \cal S(A(x_0)^{d_1}) \geq p^{k_1+1} \geq d_3+2. \]
Similarly, if $A(x_0)$ is a variable and $A(y_0)$ is a linear form in at least two variables, the same argument shows that 
\[\cal S(f(A\vecz)) \geq \cal S(A(y_0)^{d_2}) \geq p^{k_1+1} \geq d_3+2. \]
For the remaining cases, $A(x_0) = x_0$ and $A(y_0) = y_0$ without loss of generality. It then follows from Lemma \ref{Lemma: y0 separated} that
\[\cal S(f(A\vecz)) = \cal S(A(x_0^{d_1}y_0^{d_2 + (3n+m+1)(d_3+1)})) + \sum_{i=1}^{n} \cal S(Q_i(A\vecz)) + \sum_{k=1}^{m} \cal S(R_k(A\vecz)).\]
The last three cases can then be proved in the same way as the last three cases of Lemma \ref{Lemma: unsat gap char p}.

\subsection{Proof of Lemma \ref{Lemma: unsat gap homogeneous char 2}}
Suppose $A(x_0)$ is a linear form in at least two variables. Applying the argument in the proof of Lemma \ref{Lemma: y0 separated} for the finite characteristic case (refer Section \ref{proof-x0,y0-fixed-homogeneous-finitechar}) shows that 
\[\cal S(f(A\vecz)) \geq \cal S(A(x_0)^{d_1}) \geq 2^{k_1+1} \geq d_3+2. \]
Similarly, if $A(x_0)$ is a variable and $A(y_0)$ is a linear form in at least two variables, the same argument shows that 
\[\cal S(f(A\vecz)) \geq \cal S(A(y_0)^{d_2}) \geq 2^{k_1+1} \geq d_3+2. \]
For the remaining cases, $A(x_0) = x_0$ and $A(y_0) = y_0$ without loss of generality. It then follows from Lemma \ref{Lemma: y0 separated} that
\[\cal S(f(A\vecz)) = \cal S(A(x_0^{d_1}y_0^{d_2 + (3n+m+1)(d_3+1)})) + \sum_{i=1}^{n} \cal S(Q_i(A\vecz)) + \sum_{k=1}^{m} \cal S(R_k(A\vecz)).\]
The last three cases can then be proved in the same way as the last three cases of Lemma \ref{Lemma: unsat gap char 2}.

\section{Missing proofs from Section \ref{Section: ETsupport NP-hard}} \label{Section: ETsupport proofs}
\subsection{Proof of Observation \ref{Obs: deg sep clause poly}}
Note that $R_k(\vecw)$ is a product of variable disjoint polynomials. Hence, the degree of $R_k$ is $\sigma - 5 + \sum_{j \in C_k}(2 + a_{k,j})$ where  
\[\sigma + 1 \leq \sigma - 5 + \sum_{j \in C_k}(2 + a_{k,j}) \leq \sigma + 4\] 
and the upper bound is achieved if $a_{k,j} = 1$ for all $j \in C_k$. Similarly, it can be seen that the variable disjointness property along with the fact that the characteristic of the underlying field is not equal to $2$ or $3$ implies 
\[ \sigma - 2 \leq \supp (R_k(\vecw)) \leq \sigma +1 \] 
where the upper bound is achieved if $a_{k,j} = 1$ for all $j \in C_k$. Now for $k,l \in [m]$, where $k \neq l$, there are two possibilities for the clauses $C_k$ and $C_l$:
\begin{enumerate}
\item $C_k$ and $C_l$ contain the same set of $\vecx$ variables in which case there exists a $j$ such that $a_{k,j} \neq a_{l,j}$. Assume without loss of generality that $a_{k,j} = 1$, therefore $a_{l,j} = 0$. From this and the definition of $R_k$ and $R_l$, it can be seen that 
\[ R_k(\vecw) = (y_j-x_j)^3\tilde{R}_k \cdot (z_1z_2\dots z_{\sigma-5}) , \ R_l(\vecw) = y_j^2\tilde{R}_l \cdot (z_1z_2\dots z_{\sigma-5}) \]
where $\tilde{R}_k$ and  $\tilde{R}_l$ are polynomials which do not contain a variable from $\vecz \sqcup \{x_j,y_j\}$.  Thus, the monomials of $R_k$ are degree separated from those of $R_l$ with respect to $x_j$ or $y_j$.
\item Otherwise $C_k$ and $C_l$ differ in at least one variable. This again implies that there exists a variable $y_j$ such that the monomials of $R_k$ have non-zero $y_j$-degree or non-zero $x_j$-degree while the monomials of $R_l$ have zero $y_j$-degree and zero $x_j$-degree. 
\end{enumerate}
Note that the above argument also holds, with some modification, if all the $R_k$'s are of the form
\[R_k(\vecw) = (\prod_{j \in C_k} (y_j-c_jx_j)^{2+a_{k,j}})(z_1z_2\dots z_{\sigma-5}), \ \ \ c_j \in \F.\]
\subsection{Proof of Observation \ref{Obs: low support properties}}
By Condition \ref{pow-degsep-CSP} in Section \ref{secConstructPoly}, Observations \ref{Obs: deg sep clause poly} and \ref{lem_deg_sep_sum}, it holds that,
\[\cal S(f(\vecw)) = \sum_{g(\vecw) \in P}\cal S(g(\vecw)) + \sum_{h(\vecw) \in Q}\cal S(h(\vecw)) + \sum_{k=1}^{m} \cal S(R_k(\vecw)).\]
As $g(\vecw) \in P$ and $h(\vecw) \in Q$ are monomials, thus $\cal S(g(\vecw)) = 1$ and $\cal S(h(\vecw)) = 1$. For $k \in [m]$, 
\[\cal S(R_k(\vecw)) = (\prod_{j \in C_k} \cal S(y_j - a_{k,j}x_j)^{2+a_{k,j}}) \cdot \cal S(z_{1}z_{2} \dots z_{\sigma-5}) \leq 64\]
because all the polynomials in the product are variable disjoint and $a_{k,j}$ can be non-zero for all $j \in C_k$. Lastly, as $|P| = \binom{n+\sigma-5}{\sigma}$ and $|Q| = \binom{n}{\sigma/2}$ (for odd $\sigma$, $|Q| = \binom{n}{\frac{\sigma+1}{2}}\frac{\sigma+1}{2}$), hence
\[\cal S(f) \leq  \binom{n+\sigma -5}{\sigma} + \binom{n}{\sigma/2} + 64m.\]
Thus, $\cal S(f) = O(n^\sigma + m)$. Now, $\supp(g(\vecw)) = \sigma$ and $\supp(h(\vecw)) = \sigma$, where $g(\vecw) \in P$ and $h(\vecw) \in Q$. By Observation \ref{Obs: deg sep clause poly}, $\supp(R_k(\vecw)) \leq \sigma+1$, for $k \in [m]$. In particular, $\supp(R_1) = \sigma + 1$. By Condition \ref{pow-degsep-CSP} in Section \ref{secConstructPoly} and Observation \ref{Obs: deg sep clause poly}, it holds that $\supp(f(\vecw)) = \supp(R_1(\vecw)) = \sigma+1$.
\subsection{Proof of Observation \ref{Obs: support of orbit poly}}
This observation follows easily from Condition \ref{pow-degsep-CSP} in Section \ref{secConstructPoly}.
\subsection{Proof of Lemma \ref{P1Sep}}
    Let $g(\vecw) = (w_{1}\cdots w_{\sigma})^\star \in P$, where $w_{j} \in \vecz \sqcup \vecx$ and $\star$ represents an integer power which is at least $\sigma + 1$. Now, $g(A\vecw) = (\ell_{1}\cdots \ell_{\sigma})^\star$, where the $\ell_{j}$'s, with $j \in [\sigma]$, are linearly independent linear forms. If $|\cup_{j=1}^{\sigma} \var(\ell_{j})| \geq \sigma + 1$, then by Claim \ref{FullSupport}, $\supp(g(A\vecw)) \geq \sigma + 1$ a contradiction. Thus, $|\cup_{j=1}^{\sigma} \var(\ell_{j})| \leq \sigma$. The linear independence of these $\sigma$ many linear forms implies $|\cup_{j=1}^{\sigma} \var(\ell_{j})| \geq \sigma$. Combining these inequalities gives $|\cup_{j=1}^{\sigma} \var(\ell_{j})| = \sigma$. Thus, there exist variables $W_{1},\dots,W_{\sigma} \in \vecw$ such that $\langle \{\ell_{1},\dots,\ell_{\sigma}\} \rangle = \langle \{ W_{1},\dots,W_{\sigma} \} \rangle$, where $\langle S \rangle$ denotes the vector space spanned by the elements of a set $S$ of polynomials. In particular, $\langle \{\ell_{1},\dots,\ell_{\sigma}\} \rangle = \langle \var(g(A\vecw)) \rangle$.

    Consider the variable $x_1 \in \vecx \sqcup \vecz$ and let $A(x_1) = \ell_1$. Now, $P$ can be seen as a collection of $\sigma$ sized subsets of $\vecz \sqcup \vecx$. Since $n + \sigma - 6 \geq 2(\sigma-1)$ (as implied by $n \geq \sigma + 4$), there exist, without loss of generality, $g_1(\vecw)$ and $g_2(\vecw) \in P$, such that $\var(g_1(\vecw)) \cap  \var(g_2(\vecw)) = \{x_1\}$. Note that $\langle \var(g_1(\vecw)) \rangle \cap \langle \var(g_2(\vecw)) \rangle = \langle \var(g_1(\vecw)) \cap \var(g_2(\vecw)) \rangle$. Thus, 
    \[\dim \langle \var(g_1(\vecw)) \rangle \cap \langle \var(g_2(\vecw)) \rangle= \dim \langle \var(g_1(\vecw)) \cap \var(g_2(\vecw)) \rangle = 1.\] 
    The invertibility of $A$ and the argument of the first paragraph imply there exist $\sigma$-size sets $B_1,B_2 \subseteq \vecw$, with $B_1 = \var(g_1(A\vecw))$ and $B_2 = \var(g_2(A\vecw))$, such that $\langle \var(g_1(\vecw)) \rangle$ and $\langle B_1 \rangle$ are isomorphic and so are $\langle \var(g_2(\vecw)) \rangle$ and $\langle B_2 \rangle$. Similarly, under $A$, $\langle \var(g_1(\vecw)) \rangle \cap \langle \var(g_2(\vecw)) \rangle$ is isomorphic to $\langle B_1 \rangle \cap \langle B_2 \rangle$. Thus, $\dim$ $\langle B_1 \rangle \cap \langle B_2 \rangle = 1$. From the first paragraph, it is also evident that $\ell_1 \in \langle B_1 \rangle \cap \langle B_2 \rangle = \langle B_1 \cap B_2 \rangle$, where the equality follows as $B_1$ and $B_2$ are sets of variables. This implies $A(x_1) = \ell_1 = W_1$, where $W_1$ is some scaled variable in $\vecw$.

\subsection{Proof of Lemma \ref{P2Sep}}
    The proof is similar to that of Lemma \ref{P1Sep}. As $\supp (f(A\vecw)) \leq \sigma$, Lemma \ref{P1Sep} holds. Thus, for all $w \in \vecx \sqcup \vecz$, $A(w) = W$ for some scaled variable $W \in \vecw$. In particular, $A(x_i) = X_i$ where $i \in [n]$ and $X_i \in \vecw$ is some scaled variable. Let $A(y_i) = \ell_i$, where $\ell_i$ contains at least one variable other than those in $\var(A(w))$, for any $ w \in \vecx \sqcup \vecz$. Without loss of generality, consider $h(\vecw) = ((x_{1}y_{1})\cdots (x_{{\frac{\sigma}{2}}} y_{{\frac{\sigma}{2}}}))^\star \in Q$, where $\star$ represents an integer power which is at least $\sigma + 1$. Now, $h(A\vecw) = ((X_{1}\ell_{1})\cdots (X_{{\frac{\sigma}{2}}} \ell_{{\frac{\sigma}{2}}}))^\star$, where $X_{1} \dots X_{{\frac{\sigma}{2}}}, \ell_{1}, \dots \ell_{{\frac{\sigma}{2}}}$ are linearly independent linear forms. If $|\var(\ell_{1}) \cup \dots \cup \var(\ell_{{\frac{\sigma}{2}}}) \cup \{X_{1},\dots,X_{{\frac{\sigma}{2}}}\}| \geq \sigma+1$, then by Claim \ref{FullSupport}, $\supp(h(A\vecw)) \geq \sigma + 1$, a contradiction. Thus, $|\var(\ell_{1}) \cup \dots \cup \var(\ell_{{\frac{\sigma}{2}}}) \cup \{X_{1},\dots,X_{{\frac{\sigma}{2}}}\}| \leq \sigma$. The linear independence of these $\sigma$ many linear forms implies $|\var(\ell_{1}) \cup \dots \cup \var(\ell_{{\frac{\sigma}{2}}}) \cup \{X_{1},\dots,X_{{\frac{\sigma}{2}}}\}| \geq \sigma$. These inequalities imply $|\var(\ell_{1}) \cup \dots \cup \var(\ell_{{\frac{\sigma}{2}}}) \cup \{X_{1},\dots,X_{{\frac{\sigma}{2}}}\}| = \sigma$. Hence there exists variable set $B = \{X_{1},\dots,X_{{\frac{\sigma}{2}}}\} \sqcup \{Y_{1},\dots,Y_{{\frac{\sigma}{2}}}\}$ such that $\langle B \rangle = \langle \{\ell_{{1}}, \dots, \ell_{{\frac{\sigma}{2}}} \} \sqcup \{X_{1},\dots,X_{{\frac{\sigma}{2}}} \} \rangle$.  In particular, $\langle \{\ell_{{1}}, \dots, \ell_{{\frac{\sigma}{2}}} \} \sqcup \{X_{1},\dots,X_{{\frac{\sigma}{2}}} \} \rangle = \langle \var(h(A\vecw)) \rangle$.

Consider the variable $x_1 \in \vecx$. As $n-1 \geq \sigma - 2$ (as implied by $n \geq \sigma + 4$), then for $x_1 \in \vecx$ and $y_1 \in \vecy$, there exist, $h_1(\vecw), h_2(\vecw) \in Q$ such that $\var(h_1(\vecw)) \cap \var(h_2(\vecw)) = \{x_1,y_1\}$. Note that $\langle \var(h_1(\vecw)) \rangle \cap \langle \var(h_2(\vecw)) \rangle = \langle \var(h_1(\vecw)) \cap \var(h_2(\vecw)) \rangle$. Thus, 
    \[\dim \langle \var(h_1(\vecw)) \rangle \cap \langle \var(h_2(\vecw)) \rangle= \dim \langle \var(h_1(\vecw)) \cap \var(h_2(\vecw)) \rangle = 2.\] 
 The invertibility of $A$ and the argument of the first paragraph imply that there exist $\sigma$-size sets $B_1,B_2 \subseteq \vecw$, with $B_1 = \var(h_1(A\vecw))$ and $B_2 = \var(h_2(A\vecw))$, such that $\langle \var(h_1(\vecw)) \rangle$ and $\langle B_1 \rangle$ are isomorphic, and so are $\langle \var(h_2(\vecw)) \rangle$ and $\langle B_2 \rangle$. Similarly under $A$, $\langle \var(h_1(\vecw)) \rangle \cap \langle \var(h_2(\vecw)) \rangle$  is isomorphic to $\langle B_1 \rangle \cap \langle B_2 \rangle$. Therefore, $\dim \langle B_1 \rangle \cap \langle B_2 \rangle = 2$. As $A(x_1) = X_1$, and $A(y_1) \in \langle B_1 \rangle \cap \langle B_2 \rangle$ (by the argument in the first paragraph), therefore $A(y_1) = Y_1 + c_1X_1$, where $Y_1 \in \vecw$ is some scaled variable and $c_1 \in \F$.
 
 \noindent \textbf{Note:} For odd $\sigma$, the proof holds with some modification. First, $h(\vecw) \in Q$ is now of form $((x_{1}y_{1})\cdots (x_{{\frac{\sigma-1}{2}}} y_{{\frac{\sigma-1}{2}}})x_{{\frac{\sigma+1}{2}}})^\star$. The argument in the first paragraph then shows that for $h(A\vecw) = ((X_{1}\ell_{1})\cdots (X_{{\frac{\sigma-1}{2}}} \ell_{{\frac{\sigma-1}{2}}})X_{{\frac{\sigma+1}{2}}})^\star$, there exists a $\sigma$-size set $B \subseteq \vecw$, such that $B = \{X_{1},\dots,X_{{\frac{\sigma+1}{2}}}\} \sqcup \{Y_{1},\dots,Y_{{\frac{\sigma-1}{2}}}\}$ and $\langle B \rangle = \langle \{\ell_{{1}}, \dots, \ell_{{\frac{\sigma-1}{2}}} \} \sqcup \{X_{1},\dots,X_{{\frac{\sigma+1}{2}}} \} \rangle$. Secondly, for a variable $x_1 \in \vecx$, the existence of $h_1(\vecw), h_2(\vecw) \in Q$ such that $\var(h_1(\vecw)) \cap \var(h_2(\vecw)) = \{x_1,y_1\}$ is implied by $n-1 \geq \sigma-1$ (which itself is implied by $n \geq \sigma + 4$). With these changes, the remainder of the argument continues to hold. 
\subsection{Choosing the degrees}\label{etcsp-param}
In this section, we set the powers $\star$, as denoted in Section \ref{secConstructPoly}, for all polynomials in $P$, $Q$ and $R$ such that Conditions \ref{pow-degsep-CSP} and \ref{pow-finitechar-CSP}, specified in that section, are satisfied over any field. We specify the choice of $\star$ separately for characteristic $0$ fields and finite characteristic fields. Let $N := \binom{n+\sigma -5}{\sigma} + \binom{n}{\sigma/2}$, $i \in [N]$ and $k \in [m]$. For odd $\sigma$, $N := \binom{n+\sigma -5}{\sigma} + \binom{n}{\frac{\sigma+1}{2}}\frac{\sigma+1}{2}$.

\paragraph{Over characteristic $0$ fields.} For the polynomials in $P \sqcup Q$, arbitrarily order them and choose the powers to be of form $\sigma + i$. For this choice of the powers, every polynomial in $P \sqcup Q$ has corresponding degree $\sigma(\sigma + i)$ and is clearly degree separated from the other polynomials in $P \sqcup Q$. By Observation \ref{Obs: deg sep clause poly}, the degree of $R_k$ is at most $\sigma + 4$. As $i \geq 1$ and $\sigma > 2$ it can be easily observed that 
\[\sigma(\sigma + i) \geq \sigma(\sigma + 1) > \sigma + 4.\] 
Thus, for this choice of the powers, Condition \ref{pow-degsep-CSP} is satisfied over characteristic $0$ fields. The degree of $f$ then is $\sigma(\sigma + N) = O(n^{\sigma})$. 

\paragraph{Over finite characteristic fields.} Let the characteristic be $p > 0$. We assume $p > \sigma + 1$ to ensure Claim \ref{FullSupport} holds. If $p > \sigma+N$, the powers can be chosen just like in the characteristic $0$ case. Otherwise if $\sigma + 1 < p \leq \sigma + N$, then we choose the powers to be of form $p^t - 1$, where $t \in \N$, from the following $N$ disjoint intervals:  
\[[\sigma+1,p(\sigma+1)], \ [p(\sigma+1)+1 , p^2(\sigma+1)+p], \ [p^2(\sigma+1) + p + 1 , p^3(\sigma+1) + p^2 + p],\cdots\]
Order the $N$ polynomials in $P \sqcup Q$ arbitrarily and assign powers in each polynomial from the above $N$ intervals. Then, the degree of a polynomial in $P \sqcup Q$ lies in the range \[[\sigma(p^{i-1}(\sigma+1) + \sum_{l=0}^{i-2}p^l),\sigma(p^i(\sigma+1) + \sum_{l=1}^{i-1}p^l)].\] For distinct $i$ this range is disjoint implying the polynomials in $P \sqcup Q$ are degree separated. By Observation \ref{Obs: deg sep clause poly}, the degree of $R_k$ is at most $\sigma + 4$. Now, we show that the polynomials of $P \sqcup Q$ are degree separated from the $R_k$'s by showing that the lower bound on the degree of any polynomial in $P \sqcup Q$ is greater than the degree of any $R_k$.  As $i \geq 1$, $\sigma > 2$ and $p > 1$, it follows that
\[\sigma(p^{i-1}(\sigma+1) + \sum_{l=0}^{i-2}p^l) \geq \sigma(\sigma+1)  > \sigma + 4.\]
Thus, for this choice of powers, Conditions \ref{pow-degsep-CSP} and \ref{pow-finitechar-CSP} are satisfied. For $p > \sigma + N$, the degree of $f$ is $\sigma(\sigma + N) = O(n^\sigma)$  and for $\sigma + 1 < p \leq \sigma + N$, the degree of $f$ is 
\[O(\sigma(p^N(\sigma+1) + \sum_{l=1}^{N-1}p^l)) = O(p^N) = O((\sigma+N)^N) = O((\sigma + n^{\sigma})^{n^\sigma}).\]
Note that the degree of $f$ can be represented in $n^{O(1)}$ many bits as $\sigma$ is a constant. Hence, as remarked just after Theorem \ref{Theorem: ETsupport NP-hard}, we assume that over finite characteristic fields, the exponent vectors corresponding to the monomials of the input polynomials are given in binary.

\section{Missing proofs from Section \ref{section-shifteqvhard}} \label{Section: sparse-shift proofs}
\subsection{Proof of Observation \ref{Obs: deg separated poly-shift}}
    Let $i \in [n]$. Note that the $x_0$-degree of the monomials of $Q_{i,1}(\vecz)$ is $(2i-1)(d_2+1)$ while that of the monomials of $Q_{i,2}(\vecz)$ is $2i(d_2+1)$. Clearly, $Q_{i,1}(\vecz)$ and $Q_{i,2}(\vecz)$ are degree separated with respect to $x_0$. Thus, $Q_i$ is a sum of two polynomials degree separated with respect to $x_0$ and has degree $2i(d_2+1) + d_2$. It is also easy to observe that for $i,j \in [n]$, where $i < j$ without loss of generality, $Q_i(\vecz)$ is degree separated from $Q_j(\vecz)$ with respect to $x_0$. Similarly, $R_k(\vecz)$ is degree separated from $R_l(\vecz)$ with respect to $x_0$, for $k,l \in [m]$ and $k \neq l$.
    
    Now, let $i \in [n]$ and $k \in [m]$. The $x_0$-degree of a monomial of $Q_i(\vecz)$ is at least $(2i-1)(d_2+1) \geq d_2 + 1$ while that of a monomial of $R_k(\vecz)$ is $k$. As $d_2 + 1 \geq m(d_3+1)^2 + 2 > m \geq k$ by the conditions in \eqref{Sparse-shift-cond}, therefore $Q_i$ is degree separated from $R_k$. Lastly, $x_0^{d_1}$ is degree separated from the rest of the polynomials with respect to $x_0$ because $d_1 \geq 4n(d_2+1) + 2d_2 + 2$ while the highest $x_0$-degree of a monomial among $Q_i$'s and $R_k$'s is $2n(d_2+1)$. 

\subsection{Proof of Observation \ref{Obs: shift-poly degree}}

By Observation \ref{Obs: deg separated poly-shift} and the definition of $f$ in \eqref{Sparse-shift-poly}, the degree of $f$ is the maximum degree among $x_0^{d_1}$, $Q_i$'s and $R_k$'s, where $i \in [n]$ and $k \in [m]$. Now, the degree of $Q_i$ is $2i(d_2+1) + d_2$, that of $R_k$ is $k + 3d_3$ and $2i(d_2+1) + d_2  > k + 3d_3$. Further, $d_1 \geq 4n(d_2+1) + 2d_2 + 2$. Hence, the degree of $f$ is $d_1$. Finally, it follows from \eqref{Sparse-shift-cond} that $\frac{d_1}{2} > 2n(d_2+1) + d_2 > s$.

\subsection{Proof of Observation \ref{Obs: shift-poly sparsity}}
    By Observations \ref{Obs: deg separated poly-shift} and \ref{lem_deg_sep_sum},
    \[\cal S(f(\vecz)) = 1 + \sum_{i=1}^{n}\cal S(Q_i(\vecz)) + \sum_{k=1}^{m}\cal S(R_k(\vecz)).\]
    Let $i \in [n]$. By Observation \ref{Lemma: affine form sparsity}, $\cal S(Q_{i,1}) = \cal S(Q_{i,2}) = d_2 + 1$. Thus, $\cal S(Q_i) = 2d_2+2$. Now, let $k \in [m]$. By Observation \ref{Lemma: affine form sparsity} and the assumption that each clause in $\psi$ has $3$ distinct variables, we get that:
\[\cal S(R_k(\vecz)) = \cal S(x_0^{k})\prod_{j \in C_k} \cal S((x_j + (-1)^{a_{k,j}})^{d_3}) = (d_3 + 1)^3 \ \ \forall k \in [m].\]
    Thus,
    \[\cal S(f(\vecz)) = 1 + \sum_{i=1}^{n}\cal S(Q_i(\vecz)) + \sum_{k=1}^{m}\cal S(R_k(\vecz)) = 1 + n(2d_2 + 2) + m(d_3+1)^3.\]
This also shows that $\cal S(f) = (mn)^{O(1)}$ and $\cal S(f) > s$. For the support of $f$, note that $\supp(R_k) = 4$ for all $k \in [m]$ while $\supp(Q_i) = 2$ for all $i \in [n]$ and $\supp(x_0^{d_1}) = 1$. By Observation \ref{Obs: deg separated poly-shift}, $\supp(f) = \supp(R_k) = 4$.

\subsection{Proof of Lemma \ref{Lemma: x0 fixed-shift}}
    Let $b_0 = \vecb_{|x_0}$. If $b_0 \neq 0$, then by the binomial theorem
    \[(x_0 + b_0)^{d_1}= \sum_{i=0}^{d_1} \binom{d_1}{i} b_0^i x_0^{d_1-i}.\]
    The summands in the above expansion are degree separated and $\binom{d_1}{i} \neq 0$ for all $i \in [0,d_1]$, by the choice of $d_1$ (and Lucas's Theorem if the characteristic is finite). Therefore, $(x_0 + b_0)^{d_1}$ contains at least one monomial of degree $i$, for all $i \in [0,d_1]$. Since $\frac{d_1}{2} > 2n(d_2+1) + d_2 > s$ therefore at least $s+1$ monomials in $(x_0 + b_0)^{d_1}$ are degree separated from $Q_i(\vecz+\vecb)$'s and $R_k(\vecz+\vecb)$'s with respect to $x_0$. Hence $\cal S(f(\vecz + \vecb))  > s$, a contradiction. Thus, $\vecb_{|x_0} = 0$.

\subsection{Proof of Lemma \ref{Lemma: degree separation-shift}}

    Let $i \in [n]$. For $Q_{i,1}(\vecz + \vecb)$ and $Q_{i,2}(\vecz + \vecb)$, the $x_0$-degree of a monomial in the respective polynomials is $(2i-1)(d_2+1)$ and $2i(d_2+1)$ respectively. This clearly implies $Q_{i,1}(\vecz + \vecb)$ and $Q_{i,2}(\vecz + \vecb)$ are degree separated from one another with respect to $x_0$. For $i,j \in [n]$, with $i < j$ without loss of generality, $Q_i(\vecz + \vecb)$ is degree separated from $Q_j(\vecz + \vecb)$ with respect to $x_0$ as the highest $x_0$-degree in the former polynomial is $2i(d_2+1)$ while the lowest $x_0$-degree in the latter polynomial is $(2j-1)(d_2+1) \geq (2i+1)(d_2+1)$. Let $k, l \in [m]$ and $k < l$ without loss of generality. For $R_k(\vecz + \vecb)$ and $R_l(\vecz + \vecb)$, the $x_0$-degree of a monomial in respective polynomials is $k$ and $l$ implying $R_k$ and $R_l$ are also degree separated with respect to $x_0$.
    
    Lastly, let $i \in [n]$ and $k \in [m]$. The $x_0$-degree of a monomial of $Q_i(\vecz + \vecb)$ is at least $(2i-1)(d_2+1) \geq d_2 + 1$ while that of a monomial of $R_k(\vecz + \vecb)$ is $k \leq m$. As $d_2 + 1 \geq m(d_3+1)^2 + 2 > m \geq k$ by the conditions in \eqref{Sparse-shift-cond}. Therefore, $Q_i(\vecz+\vecb)$ is degree separated from $R_k(\vecz+\vecb)$'s with respect to $x_0$. Clearly, $x_0^{d_1}$ is degree separated with respect to $x_0$ from $Q_i(\vecz+\vecb)$ and $R_k(\vecz+\vecb)$ because $d_1 > 4n(d_2 + 1) + 2d_2$, while the highest $x_0$-degree of any monomial among the $Q_i(\vecz+\vecb)'s$ and $R_k(\vecz+\vecb)$'s is $2n(d_2+1)$.

\subsection{Proof of Lemma \ref{Qi sparsity analysis: shift}}

    From Lemma \ref{Lemma: degree separation-shift} and Observation \ref{lem_deg_sep_sum} it follows that $\cal S(Q_i(\vecz+\vecb)) = \cal S(Q_{i,1}(\vecz+\vecb))+\cal S(Q_{i,2}(\vecz+\vecb))$. The if direction of the lemma statement is easy to verify. For the only if direction consider the following cases of $\vecb$:
\begin{enumerate}   
    \item $\vecb_{|x_i} \notin \{-1,1\}$: In this case, $x_i + \vecb_{|x_i} - 1$ and $x_i + \vecb_{|x_i} + 1$ both have a non-zero constant. Then, by the binomial theorem, the choice of $d_2$ and Lucas's Theorem (for finite characteristic fields), it holds that $\cal S(Q_{i,1}(\vecz + \vecb)) = d_2+1$ and $\cal S(Q_{i,2}(\vecz + \vecb)) = d_2+1$ implying $\cal S(Q_{i}(\vecz + \vecb)) = 2d_2 + 2$.
    
    \item $\vecb_{|x_i} \in \{-1,1\}$: In this case, exactly one of $x_i + \vecb_{|x_i} - 1$ and $x_i + \vecb_{|x_i} + 1$ has a non-zero constant. Without loss of generality, let $\vecb_{|x_i} = 1$. Then, $\cal S(Q_{i,2}(\vecz + \vecb)) = d_2+1$ (by arguing as in the first case) while $\cal S(Q_{i,1}(\vecz + \vecb)) = 1$ implying $\cal S(Q_{i}(\vecz + \vecb)) = d_2 + 2$.
\end{enumerate}
It is clear from the above cases that if $\vecb$ is not as per the lemma statement, then $\cal S(Q_i(\vecz + \vecb)) = 2d_2 + 2$; otherwise, $\cal S(Q_{i}(\vecz + \vecb)) = d_2 + 2$.

\subsection{Proof of Lemma \ref{Lemma: equality condition Qi-shift}}
Suppose $\cal S(Q_j(\vecz + \vecb)) \neq d_2 + 2$ for some $j \in [n]$. Then, $\cal S(Q_j(\vecz + \vecb)) = 2d_2 + 2$ by Lemma \ref{Qi sparsity analysis: shift}. By the definition of $f$, Lemma \ref{Lemma: degree separation-shift} and the condition $d_2 \geq m(d_3+1)^2 + 1$, we get the following contradiction:
    \begin{equation*}
    \begin{split}
        \cal S(f(\vecz + \vecb)) &>  \cal S((x_0)^{d_1}) + \sum_{i=1, i \neq j}^{n} \cal S(Q_i(\vecz + \vecb)) + \cal S(Q_j(\vecz + \vecb)) \\  &\geq  1+(n-1)(2+d_2)+(2+2d_2) =1 + n(2+d_2) + d_2 > s.
    \end{split}
    \end{equation*}

\subsection{Proof of Lemma \ref{Lemma: unsat gap-shift}}
If $\vecb_{|x_0} \neq 0$, then it follows from the argument in the proof of Lemma \ref{Lemma: x0 fixed-shift} that $\cal S(f(\vecz + \vecb)) \geq \frac{d_1}{2}$. Henceforth, we assume  $\vecb_{|x_0} = 0$. Then, by Lemma \ref{Lemma: degree separation-shift}, it follows that
\[\cal S(f(\vecz + \vecb)) = \cal S(x_0^{d_1}) + \sum_{i=1}^{n} \cal S(Q_i(\vecz + \vecb)) + \sum_{k=1}^{m} \cal S(R_k(\vecz + \vecb)).\]
Now, note that 
\[R_k(\vecz + \vecb) = \prod_{j \in C_k}(x_j + \vecb_{|x_j} + (-1)^{a_{k,j}})^{d_3}.\]
As $\psi$ is unsatisfiable, there exists a $k \in [m]$ such that $\cal S(R_k(\vecz + \vecb)) = (d_3+1)^3$, which follows by the binomial theorem.

\end{document}